\theoremstyle{plain}
\newtheorem{example}{Example}
\newtheorem{theorem}{Theorem}
\newtheorem{proposition}{Proposition}
\newtheorem{lemma}{Lemma}
\newtheorem{corollary}{Corollary}
\newtheorem{assumption}{Assumption}
\newtheorem{remark}{Remark}
\newtheorem{lemmaappendix}{Lemma}[section]
\def\by{\mathbf{y}}
\def\bY{\mathbf{Y}}
\def\0{\mbox{\bf{0}}}
\def\bs{\mathbf{s}}
\def\bS{\mathbf{S}}
\def\bI{\mathbf{I}}
\def\bz{\mathbf{z}}
\def\bZ{\mathbf{Z}}
\def\bM{\mathbf{M}}
\def\bX{\mathbf{X}}
\def\bx{\mathbf{x}}
\def\bv{\mathbf{v}}%
\def\bA{\mathbf{A}}
\def\bV{\mathbf{V}} 
\def\btheta{\boldsymbol{\theta}}
\def\bTheta{\boldsymbol{\Theta}}
\def\bupsilon{\boldsymbol{\upsilon}}
\def\bUpsilon{\boldsymbol{\Upsilon}}
\def\bpi{\boldsymbol{\pi}}
\def\bphi{\boldsymbol{\phi}}
\def\balpha{\boldsymbol{\alpha}}
\def\bbeta{\boldsymbol{\beta}}
\def\bSigma{\boldsymbol{\Sigma}}
\title{
	Asymptotic Properties of the Maximum Likelihood Estimator for Markov-switching Observation-driven Models\footnote{
		I am grateful to Leopoldo Catania, Morten Ørregaard Nielsen, Christian Francq, Jean-Michel Zakoïan, Christian Gourieroux, and Anders Rahbek, seminar participants at CREST, and conference participants at the 2024 Virtual Workshop for Junior Researchers in Time Series, the 2024 Quantitative Finance and Financial Econometrics International Conference, the 2025 Italian Congress of Econometrics and Empirical Economics, and the 2025 Annual Society for Financial Econometrics Conference for comments and discussions. This research is supported by the Danish National Research Foundation (DNRF Chair grant number DNRF154).
	}
}
\author{
	\normalsize{Frederik Bjerg Krabbe}\footnote{
		Department of Economics and Business Economics, Aarhus University, Universitetsbyen 51, 8000 Aarhus C, Denmark. Email: \href{mailto:frederik.krabbe@econ.au.dk}{frederik.krabbe@econ.au.dk}.
	}
}
\date{}
\begin{document}
	
\maketitle

\begin{abstract}

A Markov-switching observation-driven model is a stochastic process $((S_t,Y_t))_{t \in \mathbb{Z}}$ where $(S_t)_{t \in \mathbb{Z}}$ is an unobserved Markov chain on a finite set and $(Y_t)_{t \in \mathbb{Z}}$ is an observed stochastic process such that the conditional distribution of $Y_t$ given $(Y_\tau)_{\tau \leq t-1}$ and $(S_\tau)_{\tau \leq t}$ depends on $(Y_\tau)_{\tau \leq t-1}$ and $S_t$. In this paper, we prove consistency and asymptotic normality of the maximum likelihood estimator for such model. As a special case, we also give conditions under which the maximum likelihood estimator for the widely applied Markov-switching generalised autoregressive conditional heteroscedasticity model introduced by \cite*{HaasMittnikPaolella2004b} is consistent and asymptotically normal.
\vspace{1.5ex} \\ \noindent
\textbf{Keywords:} State Space Models, Hidden Markov Models, Maximum Likelihood Estimation, Consistency, Asymptotic Normality, Markov-switching Generalised Autoregressive Conditional Heteroscedasticity Models.
\vspace{1.5ex} \\ \noindent
\textbf{JEL Classifications:}  C12, C13, C22, C32, C58.
	
\end{abstract}

\section{Introduction} \label{Introduction}

State space models and their extensions are ubiquitous in economics and finance. A state space model is a stochastic process $((X_t,Y_t))_{t \in \mathbb{Z}}$ where $(X_t)_{t \in \mathbb{Z}}$ is an unobserved Markov process taking values in $\textup{X}$ and $(Y_t)_{t \in \mathbb{Z}}$ is an observed stochastic process taking values in $\textup{Y}$ such that the conditional distribution of $Y_t$ given $\mathbf{Y}_{-\infty}^{t-1}$ where $\bY_{i}^{j} := (Y_{i},...,Y_{j})$ and $\bX_{-\infty}^{t}$ where $\bX_{i}^{j} := (X_i,...,X_j)$ depends only on $X_t$. It is called a hidden Markov model when $X_t = S_t$ and $(S_t)_{t \in \mathbb{Z}}$ is a Markov chain on a finite set.

The arguably most well-known extension of the state space model is the autoregressive state space model of order $p \in \mathbb{N}$ in which the conditional distribution of $Y_t$ given $\mathbf{Y}_{-\infty}^{t-1}$ and $\bX_{-\infty}^{t}$ depends on both $\mathbf{Y}_{t-p}^{t-1}$ and $X_t$. An example of an autoregressive state space model is the seminal Markov-switching autoregressive model introduced by \cite{Hamilton1989} to model economic growth where $\textup{X}$ is finite. Another example is the Markov-switching autoregressive conditional heteroscedasticity (ARCH) model introduced independently by \cite{Cai1994} and \cite{HamiltonSusmel1994} to model financial returns where $\textup{X}$ is also finite. See, for instance, \cite{Hamilton2010} and \cite{AngTimmermann2012} for more examples of autoregressive state space models in economics and finance, respectively. 

Another extension of the state space model that has gained popularity recently is the observation-driven state space model in which the conditional distribution of $Y_t$ given $\mathbf{Y}_{-\infty}^{t-1}$ and $\bX_{-\infty}^{t}$ now depends on both $\mathbf{Y}_{-\infty}^{t-1}$ and $X_t$. An example of an observation-driven state space model is the Markov-switching generalised ARCH (GARCH) model introduced by \cite{HaasMittnikPaolella2004b} where $\textup{X}$ is finite.\footnote{Note that there exist two types of Markov-switching GARCH models namely the one considered by \cite{FrancqRoussignolZakoian2001} in which the conditional distribution of $Y_t$ given $\mathbf{Y}_{-\infty}^{t-1}$ and $\bX_{-\infty}^{t}$ depends on both $\mathbf{Y}_{-\infty}^{t-1}$ and $\bX_{-\infty}^{t}$ and the one by \cite{HaasMittnikPaolella2004b} just discussed. See Section \ref{sec:terminology} for details.} Yet another example from finance is the score-driven state space model introduced by \cite{MonachePetrellaVenditti2021} where $\textup{X}$ is not finite. For examples of observation-driven state space models in economics, see, for instance, \cite{MonachePetrellaVenditti2016} and \cite{AngeliniGorgi2018} where $\textup{X}$ is not finite. For more examples of observation-driven state space models in finance, see, for instance, \cite{HaasMittnikPaolella2004a}, \cite{Ardia2009}, \cite{BrodaHaasKrausePaolellaSteude2013}, \cite{ArdiaBluteauBoudtCatania2018}, \cite{HaasLiu2018}, \cite{BernardiCatania2019}, and \cite{Walden2019} where $\textup{X}$ is finite or \cite{BuccheriBormettiCorsiLillo2021} and \cite{BuccheriCorsi2021} where $\textup{X}$ is not finite.

Statistical inference for state space models and their extensions – including estimation, which is typically done by maximum likelihood estimation – is therefore of significant practical importance. The asymptotic properties of the maximum likelihood estimator (MLE) for observation-driven state space models have, however, attracted almost no attention in the literature despite the increasing popularity of such models. In this paper, we prove both consistency (Theorem \ref{TheoremConsistency}) and asymptotic normality (Theorem \ref{TheoremAsymptoticNormality}) of the MLE for an observation-driven state space model where $\textup{X}$ is finite, called a Markov-switching observation-driven model. To the best of our knowledge, these results are the first of their kind in the literature. As a special case, we also give conditions under which the MLE for the widely applied Markov-switching GARCH model by \cite{HaasMittnikPaolella2004b} is both consistent (Theorem \ref{theo:c}) and asymptotically normal (Theorem \ref{theo:an}). Again, this is new to the literature and extends \cite{KandjiMisko2024}, who gave conditions under which it is only consistent.

In contrast, the asymptotic properties of the MLE for autoregressive state space models, that is, the models in which the conditional distribution of $Y_t$ given $\mathbf{Y}_{-\infty}^{t-1}$ and $\bX_{-\infty}^{t}$ depends \textit{only} on $\mathbf{Y}_{t-p}^{t-1}$ and $X_t$, have attracted much attention in the literature. For autoregressive state space models where $\textup{X}$ is finite, consistency of the MLE was proved by \cite{FrancqRoussignol1998}, \cite{KrishnamurthyRyden1998}, and \cite{FrancqRoussignolZakoian2001}. This was later generalised by \cite{DoucMoulinesRyden2004}, who proved consistency and asymptotic normality of the MLE for autoregressive state space models where $\textup{X}$ is compact and not necessarily finite, a seminal result which can be used to give conditions under which both the MLE for the Markov-switching autoregressive model and the Markov-switching ARCH model is consistent and asymptotically normal. More recently, \cite{KasaharaShimotsu2019} relax some of the assumptions in \cite{DoucMoulinesRyden2004}.\footnote{Consistency and asymptotic normality of the MLE for hidden Markov models was proved by \cite{Leroux1992} and \cite{BickelRitovRyden1998}, respectively. Local consistency and asymptotic normality of the MLE for state space models where $\textup{X}$ is compact was proved by \cite{JensenPetersen1999}, and global consistency of the MLE for general state space models was proved by \cite{DoucMoulinesOlssonVanHandel2011}. See \cite{DoucMoulinesOlssonVanHandel2011} for more references on the asymptotic properties of the MLE for state space models.} In comparison to \cite{DoucMoulinesRyden2004} and \cite{KasaharaShimotsu2019}, we prove consistency and asymptotic normality of the MLE for models in which the conditional distribution of $Y_t$ given $\mathbf{Y}_{-\infty}^{t-1}$ and $\bX_{-\infty}^{t}$ depends on \textit{both} $\mathbf{Y}_{-\infty}^{t-1}$ and $X_t$, but with $\textup{X}$ finite. This result can, in contrast to the ones in \cite{DoucMoulinesRyden2004} and \cite{KasaharaShimotsu2019}, be used to give conditions under which also the MLE for the widely applied Markov-switching GARCH model by \cite{HaasMittnikPaolella2004b} is consistent and asymptotically normal.

In the proofs of consistency and asymptotic normality of the MLE for the model, the fact that the time-varying parameters and the filter forget their initialisations asymptotically (Lemmas \ref{LemmaInvertibilityX} and \ref{LemmaInvertibilityFilter}, respectively) is crucial, similarly to \cite{DoucMoulinesRyden2004} (Corollary 1) and \cite{KasaharaShimotsu2019} (Lemma 1). However, differently from \cite{DoucMoulinesRyden2004} and \cite{KasaharaShimotsu2019}, who use theory for Markov chains, we prove these results using theory for stochastic difference equations since the latter gives lower-level conditions under which the MLE for the model is consistent and asymptotically normal, which are easier to verify. Theory for stochastic difference equations is also usually used for standard observation-driven models, see, for instance, \cite{BerkesHorvathKokoszka2003}, \cite{FrancqZakoian2004}, \cite{StraumannMikosch2006}, \cite{BlasquesGorgiKoopmanWintenberger2018}, and \cite{BlasquesVanBrummelenKoopmanLucas2022}.

The rest of the paper is organised as follows. Section \ref{Model} introduces the Markov-switching observation-driven model, and Section \ref{Examples1} gives some examples of Markov-switching observation-driven models. In Section \ref{SE}, the probabilistic properties of the model is studied. The asymptotic properties of the MLE for the model is then studied in Section \ref{CAN}. Section \ref{Examples2} studies both the asymptotic and finite-sample properties of the MLE for the Markov-switching GARCH model by \cite{HaasMittnikPaolella2004b}, the latter in a Monte Carlo simulation study. Section \ref{Conclusion} concludes. All proofs except the ones in the main text are collected in the appendix.

\section{The Markov-switching Observation-driven Model} \label{Model}

\subsection{Model}

A Markov-switching observation-driven model is a stochastic process $((S_t,Y_t))_{t \in \mathbb{Z}}$ where $(S_t)_{t \in \mathbb{Z}}$ is an unobserved Markov chain taking values in $\{1,...,J\}$ with transition probabilities
\begin{equation*}
	p_{ij} := \mathbb{P}(S_{t+1} = j \mid S_{t} = i), \quad i,j \in \{1,...,J\},
\end{equation*}
and $(Y_t)_{t \in \mathbb{Z}}$ is an observed stochastic process taking values in $\mathcal{Y} \subseteq \mathbb{R}$ such that the conditional distribution of $Y_t$ given $\bY_{-\infty}^{t-1}$ and $\bS_{-\infty}^{t}$ depends only on $\bY_{-\infty}^{t-1}$ and $S_t$ as follows
\begin{equation*}
	Y_t \mid (\bY_{-\infty}^{t-1},S_t) \sim \mathcal{D}_{S_t} (X_{S_{t},t},\bupsilon_{S_t}),
\end{equation*}
where $X_{j,t}, j \in \{1,...,J\}$ is a time-varying parameter taking values in a complete set $\mathcal{X}_j \subseteq \mathbb{R}$ given by
\begin{equation*}
	X_{j,t+1} = \phi_{j} (Y_t,X_{j,t};\bupsilon_{j}),
\end{equation*}
and $\bupsilon_{j}, j \in \{1,...,J\}$ is a vector of constant parameters taking values in a set $\bUpsilon_{j} \subseteq \mathbb{R}^{d_{j}}$. If $(S_t)_{t \in \mathbb{Z}}$ is an independent and identically distributed (i.i.d.) chain, that is, if
\begin{equation*}
	p_{1j} = \cdots = p_{Jj}
\end{equation*}
for all $j \in \{1,...,J\}$, then the Markov-switching observation-driven model is called a mixture observation-driven model.

In the Markov-switching observation-driven model, filtering, prediction, and smoothing of the unobserved Markov chain $(S_t)_{t \in \mathbb{Z}}$, that is, computation of the conditional distribution
\begin{equation*}
	\pi_{j,t \mid s} := \mathbb{P}(S_{t} = j \mid \mathbf{Y}_{-\infty}^{s}), \quad j \in \{1,...,J\},
\end{equation*}
which is called the filtering distribution when $t = s$, the predictive distribution when $t > s$, and the smoothing distribution when $t < s$, is done as in the Markov-switching autoregressive model and the hidden Markov model. The one-step-ahead prediction is given by
\begin{equation*}
	\pi_{j,t+1 \mid t} = \sum_{i=1}^{J} p_{ij} \pi_{i,t \mid t},
\end{equation*}
and the filter is given by
\begin{equation*}
	\pi_{j,t \mid t} = \frac{\pi_{j,t \mid t-1} f_{j} (Y_t;X_{j,t},\bupsilon_{j})}{f(Y_t)},
\end{equation*}
where $f(y),y \in \mathcal{Y}$ is the conditional probability density function (pdf) of $Y_t$ given $\bY_{-\infty}^{t-1}$ given by
\begin{equation*}
	f(y) = \sum_{k=1}^{J} \pi_{k,t \mid t-1} f_{k} (y;X_{k,t},\bupsilon_{k}),
\end{equation*}
and $f_{j} (y;X_{j,t},\bupsilon_{j}),y \in \mathcal{Y}$ is the conditional pdf of $Y_t$ given $\bY_{-\infty}^{t-1}$ and $S_t = j$, see \cite{Hamilton1994} for details.\footnote{More generally, the $h$-step-ahead prediction is given by $\pi_{j,t+h \mid t} = \sum_{i=1}^{J} p_{ij}^{(h)} \pi_{i,t \mid t}$ where $p_{ij}^{(h)} := \mathbb{P}(S_{t+h} = j \mid S_{t} = i)$.} Let $\bpi_{t \mid s} := (\pi_{1,t \mid s},...,\pi_{J,t \mid s})^{\prime}$. Then,
\begin{equation*}
	\bpi_{t+1 \mid t} = \mathbf{P}^{\prime} \bpi_{t \mid t},
\end{equation*}
where $\mathbf{P}$ is the transition probability matrix given by
\begin{equation*}
	\mathbf{P}
	:=
	\begin{bmatrix}
		p_{11} & \cdots & p_{1J} \\
		\vdots & \ddots & \vdots \\
		p_{J1} & \cdots & p_{JJ} \\
	\end{bmatrix},
\end{equation*}
and
\begin{equation*}
	\bpi_{t \mid t} = \mathbf{F}_t (\bpi_{t \mid t-1}) \bpi_{t \mid t-1},
\end{equation*}
where $\mathbf{F}_t(\bpi_{t \mid t-1})$ is a diagonal matrix with generic element
\begin{equation*}
	[\mathbf{F}_t (\bpi_{t \mid t-1})]_{ii} := \frac{f_{i} (Y_t;X_{i,t},\bupsilon_{i})}{\sum_{k=1}^{J} \pi_{k,t \mid t-1} f_{k} (Y_t;X_{k,t},\bupsilon_{k})}, \quad i \in \{1,...,J\}.
\end{equation*}

Moreover, the smoother is given by
\begin{equation*}
	\pi_{j,t \mid T} = \pi_{j,t \mid t} \sum_{i=1}^{J} p_{ji} \frac{\pi_{i,t+1 \mid T}}{\pi_{i,t+1 \mid t}}, \quad t < T,
\end{equation*}
see \cite{Hamilton1994} for details once again. Prediction of the observed stochastic process $(Y_t)_{t \in \mathbb{Z}}$ is also done as in the Markov-switching autoregressive model and the hidden Markov model. 

Finally, note that the Markov-switching observation-driven model reduces to the Markov-switching autoregressive model of order $1$ if $\phi_{j} (Y_t,X_{j,t};\bupsilon_{j}) = \phi_{j} (Y_t;\bupsilon_{j})$ for all $j \in \{1,...,J\}$ and to the hidden Markov model if $X_{j,t} \equiv X_{j}$ for all $j \in \{1,...,J\}$.\footnote{All results continue to hold in the case where $X_{j,t+1} = \phi_{j} (Y_{t},...,Y_{t-p+1},X_{j,t};\bupsilon_{j})$ – in which case the Markov-switching observation-driven model reduces to the Markov-switching autoregressive model of order $p \in \mathbb{N}$ if $\phi_{j} (Y_{t},...,Y_{t-p+1},X_{j,t};\bupsilon_{j}) = \phi_{j} (Y_{t},...,Y_{t-p+1};\bupsilon_{j})$ for all $j \in \{1,...,J\}$ – but, to ease the exposition, we consider the case where $X_{j,t+1} = \phi_{j} (Y_{t},X_{j,t};\bupsilon_{j})$.}

\subsection{Terminology} \label{sec:terminology}

A word on terminology. Let $(S_t)_{t \in \mathbb{Z}}$ be an unobserved Markov chain. In the previous section, we defined a Markov-switching observation-driven model as a stochastic process $((S_t,Y_t))_{t \in \mathbb{Z}}$ where $(Y_t)_{t \in \mathbb{Z}}$ is an observed stochastic process such that the conditional distribution of $Y_t$ given $\bY_{-\infty}^{t-1}$ and $\bS_{-\infty}^{t}$ depends only on $\bY_{-\infty}^{t-1}$ and $S_t$ as follows
\begin{equation}
	Y_t \mid (\bY_{-\infty}^{t-1},S_t) \sim \mathcal{D}_{S_t} (X_{S_{t},t},\bupsilon_{S_t}), \label{eq:MSODM1}
\end{equation}
with
\begin{equation}
	X_{S_t,t} = \phi_{S_t} (Y_{t-1},X_{S_t,t-1};\bupsilon_{S_t}); \label{eq:MSODM2}
\end{equation}
its dependence structure is illustrated in Figure \ref{fig:1}. As mentioned in the introduction, the Markov-switching GARCH model by \cite{HaasMittnikPaolella2004b} is an example of such a model, see the next section for details. In such a Markov-switching observation-driven model, filtering, prediction, and smoothing of the unobserved Markov chain $(S_t)_{t \in \mathbb{Z}}$ and thus prediction of the observed stochastic process $(Y_t)_{t \in \mathbb{Z}}$ can be done via the formulae in the previous section.

\begin{figure}[htbp]
	
	\centering
	
	\begin{tikzpicture}[minimum size = 1.25cm]
		
		\node[left] at (-3,-1.5) {$\dots$};
		\node[draw, circle] (S0) at (0,-1.5) {$S_{t-1}$};
		\node[draw, circle] (S1) at (3,-1.5) {$S_t$};
		\node[right] at (6,-1.5) {$\dots$};
		
		\node[left] at (-3,1.5) {$\dots$};
		\node[draw, circle] (Y0) at (0,1.5) {$Y_{t-1}$};
		\node[draw, circle] (Y1) at (3,1.5) {$Y_t$};
		\node[right] at (6,1.5) {$\dots$};
		
		\draw[->] (-3,-1.5) -- (S0);
		\draw[->] (S0) -- (S1);
		\draw[->] (S1) -- (6,-1.5);
		
		\draw[->] (-3,1.5) -- (Y0);
		\draw[->] (-3,1.5) to [out=30, in=150] (Y1);
		\draw[->] (-3,1.5) to[out=30, in=150] (6,1.5);
		\draw[->] (Y0) -- (Y1);
		\draw[->] (Y0) to[out=30, in=150] (6,1.5);
		\draw[->] (Y1) -- (6,1.5);
				
		\draw[->] (S0) -- (Y0);
		\draw[->] (S1) -- (Y1);
		
	\end{tikzpicture}
	
	\captionsetup{font=footnotesize}
	
	\caption{The dependence structure of the Markov-switching observation-driven model in Equations \eqref{eq:MSODM1} and \eqref{eq:MSODM2}.}
	
	\label{fig:1}
	
\end{figure}

One could, however, also have defined a Markov-switching observation-driven model as a stochastic process $((S_t,Y_t))_{t \in \mathbb{Z}}$ where $(Y_t)_{t \in \mathbb{Z}}$ is an observed stochastic process such that the conditional distribution of $Y_t$ given $\bY_{-\infty}^{t-1}$ and $\bS_{-\infty}^{t}$ depends on both $\bY_{-\infty}^{t-1}$ and $\bS_{-\infty}^{t}$ as follows
\begin{equation}
	Y_t \mid (\bY_{-\infty}^{t-1},\bS_{-\infty}^{t}) \sim \mathcal{D}_{S_t} (X_{S_{t},t},\bupsilon_{S_t}), \label{eq:MSODM3}
\end{equation}
with
\begin{equation}
	X_{S_t,t} = \phi_{S_t} (Y_{t-1},X_{S_{t-1},t-1};\bupsilon_{S_t}). \label{eq:MSODM4}
\end{equation}
The Markov-switching GARCH model considered by \cite{FrancqRoussignolZakoian2001} is an example of such a model. The dependence structure of such a Markov-switching observation-driven model is illustrated in Figure \ref{fig:2}. In such a model, filtering, prediction, and smoothing of the unobserved Markov chain $(S_t)_{t \in \mathbb{Z}}$ can, however, not be done due to the so-called path-dependence problem, a problem which makes such a model complicated, if not close to impossible, to handle. We stress that we consider the former, and \textit{not} the latter, Markov-switching observation-driven model.

\begin{figure}[htbp]
	
	\centering
	
	\begin{tikzpicture}[minimum size = 1.25cm]
		
		\node[left] at (-3,-1.5) {$\dots$};
		\node[draw, circle] (S0) at (0,-1.5) {$S_{t-1}$};
		\node[draw, circle] (S1) at (3,-1.5) {$S_t$};
		\node[right] at (6,-1.5) {$\dots$};
		
		\node[left] at (-3,1.5) {$\dots$};
		\node[draw, circle] (Y0) at (0,1.5) {$Y_{t-1}$};
		\node[draw, circle] (Y1) at (3,1.5) {$Y_t$};
		\node[right] at (6,1.5) {$\dots$};
		
		\draw[->] (-3,-1.5) -- (S0);
		\draw[->] (S0) -- (S1);
		\draw[->] (S1) -- (6,-1.5);
		
		\draw[->] (-3,1.5) -- (Y0);
		\draw[->] (-3,1.5) to [out=30, in=150] (Y1);
		\draw[->] (-3,1.5) to[out=30, in=150] (6,1.5);
		\draw[->] (Y0) -- (Y1);
		\draw[->] (Y0) to[out=30, in=150] (6,1.5);
		\draw[->] (Y1) -- (6,1.5);
		
		\draw[->] (-3,-1.5) -- (Y0);
		\draw[->] (-3,-1.5) -- (Y1);
		\draw[->] (-3,-1.5) -- (6,1.5);
		\draw[->] (S0) -- (Y0);
		\draw[->] (S0) -- (Y1);
		\draw[->] (S0) -- (6,1.5);
		\draw[->] (S1) -- (Y1);
		\draw[->] (S1) -- (6,1.5);
		
	\end{tikzpicture}
	
	\captionsetup{font=footnotesize}
	
	\caption{The dependence structure of the Markov-switching observation-driven model in Equations \eqref{eq:MSODM3} and \eqref{eq:MSODM4}.}
	
	\label{fig:2}
	
\end{figure}

\section{Examples of Markov-switching Observation-driven Models} \label{Examples1}

In this section, we give some examples of Markov-switching observation-driven models.

\begin{example} \label{Example1}
	
	An example of a Markov-switching observation-driven model is
	\begin{equation}
		Y_t = X_{S_t,t} + \sigma_{S_t} \varepsilon_t, \label{ARINF}
	\end{equation}
	where $(\varepsilon_t)_{t \in \mathbb{Z}}$ is a sequence of independent standard normal distributed random variables independent of $(S_t)_{t \in \mathbb{Z}}$ and, for each $j \in \{1,...,J\}$, 
	\begin{equation*}
		X_{j,t+1} = \omega_{j} + \alpha_{j} Y_{t} + \beta_{j} X_{j,t},
	\end{equation*}
	where $\omega_{j} \in \mathbb{R}$, $\alpha_{j} \in \mathbb{R}$, $\beta_{j} \in \mathbb{R}$, and $\sigma_j^2 > 0$. Here, $\mathcal{Y} = \mathbb{R}$, $\mathcal{D}_{S_t}$ is the normal distribution with mean $X_{S_t,t}$ and variance $\sigma_{S_t}^{2}$, $\mathcal{X}_j = \mathbb{R}$, $\phi_{j} (y,x_j;\bupsilon_j) = [\bupsilon_j]_1 + [\bupsilon_j]_2 y + [\bupsilon_j]_3 x_j$, $\bupsilon_{j} = (\omega_{j},\alpha_{j},\beta_{j},\sigma_{j}^{2})^{\prime}$, and $\bUpsilon_{j} = \mathbb{R} \times \mathbb{R} \times \mathbb{R} \times (0,\infty)$.
	
	A related model is the Markov-switching autoregressive model of order $p \in \mathbb{N}$ by \citet{Hamilton1989}. This model is given by
	\begin{equation*}
		Y_t = a_{S_t} + \sum_{i=1}^{p} b_{S_t}^{(i)} Y_{t-i} + \sigma_{S_t} \varepsilon_t,
	\end{equation*}
	where $(\varepsilon_t)_{t \in \mathbb{Z}}$ is a sequence of independent standard normal distributed random variables independent of $(S_t)_{t \in \mathbb{Z}}$ as above. 
	
	It can be shown that if $(Y_t)_{t \in \mathbb{Z}}$ is stationary and ergodic with $\mathbb{E} [\log^{+}|Y_t|] < \infty$ where $\log^{+} x := \max(\log x,0)$ for all $x > 0$ and $|\beta_j| < 1$ for all $j \in \{1,...,J\}$, then
	\begin{equation*}
		X_{j,t} = \frac{\omega_j}{1-\beta_j} + \sum_{i=1}^{\infty} \alpha_j \beta_{j}^{i-1} Y_{t-i}
	\end{equation*}
	for all $j \in \{1,...,J\}$. The Markov-switching observation-driven model in Equation \eqref{ARINF} can thus be thought of as a Markov-switching autoregressive model of order infinity given by
	\begin{equation*}
		Y_t = a_{S_t} + \sum_{i=1}^{\infty} b_{S_t}^{(i)} Y_{t-i} + \sigma_{S_t} \varepsilon_t,
	\end{equation*}
	where
	\begin{equation*}
		a_{S_t} := \frac{\omega_{S_t}}{1-\beta_{S_t}} \quad \text{and} \quad b_{S_t}^{(i)} := \alpha_{S_t} \beta_{S_t}^{i-1}.
	\end{equation*}
	
\end{example}

\begin{example} 
	
	The Markov-switching GARCH model by \citet{HaasMittnikPaolella2004b} is given by
	\begin{equation*}
		Y_t = \sqrt{X_{S_t,t}} \varepsilon_t,
	\end{equation*}
	where $(\varepsilon_t)_{t \in \mathbb{Z}}$ is a sequence of independent standard normal distributed random variables independent of $(S_t)_{t \in \mathbb{Z}}$ and, for each $j \in \{1,...,J\}$,
	\begin{equation*}
		X_{j,t+1} = \omega_{j} + \alpha_{j} Y_{t}^2 + \beta_{j} X_{j,t},
	\end{equation*}
	where $\omega_{j} > 0$, $\alpha_{j} \geq 0$, and $\beta_{j} \geq 0$. This is also an example of a Markov-switching observation-driven model where $\mathcal{Y} = \mathbb{R}$, $\mathcal{D}_{S_t}$ is the normal distribution with mean zero and variance $X_{S_t,t}$, $\mathcal{X}_j = \left[ 0, \infty \right)$, $\phi_{j} (y,x_j;\bupsilon_j) = [\bupsilon_j]_1 + [\bupsilon_j]_2 y^2 + [\bupsilon_j]_3 x_j$, $\bupsilon_{j} = (\omega_{j},\alpha_{j},\beta_{j})^{\prime}$, and $\bUpsilon_{j} = [0,\infty) \times [0,\infty) \times [0,\infty)$. Note that it reduces to the mixture GARCH model by \citet{HaasMittnikPaolella2004a} if $(S_t)_{t \in \mathbb{Z}}$ is an i.i.d. chain.
	
	The Markov-switching GARCH model is not the only Markov-switching conditional heteroscedasticity model that the Markov-switching observation-driven model nests. Consider, for instance, the Markov-switching asymmetric power GARCH model given by
	\begin{equation*}
		Y_t = X_{S_t,t}^{\frac{1}{\delta}} \varepsilon_t, \quad \delta > 0,
	\end{equation*}
	where $(\varepsilon_t)_{t \in \mathbb{Z}}$ is as above and, for each $j \in \{1,...,J\}$,
	\begin{equation*}
		X_{j,t+1} = \omega_{j} + \alpha_{j} \left( \left| Y_{t} \right| - \xi_j Y_{t} \right)^{\delta}  + \beta_{j} X_{j,t},
	\end{equation*}
	where $\omega_{j} > 0$, $\alpha_{j} \geq 0$, $|\xi_j| \leq 1$, and $\beta_{j} \geq 0$, which reduces to the Markov-switching GJR-GARCH model if $\delta = 2$, the Markov-switching threshold GARCH model if $\delta = 1$, and the standard Markov-switching GARCH model above if $\delta = 2$ and $\xi_j = 0$ for all $j \in \{1,...,J\}$. This is also an example of a Markov-switching observation-driven model where $\mathcal{Y} = \mathbb{R}$, $\mathcal{D}_{S_t}$ is the normal distribution with mean zero and variance $X_{S_t,t}^{\frac{2}{\delta}}$, $\mathcal{X}_j = \left[ 0, \infty \right)$, $\phi_{j} (y,x_j;\bupsilon_j) = [\bupsilon_j]_1 + [\bupsilon_j]_2 (|y| - [\bupsilon_j]_3 y)^{\delta} + [\bupsilon_j]_4 x_j$, $\bupsilon_{j} = (\omega_{j},\alpha_{j},\xi_j,\beta_{j})^{\prime}$, and $\bUpsilon_{j} = [0,\infty) \times [0,\infty) \times [-1,1] \times [0,\infty)$.

\end{example}

\begin{example} 
	
	Let $y \mapsto F(y;x,\tilde{\bupsilon})$ be a cumulative distribution function (cdf) with support $\mathcal{N} \subseteq [0,\infty)$, for instance, $\mathcal{N} = \{0,1\}$, $\mathcal{N} = \mathbb{N}$, or $\mathcal{N} = [0,1]$, indexed by the mean $x$ and a vector of parameters $\tilde{\bupsilon}$, and assume that
	\begin{equation*}
		x \leq x^{*} \quad \Rightarrow \quad F^{-} (u;x,\tilde{\bupsilon}) \leq F^{-} (u;x^{*},\tilde{\bupsilon})
	\end{equation*}
	for all $u \in (0,1)$ where $F^{-} (u;x,\tilde{\bupsilon}) := \inf \{ y \in \mathcal{N} : F(y;x,\tilde{\bupsilon}) \geq u\}$.
	
	The (present-regime dependent) Markov-switching positive linear conditional mean model by \citet{AknoucheFrancq2022} is given by
	\begin{equation*}
		Y_t = F^{-}_{S_t} (U_t;X_{S_t,t},\tilde{\bupsilon}_{S_t}),
	\end{equation*}
	where $(U_t)_{t \in \mathbb{Z}}$ is a sequence of independent uniform distributed random variables on $[0,1]$ independent of $(S_t)_{t \in \mathbb{Z}}$ and, for each $j \in \{1,...,J\}$, 
	\begin{equation*}
		X_{j,t+1} = \omega_{j} + \alpha_{j} Y_{t} + \beta_{j} X_{j,t},
	\end{equation*}
	where $\omega_{j} > 0$, $\alpha_{j} \geq 0$, $\beta_{j} \geq 0$, and $\tilde{\bupsilon}_{j} \in \tilde{\bUpsilon}_{j}$. This is another example of a Markov-switching observation-driven model where $\mathcal{Y} = \mathcal{N}$, $F_{S_t}$ is the cdf of $\mathcal{D}_{S_t}$, $\mathcal{X}_j = \left[ 0, \infty \right)$, $\phi_{j} (y,x_j;\bupsilon_j) = [\bupsilon_j]_1 + [\bupsilon_j]_2 y + [\bupsilon_j]_3 x_j$, $\bupsilon_{j} = (\omega_{j},\alpha_{j},\beta_{j},\tilde{\bupsilon}_{j})^{\prime}$, and $\bUpsilon_{j} = [0,\infty) \times [0,\infty) \times [0,\infty) \times \tilde{\bUpsilon}_{j}$.
	
\end{example}

\begin{example}
	
	A random variable $Y$ with support $[-\pi,\pi]$ is said to be von Mises distributed with location $x \in [-\pi,\pi]$ and concentration $\tilde{\upsilon} > 0$ if the pdf of $Y$ is given by
	\begin{equation*}
		f(y;x,\tilde{\upsilon}) = \frac{1}{2 \pi I_0(\tilde{\upsilon})} \exp(\tilde{\upsilon} \cos(y - x)), \quad y \in [-\pi,\pi],
	\end{equation*}
	where $I_0(\cdot)$ is the modified Bessel function of the first kind of order 0.
	
	Yet another example of a Markov-switching observation-driven model is the Markov-switching score-driven model
	\begin{equation*}
		Y_t = \left( X_{S_t,t} + \varepsilon_{S_t,t} \right) \textup{mod} \, (2 \pi) - \pi,
	\end{equation*}
	where, for each $j \in \{1,...,J\}$, $(\varepsilon_{j,t})_{t \in \mathbb{Z}}$ is a sequence of independent von Mises distributed random variables with location $0$ and concentration $\tilde{\upsilon}_j > 0$, which is independent of $(\varepsilon_{i,t})_{t \in \mathbb{Z}}$ for all $i \in \{1,...,J\}$ such that $i \neq j$ and $(S_t)_{t \in \mathbb{Z}}$, and
	\begin{equation*}
		X_{j,t+1} = \omega_j + \alpha_j \tilde{\upsilon}_j \sin \left(Y_t - X_{j,t}\right) + \beta_j X_{j,t},
	\end{equation*}
	where $\omega_j \in \mathbb{R}$, $\alpha_j \in \mathbb{R}$, and $\beta_j \in \mathbb{R}$. The model is similar to the regime-switching score-driven model by \cite{HarveyPalumbo2023}, which is a generalisation of the score-driven model by \cite{HarveyHurnPalumboThiele2024}. Indeed, the Markov-switching score-driven model is a Markov-switching observation-driven model where $\mathcal{Y} = [-\pi,\pi]$, $\mathcal{D}_{S_t}$ is the von Mises distribution with location $X_{S_t,t}$ and concentration $\tilde{\upsilon}_{S_t}$, $\mathcal{X}_j = \mathbb{R}$, $\phi_{j} (y,x_j;\bupsilon_j) = [\bupsilon_j]_1 + [\bupsilon_j]_2 [\bupsilon_j]_4 \sin (y-x_j) + [\bupsilon_j]_3 x_j$, $\bupsilon_{j} = (\omega_{j},\alpha_{j},\beta_{j},\tilde{\upsilon}_{j})^{\prime}$, and $\bUpsilon_{j} = \mathbb{R} \times \mathbb{R} \times \mathbb{R} \times (0,\infty)$.
	
\end{example}

\section{Probabilistic Properties of the Model} \label{SE}

First, we study the probabilistic properties of the Markov-switching observation-driven model. We restrict our attention to the Markov-switching observation-driven models that can be written as
\begin{equation}
	Y_t = \mathbf{1}_{S_t} \mathbf{g} (\boldsymbol{\varepsilon}_t;\bX_t,\bupsilon). \label{Y}
\end{equation}
Here, $\mathbf{1}_{S_t} := (1_{\{S_t = 1\}},...,1_{\{S_t = J\}})$ where $(S_t)_{t \in \mathbb{Z}}$ is a stationary, irreducible, and aperiodic (thus ergodic) Markov chain taking values in $\{1,...,J\}$ with transition probabilities $p_{ij}$. Moreover, $\mathbf{g} (\boldsymbol{\varepsilon}_t;\bX_t,\bupsilon) := (g_1(\varepsilon_{1,t};X_{1,t},\bupsilon_{1}),...,g_J(\varepsilon_{J,t};X_{J,t},\bupsilon_{J}))^{\prime}$ where $\boldsymbol{\varepsilon}_t := (\varepsilon_{1,t},...,\varepsilon_{J,t})^{\prime}$, for all $j \in \{1,...,J\}$, $(\varepsilon_{j,t})_{t \in \mathbb{Z}}$ is a sequence of i.i.d. random variables taking values in $\mathcal{E}_j$ with distribution $\mathcal{D}_{j}^{\varepsilon}(\bupsilon_{j})$, which is independent of $(\varepsilon_{i,t})_{t \in \mathbb{Z}}$ for all $i \in \{1,...,J\}$ such that $i \neq j$, $\bX_t := (X_{1,t},...,X_{J,t})^{\prime}$ is given by
\begin{equation*}
	\bX_{t+1} = \bphi(S_t,\boldsymbol{\varepsilon}_t,\bX_{t};\boldsymbol{\upsilon})
\end{equation*}
with
\begin{equation*}
	[\bphi(S_t,\boldsymbol{\varepsilon}_t,\bX_{t};\boldsymbol{\upsilon})]_j := \phi_{j} (\mathbf{1}_{S_t} \mathbf{g} (\boldsymbol{\varepsilon}_t;\bX_t,\boldsymbol{\upsilon}),X_{j,t};\bupsilon_{j}), \quad j \in \{1,...,J\},
\end{equation*}
and $\bupsilon := (\bupsilon_{1},...,\bupsilon_{J})^{\prime}$. Finally, $(S_t)_{t \in \mathbb{Z}}$ and $(\varepsilon_{j,t})_{t \in \mathbb{Z}}$ are independent for all $j \in \{1,...,J\}$.\footnote{All examples in Section \ref{Examples1} can be written like this because if $\varepsilon_{i,t} \overset{d}{=} \varepsilon_{j,t}$ for all $i,j \in \{1,...,J\}$, then let $\varepsilon_{j,t} = \varepsilon_t$ for all $j \in \{1,...,J\}$ where $(\varepsilon_{t})_{t \in \mathbb{Z}}$ is a sequence of i.i.d. random variables taking values in $\mathcal{E}$ with distribution $\mathcal{D}^{\varepsilon}(\bupsilon)$.}

\subsection{Stationarity and Ergodicity} \label{StationarityAndErgodicity}

Theorem \ref{TheoremStationarityAndErgodicity}, which follows from an application of Theorem 3.1 in \citet{Bougerol1993}, gives conditions under which the model is stationary and ergodic. Let
\begin{equation*}
	\Lambda (\bphi_t) := \sup_{\underset{\bx \neq \by}{\bx,\by \in \mathcal{X}_{1} \times \cdots \times \mathcal{X}_{J}}} \frac{|| \bphi_t (\bx) - \bphi_t (\by) ||_{2}}{|| \bx - \by ||_{2}},
\end{equation*}
where $\bphi_t (\bx) := \bphi(S_t,\boldsymbol{\varepsilon}_t,\bx;\boldsymbol{\upsilon})$; here and in the following, $|| \bx ||_{p} := (\sum_{i=1}^{n} | x_i |^p)^{1/p}, \bx \in \mathbb{R}^{n}$.
\begin{theorem} \label{TheoremStationarityAndErgodicity}
	Assume that
	\begin{enumerate}[(i)]
		\item there exists an $\bx \in \mathcal{X}_{1} \times \cdots \times \mathcal{X}_{J}$ such that $\mathbb{E} [\log^{+} || \bphi_t (\bx) - \bx ||_{2}] < \infty$,
		\item $\mathbb{E} [\log^{+} \Lambda (\bphi_t) ] < \infty$, and 
		\item there exists an $r \in \mathbb{N}$ such that 
		\begin{equation*}
			- \infty \leq \mathbb{E} \left[ \log \Lambda \left( \bphi_t^{(r)} \right) \right] < 0,
		\end{equation*}
		where $\bphi_t^{(r)} (\bx) := \bphi_t \circ \cdots \circ \bphi_{t-r+1} (\bx)$.
	\end{enumerate}
	Then, $(Y_t)_{t \in \mathbb{Z}}$ is stationary and ergodic.
\end{theorem}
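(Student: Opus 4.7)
The plan is to reduce the claim to Theorem 3.1 of \citet{Bougerol1993} applied to the stochastic recursion for the time-varying parameter vector, and then pass stationarity and ergodicity from $(\bX_t)_{t \in \mathbb{Z}}$ to $(Y_t)_{t \in \mathbb{Z}}$ by measurability. The recursion $\bX_{t+1} = \bphi_t(\bX_t)$ with $\bphi_t(\bx) := \bphi(S_t,\boldsymbol{\varepsilon}_t,\bx;\bupsilon)$ is an iterated random function system of exactly the form covered by Bougerol, so the three numbered assumptions in the theorem are tailored to match Bougerol's hypotheses one-to-one.

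First I would establish that $(\bphi_t)_{t \in \mathbb{Z}}$ is a stationary and ergodic sequence of random Lipschitz self-maps of $\mathcal{X}_1 \times \cdots \times \mathcal{X}_J$. This is immediate from the standing assumptions: the driver $((S_t,\boldsymbol{\varepsilon}_t))_{t \in \mathbb{Z}}$ is stationary and ergodic, since $(S_t)$ is stationary irreducible aperiodic (hence ergodic), $(\boldsymbol{\varepsilon}_t)$ is i.i.d.\ (hence mixing), and the two are independent, so the joint process is ergodic by the standard product argument; $\bphi_t$ is then a fixed measurable function of this driver, and Lipschitz in $\bx$ by assumption.

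Second I would invoke Bougerol's theorem. Condition (i) supplies the $\log^+$-moment $\mathbb{E}[\log^+ \|\bphi_t(\bx)-\bx\|_2] < \infty$ at some fixed $\bx$; condition (ii) supplies $\mathbb{E}[\log^+ \Lambda(\bphi_t)] < \infty$; and condition (iii) supplies the strictly negative top Lyapunov-type quantity
\begin{equation*}
  \mathbb{E}\bigl[\log \Lambda\bigl(\bphi_t^{(r)}\bigr)\bigr] < 0
\end{equation*}
for some $r \in \mathbb{N}$. These are exactly the hypotheses of Theorem 3.1 of \citet{Bougerol1993} (equivalently Theorem 2.8 of \citet{StraumannMikosch2006}), whose conclusion is that there exists a unique stationary and ergodic solution $(\bX_t)_{t \in \mathbb{Z}}$ to $\bX_{t+1} = \bphi_t(\bX_t)$, and moreover that this solution is a measurable function of $((S_s,\boldsymbol{\varepsilon}_s))_{s \le t-1}$, with the recursion from any initial condition converging to it exponentially fast $a.s.$

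Third I would conclude by a measurable-transformation argument: the triple $(S_t, \boldsymbol{\varepsilon}_t, \bX_t)$ is stationary and ergodic because $(S_t,\boldsymbol{\varepsilon}_t)$ is and $\bX_t$ is a measurable functional of the past of this driver. Hence $Y_t = \mathbf{1}_{S_t}\, \mathbf{g}(\boldsymbol{\varepsilon}_t;\bX_t,\bupsilon)$, being a fixed measurable function applied coordinatewise to a stationary and ergodic process, is itself stationary and ergodic. The only delicate point is making sure the $a.s.$ equality $\bX_{t+1} = \bphi_t(\bX_t)$ produced by Bougerol is used to identify the distinguished stationary solution before composing with $\mathbf{g}$; that is bookkeeping rather than a real obstacle. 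The main substantive step is therefore just verifying that conditions (i)--(iii) align with Bougerol's setup, which they do by construction, so the proof is essentially an invocation of the cited result.
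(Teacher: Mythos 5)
Your proposal is correct and is essentially the paper's own argument: the paper does not spell out a proof but simply notes that the theorem follows from Theorem 3.1 of \citet{Bougerol1993} applied to the recursion $\bX_{t+1}=\bphi_t(\bX_t)$, with conditions (i)--(iii) matching Bougerol's hypotheses verbatim, and stationarity and ergodicity of $(Y_t)$ then inherited through the measurable map $Y_t=\mathbf{1}_{S_t}\mathbf{g}(\boldsymbol{\varepsilon}_t;\bX_t,\bupsilon)$. Your additional care in deriving ergodicity of the joint driver $((S_t,\boldsymbol{\varepsilon}_t))_{t\in\mathbb{Z}}$ from independence and mixing is a slightly more explicit version of what the paper takes as a standing assumption.
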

\noindent
Conditions (i) and (ii) are standard regularity conditions, and Condition (iii) is a standard contraction condition – not in the strict sense, but in expectation.

\section{Asymptotic Properties of the Maximum Likelihood Estimator} \label{CAN}

We now study the asymptotic properties of the MLE for the Markov-switching observation-driven model discussed in the previous section.

Assume that a sample $(Y_t)_{t=1}^{T}$ from the Markov-switching observation-driven model $(Y_t)_{t \in \mathbb{Z}}$ given by Equation \eqref{Y} with $\btheta = \btheta_0$ is observed. Here,
\begin{equation*}
	\btheta := (p_{ij}, i = 1,...,J, j = 1,...,J-1, \bupsilon_j, j = 1,...,J)^{\prime}
\end{equation*}
is the parameter vector since $\sum_{j=1}^{J} p_{ij} = 1$ for all $i \in \{1,...,J\}$ and 
\begin{equation*}
	\bTheta \subset \left\lbrace \btheta \in \mathbb{R}^d : p_{ij} > 0, i = 1,...,J, j = 1,...,J-1, \sum_{j=1}^{J-1} p_{ij} < 1, i = 1,...,J, \bupsilon_j \in \bUpsilon_{j}, j = 1,...,J \right\rbrace
\end{equation*}
with $d := J(J-1) + \sum_{j=1}^{J} d_j$ is the parameter space. The MLE $\hat{\btheta}_T$ of $\btheta_0$ is given by
\begin{equation*}
	\hat{\btheta}_T = \underset{\btheta \in \bTheta}{\arg \max} \, \hat{L}_T (\btheta).
\end{equation*}
Here, $\hat{L}_T (\btheta)$ is the \textit{initialised} and thus non-stationary log-likelihood function given by
\begin{equation*}
	\hat{L}_T (\btheta) = \frac{1}{T} \sum_{t=1}^{T} \log \hat{f} (Y_t;\btheta)
\end{equation*}
with
\begin{equation*}
	\hat{f} (Y_t;\btheta) = \sum_{j=1}^{J} \hat{\pi}_{j,t \mid t-1} (\btheta) f_{j} (Y_t;\hat{X}_{j,t} (\bupsilon_j),\bupsilon_j),
\end{equation*}
where, for each $j \in \{1,...,J\}$, $(\hat{X}_{j,t} (\bupsilon_j))_{t \in \mathbb{N}}$ is given by
\begin{equation*}
	\hat{X}_{j,t+1} (\bupsilon_j) = \phi_j (Y_t,\hat{X}_{j,t} (\bupsilon_j);\bupsilon_{j})
\end{equation*}
for some initialisation $\hat{X}_{j,1} (\bupsilon_j) \in \mathcal{X}_{j}$ and $(\hat{\bpi}_{t \mid t-1} (\btheta))_{t \in \mathbb{N}}$ is given by
\begin{equation*}
	\hat{\bpi}_{t+1 \mid t} (\btheta) = \mathbf{P}^{\prime} \hat{\bpi}_{t \mid t} (\btheta)
\end{equation*}
with
\begin{equation*}
	\hat{\bpi}_{t \mid t} (\btheta) = \hat{\mathbf{F}}_t (\hat{\bpi}_{t \mid t-1} (\btheta);\btheta) \hat{\bpi}_{t \mid t-1} (\btheta)
\end{equation*}
for some initialisation $\hat{\bpi}_{0 \mid 0} (\btheta) \in \mathcal{S}$ with $\mathcal{S} := \{\bx \in \mathbb{R}^{J} : x_j \geq 0, j=1,...,J, \sum_{j=1}^{J} x_j = 1 \}$ where 
\begin{equation*}
	[\hat{\mathbf{F}}_t (\bs)]_{ii} := \frac{f_{i} (Y_t;\hat{X}_{i,t} (\bupsilon_i),\bupsilon_{i})}{\sum_{k=1}^{J} s_{k} f_{k} (Y_t;\hat{X}_{k,t} (\bupsilon_k),\bupsilon_{k})}, \quad i \in \{1,...,J\}.
\end{equation*}

\subsection{Consistency} \label{Consistency}

Consistency follows from classical arguments if the \textit{initialised} and thus non-stationary log-likelihood function $\hat{L}_T (\btheta)$ converges uniformly almost surely (a.s.) to a function, say, $L(\btheta)$, which is uniquely maximised at $\btheta_0$. We show that $\hat{L}_T (\btheta)$ converges uniformly a.s. to $L(\btheta)$ in two steps, the first step being the main difficulty and hence focus. First, we show that the difference between $\hat{L}_T (\btheta)$ and the \textit{non-initialised}, stationary, and ergodic log-likelihood function
\begin{equation*}
	L_T (\btheta) = \frac{1}{T} \sum_{t=1}^{T} \log f (Y_t;\btheta)
\end{equation*}
with
\begin{equation*}
	f (Y_t;\btheta) = \sum_{j=1}^{J} \pi_{j,t \mid t-1} (\btheta) f_{j} (Y_t;X_{j,t} (\bupsilon_j),\bupsilon_j),
\end{equation*}
where, for each $j \in \{1,...,J\}$, $(X_{j,t} (\bupsilon_j))_{t \in \mathbb{Z}}$ is given in Lemma \ref{LemmaInvertibilityX} and $(\bpi_{t \mid t-1} (\btheta))_{t \in \mathbb{Z}}$ is given in Corollary \ref{CorollaryInvertibilityPrediction} converges uniformly a.s. to zero by showing that the time-varying parameters and the predictor forget their initialisations asymptotically, that is, that, for each $j \in \{1,...,J\}$, the difference between $(\hat{X}_{j,t} (\bupsilon_j))_{t \in \mathbb{N}}$ and $(X_{j,t} (\bupsilon_j))_{t \in \mathbb{Z}}$ and the one between $(\hat{\bpi}_{t \mid t-1} (\btheta))_{t \in \mathbb{N}}$ and $(\bpi_{t \mid t-1} (\btheta))_{t \in \mathbb{Z}}$ converge uniformly exponentially fast a.s. (e.a.s.) to zero.\footnote{A sequence of random matrices $(\bZ_t)_{t \in \mathbb{Z}}$ is said to converge to zero e.a.s. if there exists a $\gamma > 1$ such that $\gamma^t || \bZ_t ||_{p,p} \overset{a.s.}{\rightarrow} 0 \quad \text{as} \quad t \rightarrow \infty$ where $|| \bX ||_{p,p} := (\sum_{i=1}^{n} \sum_{j=1}^{m} | x_{ij} |^p)^{1/p}, \bX \in \mathbb{R}^{n \times m}$.} Then, we show that $L_T (\btheta)$ converges uniformly a.s. to
\begin{equation*}
	L(\btheta) = \mathbb{E} [\log f (Y_t;\btheta)]
\end{equation*}
by using standard arguments. We also show that $L(\btheta)$ is uniquely maximised at $\btheta_0$ by using standard arguments.

We assume the following.
\begin{assumption} \label{AssumptionY}
	The conditions in Theorem \ref{TheoremStationarityAndErgodicity} hold for $\btheta = \btheta_0$.
\end{assumption}
\begin{assumption} \label{AssumptionTheta1}
	$\bTheta$ is compact.
\end{assumption}
\begin{assumption} \label{AssumptionF1}
	For  each $j \in \{1,...,J\}$,
	\begin{enumerate} [(i)]
		\item $(x_j,\bupsilon_j) \mapsto f_{j} (y;x_j,\bupsilon_j)$ is continuous for all $y \in \mathcal{Y}$ and
		\item $x_j \mapsto f_{j} (y;x_j,\bupsilon_j)$ is differentiable for all $y \in \mathcal{Y}$ and $\bupsilon_j \in \bUpsilon_{j}$.
	\end{enumerate}
\end{assumption}
\begin{assumption} \label{AssumptionPhi1}
	For each $j \in \{1,...,J\}$,
	\begin{enumerate} [(i)]
		\item $(x_j,\bupsilon_j) \mapsto \phi_{j} (y,x_j;\bupsilon_j)$ is continuous for all $y \in \mathcal{Y}$ and
		\item $x_j \mapsto \phi_{j} (y,x_j;\bupsilon_j)$ is differentiable for all $y \in \mathcal{Y}$ and $\bupsilon_j \in \bUpsilon_{j}$.
	\end{enumerate}
\end{assumption}
\noindent
Assumption \ref{AssumptionY} implies that $(Y_t)_{t \in \mathbb{Z}}$ is stationary and ergodic. Assumption \ref{AssumptionTheta1} is a standard regularity condition. So are Assumptions \ref{AssumptionF1} and \ref{AssumptionPhi1}; Condition (i) in Assumption \ref{AssumptionF1} is similar to Assumption 4 in \cite{DoucMoulinesRyden2004} and Assumption 6(a) in \cite{KasaharaShimotsu2019}, Condition (ii) in Assumption \ref{AssumptionF1} is an extra regularity condition, which is used to prove that the filter/predictor forgets its initialisation asymptotically, and Assumption \ref{AssumptionPhi1} is similar to the assumptions in \cite{BlasquesGorgiKoopmanWintenberger2018}.

The following lemma, which follows from an application of Theorem 3.1 in \citet{Bougerol1993}, gives conditions under which, for each $j \in \{1,...,J\}$, the difference between $(\hat{X}_{j,t} (\bupsilon_j))_{t \in \mathbb{N}}$ and $(X_{j,t} (\bupsilon_j))_{t \in \mathbb{Z}}$ converges uniformly e.a.s. to zero. For each $j \in \{1,...,J\}$, let
\begin{equation*}
	\Lambda_{j,t} (\bupsilon_j) := \sup_{x_j \in \mathcal{X}_{j}} \left| \nabla_{x_j} \phi_{j} (Y_t,x_j;\bupsilon_j) \right|.
\end{equation*}
\begin{lemma} \label{LemmaInvertibilityX}
	Assume that Assumptions \ref{AssumptionY}, \ref{AssumptionTheta1}, and \ref{AssumptionPhi1} hold. Moreover, assume that, for each $j \in \{1,...,J\}$,
	\begin{enumerate} [(i)]
		\item there exists an $x_j \in \mathcal{X}_{j}$ such that $\mathbb{E} [\log^{+} \sup_{\bupsilon_{j} \in \bUpsilon_{j}} | \phi_{j} (Y_t,x_j;\bupsilon_{j}) - x_j | ] < \infty$,
		\item $\mathbb{E} [\log^{+} \sup_{\bupsilon_{j} \in \bUpsilon_{j}} \Lambda_{j,t} (\bupsilon_{j})] < \infty$, and
		\item $-\infty \leq \mathbb{E} [\log \sup_{\bupsilon_{j} \in \bUpsilon_{j}} \Lambda_{j,t} (\bupsilon_{j})] < 0$.
	\end{enumerate}
	Then, for each $j \in \{1,...,J\}$, $(X_{j,t} (\bupsilon_j))_{t \in \mathbb{Z}}$ given by
	\begin{equation*}
		X_{j,t+1} (\bupsilon_j) = \phi_j (Y_t,X_{j,t} (\bupsilon_j);\bupsilon_{j})
	\end{equation*}
	is stationary and ergodic for all $\bupsilon_j \in \bUpsilon_{j}$ and
	\begin{equation*}
		\sup_{\bupsilon_{j} \in \bUpsilon_{j}} \left| \hat{X}_{j,t} (\bupsilon_j) - X_{j,t} (\bupsilon_j) \right| \overset{e.a.s.}{\rightarrow} 0 \quad \text{as} \quad t \rightarrow \infty
	\end{equation*}
	for any initialisation $\hat{X}_{j,1} (\bupsilon_j) \in \mathcal{X}_{j}$.
\end{lemma}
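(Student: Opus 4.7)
The plan is to apply Theorem 3.1 of \citet{Bougerol1993} in the Banach space of continuous functions on $\bUpsilon_j$, following the approach of \citet{StraumannMikosch2006}. Fix $j \in \{1,\ldots,J\}$. Since $\bUpsilon_j$ is compact (Assumption \ref{AssumptionTheta1}), the space $C(\bUpsilon_j,\mathcal{X}_j)$ of continuous functions $\bUpsilon_j \to \mathcal{X}_j$ equipped with the sup norm $\|\cdot\|_\infty$ is a complete metric space, which is the natural setting to obtain uniform (in $\bupsilon_j$) e.a.s.\ convergence.

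First I would define the random maps $\Psi_{j,t} : C(\bUpsilon_j,\mathcal{X}_j) \to C(\bUpsilon_j,\mathcal{X}_j)$ by $(\Psi_{j,t}(f))(\bupsilon_j) := \phi_j(Y_t, f(\bupsilon_j); \bupsilon_j)$, which is well-defined by the joint continuity in Assumption \ref{AssumptionPhi1}(ii) together with compactness of $\bUpsilon_j$. The sequence $(\Psi_{j,t})_{t \in \mathbb{Z}}$ inherits stationarity and ergodicity from $(Y_t)_{t \in \mathbb{Z}}$ (Assumption \ref{AssumptionY} and Theorem \ref{TheoremStationarityAndErgodicity}). By the mean value theorem together with Assumptions \ref{AssumptionPhi1}(i) and (iii), the Lipschitz constant of $\Psi_{j,t}$ on $(C(\bUpsilon_j,\mathcal{X}_j),\|\cdot\|_\infty)$ satisfies $\Lambda(\Psi_{j,t}) \leq \sup_{\bupsilon_j \in \bUpsilon_j} \Lambda_{j,t}(\bupsilon_j)$. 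Conditions (ii) and (iii) of the lemma then directly give $\mathbb{E}[\log^+ \Lambda(\Psi_{j,t})] < \infty$ and $\mathbb{E}[\log \Lambda(\Psi_{j,t})] < 0$, while taking the constant function $f_0 \equiv x_j$ from condition (i) yields $\mathbb{E}[\log^+ \|\Psi_{j,t}(f_0) - f_0\|_\infty] < \infty$. These are precisely the hypotheses of Bougerol's theorem with $r = 1$.

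Applying Theorem 3.1 of \citet{Bougerol1993} then delivers a unique stationary and ergodic solution $(X_{j,t}(\cdot))_{t \in \mathbb{Z}}$ in $C(\bUpsilon_j,\mathcal{X}_j)$ to the functional recursion, which in particular means that $(X_{j,t}(\bupsilon_j))_{t \in \mathbb{Z}}$ is stationary and ergodic for every $\bupsilon_j \in \bUpsilon_j$, and simultaneously yields e.a.s.\ convergence of the iterates $\hat{X}_{j,t}(\cdot)$ to $X_{j,t}(\cdot)$ in the sup norm, which is exactly the uniform convergence claimed. The main obstacle is the passage from the pointwise (in $\bupsilon_j$) description given by the assumptions to the function-space description needed to invoke Bougerol in $C(\bUpsilon_j,\mathcal{X}_j)$: one must verify that $\Psi_{j,t}(f)$ is continuous in $\bupsilon_j$ whenever $f$ is, that the sup-norm Lipschitz constant really is controlled by $\sup_{\bupsilon_j} \Lambda_{j,t}(\bupsilon_j)$, and that Bougerol's theorem, originally stated for random maps on $\mathbb{R}^d$, extends to this separable Banach space setting---a standard but non-trivial extension routinely used in the observation-driven literature.
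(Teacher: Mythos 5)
Your proposal is correct and follows essentially the same route as the paper: the paper also lifts the recursion to the space of continuous functions on the compact parameter set equipped with the sup norm, bounds the function-space Lipschitz coefficient by $\sup_{\bupsilon_j}\Lambda_{j,t}(\bupsilon_j)$ via the derivative bound, and invokes Theorem 3.1 of \citet{Bougerol1993} (packaged in the paper as Lemma A.9 in the technical appendix). The only remark is that your final concern is moot --- Bougerol's theorem is already stated for complete separable metric spaces, so no extension beyond $\mathbb{R}^d$ is required.
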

\noindent
As above, Conditions (i) and (ii) are standard regularity conditions, and Condition (iii) is a standard contraction condition. They are similar to the conditions in \cite{BlasquesGorgiKoopmanWintenberger2018}.

Theorem 3.1 in \citet{Bougerol1993} can, however, not be used for $(\hat{\bpi}_{t \mid t-1} (\btheta))_{t \in \mathbb{N}}$ because $(\hat{\bpi}_{t \mid t-1} (\btheta))_{t \in \mathbb{N}}$ depends on $(\hat{X}_{1,t} (\bupsilon_1))_{t \in \mathbb{N}},...,(\hat{X}_{J,t} (\bupsilon_J))_{t \in \mathbb{N}}$ which are non-stationary. The following lemma, which gives conditions under which the difference between $(\hat{\bpi}_{t \mid t} (\btheta))_{t \in \mathbb{N}_0}$ and $(\bpi_{t \mid t} (\btheta))_{t \in \mathbb{Z}}$ converges uniformly e.a.s. to zero, follows instead from an application of Propositions 1 and 3 in \citet{Krabbe2025}, which, when combined, is a generalisation of Theorem 2.10 in \cite{StraumannMikosch2006}.\footnote{Propositions 1 and 3 in \citet{Krabbe2025} generalise the result in Theorem 2.10 in \cite{StraumannMikosch2006}, which is about stochastic difference equations taking values in a real or complex separable Banach space, to stochastic difference equations taking values in a complete subspace of a real or complex separable Banach space.}
\begin{lemma} \label{LemmaInvertibilityFilter}
	Assume that Assumptions \ref{AssumptionY}-\ref{AssumptionPhi1} and the conditions in Lemma \ref{LemmaInvertibilityX} hold. Moreover, assume that for each $j \in \{1,...,J\}$, there exists an $m_j > 0$ such that
	\begin{equation*}
		\mathbb{E} \left[ \sup_{\bupsilon_j \in \bUpsilon_j} \sup_{x_j \in \mathcal{X}_{j}} \left| \nabla_{x_j} \log f_{j} (Y_t;x_j,\bupsilon_j) \right|^{m_j} \right] < \infty. 
	\end{equation*} 
	Then, $(\bpi_{t \mid t} (\btheta))_{t \in \mathbb{Z}}$ given by
	\begin{equation*}
		\bpi_{t \mid t} (\btheta) = \mathbf{F}_t (\mathbf{P}^{\prime} \bpi_{t-1 \mid t-1} (\btheta);\btheta) \mathbf{P}^{\prime} \bpi_{t-1 \mid t-1} (\btheta)
	\end{equation*}
	where 
	\begin{equation*}
		[\mathbf{F}_t (\bs)]_{ii} := \frac{f_{i} (Y_t;X_{i,t} (\bupsilon_i),\bupsilon_{i})}{\sum_{k=1}^{J} s_{k} f_{k} (Y_t;X_{k,t} (\bupsilon_k),\bupsilon_{k})}, \quad i \in \{1,...,J\}
	\end{equation*}
	is stationary and ergodic for all $\btheta \in \bTheta$ and
	\begin{equation*}
		\sup_{\btheta \in \bTheta} \left| \left| \hat{\bpi}_{t \mid t} (\btheta) - \bpi_{t \mid t} (\btheta) \right| \right|_{2} \overset{e.a.s.}{\rightarrow} 0 \quad \text{as} \quad t \rightarrow \infty
	\end{equation*}
	for any initialisation $\hat{\bpi}_{0 \mid 0} (\btheta) \in \mathcal{S}$.
\end{lemma}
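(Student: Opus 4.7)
The plan is to follow the non-stationary stochastic recurrence equation (SRE) technique of \citet{StraumannMikosch2006} (Theorem 2.10), adapted to the fact that the map driving the filter depends on the non-stationary sequences $(\hat{X}_{j,t}(\bupsilon_j))_{t \in \mathbb{N}}$ controlled by Lemma \ref{LemmaInvertibilityX}. I would first construct an idealised stationary filter by defining the map $\Psi_t(\bpi;\btheta) := \bF_t(\bP'\bpi;\btheta)\bP'\bpi$ on the simplex $\mathcal{S}$, where $\bF_t$ is built from $Y_t$ and the stationary parameters $X_{j,t}(\bupsilon_j)$ supplied by Lemma \ref{LemmaInvertibilityX}. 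Then $(\Psi_t(\cdot;\btheta))_{t \in \mathbb{Z}}$ is stationary ergodic by Assumption \ref{AssumptionY} and the same lemma. Because $\bTheta$ is compact and contained in $\{p_{ij} > 0\}$, the quantity $p_{\min} := \inf_{\btheta \in \bTheta}\min_{i,j}p_{ij}$ is strictly positive, so the Dobrushin coefficient of $\bP'$ gives a deterministic bound $\Lambda(\Psi_t) \leq 1 - 2 p_{\min} < 1$ on the total-variation Lipschitz constant (the Bayesian update $\bpi \mapsto \bF_t(\bpi)\bpi$ is non-expansive in total variation, while the prediction step $\bpi \mapsto \bP'\bpi$ contracts). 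Equivalence of $\|\cdot\|_1$ and $\|\cdot\|_2$ on $\mathbb{R}^J$ transfers this uniform contraction to the metric appearing in the statement. Bougerol's Theorem 3.1 then yields a unique stationary ergodic solution $(\bpi_{t \mid t}(\btheta))_{t \in \mathbb{Z}}$ for every $\btheta \in \bTheta$.

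For the non-stationary convergence, let $\hat{\Psi}_t$ denote the same map with $\hat{X}_{j,t}$ in place of $X_{j,t}$ and split
$$\hat{\bpi}_{t \mid t} - \bpi_{t \mid t} = \bigl[\hat{\Psi}_t(\hat{\bpi}_{t-1 \mid t-1};\btheta) - \Psi_t(\hat{\bpi}_{t-1 \mid t-1};\btheta)\bigr] + \bigl[\Psi_t(\hat{\bpi}_{t-1 \mid t-1};\btheta) - \Psi_t(\bpi_{t-1 \mid t-1};\btheta)\bigr].$$
The second bracket is bounded by $(1-2p_{\min})\|\hat{\bpi}_{t-1\mid t-1} - \bpi_{t-1 \mid t-1}\|_2$ from the previous step. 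For the first bracket, writing $[\bF_t(\bpi)\bpi]_i = \pi_i f_i/\sum_k \pi_k f_k$ and differentiating in each $X_{j,t}$, the $f_j$ factors cancel against the normaliser, leaving an explicit upper bound that is a finite sum of terms of the form $\sup_{x_j}|\nabla_{x_j}\log f_j(Y_t;x_j,\bupsilon_j)|\cdot \sup_{\bupsilon_j}|\hat{X}_{j,t}(\bupsilon_j) - X_{j,t}(\bupsilon_j)|$, uniformly in $\btheta \in \bTheta$ and $\bpi \in \mathcal{S}$. The assumed moment condition on $\sup|\nabla_{x_j}\log f_j|^{m_j}$ combined with the uniform e.a.s.\ decay from Lemma \ref{LemmaInvertibilityX}, together with the standard fact that the product of an e.a.s.\ sequence and a sequence with finite $m_j$-th moment is e.a.s.\ (via Markov plus Borel-Cantelli, choosing an exponential rate slightly below the original), yields a uniform e.a.s.\ perturbation sequence $\varepsilon_t$.

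Iterating the decomposition back to $t=0$ gives
$$\sup_{\btheta \in \bTheta}\bigl\|\hat{\bpi}_{t \mid t}(\btheta) - \bpi_{t \mid t}(\btheta)\bigr\|_2 \leq \sum_{k=0}^{t-1}(1-2p_{\min})^k \varepsilon_{t-k} + (1-2p_{\min})^t \sup_{\btheta \in \bTheta}\bigl\|\hat{\bpi}_{0 \mid 0}(\btheta) - \bpi_{0 \mid 0}(\btheta)\bigr\|_2,$$
and since a geometric sum of an e.a.s.\ sequence is again e.a.s.\ (take the exponential rate of $\varepsilon_t$ above $1-2p_{\min}$), the right-hand side vanishes e.a.s., establishing the lemma. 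The main obstacle is the uniform-in-$\btheta$ perturbation bound for the first bracket: the cancellation of the likelihood values against the normaliser is exactly what makes the moment condition on $\nabla_{x_j}\log f_j$ (rather than on $\nabla_{x_j} f_j$, which would be far more restrictive) sufficient, and keeping all constants uniform over the compact parameter space $\bTheta$ while carrying this through is the delicate part of the argument.
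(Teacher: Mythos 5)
Your overall architecture (stationary SRE via Bougerol, plus a Straumann--Mikosch perturbation argument in which the log-derivative cancellation makes the moment condition on $\nabla_{x_j}\log f_j$ the right one) matches the paper, and your treatment of the perturbation term is essentially the paper's. But there is a genuine gap at the heart of the contraction step: the claim that the Bayesian update $\bpi \mapsto \mathbf{F}_t(\bpi)\bpi$ is non-expansive in total variation is false, so the deterministic one-step bound $\Lambda(\Psi_t) \leq 1 - 2p_{\min}$ does not hold. Take $J=2$, $f_1 = 1/\delta$, $f_2 = 1$, $\bx = (\delta, 1-\delta)$, $\by = (2\delta, 1-2\delta)$: then $\|\bx-\by\|_1 = 2\delta$ while the updated vectors are approximately $(1/2,1/2)$ and $(2/3,1/3)$, so $\|\tilde{\bx}-\tilde{\by}\|_1 \approx 1/3$ and the expansion factor blows up as $\delta \to 0$. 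The Bayes map is an isometry in the Hilbert projective metric, not a non-expansion in $\ell_1$; its $\ell_1$-Lipschitz constant depends on the realised likelihood ratios and is not uniformly bounded by $1$. Consequently your displayed geometric-sum iteration with modulus $1-2p_{\min}$, and the e.a.s.\ conclusion drawn from it, do not follow from what you have established.

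The paper repairs exactly this point by working with the $r$-fold composition: it rewrites $\bphi_t^{(r)}(\bz;\btheta)$ as $\bM_{t \mid t}^{\prime}(\btheta)\cdots\bM_{t-r+1 \mid t}^{\prime}(\btheta)\tilde{\bz}$, where the $\bM_{\tau \mid t}(\btheta)$ are stochastic matrices built from conditional probabilities satisfying a Doeblin minorisation $[\bM_{\tau \mid t}(\btheta)]_{ij} \geq \varepsilon [\bv_{\tau \mid t}(\btheta)]_j$ (its Lemma A.1), and where the initial Bayes transformation $\bz \mapsto \tilde{\bz}$ is Lipschitz with constant of order $(J+1)/\varepsilon$ \emph{precisely because} the relevant likelihoods are themselves mixtures over transition probabilities bounded below by $\varepsilon$. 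The product of the minorised matrices contracts $\ell_1$ by $\alpha^r$ with $\alpha \in (0,1)$, so the composition is Lipschitz with constant $\alpha^r (J+1)\sqrt{J}/\varepsilon$, which is less than one only for $r$ large --- this is why condition (iii) of Bougerol's theorem is stated for some $r \in \mathbb{N}$ rather than $r=1$. To salvage your one-step picture you would have to switch to the Hilbert projective metric (where the update is an isometry and $\mathbf{P}^{\prime}$ is a strict Birkhoff contraction) and then translate back to $\|\cdot\|_2$ using the minorisation of $\bpi_{t\mid t-1}$; as written, the key quantitative step is missing.
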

\noindent
The next corollary is a direct consequence hereof.
\begin{corollary} \label{CorollaryInvertibilityPrediction}
	Under the assumptions in Lemma \ref{LemmaInvertibilityFilter}, $(\bpi_{t \mid t-1} (\btheta))_{t \in \mathbb{Z}}$ given by
	\begin{equation*}
		\bpi_{t+1 \mid t} (\btheta) = \mathbf{P}^{\prime} \bpi_{t \mid t} (\btheta)
	\end{equation*}
	is stationary and ergodic for all $\btheta \in \bTheta$ and
	\begin{equation*}
		\sup_{\btheta \in \bTheta} \left| \left| \hat{\bpi}_{t \mid t-1} (\btheta) - \bpi_{t \mid t-1} (\btheta) \right| \right|_{2} \overset{e.a.s.}{\rightarrow} 0 \quad \text{as} \quad t \rightarrow \infty
	\end{equation*}
	for any initialisation $\hat{\bpi}_{0 \mid 0} (\btheta) \in \mathcal{S}$.
\end{corollary}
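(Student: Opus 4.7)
The plan is to deduce both assertions directly from Lemma \ref{LemmaInvertibilityFilter} by exploiting the fact that the prediction step is obtained from the filter step by a time-shift and multiplication by the (parameter-dependent but bounded) stochastic matrix $\mathbf{P}^{\prime}$.

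First, I would establish stationarity and ergodicity of $(\bpi_{t\mid t-1}(\btheta))_{t\in\mathbb{Z}}$. Since Lemma \ref{LemmaInvertibilityFilter} provides a stationary and ergodic sequence $(\bpi_{t\mid t}(\btheta))_{t\in\mathbb{Z}}$ for every fixed $\btheta\in\bTheta$, and since $\bpi_{t+1\mid t}(\btheta)=\mathbf{P}^{\prime}\bpi_{t\mid t}(\btheta)$ is a fixed (deterministic, $\btheta$-dependent) linear function of $\bpi_{t\mid t}(\btheta)$, stationarity and ergodicity transfer directly: measurable functions of a stationary ergodic process are stationary and ergodic, and the time shift $t\mapsto t+1$ is measure-preserving.

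Second, I would establish uniform e.a.s.\ convergence. Writing
\begin{equation*}
    \hat{\bpi}_{t\mid t-1}(\btheta)-\bpi_{t\mid t-1}(\btheta)=\mathbf{P}^{\prime}\bigl(\hat{\bpi}_{t-1\mid t-1}(\btheta)-\bpi_{t-1\mid t-1}(\btheta)\bigr),
\end{equation*}
I would bound
\begin{equation*}
    \sup_{\btheta\in\bTheta}\left\Vert \hat{\bpi}_{t\mid t-1}(\btheta)-\bpi_{t\mid t-1}(\btheta)\right\Vert_{2}\leq \sup_{\btheta\in\bTheta}\left\Vert \mathbf{P}^{\prime}\right\Vert_{2,2}\cdot\sup_{\btheta\in\bTheta}\left\Vert \hat{\bpi}_{t-1\mid t-1}(\btheta)-\bpi_{t-1\mid t-1}(\btheta)\right\Vert_{2}.
\end{equation*}
Because every entry of $\mathbf{P}$ lies in $[0,1]$, the operator norm $\Vert \mathbf{P}^{\prime}\Vert_{2,2}$ is uniformly bounded by a deterministic constant $C<\infty$ that does not depend on $\btheta$. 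By Lemma \ref{LemmaInvertibilityFilter}, the right-hand side converges e.a.s.\ to zero as $t\rightarrow\infty$, and multiplication by the deterministic constant $C$ preserves e.a.s.\ convergence (the exponential rate $\gamma>1$ from the definition carries through unchanged). This gives the claimed uniform e.a.s.\ convergence of $\hat{\bpi}_{t\mid t-1}(\btheta)$ to $\bpi_{t\mid t-1}(\btheta)$ for any initialisation $\hat{\bpi}_{0\mid 0}(\btheta)\in\mathcal{S}$.

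There is no serious obstacle here: the entire argument is a one-line reduction to Lemma \ref{LemmaInvertibilityFilter} plus the trivial observation that $\mathbf{P}^{\prime}$ has uniformly bounded operator norm on $\bTheta$. The only small point worth flagging is that one must invoke uniformity on $\bTheta$ of the bound on $\Vert\mathbf{P}^{\prime}\Vert_{2,2}$, which is immediate because $\mathbf{P}$ is a stochastic matrix, so no appeal to compactness of $\bTheta$ or continuity in $\btheta$ is actually required for this step.
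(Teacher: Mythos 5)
Your proof is correct and is exactly the argument the paper intends: the paper states Corollary \ref{CorollaryInvertibilityPrediction} as a ``direct consequence'' of Lemma \ref{LemmaInvertibilityFilter} without further detail, and your filling-in (stationarity and ergodicity transfer under the measurable map $\bpi \mapsto \mathbf{P}^{\prime}\bpi$ composed with a time shift, and e.a.s.\ convergence is preserved under multiplication by the uniformly bounded operator norm of the stochastic matrix $\mathbf{P}^{\prime}$) is precisely the intended reduction.
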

\begin{remark} \label{Remark}
	Note that, although $\bpi_{t \mid t} (\btheta) \in \mathcal{S}$ for all $t \in \mathbb{Z}$, $\bpi_{t \mid t-1} (\btheta) \in \mathcal{S}_{\btheta}$ for all $t \in \mathbb{Z}$ where $\mathcal{S}_{\btheta} := \{\bx \in \mathbb{R}^{J} : x_j \geq \min_{i \in \{1,...,J\}} p_{ij}, j=1,...,J, \sum_{j=1}^{J} x_j = 1 \}$.
\end{remark}
\noindent
The result in Lemma \ref{LemmaInvertibilityFilter}/Corollary \ref{CorollaryInvertibilityPrediction} is not surprising. The Markov chain itself forgets its initialisation asymptotically (in case it is initialised) as $p_{ij} > 0$ for all $i,j \in \{1,...,J\}$, so it is not surprising that the filter/predictor also forgets its initialisation asymptotically provided that the time-varying parameters do the same.

Moreover, we assume the following.
\begin{assumption} \label{AssumptionF2}
	For each $j \in \{1,...,J\}$,
	\begin{equation*}
		\mathbb{E} \left[ \sup_{\bupsilon_j \in \bUpsilon_j} | \log f_{j} (Y_t;X_{j,t}(\bupsilon_j),\bupsilon_j)| \right] < \infty.
	\end{equation*}
\end{assumption}
\noindent
Assumption \ref{AssumptionF2} is a standard moment condition and is similar to Assumption 3 in \cite{DoucMoulinesRyden2004} and Assumption 4 in \cite{KasaharaShimotsu2019}.

Finally, we assume the following as in \citet{FrancqRoussignol1998} and \cite{FrancqRoussignolZakoian2001} where $f^{(m)} (\by ; \btheta), \by \in \mathcal{Y}^{m}$ denotes the conditional pdf of $\bY_{t-m+1}^{t}$ given $\bY_{-\infty}^{t-m}$.
\begin{assumption} \label{AssumptionIdentification}
	There exists an $m \in \mathbb{N}$ such that
	\begin{align*}
		f^{(m)} (\bY_{t-m+1}^{t} ; \btheta) &= f^{(m)} (\bY_{t-m+1}^{t} ; \btheta_0) \quad a.s.
		\shortintertext{implies that}
		\btheta &= \btheta_0.
	\end{align*}
\end{assumption}
\noindent
Assumption \ref{AssumptionIdentification} is a standard identification condition. Note that slightly different identification conditions are used in \cite{DoucMoulinesRyden2004} and \cite{KasaharaShimotsu2019}.

Theorem \ref{TheoremConsistency} gives conditions under which the MLE is consistent. 
\begin{theorem} \label{TheoremConsistency}
	Assume that Assumptions \ref{AssumptionY}-\ref{AssumptionIdentification} and the conditions in Lemmas \ref{LemmaInvertibilityX} and \ref{LemmaInvertibilityFilter} hold. Then,
	\begin{equation*}
		\hat{\btheta}_T \overset{a.s.}{\rightarrow} \btheta_0 \quad \text{as} \quad T \rightarrow \infty.
	\end{equation*}
\end{theorem}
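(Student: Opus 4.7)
The plan is to follow the classical Wald-style three-step strategy adapted to observation-driven models: (i) replace the approximate log-likelihood $\hat{L}_T$ by a stationary ergodic counterpart $L_T$ built from the invariant versions of the time-varying parameters and of the filter; (ii) apply a uniform ergodic theorem to obtain the deterministic limit $L(\btheta):=\mathbb{E}[\log f(Y_t;\btheta)]$; and (iii) show that $L$ has a unique maximiser at $\btheta_0$. The final conclusion then follows from the standard argmax argument on the compact parameter space $\bTheta$.

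For Step (i), define $f(Y_t;\btheta):=\sum_{j=1}^{J}\pi_{j,t\mid t-1}(\btheta)\,f_{j}(Y_t;X_{j,t}(\bupsilon_j),\bupsilon_j)$ using the stationary processes from Lemma \ref{LemmaInvertibilityX} and Corollary \ref{CorollaryInvertibilityPrediction}, and set $L_T(\btheta):=T^{-1}\sum_{t=1}^{T}\log f(Y_t;\btheta)$. I would show $\sup_{\btheta\in\bTheta}|\hat{L}_T(\btheta)-L_T(\btheta)|\to 0$ almost surely by expanding $\log\hat f-\log f$ via the mean value theorem and controlling the three sources of error: the differences $\hat{X}_{j,t}(\bupsilon_j)-X_{j,t}(\bupsilon_j)$ (Lemma \ref{LemmaInvertibilityX}, e.a.s., uniform in $\bupsilon_j$), the differences $\hat{\bpi}_{t\mid t-1}(\btheta)-\bpi_{t\mid t-1}(\btheta)$ (Corollary \ref{CorollaryInvertibilityPrediction}, e.a.s., uniform in $\btheta$), and the envelope provided by $\sup_{\bupsilon_j,x_j}|\nabla_{x_j}\log f_j(Y_t;x_j,\bupsilon_j)|^{m_j}$ from the hypothesis of Lemma \ref{LemmaInvertibilityFilter}. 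The uniform lower bound $\pi_{j,t\mid t-1}(\btheta)\geq\min_{i,k}p_{ik}>0$ from Remark \ref{Remark}, which holds because $\bTheta$ is compact and transition probabilities are strictly positive, prevents $\log f$ from blowing up; Lemma 2.1 in \citet{StraumannMikosch2006} then turns the e.a.s. bounds plus moment assumptions into a summable uniform bound on the Cesàro average, hence into a.s. convergence to $0$.

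For Step (ii), on the compact set $\bTheta$ the map $\btheta\mapsto\log f(Y_t;\btheta)$ is continuous by Assumptions \ref{AssumptionF1} and \ref{AssumptionPhi1}, and the envelope $\sup_{\btheta\in\bTheta}|\log f(Y_t;\btheta)|\leq C+\sum_{j=1}^{J}\sup_{\bupsilon_j\in\bUpsilon_j}|\log f_j(Y_t;X_{j,t}(\bupsilon_j),\bupsilon_j)|$ is integrable by Assumption \ref{AssumptionF2} together with the $\pi$-lower bound from Remark \ref{Remark}. A uniform ergodic theorem (e.g.\ Theorem 2.7 in \citet{StraumannMikosch2006}) then gives $\sup_{\btheta\in\bTheta}|L_T(\btheta)-L(\btheta)|\overset{a.s.}{\to}0$. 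For Step (iii), write $L(\btheta_0)-L(\btheta)=\mathbb{E}[\log(f(Y_t;\btheta_0)/f(Y_t;\btheta))]$; Jensen's inequality gives non-negativity, and equality forces $f(Y_t\mid\bY_{-\infty}^{t-1};\btheta)=f(Y_t\mid\bY_{-\infty}^{t-1};\btheta_0)$ a.s., which by the chain rule and stationarity propagates to $f^{(m)}(\bY_{t-m+1}^{t};\btheta)=f^{(m)}(\bY_{t-m+1}^{t};\btheta_0)$ a.s.\ for the $m$ in Assumption \ref{AssumptionIdentification}, hence $\btheta=\btheta_0$. Combining Steps (i)--(iii), continuity of $L$ on compact $\bTheta$ and uniqueness of its maximiser yield $\hat{\btheta}_T\overset{a.s.}{\to}\btheta_0$ by the standard argmax continuous mapping argument.

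The main obstacle is Step (i): the time-varying parameter processes and the filter are intertwined, since $\hat{\bpi}_{t\mid t-1}(\btheta)$ depends on the non-stationary $\hat{X}_{j,t}(\bupsilon_j)$, and the log-likelihood increment is a non-linear function of both. Converting the e.a.s.\ uniform convergences of Lemmata \ref{LemmaInvertibilityX} and \ref{LemmaInvertibilityFilter} into \emph{uniform in $\btheta$} a.s.\ convergence of the log-densities requires careful use of the moment envelope on $\nabla_{x_j}\log f_j$, and the lower bound on $\pi_{j,t\mid t-1}(\btheta)$ from Remark \ref{Remark} is essential to keep the logarithm and its gradient integrable; without it both the approximation step and the identifiability argument would fail.
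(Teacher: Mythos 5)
Your proposal is correct and follows essentially the same route as the paper: the paper likewise splits $\sup_{\btheta}|\hat{L}_T-L|$ into $\sup_{\btheta}|\hat{L}_T-L_T|$ (handled by mean-value-theorem expansions, the e.a.s.\ convergences of Lemmata \ref{LemmaInvertibilityX} and \ref{LemmaInvertibilityFilter}, the moment envelope on $\nabla_{x_j}\log f_j$, the lower bound of Remark \ref{Remark}, and Lemma 2.1 of \citet{StraumannMikosch2006}) and $\sup_{\btheta}|L_T-L|$ (uniform ergodic theorem, with the integrable envelope from Assumption \ref{AssumptionF2}), and then establishes identifiability through the $m$-step density of Assumption \ref{AssumptionIdentification} before invoking a standard argmax result on the compact $\bTheta$. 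The only cosmetic differences are the choice of uniform law of large numbers and the direction of the telescoping argument linking the one-step and $m$-step densities, both of which are equivalent to the paper's treatment.
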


\subsection{Asymptotic Normality} \label{AsymptoticNormality}

Asymptotic normality follows from classical arguments as consistency if the \textit{initialised} and thus non-stationary score vector $\sqrt{T} \nabla_{\btheta} \hat{L}_T (\btheta_0)$ and observed Fisher information matrix $(-\nabla_{\btheta \btheta} \hat{L}_T (\btheta))$ converge in distribution to $\mathcal{N} (\mathbf{0},\bI(\btheta_0))$ and uniformly a.s. to $\bI(\btheta)$, respectively, where
\begin{equation*}
	\bI (\btheta) = - \mathbb{E} [ \nabla_{\btheta \btheta} \log f (Y_t;\btheta) ]
\end{equation*}
is the Fisher information matrix. As above, we show both in two steps, the first steps being the main difficulties and hence focus. To show the former, we first show that the difference between $\sqrt{T} \nabla_{\btheta} \hat{L}_T (\btheta)$ and the \textit{non-initialised}, stationary, and ergodic score vector $\sqrt{T} \nabla_{\btheta} L_T (\btheta)$ converges uniformly a.s. to zero by showing that the first-order derivatives of the time-varying parameters and the predictor forget their initialisations asymptotically and then that $\sqrt{T} \nabla_{\btheta} L_T (\btheta_0)$ converges in distribution to $\mathcal{N} (\mathbf{0},\bI(\btheta_0))$ by using standard arguments. To show the latter, we first show that the difference between $(-\nabla_{\btheta \btheta} \hat{L}_T (\btheta))$ and the \textit{non-initialised}, stationary, and ergodic observed Fisher information matrix $(-\nabla_{\btheta \btheta} L_T (\btheta))$ converges uniformly a.s. to zero by showing that also the second-order derivatives of the time-varying parameters and the predictor forget their initialisations asymptotically and then that $(-\nabla_{\btheta \btheta} L_T (\btheta))$ converges uniformly a.s. to $\bI(\btheta)$ by using standard arguments.

In the following, let, for a function $\bupsilon \mapsto f(X(\bupsilon),\bupsilon) : \bUpsilon \rightarrow \mathbb{R}$ where $\bupsilon \mapsto X(\bupsilon) : \bUpsilon \rightarrow \mathbb{R}$ is another function, $\bar{\nabla}_{x} f(X(\bupsilon),\bupsilon)$ and $\bar{\nabla}_{\bupsilon} f(X(\bupsilon),\bupsilon)$ be given by 
\begin{equation*}
	\bar{\nabla}_{x} f(X(\bupsilon),\bupsilon) := \left. \nabla_{\bar{x}} f(\bar{x},\bar{\bupsilon}) \right|_{\bar{x} = X(\bupsilon),\bar{\bupsilon} = \bupsilon} \quad \text{and} \quad \bar{\nabla}_{\bupsilon} f(X(\bupsilon),\bupsilon) := \left. \nabla_{ \bar{\bupsilon}} f(\bar{x},\bar{\bupsilon}) \right|_{\bar{x} = X(\bupsilon),\bar{\bupsilon} = \bupsilon},
\end{equation*}
$\bar{\nabla}_{\bupsilon x} f(X(\bupsilon),\bupsilon)$ be given by
\begin{equation*}
	\bar{\nabla}_{\bupsilon x} f(X(\bupsilon),\bupsilon) := \left. \nabla_{\bar{\bupsilon} \bar{x}} f(\bar{x},\bar{\bupsilon}) \right|_{\bar{x} = X(\bupsilon),\bar{\bupsilon} = \bupsilon},
\end{equation*}
and $\bar{\nabla}_{x x} f(X(\bupsilon),\bupsilon)$ and $\bar{\nabla}_{\bupsilon \bupsilon} f(X(\bupsilon),\bupsilon)$ be given by
\begin{equation*}
	\bar{\nabla}_{x x} f(X(\bupsilon),\bupsilon) := \left. \nabla_{\bar{x} \bar{x}} f(\bar{x},\bar{\bupsilon}) \right|_{\bar{x} = X(\bupsilon),\bar{\bupsilon} = \bupsilon} \quad \text{and} \quad 
	\bar{\nabla}_{\bupsilon \bupsilon} f(X(\bupsilon),\bupsilon) := \left. \nabla_{\bar{\bupsilon} \bar{\bupsilon}} f(\bar{x},\bar{\bupsilon}) \right|_{\bar{x} = X(\bupsilon),\bar{\bupsilon} = \bupsilon}.
\end{equation*}

In addition to Assumptions \ref{AssumptionY}-\ref{AssumptionIdentification}, we assume the following.
\begin{assumption} \label{AssumptionTheta2}
	$\btheta_0 \in \textup{int}(\bTheta)$.
\end{assumption}
\begin{assumption} \label{AssumptionF3}
	For  each $j \in \{1,...,J\}$,
	\begin{enumerate} [(i)]
		\item $(x_j,\bupsilon_j) \mapsto f_{j} (y;x_j,\bupsilon_j)$ is twice continuously differentiable for all $y \in \mathcal{Y}$ and
		\item $x_j \mapsto \nabla_{(x_j,\bupsilon_j)(x_j,\bupsilon_j)} f_{j} (y;x_j,\bupsilon_j)$ is differentiable for all $y \in \mathcal{Y}$ and $\bupsilon_j \in \bUpsilon_{j}$.
	\end{enumerate}
\end{assumption}
\begin{assumption} \label{AssumptionPhi2}
	For each $j \in \{1,...,J\}$,
	\begin{enumerate} [(i)]
		\item $(x_j,\bupsilon_j) \mapsto \phi_{j} (y,x_j;\bupsilon_j)$ is twice continuously differentiable for all $y \in \mathcal{Y}$.
	\end{enumerate}
\end{assumption}
\noindent
Assumption \ref{AssumptionTheta2} is a standard regularity condition, which implies that there exists an $\varepsilon > 0$ such that $\btheta_0 \in \textup{int}(\bar{\bTheta})$ where
\begin{equation*}
	\bar{\bTheta} := \left\lbrace \btheta \in \bTheta : \left| \left| \btheta - \btheta_0 \right| \right|_2 \leq \varepsilon \right\rbrace \subset \textup{int}(\bTheta).
\end{equation*}
Assumptions \ref{AssumptionF3} and \ref{AssumptionPhi2} are also standard regularity conditions; Condition (i) in Assumption \ref{AssumptionF3} is similar to Assumption 6 in \cite{DoucMoulinesRyden2004} and Assumption 7(a) in \cite{KasaharaShimotsu2019}, and Condition (ii) in Assumption \ref{AssumptionF3} is an extra regularity condition, which is used to prove that the second-order derivative of the predictor forgets its initialisation asymptotically.

The next two lemmata, which follow from applications of Theorem 2.10 in \citet{StraumannMikosch2006}, give conditions under which, for each $j \in \{1,...,J\}$, the difference between $( \nabla_{\bupsilon_j} \hat{X}_{j,t} (\bupsilon_j) )_{t \in \mathbb{N}}$ and $( \nabla_{\bupsilon_j} X_{j,t} (\bupsilon_j) )_{t \in \mathbb{Z}}$ and the one between $( \nabla_{\bupsilon_j \bupsilon_j} \hat{X}_{j,t} (\bupsilon_j) )_{t \in \mathbb{N}}$ and $( \nabla_{\bupsilon_j \bupsilon_j} X_{j,t} (\bupsilon_j) )_{t \in \mathbb{Z}}$ converge uniformly e.a.s. to zero.
\begin{lemma} \label{LemmaInvertibilityDX}
	Assume that Assumptions \ref{AssumptionY}, \ref{AssumptionTheta1}, \ref{AssumptionPhi1}, \ref{AssumptionPhi2}, and the conditions in Lemma \ref{LemmaInvertibilityX} hold. Moreover, assume that, for each $j \in \{1,...,J\}$,
	\begin{enumerate} [(i)]
		\item $\mathbb{E} [\log^{+} \sup_{\bupsilon_{j} \in \bar{\bUpsilon}_{j}} || \bar{\nabla}_{\bupsilon_j} \phi_{j} (Y_t,X_{j,t} (\bupsilon_{j});\bupsilon_{j}) ||_{2} ] < \infty$.
	\end{enumerate}
	Then, for each $j \in \{1,...,J\}$, $( \nabla_{\bupsilon_j} X_{j,t} (\bupsilon_j) )_{t \in \mathbb{Z}}$ is stationary and ergodic for all $\bupsilon_j \in \bar{\bUpsilon}_j$. Finally, assume that, for each $j \in \{1,...,J\}$,
	\begin{enumerate} [(i)]
		\item $\mathbb{E} [ \log^{+} \sup_{\bupsilon_{j} \in \bar{\bUpsilon}_{j}} || \nabla_{\bupsilon_j} X_{j,t} (\bupsilon_j) ||_{2} ] < \infty$,
		\item $\sup_{\bupsilon_{j} \in \bar{\bUpsilon}_{j}} || \bar{\nabla}_{\bupsilon_j} \phi_{j} (Y_t,\hat{X}_{j,t} (\bupsilon_{j});\bupsilon_{j}) - \bar{\nabla}_{\bupsilon_j} \phi_{j} (Y_t,X_{j,t} (\bupsilon_{j});\bupsilon_{j}) ||_{2} \overset{e.a.s.}{\rightarrow} 0$ as $t \rightarrow \infty$, and 
		\item $\sup_{\bupsilon_{j} \in \bar{\bUpsilon}_{j}} | \bar{\nabla}_{x_j} \phi_{j} (Y_t,\hat{X}_{j,t} (\bupsilon_{j});\bupsilon_{j}) - \bar{\nabla}_{x_j} \phi_{j} (Y_t,X_{j,t} (\bupsilon_{j});\bupsilon_{j}) | \overset{e.a.s.}{\rightarrow} 0$ as $t \rightarrow \infty$.
	\end{enumerate}
	Then, for each $j \in \{1,...,J\}$,
	\begin{equation*}
		\sup_{\bupsilon_{j} \in \bar{\bUpsilon}_{j}} || \nabla_{\bupsilon_j} \hat{X}_{j,t} (\bupsilon_j) - \nabla_{\bupsilon_j} X_{j,t} (\bupsilon_j) ||_{2} \overset{e.a.s.}{\rightarrow} 0 \quad \text{as} \quad t \rightarrow \infty
	\end{equation*}
	for any initialisation $\nabla_{\bupsilon_j} \hat{X}_{j,1} (\bupsilon_j) \in \mathbb{R}^{d_j}$.
\end{lemma}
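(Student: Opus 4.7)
Differentiating the recursion $X_{j,t+1}(\bupsilon_j)=\phi_j(Y_t,X_{j,t}(\bupsilon_j);\bupsilon_j)$ in $\bupsilon_j$ produces the affine stochastic recurrence equation
\begin{equation*}
\nabla_{\bupsilon_j} X_{j,t+1}(\bupsilon_j) = A_{j,t}(\bupsilon_j)\,\nabla_{\bupsilon_j} X_{j,t}(\bupsilon_j) + B_{j,t}(\bupsilon_j),
\end{equation*}
with $A_{j,t}(\bupsilon_j):=\bar{\nabla}_{x_j}\phi_j(Y_t,X_{j,t}(\bupsilon_j);\bupsilon_j)$ and $B_{j,t}(\bupsilon_j):=\bar{\nabla}_{\bupsilon_j}\phi_j(Y_t,X_{j,t}(\bupsilon_j);\bupsilon_j)$, and the initialised version $(\nabla_{\bupsilon_j}\hat{X}_{j,t}(\bupsilon_j))_{t\in\mathbb{N}}$ obeys the same recursion with $X_{j,t}$ replaced by $\hat{X}_{j,t}$ throughout, producing coefficients $\hat{A}_{j,t}(\bupsilon_j)$ and $\hat{B}_{j,t}(\bupsilon_j)$. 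For the first claim I would invoke Theorem 3.1 of \citet{Bougerol1993} (as in the proof of Theorem \ref{TheoremStationarityAndErgodicity}) applied uniformly in $\bupsilon_j\in\bUpsilon_j$ to the affine random Lipschitz map $z\mapsto A_{j,t}(\bupsilon_j)z+B_{j,t}(\bupsilon_j)$. Its Lipschitz constant is bounded by $\sup_{\bupsilon_j}\Lambda_{j,t}(\bupsilon_j)$, so Lemma \ref{LemmaInvertibilityX}(ii)--(iii) furnishes the log-moment and negative top Lyapunov exponent hypotheses, while assumption (i) of the present lemma supplies $\mathbb{E}[\log^{+}\sup_{\bupsilon_j}\|B_{j,t}(\bupsilon_j)\|_2]<\infty$. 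A unique stationary and ergodic solution $(\nabla_{\bupsilon_j}X_{j,t}(\bupsilon_j))_{t\in\mathbb{Z}}$ then follows.

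For the e.a.s.\ convergence, let $\Delta_{j,t}(\bupsilon_j):=\nabla_{\bupsilon_j}\hat{X}_{j,t}(\bupsilon_j)-\nabla_{\bupsilon_j}X_{j,t}(\bupsilon_j)$. Subtracting the two recursions gives
\begin{equation*}
\Delta_{j,t+1}(\bupsilon_j) = \hat{A}_{j,t}(\bupsilon_j)\,\Delta_{j,t}(\bupsilon_j) + R_{j,t}(\bupsilon_j),
\end{equation*}
where
\begin{equation*}
R_{j,t}(\bupsilon_j):=\bigl[\hat{A}_{j,t}(\bupsilon_j)-A_{j,t}(\bupsilon_j)\bigr]\nabla_{\bupsilon_j}X_{j,t}(\bupsilon_j)+\bigl[\hat{B}_{j,t}(\bupsilon_j)-B_{j,t}(\bupsilon_j)\bigr]
\end{equation*}
collects the two perturbation terms, both of whose bracketed factors tend to zero e.a.s.\ uniformly in $\bupsilon_j$ by assumptions (ii)--(iii). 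Unrolling yields
\begin{equation*}
\Delta_{j,t+1}(\bupsilon_j) = \Biggl(\prod_{s=1}^{t}\hat{A}_{j,s}(\bupsilon_j)\Biggr)\Delta_{j,1}(\bupsilon_j) + \sum_{k=1}^{t}\Biggl(\prod_{s=k+1}^{t}\hat{A}_{j,s}(\bupsilon_j)\Biggr)R_{j,k}(\bupsilon_j).
\end{equation*}
Because $\sup_{\bupsilon_j}|\hat{A}_{j,s}(\bupsilon_j)|\le\sup_{\bupsilon_j}\Lambda_{j,s}(\bupsilon_j)$ and $\mathbb{E}[\log\sup_{\bupsilon_j}\Lambda_{j,s}(\bupsilon_j)]<0$, the ergodic theorem supplies a deterministic $\rho\in(0,1)$ and an a.s.\ finite random constant $C$ such that $\sup_{\bupsilon_j}\prod_{s=k+1}^{t}|\hat{A}_{j,s}(\bupsilon_j)|\le C\rho^{t-k}$ for all $0\le k\le t$ on an almost-sure event. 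The homogeneous term therefore vanishes e.a.s.\ uniformly in $\bupsilon_j$.

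The main obstacle is the remaining sum, whose typical summand combines a geometrically decaying weight, an e.a.s.-vanishing factor from (ii) or (iii), and, in the first half of $R_{j,k}$, the stationary process $\nabla_{\bupsilon_j}X_{j,k}(\bupsilon_j)$ which has only finite $2k_j$-th moment by (i) (not an exponential moment). I would split the weight as $\rho^{t-k}=\rho^{(t-k)/2}\cdot\rho^{(t-k)/2}$, absorb $\rho^{(t-k)/2}$ into the stationary factor via a H\"older bound on $\sup_{\bupsilon_j}\|\nabla_{\bupsilon_j}X_{j,k}(\bupsilon_j)\|_2$ and Borel--Cantelli applied to the summable series $\sum_k\rho^{\eta k}\mathbb{E}[\sup_{\bupsilon_j}\|\nabla_{\bupsilon_j}X_{j,k}(\bupsilon_j)\|_2^{2k_j}]$ (which is finite for any $\eta>0$), and thereby reduce the problem to series of the form $\sum_{k=1}^{t}\rho^{(t-k)/2}u_{j,k}$, where $u_{j,k}:=\sup_{\bupsilon_j}\|R'_{j,k}(\bupsilon_j)\|\to 0$ e.a.s.\ after the H\"older step. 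The Cesaro-type device underlying the proof of Theorem 2.10 in \citet{StraumannMikosch2006} — also exploited above in Lemma \ref{LemmaInvertibilityFilter} — then shows that such a geometrically weighted convolution of an e.a.s.-vanishing sequence is itself e.a.s.-vanishing, uniformly in $\bupsilon_j$. The delicate point is precisely carrying the supremum over $\bupsilon_j$ through the H\"older/Borel--Cantelli step in a way compatible with the compactness of $\bUpsilon_j$.
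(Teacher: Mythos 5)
Your proposal is correct and takes essentially the same route as the paper: the identical affine stochastic recurrence for $\nabla_{\bupsilon_j}X_{j,t}(\bupsilon_j)$ with coefficients $A_{j,t}=\bar{\nabla}_{x_j}\phi_j$ and $B_{j,t}=\bar{\nabla}_{\bupsilon_j}\phi_j$, Bougerol's Theorem 3.1 for stationarity and ergodicity (with the Lipschitz constant controlled by $\Lambda_{j,t}$ from Lemma \ref{LemmaInvertibilityX}), and a perturbation argument for the initialised sequence. The only difference is that the paper invokes Theorem 2.10 of \citet{StraumannMikosch2006} through its Technical Lemma \ref{TechnicalLemmaB}, which works on the function space $\mathcal{C}(\bUpsilon_j,\mathbb{R}^{d_j})$ with the sup norm and thereby disposes of the uniformity in $\bupsilon_j$ you flag as delicate, whereas you re-derive that perturbation result by unrolling the difference recursion; your H\"older/Borel--Cantelli treatment of $[\hat A_{j,k}-A_{j,k}]\nabla_{\bupsilon_j}X_{j,k}$ is workable but can be replaced by the simpler standard fact (Lemma 2.1 of \citet{StraumannMikosch2006}) that an e.a.s.-null sequence multiplied by a stationary sequence with finite logarithmic moment is e.a.s.-null, so only the log-moment implied by assumption (i) is actually needed there.
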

\begin{lemma} \label{LemmaInvertibilityDDX}
	Assume that Assumptions \ref{AssumptionY}, \ref{AssumptionTheta1}, \ref{AssumptionPhi1}, \ref{AssumptionPhi2}, and the conditions in Lemmas \ref{LemmaInvertibilityX} and \ref{LemmaInvertibilityDX} hold. Moreover, assume that, for each $j \in \{1,...,J\}$,
	\begin{enumerate} [(i)]
		\item $\mathbb{E} [\log^{+} \sup_{\bupsilon_{j} \in \bar{\bUpsilon}_{j}} | \bar{\nabla}_{x_j x_j} \phi_{j} (Y_t,X_{j,t} (\bupsilon_{j});\bupsilon_{j}) | ] < \infty$,
		\item $\mathbb{E} [\log^{+} \sup_{\bupsilon_{j} \in \bar{\bUpsilon}_{j}} || \bar{\nabla}_{\bupsilon_j x_j} \phi_{j} (Y_t,X_{j,t} (\bupsilon_{j});\bupsilon_{j}) ||_{2} ] < \infty$, and 
		\item $\mathbb{E} [\log^{+} \sup_{\bupsilon_{j} \in \bar{\bUpsilon}_{j}} || \bar{\nabla}_{\bupsilon_j \bupsilon_j} \phi_{j} (Y_t,X_{j,t} (\bupsilon_{j});\bupsilon_{j}) ||_{2,2} ] < \infty$.
	\end{enumerate}
	Then, for each $j \in \{1,...,J\}$, $( \nabla_{\bupsilon_j \bupsilon_j} X_{j,t} (\bupsilon_j) )_{t \in \mathbb{Z}}$ is stationary and ergodic for all $\bupsilon_j \in \bar{\bUpsilon}_j$. Finally, assume that, for each $j \in \{1,...,J\}$,
	\begin{enumerate} [(i)]
		\item $\mathbb{E} [ \log^{+} \sup_{\bupsilon_{j} \in \bar{\bUpsilon}_{j}} || \nabla_{\bupsilon_j \bupsilon_j} X_{j,t} (\bupsilon_j) ||_{2,2} ] < \infty$,
		\item $\sup_{\bupsilon_{j} \in \bar{\bUpsilon}_{j}} | \bar{\nabla}_{x_j x_j} \phi_{j} (Y_t,\hat{X}_{j,t} (\bupsilon_{j});\bupsilon_{j}) - \bar{\nabla}_{x_j x_j} \phi_{j} (Y_t,X_{j,t} (\bupsilon_{j});\bupsilon_{j}) | \overset{e.a.s.}{\rightarrow} 0$ as $t \rightarrow \infty$, 
		\item $\sup_{\bupsilon_{j} \in \bar{\bUpsilon}_{j}} || \bar{\nabla}_{\bupsilon_j x_j} \phi_{j} (Y_t,\hat{X}_{j,t} (\bupsilon_{j});\bupsilon_{j}) - \bar{\nabla}_{\bupsilon_j x_j} \phi_{j} (Y_t,X_{j,t} (\bupsilon_{j});\bupsilon_{j}) ||_{2} \overset{e.a.s.}{\rightarrow} 0$ as $t \rightarrow \infty$, and
		\item $\sup_{\bupsilon_{j} \in \bar{\bUpsilon}_{j}} || \bar{\nabla}_{\bupsilon_j \bupsilon_j} \phi_{j} (Y_t,\hat{X}_{j,t} (\bupsilon_{j});\bupsilon_{j}) - \bar{\nabla}_{\bupsilon_j \bupsilon_j} \phi_{j} (Y_t,X_{j,t} (\bupsilon_{j});\bupsilon_{j}) ||_{2,2} \overset{e.a.s.}{\rightarrow} 0$ as $t \rightarrow \infty$.
	\end{enumerate}
	Then, for each $j \in \{1,...,J\}$,
	\begin{equation*}
		\sup_{\bupsilon_{j} \in \bar{\bUpsilon}_{j}} || \nabla_{\bupsilon_j \bupsilon_j} \hat{X}_{j,t} (\bupsilon_j) - \nabla_{\bupsilon_j \bupsilon_j} X_{j,t} (\bupsilon_j) ||_{2,2} \overset{e.a.s.}{\rightarrow} 0 \quad \text{as} \quad t \rightarrow \infty
	\end{equation*}
	for any initialisation $\nabla_{\bupsilon_j \bupsilon_j} \hat{X}_{j,1} (\bupsilon_j) \in \mathbb{R}^{d_j \times d_j}$.
\end{lemma}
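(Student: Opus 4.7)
The plan is to mimic the proof of Lemma \ref{LemmaInvertibilityDX} one derivative higher. Since $X_{j,t}$ is scalar, twice differentiating $X_{j,t+1}(\bupsilon_j) = \phi_j(Y_t, X_{j,t}(\bupsilon_j); \bupsilon_j)$ in $\bupsilon_j$ yields a linear stochastic recursion for the Hessian $Z_t(\bupsilon_j) := \nabla_{\bupsilon_j\bupsilon_j} X_{j,t}(\bupsilon_j) \in \mathbb{R}^{d_j \times d_j}$,
\begin{equation*}
Z_{t+1}(\bupsilon_j) = A_t(\bupsilon_j)\, Z_t(\bupsilon_j) + R_t(\bupsilon_j),
\end{equation*}
where $A_t(\bupsilon_j) := \bar{\nabla}_{x_j}\phi_j(Y_t, X_{j,t}(\bupsilon_j); \bupsilon_j)$ and
\begin{equation*}
R_t(\bupsilon_j) := \bar{\nabla}_{\bupsilon_j\bupsilon_j}\phi_j + \bar{\nabla}_{\bupsilon_j x_j}\phi_j\,(\nabla_{\bupsilon_j} X_{j,t})^{\prime} + (\nabla_{\bupsilon_j} X_{j,t})(\bar{\nabla}_{\bupsilon_j x_j}\phi_j)^{\prime} + \bar{\nabla}_{x_j x_j}\phi_j\,(\nabla_{\bupsilon_j} X_{j,t})(\nabla_{\bupsilon_j} X_{j,t})^{\prime},
\end{equation*}
all partials being evaluated at $(Y_t, X_{j,t}(\bupsilon_j); \bupsilon_j)$. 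The hatted version $\hat{Z}_t := \nabla_{\bupsilon_j\bupsilon_j}\hat{X}_{j,t}$ satisfies the analogous recursion with every argument replaced by its hat.

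For the first assertion I apply Theorem 3.1 of \citet{Bougerol1993} to the random affine map $z \mapsto A_t(\bupsilon_j) z + R_t(\bupsilon_j)$, taking suprema over $\bupsilon_j \in \bUpsilon_j$. Its uniform Lipschitz constant is bounded by $\sup_{\bupsilon_j}|A_t(\bupsilon_j)| \le \sup_{\bupsilon_j}\Lambda_{j,t}(\bupsilon_j)$, so condition (iii) of Lemma \ref{LemmaInvertibilityX} provides the negative Lyapunov exponent after $r$ iterations. Log-integrability of $\sup_{\bupsilon_j}\|R_t(\bupsilon_j)\|_{2,2}$ follows from the first three assumptions of the present lemma combined with Lemma \ref{LemmaInvertibilityDX}: the moment condition (i) of the \emph{finally}-clause of that lemma, via Markov's inequality, yields $\mathbb{E}\log^+\sup_{\bupsilon_j}\|\nabla_{\bupsilon_j} X_{j,t}\|_2^{2} < \infty$, which dominates the bilinear term. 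Bougerol then supplies the unique stationary and ergodic $(Z_t(\bupsilon_j))_{t\in\mathbb{Z}}$.

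For the second assertion, subtracting the hatted and stationary recursions gives
\begin{equation*}
\hat{Z}_{t+1} - Z_{t+1} = \hat{A}_t (\hat{Z}_t - Z_t) + (\hat{A}_t - A_t) Z_t + (\hat{R}_t - R_t),
\end{equation*}
which telescopes to
\begin{equation*}
\hat{Z}_{t+1} - Z_{t+1} = \Big(\prod_{s=1}^{t}\hat{A}_s\Big)(\hat{Z}_1 - Z_1) + \sum_{s=1}^{t}\Big(\prod_{r=s+1}^{t}\hat{A}_r\Big)\big[(\hat{A}_s - A_s) Z_s + (\hat{R}_s - R_s)\big].
\end{equation*}
The contraction in $\prod \hat{A}_s$ is inherited from the negative Lyapunov exponent of $\prod A_s$ via the e.a.s. convergence $\sup_{\bupsilon_j}|\hat{A}_s - A_s| \overset{e.a.s.}{\rightarrow} 0$ coming from assumption (iii) of the \emph{finally}-clause of Lemma \ref{LemmaInvertibilityDX}, so the initialisation term vanishes e.a.s. uniformly in $\bupsilon_j$. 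Each $\hat{R}_s - R_s$ is expanded by standard add-and-subtract steps into differences of the second partials of $\phi_j$ evaluated at $\hat{X}_{j,s}$ versus $X_{j,s}$ (handled by (ii), (iii), (iv) of the \emph{finally}-clause of the present lemma), and into differences $\nabla_{\bupsilon_j}\hat{X}_{j,s} - \nabla_{\bupsilon_j} X_{j,s}$ (handled by Lemma \ref{LemmaInvertibilityDX}); the multiplications with $\nabla_{\bupsilon_j} X_{j,s}$ and its hatted sibling are absorbed by Hölder's inequality using the moment bounds. The summed series is then closed by Lemma 2.1 of \citet{StraumannMikosch2006}, which converts the geometric contraction in $\hat{A}_s$ together with the e.a.s. rates in the forcing into an e.a.s. rate for the whole sum.

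The main obstacle will be the bilinear term $\bar{\nabla}_{x_j x_j}\phi_j\,(\nabla_{\bupsilon_j} X_{j,s})(\nabla_{\bupsilon_j} X_{j,s})^{\prime}$, whose hatted-minus-stationary expansion multiplies an e.a.s.-null scalar by two factors known only in $L^{2k_j}$. Controlling it uniformly in $\bupsilon_j$ requires combining the geometric rate $\gamma^{-s}$ of the e.a.s. factor with a subexponential almost-sure envelope for $\sup_{\bupsilon_j}\|\nabla_{\bupsilon_j}\hat{X}_{j,s}\|_2$, obtained from the moment bound $\mathbb{E}\sup_{\bupsilon_j}\|\nabla_{\bupsilon_j} X_{j,s}\|_2^{2k_j}<\infty$ (Lemma \ref{LemmaInvertibilityDX}, \emph{finally}-(i)) together with Markov's inequality and Borel--Cantelli; this is precisely where the exponent $k_j > 1$ is used, and no integrability beyond that which is already assumed should be required to close the argument.
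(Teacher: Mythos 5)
Your proposal is correct and follows essentially the same route as the paper: both derive the affine stochastic recurrence $Z_{t+1}=A_tZ_t+R_t$ for the Hessian with contraction coefficient $A_t=\bar{\nabla}_{x_j}\phi_j$, invoke Theorem 3.1 of \citet{Bougerol1993} uniformly over $\bUpsilon_j$ for the stationary ergodic solution, and treat the initialised recursion by the perturbation argument of Theorem 2.10 in \citet{StraumannMikosch2006} (which the paper packages as a technical lemma and you reprove inline by telescoping). Your handling of the bilinear forcing term via a subexponential almost-sure envelope for $\sup_{\bupsilon_j}\|\nabla_{\bupsilon_j}X_{j,t}\|_2$ is exactly the mechanism behind Lemma 2.1 of \citet{StraumannMikosch2006} that the paper relies on, so no gap remains.
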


We also assume the following.
\begin{assumption} \label{AssumptionF4}
	For each $j \in \{1,...,J\}$,	
	\begin{enumerate} [(i)]
		\item $\mathbb{E} [ \sup\nolimits_{\bupsilon_j \in \bar{\bUpsilon}_{j}} | \bar{\nabla}_{x_j} \log f_{j} (Y_t;X_{j,t} (\bupsilon_j),\bupsilon_j) |^2 || \nabla_{\bupsilon_j} X_{j,t} (\bupsilon_j) ||_{2}^2 ] < \infty$,
		\item $\mathbb{E} [ \sup\nolimits_{\bupsilon_j \in \bar{\bUpsilon}_{j}} || \bar{\nabla}_{\bupsilon_j} \log f_{j} (Y_t;X_{j,t} (\bupsilon_j),\bupsilon_j) ||_{2}^{2} ] < \infty$,
		\item $\mathbb{E} [ \sup\nolimits_{\bupsilon_j \in \bar{\bUpsilon}_{j}} | \bar{\nabla}_{x_j x_j} \log f_{j} (Y_t;{X}_{j,t} (\bupsilon_j),\bupsilon_j) | || \nabla_{\bupsilon_j} X_{j,t} (\bupsilon_j) ||_{2}^2 ] < \infty$,
		\item $\mathbb{E} [ \sup\nolimits_{\bupsilon_j \in \bar{\bUpsilon}_{j}} | \bar{\nabla}_{x_j} \log f_{j} (Y_t;{X}_{j,t} (\bupsilon_j),\bupsilon_j) | || \nabla_{\bupsilon_j \bupsilon_j} X_{j,t} (\bupsilon_j) ||_{2,2} ] < \infty$,
		\item $\mathbb{E} [ \sup\nolimits_{\bupsilon_j \in \bar{\bUpsilon}_{j}} || \bar{\nabla}_{\bupsilon_j x_j} \log f_{j} (Y_t;{X}_{j,t} (\bupsilon_j),\bupsilon_j) ||_{2} || \nabla_{\bupsilon_j} X_{j,t} (\bupsilon_j) ||_{2} ] < \infty$, and
		\item $\mathbb{E} [ \sup\nolimits_{\bupsilon_j \in \bar{\bUpsilon}_{j}} || \bar{\nabla}_{\bupsilon_j \bupsilon_j} \log f_{j} (Y_t;{X}_{j,t} (\bupsilon_j),\bupsilon_j) ||_{2,2} ] < \infty$.
	\end{enumerate}
\end{assumption}
\noindent
Assumption \ref{AssumptionF4} is standard moment conditions, which imply that, for each $j \in \{1,...,J\}$,
\begin{equation*}
	\mathbb{E} [ \sup\nolimits_{\bupsilon_j \in \bar{\bUpsilon}_j} || \nabla_{\bupsilon_j} \log f_{j} (Y_t;X_{j,t}(\bupsilon_j),\bupsilon_j) ||_{2}^{2} ] < \infty
\end{equation*}
and
\begin{equation*}
	\mathbb{E} [ \sup\nolimits_{\bupsilon_j \in \bar{\bUpsilon}_j} || \nabla_{\bupsilon_j \bupsilon_j} \log f_{j} (Y_t;X_{j,t}(\bupsilon_j),\bupsilon_j) ||_{2,2} ] < \infty,
\end{equation*}
and is thus similar to Assumption 7(b) in \cite{DoucMoulinesRyden2004} and Assumption 7(c) in \cite{KasaharaShimotsu2019}.

In the following two lemmata, let, for a function $\bx \mapsto \mathbf{f}(\bx) : \mathbb{R}^n \rightarrow \mathbb{R}^k$, $\nabla_{\bx} \mathbf{f}(\bx)$ be given by
\begin{equation*}
	[\nabla_{\bx} \mathbf{f}(\bx)]_{(j-1)n+i} = \nabla_{x_i} f_j (\bx), \quad (i,j) \in \{1,...,n\} \times \{1,...,k\}
\end{equation*}
and $\nabla_{\bx \bx} \mathbf{f}(\bx)$ be given by
\begin{equation*}
	[\nabla_{\bx \bx} \mathbf{f}(\bx)]_{(j-1)n^2+(i-1)n+l} = \nabla_{x_i x_l} f_j (\bx), \quad (i,j,l) \in \{1,...,n\} \times \{1,...,k\} \times \{1,...,n\}.
\end{equation*}
With this notation, the next two lemmata, which also follow from applications of Theorem 2.10 in \citet{StraumannMikosch2006}, give conditions under which the difference between $( \nabla_{\btheta} \hat{\bpi}_{t \mid t-1} (\btheta) )_{t \in \mathbb{N}}$ and $( \nabla_{\btheta} \bpi_{t \mid t-1} (\btheta) )_{t \in \mathbb{Z}}$ and the one between $( \nabla_{\btheta \btheta} \hat{\bpi}_{t \mid t-1} (\btheta) )_{t \in \mathbb{N}}$ and $( \nabla_{\btheta \btheta} \bpi_{t \mid t-1} (\btheta) )_{t \in \mathbb{Z}}$ converge uniformly e.a.s. to zero.
\begin{lemma} \label{LemmaInvertibilityDPrediction}
	Assume that Assumptions \ref{AssumptionY}-\ref{AssumptionPhi1}, \ref{AssumptionF3}-\ref{AssumptionF4}, and the conditions in Lemmas \ref{LemmaInvertibilityX}-\ref{LemmaInvertibilityDX} hold. Moreover, assume that for each $j \in \{1,...,J\}$, there exists an $m_j > 0$ such that
	\begin{enumerate} [(i)]
		\item $\mathbb{E} [ \sup_{\bupsilon_j \in \bar{\bUpsilon}_j} \sup_{x_j \in \mathcal{X}_{j}} | | \bar{\nabla}_{\bupsilon_j} \log f_{j} (Y_t;x_j,\bupsilon_j) | |_{2}^{m_j} ] < \infty$,
		\item $\mathbb{E} [ \sup_{\bupsilon_j \in \bar{\bUpsilon}_j} \sup_{x_j \in \mathcal{X}_{j}} | \bar{\nabla}_{x_j x_j} \log f_{j} (Y_t;x_j,\bupsilon_j) |^{m_j} ] < \infty$, and
		\item $\mathbb{E} [ \sup_{\bupsilon_j \in \bar{\bUpsilon}_j} \sup_{x_j \in \mathcal{X}_{j}} | | \bar{\nabla}_{\bupsilon_j x_j} \log f_{j} (Y_t;x_j,\bupsilon_j) | |_{2}^{m_j} ] < \infty$.
	\end{enumerate}
	Then, $( \nabla_{\btheta} \bpi_{t \mid t-1} (\btheta) )_{t \in \mathbb{Z}}$ is stationary and ergodic for all $\btheta \in \bar{\bTheta}$ with $\mathbb{E} [ \sup_{\btheta \in \bar{\bTheta}} || \nabla_{\btheta} \bpi_{t \mid t-1} (\btheta) ||_{2}^{2} ] < \infty$ and
	\begin{equation*}
		\sup_{\btheta \in \bar{\bTheta}} | | \nabla_{\btheta} \hat{\bpi}_{t \mid t-1} (\btheta) - \nabla_{\btheta} \bpi_{t \mid t-1} (\btheta) | |_{2} \overset{e.a.s.}{\rightarrow} 0 \quad \text{as} \quad t \rightarrow \infty
	\end{equation*}
	for any initialisation $\nabla_{\btheta} \hat{\bpi}_{0 \mid 0} (\btheta) \in \mathbb{R}^{(J-1)d}$.
\end{lemma}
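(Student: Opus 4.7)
The plan is to follow the same Straumann--Mikosch perturbed stochastic recurrence equation (SRE) strategy that was used for Lemma \ref{LemmaInvertibilityFilter}, but applied now to the linear SRE obtained by differentiating the predictor recursion. Writing $h_t(\bpi;\btheta) := \bF_t(\bpi;\btheta)\bpi$ so that $\bpi_{t+1\mid t}(\btheta) = \bP^{\prime} h_t(\bpi_{t\mid t-1}(\btheta);\btheta)$, the chain rule (accounting for the implicit dependence of $h_t$ on $\bupsilon_j$ through $X_{j,t}(\bupsilon_j)$) yields
\begin{equation*}
\nabla_{\btheta}\bpi_{t+1\mid t}(\btheta) \;=\; A_t(\btheta)\,\nabla_{\btheta}\bpi_{t\mid t-1}(\btheta) \;+\; B_t(\btheta),
\end{equation*}
where $A_t(\btheta) = \bP^{\prime}\,\bar\nabla_{\bpi} h_t(\bpi_{t\mid t-1}(\btheta);\btheta)$ and $B_t(\btheta)$ collects the explicit $\btheta$-derivatives together with contributions of the form $\bar\nabla_{x_j}\log f_j(Y_t;X_{j,t}(\bupsilon_j),\bupsilon_j)\,\nabla_{\bupsilon_j} X_{j,t}(\bupsilon_j)$ and $\bar\nabla_{\bupsilon_j}\log f_j(Y_t;X_{j,t}(\bupsilon_j),\bupsilon_j)$ arising from differentiating $\bF_t$. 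I would treat the hat-recursion as a perturbation of this stationary SRE.

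For the existence of a unique stationary ergodic solution $\nabla_{\btheta}\bpi_{t\mid t-1}(\btheta)$ with the stated $L^2$ moment bound, I would apply a vector-valued version of Bougerol's theorem (Theorem 2.8 of \citet{StraumannMikosch2006}). Stationarity and ergodicity of $(A_t(\btheta),B_t(\btheta))$ are inherited from the stationarity of $\bpi_{t\mid t-1}(\btheta)$ (Corollary \ref{CorollaryInvertibilityPrediction}), of $X_{j,t}(\bupsilon_j)$ (Lemma \ref{LemmaInvertibilityX}), and of $\nabla_{\bupsilon_j} X_{j,t}(\bupsilon_j)$ (Lemma \ref{LemmaInvertibilityDX}). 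The integrability $\mathbb{E}[\log^{+}\sup_{\btheta}\|A_t\|] < \infty$ and $\mathbb{E}[\log^{+}\sup_{\btheta}\|B_t\|_2] < \infty$ follow from hypotheses (i)--(iii), Assumption \ref{AssumptionF4} and Remark \ref{Remark} (which bounds $\pi_{k,t\mid t-1}(\btheta)$ away from $0$ uniformly so that the denominators appearing in $\bF_t$ are controlled). The negative top Lyapunov exponent of $(A_t)$ is inherited from the filter contraction already established in the proof of Lemma \ref{LemmaInvertibilityFilter}: because the predictor map is a strict contraction on the simplex (uniformly in $\btheta$, thanks to $p_{ij} > 0$), its linearization satisfies $\mathbb{E}[\log \Lambda(A_t^{(r)})] < 0$ for some $r \in \mathbb{N}$, which is the Bougerol condition. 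The explicit series solution $\nabla_{\btheta}\bpi_{t\mid t-1}(\btheta) = \sum_{k\ge 0} A_{t-1}\cdots A_{t-k}\,B_{t-k-1}$ then yields the $L^2$ bound after Hölder's inequality, using the $2k_j$-th moment bound on $\nabla_{\bupsilon_j} X_{j,t}(\bupsilon_j)$ from Lemma \ref{LemmaInvertibilityDX} paired with the $\frac{2k_j}{k_j-1}$-th moment bound on the $\bar\nabla_{x_j}\log f_j$ term in Assumption \ref{AssumptionF4}.

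The uniform e.a.s. convergence is then obtained by subtracting the two recursions and iterating. Writing $\hat{\bR}_t := \nabla_{\btheta}\hat{\bpi}_{t\mid t-1}(\btheta)$ and $\bR_t := \nabla_{\btheta}\bpi_{t\mid t-1}(\btheta)$,
\begin{equation*}
\hat{\bR}_{t+1} - \bR_{t+1} \;=\; A_t(\btheta)\bigl(\hat{\bR}_t - \bR_t\bigr) \;+\; \Delta_t(\btheta),
\end{equation*}
where $\Delta_t(\btheta)$ gathers the differences in $A_t$ and $B_t$ produced by substituting $(\hat{\bpi}_{t\mid t-1},\hat{X}_{j,t},\nabla_{\bupsilon_j}\hat{X}_{j,t})$ for $(\bpi_{t\mid t-1},X_{j,t},\nabla_{\bupsilon_j} X_{j,t})$. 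By continuity (Assumption \ref{AssumptionF3}(i), Assumption \ref{AssumptionPhi2}) together with Lemmata \ref{LemmaInvertibilityX}, \ref{LemmaInvertibilityFilter}, and \ref{LemmaInvertibilityDX}, we get $\sup_{\btheta\in\bTheta}\|\Delta_t(\btheta)\|_2 \overset{e.a.s.}{\to} 0$; combined with the negative Lyapunov exponent of $A_t$, a standard perturbed-SRE argument mirroring the proof of Theorem 2.10 in \citet{StraumannMikosch2006} (as already used for Lemma \ref{LemmaInvertibilityFilter}) gives the uniform e.a.s. convergence. The main obstacle is the Lyapunov condition on $A_t$: although the filter contraction in Lemma \ref{LemmaInvertibilityFilter} is intuitively sufficient, translating the projective/total-variation contraction on the simplex into a matrix-norm contraction for the derivative requires working on the tangent subspace $\{\bv\in\mathbb{R}^J:\sum_j v_j = 0\}$ on which $A_t$ acts nontrivially, and matching the norm used there with the $\|\cdot\|_2$ appearing in the conclusion.
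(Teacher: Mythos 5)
Your strategy is the one the paper actually uses: recast the derivative of the recursion as an affine stochastic recurrence equation, establish the stationary solution and its second moment via a uniform Bougerol-type theorem (Lemma \ref{TechnicalLemmaB}) and the geometric series representation, and treat the hat-version as an e.a.s.-vanishing perturbation in the spirit of Theorem 2.10 of \citet{StraumannMikosch2006}. The only cosmetic difference is that the paper proves the statement for the filter $\nabla_{\btheta}\bpi_{t\mid t}(\btheta)$ (Lemma \ref{LemmaInvertibilityDFilter}) and obtains the predictor version by multiplying by $\bP^{\prime}$, whereas you differentiate the predictor recursion directly; these are equivalent. Everything else in your outline matches the paper's proof, including the use of Remark \ref{Remark} to control the denominators in $\bF_t$ and the pairing, via H\"older, of the $2k_j$-th moment of $\nabla_{\bupsilon_j}X_{j,t}(\bupsilon_j)$ from Lemma \ref{LemmaInvertibilityDX} with the $2k_j/(k_j-1)$-th moment of $\bar{\nabla}_{x_j}\log f_j$ from Assumption \ref{AssumptionF4}.

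The step you leave open, namely deducing $\mathbb{E}[\log\sup_{\btheta\in\bTheta}\Lambda(A_t^{(r)})]<0$ from the simplex contraction, is the genuine crux, but it needs no projective-metric machinery; the paper closes it with a short mean-value observation in reduced coordinates. Since $\bpi_{t\mid t}$ is parametrised by its first $J-1$ components (with $\pi_{J}=1-\sum_{j<J}\pi_j$), the chain rule shows that the $r$-fold iterate $(\bphi_t^{\bpi})^{(r)}$ is again affine in $\bx$, with linear part whose entries are the reduced directional derivatives $\bar{\nabla}_{[\bpi]_j}[\bphi_t^{(r)}]_n-\bar{\nabla}_{[\bpi]_J}[\bphi_t^{(r)}]_n$ of the \emph{composed} filter map evaluated along the stationary orbit, i.e.\ the product $A_t\cdots A_{t-r+1}$ collapses to the Jacobian of $\bphi_t^{(r)}$. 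Perturbing the $j$-th coordinate of $\bs\in\mathcal{S}$ while compensating in the $J$-th stays inside the simplex, so each such entry is bounded in absolute value by the Lipschitz constant $\Lambda(\bphi_t^{(r)};\btheta)$ of $\bphi_t^{(r)}$ on $(\mathcal{S},\|\cdot\|_{2})$, which Lemma \ref{LemmaInvertibilityB} bounds \emph{deterministically} by $\alpha^{r}(J+1)\sqrt{J}/\varepsilon$. Hence $\Lambda((\bphi_t^{\bpi})^{(r)};\btheta)\leq C\alpha^{r}$ uniformly in $\btheta$ and in $\omega$, and choosing $r$ with $C\alpha^{r}<1$ gives the Lyapunov condition; the norm-matching you worry about is absorbed into $C$ by equivalence of norms in finite dimensions, since the bound is deterministic and geometric in $r$. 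With this observation supplied, your argument is complete and coincides with the paper's.
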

\begin{lemma} \label{LemmaInvertibilityDDPrediction}
	Assume that Assumptions \ref{AssumptionY}-\ref{AssumptionPhi1}, \ref{AssumptionF3}-\ref{AssumptionF4}, and the conditions in Lemmas \ref{LemmaInvertibilityX}-\ref{LemmaInvertibilityDPrediction} hold. Moreover, assume that for each $j \in \{1,...,J\}$, there exists an $m_j > 0$ such that
	\begin{enumerate} [(i)]
		\item $\mathbb{E} [ \sup_{\bupsilon_j \in \bar{\bUpsilon}_j} \sup_{x_j \in \mathcal{X}_{j}} | | \bar{\nabla}_{\bupsilon_j \bupsilon_j} \log f_{j} (Y_t;x_j,\bupsilon_j) | |_{2,2}^{m_j} ] < \infty$,
		\item $\mathbb{E} [ \sup_{\bupsilon_j \in \bar{\bUpsilon}_j} \sup_{x_j \in \mathcal{X}_{j}} | \bar{\nabla}_{x_j x_j x_j} \log f_{j} (Y_t;x_j,\bupsilon_j) |^{m_j} ] < \infty$,
		\item $\mathbb{E} [ \sup_{\bupsilon_j \in \bar{\bUpsilon}_j} \sup_{x_j \in \mathcal{X}_{j}} | | \bar{\nabla}_{\bupsilon_j x_j x_j} \log f_{j} (Y_t;x_j,\bupsilon_j) | |_{2}^{m_j} ] < \infty$, and
		\item $\mathbb{E} [ \sup_{\bupsilon_j \in \bar{\bUpsilon}_j} \sup_{x_j \in \mathcal{X}_{j}} | | \bar{\nabla}_{\bupsilon_j \bupsilon_j x_j} \log f_{j} (Y_t;x_j,\bupsilon_j) | |_{2,2}^{m_j} ] < \infty$.
	\end{enumerate}
	Then, $( \nabla_{\btheta \btheta} \bpi_{t \mid t-1} (\btheta) )_{t \in \mathbb{Z}}$ is stationary and ergodic for all $\btheta \in \bar{\bTheta}$ with $\mathbb{E} [ \sup_{\btheta \in \bar{\bTheta}} || \nabla_{\btheta \btheta} \bpi_{t \mid t-1} (\btheta) ||_{2} ] < \infty$ and
	\begin{equation*}
		\sup_{\btheta \in \bar{\bTheta}} | | \nabla_{\btheta \btheta} \hat{\bpi}_{t \mid t-1} (\btheta) - \nabla_{\btheta \btheta} \bpi_{t \mid t-1} (\btheta) | |_{2} \overset{e.a.s.}{\rightarrow} 0 \quad \text{as} \quad t \rightarrow \infty
	\end{equation*}
	for any initialisation $\nabla_{\btheta \btheta} \hat{\bpi}_{0 \mid 0} (\btheta) \in \mathbb{R}^{(J-1)d^2}$.
\end{lemma}

Theorem \ref{TheoremAsymptoticNormality} gives conditions under which the MLE is also asymptotically normal.
\begin{theorem} \label{TheoremAsymptoticNormality}
	Assume that Assumptions \ref{AssumptionY}-\ref{AssumptionF4} and the conditions in Lemmas \ref{LemmaInvertibilityX}-\ref{LemmaInvertibilityDDPrediction} hold and that $\bI (\btheta_0)$ is invertible. Then,
	\begin{equation*}
		\sqrt{T} (\hat{\btheta}_T - \btheta_0) \overset{d}{\rightarrow} \mathcal{N} (\mathbf{0},\bI(\btheta_0)^{-1}) \quad \text{as} \quad T \rightarrow \infty.
	\end{equation*}
\end{theorem}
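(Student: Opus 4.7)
The plan is the classical Taylor-expansion route to asymptotic normality of an MLE, with the ``latent'' dynamic quantities $X_{j,t}(\bupsilon_j)$ and $\bpi_{t\mid t-1}(\btheta)$ (and their derivatives in $\btheta$) handled by the invertibility machinery of Lemmata~\ref{LemmaInvertibilityX}--\ref{LemmaInvertibilityDDPrediction}. By Theorem~\ref{TheoremConsistency} and Assumption~\ref{AssumptionTheta2}, $\hat{\btheta}_T \in \textup{int}(\bTheta)$ eventually, so the first-order condition $\nabla_{\btheta} \hat{L}_T(\hat{\btheta}_T) = \mathbf{0}$ holds; applying the mean value theorem coordinate-by-coordinate yields
\begin{equation*}
	\sqrt{T}(\hat{\btheta}_T - \btheta_0) = - \bigl[\nabla_{\btheta\btheta} \hat{L}_T(\tilde{\btheta}_T)\bigr]^{-1}\, \sqrt{T}\, \nabla_{\btheta} \hat{L}_T(\btheta_0),
\end{equation*}
with $\tilde{\btheta}_T$ on the segment between $\hat{\btheta}_T$ and $\btheta_0$. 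The proof thus splits into a CLT for the score at $\btheta_0$ and a uniform LLN for the Hessian, combined with $\tilde{\btheta}_T \overset{a.s.}{\rightarrow} \btheta_0$ and the invertibility of $\bI(\btheta_0)$.

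Next I would pass from the initialised (``hat'') quantities to their stationary counterparts. Define
\begin{equation*}
	\ell_t(\btheta) := \log \sum_{j=1}^J \pi_{j,t\mid t-1}(\btheta)\, f_j(Y_t;X_{j,t}(\bupsilon_j),\bupsilon_j), \quad s_t(\btheta) := \nabla_\btheta \ell_t(\btheta), \quad H_t(\btheta) := \nabla_{\btheta\btheta}\ell_t(\btheta),
\end{equation*}
and let $\hat{\ell}_t, \hat{s}_t, \hat{H}_t$ denote their hat counterparts used inside $\hat{L}_T$. By Lemmata~\ref{LemmaInvertibilityX}--\ref{LemmaInvertibilityDDPrediction}, the entire stack $(\hat{X}_{j,t},\hat{\bpi}_{t\mid t-1},\nabla \hat{X}_{j,t},\nabla \hat{\bpi}_{t\mid t-1},\nabla\nabla \hat{X}_{j,t},\nabla\nabla \hat{\bpi}_{t\mid t-1})$ converges uniformly e.a.s. on $\bTheta$ to its stationary analogue. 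Combining these uniform e.a.s. convergences with Assumption~\ref{AssumptionF4} via Hölder with the exponents $k_j/(k_j-1)$ and $2k_j/(k_j-1)$, one obtains
\begin{equation*}
	\sqrt{T}\,\bigl\|\nabla_\btheta \hat{L}_T(\btheta_0) - \tfrac{1}{T}{\textstyle\sum_{t=1}^T} s_t(\btheta_0)\bigr\|_{2} \overset{a.s.}{\rightarrow} 0, \quad \sup_{\btheta \in \bTheta}\bigl\|\nabla_{\btheta\btheta}\hat{L}_T(\btheta) - \tfrac{1}{T}{\textstyle\sum_{t=1}^T} H_t(\btheta)\bigr\|_{2,2} \overset{a.s.}{\rightarrow} 0,
\end{equation*}
because e.a.s. decay makes the individual differences absolutely summable almost surely, hence the Cesàro partial sums are $o(T^{-1/2})$ a.s.

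For the CLT I would invoke Billingsley's martingale CLT applied to $T^{-1/2}\sum_{t=1}^{T} s_t(\btheta_0)$. Let $\mathcal{F}_{t-1} := \sigma(Y_s : s \le t-1)$. Since $f(\cdot;\btheta_0)$ is the true conditional density of $Y_t$ given $\mathcal{F}_{t-1}$, interchanging differentiation and integration (justified by Assumptions~\ref{AssumptionF3}--\ref{AssumptionF4}) gives $\mathbb{E}[s_t(\btheta_0)\mid \mathcal{F}_{t-1}] = \mathbf{0}$ and $\mathbb{E}[s_t(\btheta_0) s_t(\btheta_0)^\prime] = \bI(\btheta_0)$. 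Stationarity and ergodicity of $(s_t(\btheta_0))_{t\in\mathbb{Z}}$ are inherited from $(Y_t)$, and the $L^2$ moment on $\nabla_\btheta \bpi_{t\mid t-1}$ from Lemma~\ref{LemmaInvertibilityDPrediction} together with Assumption~\ref{AssumptionF4} yields $\mathbb{E}\|s_t(\btheta_0)\|_{2}^{2} < \infty$; hence $T^{-1/2}\sum_t s_t(\btheta_0) \overset{d}{\rightarrow} \mathcal{N}(\mathbf{0}, \bI(\btheta_0))$. For the Hessian, the $L^1$ envelope from Assumption~\ref{AssumptionF4} together with $\mathbb{E}[\sup_\btheta \|\nabla_{\btheta\btheta}\bpi_{t\mid t-1}(\btheta)\|_2]<\infty$ from Lemma~\ref{LemmaInvertibilityDDPrediction} and continuity of $\btheta\mapsto H_t(\btheta)$ give, via a uniform ergodic theorem, $\sup_\btheta \|T^{-1}\sum_t H_t(\btheta) - \mathbb{E}[H_t(\btheta)]\|_{2,2} \overset{a.s.}{\rightarrow} 0$. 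Continuity of $\btheta \mapsto \mathbb{E}[H_t(\btheta)]$ and $\tilde{\btheta}_T \overset{a.s.}{\rightarrow} \btheta_0$ then yield $\nabla_{\btheta\btheta}\hat{L}_T(\tilde{\btheta}_T) \overset{a.s.}{\rightarrow} -\bI(\btheta_0)$, and Slutsky together with the invertibility hypothesis closes the proof.

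The \textbf{main obstacle} is the moment bookkeeping in the second step, not the probabilistic heart of the third. The score $s_t(\btheta)$ and Hessian $H_t(\btheta)$ are rational expressions in the whole stack of filter and time-varying-parameter derivatives, and converting the uniform e.a.s. convergence of each ingredient into $o(T^{-1/2})$ convergence of the score difference and $o(1)$ uniform convergence of the Hessian difference requires repeated Hölder inequalities whose exponents must match exactly the $k_j/(k_j-1)$ and $2k_j/(k_j-1)$ moments prescribed by Assumption~\ref{AssumptionF4}. These exponents are tailored so that products such as $\nabla_{\bupsilon_j} X_{j,t}(\bupsilon_j)\cdot \bar{\nabla}_{x_j}\log f_j$ and $\nabla_{\bupsilon_j} X_{j,t}(\bupsilon_j)(\nabla_{\bupsilon_j} X_{j,t}(\bupsilon_j))^\prime\cdot \bar{\nabla}_{x_j x_j}\log f_j$, which arise when the chain rule is applied to $\ell_t$, admit uniformly integrable envelopes over $\bTheta$; verifying these integrabilities (and the analogous ones featuring $\nabla_\btheta\bpi_{t\mid t-1}$ and $\nabla_{\btheta\btheta}\bpi_{t\mid t-1}$) is where the careful work lies.
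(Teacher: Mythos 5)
Your proposal follows essentially the same route as the paper: the paper invokes Lemma 8.1 of P\"otscher and Prucha (1997), which packages exactly your Taylor-expansion argument, and then establishes the same four ingredients you identify --- transfer from initialised to stationary quantities via the e.a.s. convergence of Lemmata \ref{LemmaInvertibilityX}--\ref{LemmaInvertibilityDDPrediction} (so that the score discrepancy is $o(T^{-1/2})$ and the Hessian discrepancy is $o(1)$ uniformly), Billingsley's martingale CLT with Cram\'er--Wold for the stationary score, the information-matrix equality via interchange of differentiation and integration, and a uniform ergodic theorem for the stationary Hessian --- with the moment bookkeeping under Assumption \ref{AssumptionF4} carried out exactly as you describe. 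The argument is correct and matches the paper's proof in structure and substance.
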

\noindent
\cite{DoucMoulinesRyden2004} also assume that the Fisher information matrix $\bI (\btheta_0)$ is invertible; an assumption which seems to be missing in \cite{KasaharaShimotsu2019}.

Finally, Proposition \ref{prop:varcovarmatrix} shows that, under the assumptions in Theorem \ref{TheoremAsymptoticNormality}, the observed Fisher information matrix $(- \nabla_{\btheta \btheta} \hat{L}_T (\hat{\btheta}_T))$ is a consistent estimator of the Fisher information matrix $\bI (\btheta_0)$.
\begin{proposition} \label{prop:varcovarmatrix}
	Under the assumptions in Theorem \ref{TheoremAsymptoticNormality},
	\begin{equation*}
		\nabla_{\btheta \btheta} \hat{L}_T (\hat{\btheta}_T) \overset{a.s.}{\rightarrow} (-\bI(\btheta_0)) \quad \text{as} \quad T \rightarrow \infty.
	\end{equation*}
\end{proposition}

\section{The Markov-switching GARCH Model} \label{Examples2}

In this section, we study both the asymptotic and finite-sample properties of the MLE for the Markov-switching GARCH model by \cite{HaasMittnikPaolella2004b}, the latter in a Monte Carlo simulation study.\footnote{Recall that the Markov-switching GARCH model reduces to the mixture GARCH model by \citet{HaasMittnikPaolella2004a} if $(S_t)_{t \in \mathbb{Z}}$ is an i.i.d. chain.}

\subsection{Asymptotic Properties}

Recall that the Markov-switching GARCH model by \citet{HaasMittnikPaolella2004b} is given by
\begin{equation*}
	Y_t = \sqrt{X_{S_t,t}} \varepsilon_t,
\end{equation*}
where $(S_t)_{t \in \mathbb{Z}}$ is a stationary, irreducible, and aperiodic (thus ergodic) Markov chain with transition probabilities $p_{ij,0} \in (0,1), i,j \in \{1,...,J\}$, for each $j \in \{1,...,J\}$,
\begin{equation*}
	X_{j,t+1} = \omega_{j,0} + \alpha_{j,0} Y_{t}^2 + \beta_{j,0} X_{j,t},
\end{equation*}
where $\omega_{j,0} > 0$, $\alpha_{j,0} \geq 0$, and $\beta_{j,0} \geq 0$, $(\varepsilon_t)_{t \in \mathbb{Z}}$ is a sequence of independent normal distributed random variables with zero mean and unit variance, and $(S_t)_{t \in \mathbb{Z}}$ and $(\varepsilon_t)_{t \in \mathbb{Z}}$ are independent.

\cite{Liu2006} gave conditions under which the model is stationary and ergodic. In the following, $\bM_0$ is a $J^2 \times J^2$ matrix given by
\begin{equation*}
	\left[ \bM_0 \right]_{ij} := p_{ji,0} \left( \balpha_0 \mathbf{e}_i^{\prime} + \bbeta_0 \right), \quad i,j \in \{1,...,J\}, 
\end{equation*}
where $\balpha_0 := (\alpha_{1,0},...,\alpha_{J,0})^{\prime}$, $\bbeta_0 := \textup{diag}(\beta_{1,0},...,\beta_{J,0})$, and $\mathbf{e}_i$ is the $i$'th unit vector in $\mathbb{R}^J$.
\begin{theorem} [\cite{Liu2006}] \label{theo:se}
	$(Y_t)_{t \in \mathbb{Z}}$ is stationary and ergodic with $\mathbb{E} [Y_t^2] < \infty$ if and only if $\rho(\bM_0) < 1$ where $\rho(\bM_0)$ is the spectral radius of $\bM_0$. 
\end{theorem}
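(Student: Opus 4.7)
The plan is to exploit the non-negative linear structure of the GARCH recursion. Substituting $Y_t^2 = X_{S_t,t}\varepsilon_t^2$ into the updating equation rewrites the dynamics of $\bX_t := (X_{1,t},\ldots,X_{J,t})^{\prime}$ as the vector stochastic recurrence equation
\begin{equation*}
	\bX_{t+1} = \boldsymbol{\omega}_0 + \bA_t \bX_t, \qquad \bA_t := \balpha_0 \varepsilon_t^2 \mathbf{e}_{S_t}^{\prime} + \bbeta_0,
\end{equation*}
where $\boldsymbol{\omega}_0 := (\omega_{1,0},\ldots,\omega_{J,0})^{\prime}$. Both $\bA_t$ and $\boldsymbol{\omega}_0$ have non-negative entries, and the driver $(\varepsilon_t, S_t)_{t \in \mathbb{Z}}$ is stationary and ergodic. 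Backward iteration yields $\bX_t = \sum_{k=0}^{n-1}\bA_{t-1}\cdots\bA_{t-k}\boldsymbol{\omega}_0 + \bA_{t-1}\cdots\bA_{t-n}\bX_{t-n}$, and the natural candidate stationary solution is the infinite series $\bX_t^{*} := \sum_{k=0}^{\infty}\bA_{t-1}\cdots\bA_{t-k}\boldsymbol{\omega}_0$, whose partial sums are monotone non-decreasing componentwise.

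The matrix $\bM_0$ enters through joint first moments. Let $\bZ_t \in \mathbb{R}^{J^2}$ have entries $[\bZ_t]_{(i,j)} := \mathbb{E}[X_{j,t}\mathbf{1}_{\{S_t=i\}}]$. Conditioning on $S_t$, using $\mathbb{E}[\varepsilon_t^2]=1$, and inserting the transition probabilities $p_{i'i,0} = \mathbb{P}(S_{t+1}=i\mid S_t=i')$ yields, after book-keeping of the $\alpha$- and $\beta$-terms, the deterministic linear recursion $\bZ_{t+1} = \bc + \bM_0 \bZ_t$, where $\bc>0$ componentwise (using $\omega_{j,0}>0$ together with positivity of the stationary distribution of $(S_t)$) and $\bM_0$ is the matrix of the theorem. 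This identity is the bridge between the random recursion and the spectral condition.

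For sufficiency, assume $\rho(\bM_0)<1$. Since $\bM_0 \geq 0$, the Neumann series $\sum_{k=0}^{\infty}\bM_0^k = (\bI-\bM_0)^{-1}$ converges entrywise, and taking expectations in the backward iteration shows that $\mathbb{E}[\sum_{k=0}^{n-1}\bA_{t-1}\cdots\bA_{t-k}\boldsymbol{\omega}_0]$ is bounded componentwise uniformly in $n$. Componentwise non-negativity together with monotone convergence then make $\bX_t^{*}$ a.s.\ finite and integrable. As a Borel function of the stationary ergodic shift $(\varepsilon_s,S_s)_{s\leq t-1}$, $(\bX_t^{*})_{t \in \mathbb{Z}}$ is stationary and ergodic; it solves the recursion, so setting $Y_t = \sqrt{X_{S_t,t}^{*}}\varepsilon_t$ produces a stationary ergodic $(Y_t)$ with $\mathbb{E}[Y_t^2] = \sum_{i=1}^{J}[\bZ_t]_{(i,i)} < \infty$.

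Conversely, if $(Y_t)$ is stationary and ergodic with $\mathbb{E}[Y_t^2]<\infty$, then $\bZ_t$ is finite and constant in $t$, so $(\bI-\bM_0)\bZ = \bc$. Let $\bv \geq 0$ be a left Perron--Frobenius eigenvector of $\bM_0$ satisfying $\bv^{\prime}\bM_0 = \rho(\bM_0)\bv^{\prime}$; multiplying the fixed-point identity on the left gives $\bv^{\prime}\bc = (1-\rho(\bM_0))\bv^{\prime}\bZ$, and since $\bv^{\prime}\bc>0$ and $\bv^{\prime}\bZ \geq 0$ we must have $\rho(\bM_0)<1$. The main obstacle throughout is that the SRE driver is not i.i.d.\ --- because $(S_t)$ is a Markov chain --- so the classical Kesten and Bougerol--Picard machinery does not apply in its i.i.d.\ form; instead, the argument leans on the componentwise non-negativity of $\bA_t$ and $\bM_0$ to replace Lyapunov-exponent bookkeeping by monotone convergence and Perron--Frobenius.
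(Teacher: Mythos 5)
The paper states this theorem as a citation of \citet{Liu2006} and gives no proof of its own, so there is nothing in the paper to compare against; your route --- rewriting the volatility recursion as the non-negative stochastic recurrence $\bX_{t+1}=\boldsymbol{\omega}_0+\bA_t\bX_t$, absorbing the Markov dependence of the coefficient matrices into a $J^2\times J^2$ first-moment matrix, and then using the Neumann series with monotone convergence for sufficiency and a non-negative left Perron--Frobenius eigenvector for necessity --- is the standard proof of second-order stationarity for this model and is structurally sound.

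Three points need attention. First, a bookkeeping slip: with your definition $[\bZ_t]_{(i,j)}=\mathbb{E}[X_{j,t}1_{\{S_t=i\}}]$, conditioning on $S_{t-1}$ yields a linear recursion whose $(i,j)$ block is $p_{ji,0}(\balpha_0\mathbf{e}_j^{\prime}+\bbeta_0)$, i.e.\ with $\mathbf{e}_j$ rather than the $\mathbf{e}_i$ in the paper's $\bM_0$; the object that reproduces $\bM_0$ exactly is $\mathbb{E}[\bX_{t+1}1_{\{S_t=i\}}]$. The two matrices are $(\bP^{\prime}\otimes\bI)\bD$ and $\bD(\bP^{\prime}\otimes\bI)$ for the block-diagonal $\bD$ with blocks $\balpha_0\mathbf{e}_i^{\prime}+\bbeta_0$, so they have the same spectral radius and your conclusions survive, but the identity $\bZ_{t+1}=\bc+\bM_0\bZ_t$ as written is not exact. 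Second, and more substantively, in the converse direction you assert that $\bZ$ is finite, yet $\mathbb{E}[Y_t^2]<\infty$ only controls the diagonal entries $\mathbb{E}[X_{i,t}1_{\{S_t=i\}}]$; the off-diagonal entries involve the latent volatilities $X_{j,t}$ with $j\neq S_t$, which are invisible in $Y_t^2$. You must first argue that $\beta_{j,0}<1$ for every $j$ (if $\beta_{j,0}\geq1$ then $X_{j,t+1}\geq\omega_{j,0}+X_{j,t}$ with $\omega_{j,0}>0$, which is incompatible with any finite stationary solution), after which $X_{j,t}=\omega_{j,0}/(1-\beta_{j,0})+\alpha_{j,0}\sum_{k\geq0}\beta_{j,0}^{k}Y_{t-1-k}^{2}$ has finite mean and every entry of $\bZ$ is finite. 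Third, in the sufficiency direction the bound on $\mathbb{E}[\bA_{t-1}\cdots\bA_{t-k}\boldsymbol{\omega}_0]$ does not follow from simply ``taking expectations'' of the product, since the $\bA$'s are dependent through the chain; you need the same indicator device applied to the products, $u_k(i,\cdot):=\mathbb{E}[\bA_{t-1}\cdots\bA_{t-k}\boldsymbol{\omega}_0\,1_{\{S_{t-1}=i\}}]$, which closes to $u_k=\bM_0u_{k-1}$ by the Markov property and independence of $\varepsilon_{t-1}$ from the past. All three are repairs within your approach rather than changes of route.
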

\noindent
Note that $\rho(\bM_0) < 1$ does not imply that $\alpha_{j,0} + \beta_{j,0} < 1$ for all $j \in \{1,...,J\}$, see \cite{Liu2006}.

We now give conditions under which the MLE is consistent and asymptotically normal. The MLE is consistent under the following assumptions.
\begin{assumption} \label{ass:one}
	$\btheta_0 \in \bTheta$.
\end{assumption}
\begin{assumption} \label{ass:se}
	$\rho(\bM_0) < 1$.
\end{assumption}
\begin{assumption} \label{ass:compact}
	$\bTheta$ is compact.
\end{assumption}
\begin{assumption} \label{ass:inv}
	For all $\btheta \in \bTheta$, $\beta_j < 1$ for all $j \in \{1,...,J\}$. 
\end{assumption}
\begin{assumption} \label{ass:identification}
	For all $\btheta \in \bTheta$, there exists an $m \in \mathbb{N}$ such that $f^{(m)} (\bY_{t-m+1}^{t} ; \btheta) = f^{(m)} (\bY_{t-m+1}^{t} ; \btheta_0)$ a.s. implies that $\btheta = \btheta_0$.
\end{assumption}
\begin{theorem} \label{theo:c}
	If Assumptions \ref{ass:one}-\ref{ass:identification} hold, then
	\begin{equation*}
		\hat{\btheta}_T \overset{a.s.}{\rightarrow} \btheta_0 \quad \text{as} \quad T \rightarrow \infty.
	\end{equation*}
\end{theorem}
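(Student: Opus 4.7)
The plan is to verify that every hypothesis of Theorem \ref{TheoremConsistency} is satisfied by the Markov-switching GARCH model under Assumptions \ref{ass:one}--\ref{ass:identification}, and then invoke that theorem. For the MS-GARCH specification, $\phi_j(y,x_j;\bupsilon_j) = \omega_j + \alpha_j y^2 + \beta_j x_j$ is affine in $(y^2,x_j)$, and $f_j(y;x_j,\bupsilon_j)$ is the $\mathcal{N}(0,x_j)$ density, so Assumptions \ref{AssumptionF1} and \ref{AssumptionPhi1} (continuity, Lipschitzness, and differentiability) follow at once. Assumption \ref{AssumptionTheta1} is Assumption \ref{ass:compact}, and Assumption \ref{AssumptionIdentification} is Assumption \ref{ass:identification}. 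The remaining substantive verifications concern Assumption \ref{AssumptionY}, the conditions of Lemmata \ref{LemmaInvertibilityX} and \ref{LemmaInvertibilityFilter}, and Assumption \ref{AssumptionF2}.

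For Assumption \ref{AssumptionY}, Theorem \ref{theo:se} (Liu, 2006) already gives stationarity and ergodicity of $(Y_t)_{t\in\mathbb{Z}}$ together with $\mathbb{E}[Y_t^2]<\infty$ under Assumption \ref{ass:se}; one then only needs to match Liu's product-of-matrices criterion with the three conditions of Theorem \ref{TheoremStationarityAndErgodicity} applied at $\btheta_0$, which can be done by the standard top-Lyapunov-exponent argument since $\rho(\bM_0)<1$ implies a strictly negative Lyapunov exponent for the random affine iteration defining $\bX_t$. For Lemma \ref{LemmaInvertibilityX}, observe that $\nabla_{x_j}\phi_j(Y_t,x_j;\bupsilon_j) = \beta_j$, so $\Lambda_{j,t}(\bupsilon_j)=\beta_j$ and is deterministic. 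Compactness of $\bTheta$ together with Assumption \ref{ass:inv} yields $\bar\beta_j := \sup_{\bupsilon_j} \beta_j < 1$, which gives conditions (ii) and (iii) trivially. Condition (i) follows at $x_j = 0$ from $\mathbb{E}[\log^+(\bar\omega_j + \bar\alpha_j Y_t^2)] < \infty$, using $\mathbb{E}[Y_t^2]<\infty$.

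For Lemma \ref{LemmaInvertibilityFilter}, the key point is the strict positivity of $\omega_j$ over $\bTheta$, which combined with compactness gives $\underline\omega := \inf_{\btheta \in \bTheta} \min_j \omega_j > 0$. Since the limit iteration $X_{j,t}(\bupsilon_j) = \omega_j + \alpha_j Y_{t-1}^2 + \beta_j X_{j,t-1}(\bupsilon_j)$ forces $X_{j,t}(\bupsilon_j) \geq \underline\omega$ (and likewise for the filter after one step), the normal score $\nabla_{x_j}\log f_j(Y_t;x_j,\bupsilon_j) = -\tfrac{1}{2x_j} + \tfrac{Y_t^2}{2x_j^2}$ is bounded on the effective state space by $C(1 + Y_t^2)$ for a constant $C$ depending only on $\underline\omega$. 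Taking $m_j = 2$, the required moment is finite because $\mathbb{E}[Y_t^4]<\infty$ is not needed—$m_j$ can simply be chosen small enough that $\mathbb{E}[(1+Y_t^2)^{m_j}]<\infty$ via Jensen, using $\mathbb{E}[Y_t^2]<\infty$. Finally, Assumption \ref{AssumptionF2} is verified by bounding $|\log f_j(Y_t;X_{j,t}(\bupsilon_j),\bupsilon_j)| \leq C' + \tfrac{1}{2}|\log X_{j,t}(\bupsilon_j)| + \tfrac{Y_t^2}{2X_{j,t}(\bupsilon_j)}$ and using $X_{j,t}(\bupsilon_j) \in [\underline\omega, \bar\omega/(1-\bar\beta_j) + \bar\alpha_j\sum_{k\geq 0}\bar\beta_j^k Y_{t-1-k}^2]$ together with $\mathbb{E}[Y_t^2]<\infty$ to control $\mathbb{E}[\sup_{\bupsilon_j}|\log X_{j,t}(\bupsilon_j)|]$ and $\mathbb{E}[\sup_{\bupsilon_j} Y_t^2/X_{j,t}(\bupsilon_j)]$.

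The main obstacle I anticipate is the lower bound $X_{j,t}(\bupsilon_j)\geq \underline\omega > 0$ uniformly over $\bTheta$, which is essential for every moment bound involving the normal score and Hessian because those quantities explode as $x_j \downarrow 0$. This requires the implicit reading that $\bTheta$ enforces $\omega_j \geq \underline\omega > 0$ (beyond just $\omega_j \geq 0$ in the specification), so one should either state that assumption explicitly or argue that compactness of $\bTheta$ together with $\btheta_0 \in \bTheta$ and the identifiability Assumption \ref{ass:identification} forces such a uniform lower bound on $\omega_j$. Once this is in place, every remaining step is a direct calculation: the linearity of $\phi_j$ makes the contraction condition automatic, and all integrability conditions reduce to $\mathbb{E}[Y_t^2]<\infty$.
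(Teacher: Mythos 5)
Your proposal is correct and follows essentially the same route as the paper: reduce to Theorem \ref{TheoremConsistency}, invoke Theorem \ref{theo:se} for stationarity and $\mathbb{E}[Y_t^2]<\infty$, note that $\Lambda_{j,t}(\bupsilon_j)=\beta_j$ so the contraction conditions follow from Assumption \ref{ass:inv} and compactness, and control the normal score and $\log f_j$ via the uniform lower bound on $X_{j,t}(\bupsilon_j)$ together with its MA$(\infty)$ representation. Your concern about $\omega_j$ being bounded away from zero over $\bTheta$ is well placed --- the paper makes exactly the same implicit use of $\underline{\omega}_j>0$ when it sets $\underline{x}_j=\underline{\omega}_j/(1-\underline{\beta}_j)>0$ --- and note only that the filter condition should be verified with $m_j\le 1$ (as the paper does), not $m_j=2$, if one wants to rely solely on $\mathbb{E}[Y_t^2]<\infty$.
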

\begin{proof}
	In the following, $\underline{\btheta} = \inf_{\btheta \in \bTheta} \btheta$ and $\overline{\btheta} = \sup_{\btheta \in \bTheta} \btheta$. Note that $(Y_t)_{t \in \mathbb{Z}}$ is stationary and ergodic by Theorem \ref{theo:se}. We therefore only need to verify Assumptions \ref{AssumptionTheta1}-\ref{AssumptionIdentification} and the conditions in Lemmas \ref{LemmaInvertibilityX} and \ref{LemmaInvertibilityFilter}.
	
	Let $j \in \{1,...,J\}$ be given. First, Assumption \ref{AssumptionTheta1} is true by assumption and Assumptions \ref{AssumptionF1} and \ref{AssumptionPhi1} are trivially satisfied.
	
	We now verify the conditions in Lemmas \ref{LemmaInvertibilityX} and \ref{LemmaInvertibilityFilter}. First, by Lemma 2.2 in \cite{StraumannMikosch2006},
	\begin{equation*}
		\mathbb{E} \left[ \log^{+} \sup_{\bupsilon_{j} \in \bUpsilon_{j}} \left| \phi_{j} (Y_t,x_j;\bupsilon_{j}) - x_j \right| \right] = \mathbb{E} \left[ \log^{+} \sup_{\bupsilon_{j} \in \bUpsilon_{j}} \left| \omega_{j} + \alpha_j Y_t^2 + \beta_j x_j - x_j \right| \right] \leq C_j + 2 \mathbb{E} \left[ \log^{+} \left| Y_t \right| \right]
	\end{equation*}
	for all $x_j \in \mathcal{X}_{j}$ where $C_j = 6 \log 2 + \log^{+} \overline{\omega}_{j} + \log^{+} \overline{\alpha}_{j} + \log^{+} \overline{\beta}_{j} + 2 \log^{+} x_j < \infty$, so Condition (i) in Lemma \ref{LemmaInvertibilityX} is satisfied since $\mathbb{E} [Y_t^2] < \infty$ implies that $\mathbb{E} [\log^{+} |Y_t|] < \infty$ by Lemma 2.2 in \cite{StraumannMikosch2006}. Moreover,
	\begin{equation*}
		\mathbb{E} \left[ \log \sup_{\bupsilon_{j} \in \bUpsilon_{j}} \Lambda_{j,t} (\bupsilon_{j}) \right] = \mathbb{E} \left[ \log \sup_{\bupsilon_{j} \in \bUpsilon_{j}} \beta_j \right] = \log \overline{\beta}_j,
	\end{equation*}
	so Conditions (ii) and (iii) in Lemma \ref{LemmaInvertibilityX} are also satisfied. Finally,
	\begin{equation*}
		\mathbb{E} \left[ \sup_{\bupsilon_j \in \bUpsilon_j} \sup_{x_j \in \mathcal{X}_{j}} \left| \nabla_{x_j} \log f_{j} (Y_t;x_j,\bupsilon_j) \right| \right] = \mathbb{E} \left[ \sup_{\bupsilon_j \in \bUpsilon_j} \sup_{x_j \in \mathcal{X}_{j}} \left| - \frac{1}{2} \frac{1}{x_j} + \frac{1}{2} \frac{Y_t^2}{x_j^2} \right| \right] \leq \frac{1}{2\underline{x}_j} + \frac{1}{2\underline{x}_j^2} \mathbb{E} \left[ Y_t^2 \right],
	\end{equation*}
	where $\underline{x}_j = \frac{\underline{\omega}_j}{1-\underline{\beta}_j} > 0$, so the condition in Lemma \ref{LemmaInvertibilityFilter} is also satisfied since $\mathbb{E} [Y_t^2] < \infty$.
	
	Moreover, note that
	\begin{align*}
		\mathbb{E} \left[ \sup_{\bupsilon_j \in \bUpsilon_j} \left| \log f_{j} (Y_t;X_{j,t}(\bupsilon_j),\bupsilon_j) \right| \right] &= \mathbb{E} \left[ \sup_{\bupsilon_j \in \bUpsilon_j} \left| -\frac{1}{2} \log 2\pi - \frac{1}{2} \log X_{j,t}(\bupsilon_j) - \frac{1}{2} \frac{Y_t^2}{X_{j,t}(\bupsilon_j)} \right| \right] \\
		&\leq \frac{1}{2} \log 2\pi + \frac{1}{2} \mathbb{E} \left[ \sup_{\bupsilon_j \in \bUpsilon_j} \left| \log X_{j,t}(\bupsilon_j) \right| \right]+ \frac{1}{2\underline{x}_j} \mathbb{E} \left[ Y_t^2 \right],
	\end{align*}
	so Assumption \ref{AssumptionF2} is also satisfied since $\mathbb{E} [ \sup_{\bupsilon_j \in \bUpsilon_j} | \log X_{j,t}(\bupsilon_j) | ] < \infty$ and $\mathbb{E} [ Y_t^2 ] < \infty$. Indeed, $\mathbb{E} [ Y_t^2 ] < \infty$ implies that $\mathbb{E} [ \sup_{\bupsilon_j \in \bUpsilon_j} | \log X_{j,t}(\bupsilon_j) | ] < \infty$, which we now show. Note that $\log x = \log^{+} x - \log^{-} x$ for all $x > 0$ where $\log^{+} x = \max(\log x,0)$ and $\log^{-} x = -\min(\log x,0)$. Thus,
	\begin{equation*}
		\mathbb{E} \left[ \sup_{\bupsilon_j \in \bUpsilon_j} \left| \log X_{j,t}(\bupsilon_j) \right| \right] \leq \mathbb{E} \left[ \log^{+} \sup_{\bupsilon_j \in \bUpsilon_j} X_{j,t}(\bupsilon_j) \right] + \log^{-} \underline{x}_j.
	\end{equation*}
	Now, by Lemma \ref{LemmaInvertibilityX},
	\begin{equation}
		X_{j,t}(\bupsilon_j) = \frac{\omega_j}{1-\beta_j} + \alpha_j \sum_{i=0}^{\infty} \beta_j^i Y_{t-1-i}^2 \quad a.s. \label{eq:FZ}
	\end{equation}
	Thus, 
	\begin{equation*}
		\mathbb{E} \left[ \sup_{\bupsilon_j \in \bUpsilon_j} X_{j,t}(\bupsilon_j) \right] \leq \frac{\overline{\omega}_j}{1-\overline{\beta}_j} + \frac{\overline{\alpha}_j}{1-\overline{\beta}_j} \mathbb{E} \left[ Y_{t-1}^2 \right].
	\end{equation*}
	Hence, $\mathbb{E} [ Y_t^2 ] < \infty$ implies that $\mathbb{E} [ \sup_{\bupsilon_j \in \bUpsilon_j} | \log X_{j,t}(\bupsilon_j) | ] < \infty$ since $\mathbb{E} [ \sup_{\bupsilon_j \in \bUpsilon_j} X_{j,t}(\bupsilon_j) ] < \infty$ implies that $\mathbb{E} [ \log^{+} \sup_{\bupsilon_j \in \bUpsilon_j} X_{j,t}(\bupsilon_j) ] < \infty$ by Lemma 2.2 in \cite{StraumannMikosch2006}. Finally, Assumption \ref{AssumptionIdentification} is true by assumption.
\end{proof}
\noindent
Assumptions \ref{ass:one}-\ref{ass:identification} are identical to the assumptions in \cite{KandjiMisko2024}, except that \cite{KandjiMisko2024} assume that Assumption \ref{ass:identification} holds with $m = 1$.

Although the above result is identical to the one in \cite{KandjiMisko2024}, the proof of it is different. To prove consistency of the MLE, \cite{KandjiMisko2024} use one representation of the likelihood function, which is similar to the one used in \cite{FrancqRoussignol1998} and \cite{FrancqRoussignolZakoian2001} (see also Equation (12.25) in \cite{FrancqZakoian2019}), whereas we use another representation of the likelihood function, which is similar to the one used in \cite{DoucMoulinesRyden2004} and \cite{KasaharaShimotsu2019} (see also Equation (12.30) in \cite{FrancqZakoian2019}). The advantage of the latter is that it can also be used to prove asymptotic normality of the MLE, which we now do.

To give conditions under which the MLE is also asymptotically normal, we need the following result where, for all $s \in \mathbb{N}$, $\bSigma_0^{(\otimes s)}$ is a $J^{s+1} \times J^{s+1}$ matrix given by
\begin{equation*}
	\left[ \bSigma_0^{(\otimes s)} \right]_{ij} := p_{ji,0} \mathbb{E} \left[ \bA_{it,0}^{(\otimes s)} \right], \quad i,j \in \{1,...,J\}
\end{equation*}
with 
\begin{equation*}
	\bA_{it,0} := \varepsilon_t^2 \balpha_0 \mathbf{e}_i^{\prime} + \bbeta_0,
\end{equation*}
where $\otimes$ denotes the Kronecker product. 
\begin{theorem} [\cite{Liu2006}] \label{theo:m}
	If $\rho(\bM_0) < 1$ and $\rho(\bSigma_0^{(\otimes s)}) < 1$, then $\mathbb{E} [Y_t^{2s}] < \infty$.
\end{theorem}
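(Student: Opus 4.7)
Since $Y_t^{2s} = X_{S_t,t}^{s}\varepsilon_t^{2s}$, with $\varepsilon_t$ standard normal and independent of $(\bX_t, S_t)$ and $\mathbb{E}[\varepsilon_t^{2s}]<\infty$, it suffices to prove that the componentwise mean of the tensor power $\bX_t^{\otimes s}$ is finite. The strategy is to derive a linear recursion for suitable regime-indexed $s$-th order moments whose transition matrix is exactly $\bSigma_0^{(\otimes s)}$, and then iterate using $\rho(\bSigma_0^{(\otimes s)})<1$.

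First I would put the variance dynamics in vector form. Writing $\boldsymbol{\omega}_0 := (\omega_{1,0},\ldots,\omega_{J,0})'$ and using $Y_t^2 = (\be_{S_t}'\bX_t)\varepsilon_t^2$, we have
\begin{equation*}
	\bX_{t+1} = \boldsymbol{\omega}_0 + \bA_{S_t,t,0}\bX_t, \qquad \bA_{i,t,0} := \varepsilon_t^2 \balpha_0 \be_i' + \bbeta_0.
\end{equation*}
Taking the $s$-fold Kronecker power and expanding gives
\begin{equation*}
	\bX_{t+1}^{\otimes s} = \bA_{S_t,t,0}^{\otimes s}\bX_t^{\otimes s} + \bR_t^{(s)},
\end{equation*}
where $\bR_t^{(s)}$ collects all cross terms, each a polynomial in $\boldsymbol{\omega}_0$ and $\bA_{S_t,t,0}$ multiplied by $\bX_t^{\otimes k}$ for some $k<s$.

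Next I would introduce the regime-indexed tensor moments $[\boldsymbol{\mu}_t^{(s)}]_i := \mathbb{E}[\bX_t^{\otimes s}\,1_{\{S_{t-1}=i\}}]$, stacked into $\boldsymbol{\mu}_t^{(s)}\in\mathbb{R}^{J^{s+1}}$. Conditioning on $\mathcal{F}_{t-1}$, using the Markov property $\mathbb{P}(S_t=j\mid\mathcal{F}_{t-1})=p_{S_{t-1},j,0}$, and the independence of $\varepsilon_t$ from $(\mathcal{F}_{t-1},S_t)$, the leading contribution becomes
\begin{equation*}
	\mathbb{E}\!\left[\bA_{j,t,0}^{\otimes s}\bX_t^{\otimes s}\,1_{\{S_t=j\}}\right] = \mathbb{E}[\bA_{j,t,0}^{\otimes s}]\sum_i p_{ij,0}\,[\boldsymbol{\mu}_t^{(s)}]_i,
\end{equation*}
which is exactly the $j$-th block of $\bSigma_0^{(\otimes s)}\boldsymbol{\mu}_t^{(s)}$ in view of the block definition $[\bSigma_0^{(\otimes s)}]_{ij} = p_{ji,0}\mathbb{E}[\bA_{it,0}^{\otimes s}]$. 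Combining with the cross terms yields the affine recursion
\begin{equation*}
	\boldsymbol{\mu}_{t+1}^{(s)} = \bSigma_0^{(\otimes s)}\boldsymbol{\mu}_t^{(s)} + \br_t^{(s)},
\end{equation*}
where $\br_t^{(s)}$ depends linearly on $\boldsymbol{\mu}_t^{(k)}$ for $k<s$, with bounded coefficients coming from $\boldsymbol{\omega}_0$ and from finite moments of $\bA_{S_t,t,0}$.

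Finally I would close the argument by induction on $s$. The base case $s=1$ is handled by $\bSigma_0^{(\otimes 1)}=\bM_0$ (since $\mathbb{E}[\varepsilon_t^2]=1$) together with Theorem \ref{theo:se}. For the inductive step, assuming finiteness of all tensor moments of orders $k<s$ gives $\sup_t\|\br_t^{(s)}\|_2<\infty$, and $\rho(\bSigma_0^{(\otimes s)})<1$ then yields $\sup_t\|\boldsymbol{\mu}_t^{(s)}\|_2<\infty$ by iterating the linear recursion, which delivers $\mathbb{E}[Y_t^{2s}]<\infty$. The main obstacle is that the induction calls for $\rho(\bSigma_0^{(\otimes k)})<1$ at every $k<s$, whereas the theorem only assumes it at $k=s$. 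I expect to close this gap with a Lyapunov-type inequality for the non-negative matrices $\{\bSigma_0^{(\otimes s)}\}_{s\ge 1}$ asserting that $s\mapsto\rho(\bSigma_0^{(\otimes s)})^{1/s}$ is non-decreasing, so that $\rho(\bSigma_0^{(\otimes s)})<1$ forces $\rho(\bSigma_0^{(\otimes k)})<1$ for all $k\le s$; equivalently, one can stack $\boldsymbol{\mu}_t^{(1)},\ldots,\boldsymbol{\mu}_t^{(s)}$ into a single augmented state whose transition matrix is block lower triangular with $\bSigma_0^{(\otimes 1)},\ldots,\bSigma_0^{(\otimes s)}$ on the diagonal, reducing the problem to the single spectral condition at order $s$.
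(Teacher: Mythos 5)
The paper does not prove this statement: Theorem \ref{theo:m} is imported verbatim from \cite{Liu2006}, so there is no in-paper argument to compare against. Your reconstruction follows what is essentially Liu's own route --- vectorise the variance recursions as $\bX_{t+1}=\boldsymbol{\omega}_0+\bA_{S_t t,0}\bX_t$, take the $s$-fold Kronecker power, condition on the regime to obtain an affine recursion for the regime-indexed tensor moments driven by $\bSigma_0^{(\otimes s)}$, and induct on $s$ --- and the individual steps (the mixed-product identity $(\bA\bx)^{\otimes s}=\bA^{\otimes s}\bx^{\otimes s}$, the identification of the leading term with a block of $\bSigma_0^{(\otimes s)}\boldsymbol{\mu}_t^{(s)}$ given the paper's block definition of $\bSigma_0^{(\otimes s)}$, and the reduction of $\mathbb{E}[Y_t^{2s}]$ to the finiteness of $\mathbb{E}[\|\bX_t\|_2^s]$ via independence of $\varepsilon_t$ and finiteness of Gaussian moments) are all sound.

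Two points need repair before this is a proof. First, the gap you flag is real, but your proposed ``equivalent'' fix does not close it: the stacked system for $\boldsymbol{\mu}_t^{(1)},\ldots,\boldsymbol{\mu}_t^{(s)}$ is block lower triangular, so its spectral radius is $\max_{k\le s}\rho(\bSigma_0^{(\otimes k)})$ --- it still requires all the lower-order conditions rather than deriving them from the order-$s$ one. What does work is the first tool you name, and it has to be proved rather than asserted: for these nonnegative structures one has $\rho(\bSigma_0^{(\otimes k)})\le\rho(\bSigma_0^{(\otimes s)})^{k/s}$ for $1\le k\le s$. This follows by writing the entries of $(\bSigma_0^{(\otimes k)})^{n}$ as sums over regime paths of path probabilities times expectations of entries of $(\bA_{S_n n,0}\cdots\bA_{S_1 1,0})^{\otimes k}$, bounding each such expectation by the corresponding order-$s$ quantity raised to the power $k/s$ via Hölder's (Lyapunov's) inequality applied entrywise, and invoking the Gelfand formula $\rho(\bB)=\lim_n\|\bB^n\|^{1/n}$. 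Second, ``iterating the linear recursion'' needs a justification that the stationary moments are finite rather than identically infinite: the clean argument approximates $\bX_t$ from below by the truncated backward expansions over finitely many lags, verifies a bound on their tensor moments that is uniform in the truncation level using $\rho(\bSigma_0^{(\otimes s)})<1$, and concludes by monotone convergence, all quantities being nonnegative. With those two repairs your plan is complete.
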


In addition to the above assumptions, the MLE is asymptotically normal under the following assumptions. 
\begin{assumption} \label{ass:int}
	$\btheta_0 \in \textup{int}(\bTheta)$.
\end{assumption}
\begin{assumption} \label{ass:m}
	$\rho(\bSigma_0^{(\otimes 3)}) < 1$.
\end{assumption}
\begin{theorem} \label{theo:an}
	If Assumptions \ref{ass:one}-\ref{ass:m} hold and $\bI(\btheta_0)$ is invertible, then
	\begin{equation*}
		\sqrt{T} (\hat{\btheta}_T - \btheta_0) \overset{d}{\rightarrow} \mathcal{N} (\mathbf{0},\bI(\btheta_0)^{-1}) \quad \text{as} \quad T \rightarrow \infty.
	\end{equation*}
\end{theorem}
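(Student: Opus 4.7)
The strategy is to apply Theorem \ref{TheoremAsymptoticNormality}, leaning on the verifications already carried out in the proof of Theorem \ref{theo:c}. That proof dispatched Assumptions \ref{AssumptionY}-\ref{AssumptionIdentification} and the conditions of Lemmata \ref{LemmaInvertibilityX}-\ref{LemmaInvertibilityFilter} using only $\mathbb{E}[Y_t^2]<\infty$. What remains is: (a) Assumption \ref{AssumptionTheta2}, which is immediate from Assumption \ref{ass:int}; (b) Assumptions \ref{AssumptionF3} and \ref{AssumptionPhi2}, which are trivial because $\phi_j(y,x_j;\bupsilon_j)=\omega_j+\alpha_j y^2+\beta_j x_j$ is polynomial in all its arguments and the Gaussian density $f_j$ is smooth in $(x_j,\bupsilon_j)$; (c) the moment and e.a.s.\ conditions of Lemmata \ref{LemmaInvertibilityDX}-\ref{LemmaInvertibilityDDPrediction}; and (d) Assumption \ref{AssumptionF4}. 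Invertibility of $\bI(\btheta_0)$ is assumed. The single substantive new ingredient is the eighth-moment bound $\mathbb{E}[Y_t^8]<\infty$, delivered by Assumption \ref{ass:m} through Theorem \ref{theo:m}.

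The affine structure of $\phi_j$ trivialises most of the hypotheses of the four invertibility lemmata. Since $\bar\nabla_{x_j}\phi_j=\beta_j$, $\bar\nabla_{x_j x_j}\phi_j=0$, and $\bar\nabla_{\bupsilon_j\bupsilon_j}\phi_j=0$ are constant in $x_j$, the corresponding e.a.s.\ differences vanish identically; the remaining e.a.s.\ conditions reduce, via $\bar\nabla_{\bupsilon_j}\phi_j=(1,Y_t^2,x_j)^\prime$ and $\bar\nabla_{\bupsilon_j x_j}\phi_j=(0,0,1)^\prime$, to $|\hat X_{j,t}-X_{j,t}|\overset{e.a.s.}{\to}0$, which Lemma \ref{LemmaInvertibilityX} already supplies. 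The accompanying $\log^+$-moment hypotheses are trivial given $\mathbb{E}[Y_t^2]<\infty$ and $\mathbb{E}[X_{j,t}]<\infty$ via Lemma 2.2 of \cite{StraumannMikosch2006}. For Lemmata \ref{LemmaInvertibilityDPrediction} and \ref{LemmaInvertibilityDDPrediction}, since $f_j$ depends on $\bupsilon_j$ only through $x_j$, every $\bar\nabla_{\bupsilon_j}$-partial of $\log f_j$ vanishes; the surviving higher-order $\bar\nabla_{x_j}$-partials are rational expressions of the form $c_1 x_j^{-k}+c_2 y^2 x_j^{-k-1}$, uniformly dominated on $\bUpsilon_j$ by affine functionals in $y^2$, and therefore have moments of every order we need since $\mathbb{E}[Y_t^8]<\infty$.

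The one substantive calculation is condition (i) of Lemma \ref{LemmaInvertibilityDX}, for which I take $k_j=2$. Iterating $\nabla_{\bupsilon_j} X_{j,t+1}=(1,Y_t^2,X_{j,t})^\prime+\beta_j\nabla_{\bupsilon_j} X_{j,t}$ gives the closed form $\nabla_{\bupsilon_j} X_{j,t}=\sum_{i=0}^{\infty}\beta_j^i(1,Y_{t-1-i}^2,X_{j,t-1-i})^\prime$. Majorising uniformly in $\bupsilon_j$ by $\sum_{i=0}^{\infty}\overline{\beta}_j^i(1+Y_{t-1-i}^2+\sup_{\bupsilon_j}X_{j,t-1-i}(\bupsilon_j))$ and applying Minkowski, the fourth moment is controlled by $\mathbb{E}[Y_t^8]$ together with $\mathbb{E}[(\sup_{\bupsilon_j}X_{j,t}(\bupsilon_j))^4]$, the latter also controlled by $\mathbb{E}[Y_t^8]$ via the explicit series formula for $X_{j,t}$ used in the proof of Theorem \ref{theo:c}; both are finite under Assumption \ref{ass:m}. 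The second-moment bound on $\nabla_{\bupsilon_j\bupsilon_j} X_{j,t}$ required by Lemma \ref{LemmaInvertibilityDDX} then follows by Cauchy-Schwarz from this fourth moment. Finally, with $k_j=2$, the most demanding part of Assumption \ref{AssumptionF4} asks for a fourth moment of $\bar\nabla_{x_j}\log f_j=-1/(2 x_j)+y^2/(2 x_j^2)$, which reduces to $\mathbb{E}[Y_t^8]<\infty$. This exponent alignment is the main bookkeeping obstacle and explains precisely why the theorem requires $\rho(\bSigma_0^{(\otimes 4)})<1$ rather than a lower-order spectral condition. With all hypotheses of Theorem \ref{TheoremAsymptoticNormality} verified, the conclusion follows.
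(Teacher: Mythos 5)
Your proposal is correct and follows essentially the same route as the paper's own proof: verify the hypotheses of Theorem \ref{TheoremAsymptoticNormality} on top of the work already done for Theorem \ref{theo:c}, take $k_j=2$, derive the series representation $\nabla_{\bupsilon_j} X_{j,t}(\bupsilon_j)=\sum_{i=0}^{\infty}\beta_j^i(1,Y_{t-1-i}^2,X_{j,t-1-i}(\bupsilon_j))^{\prime}$ and bound its fourth moment via Minkowski using $\mathbb{E}[Y_t^8]<\infty$ from Theorem \ref{theo:m}, and exploit the affine structure of $\phi_j$ and the fact that $f_j$ depends on $\bupsilon_j$ only through $x_j$ to trivialise the remaining e.a.s.\ and moment conditions. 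Your exponent bookkeeping for Assumption \ref{AssumptionF4} ($2k_j/(k_j-1)=4$ for $\bar{\nabla}_{x_j}\log f_j$, $k_j/(k_j-1)=2$ for $\bar{\nabla}_{x_j x_j}\log f_j$) matches the paper's calculations exactly.
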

\begin{proof}
	In the following, $\underline{\btheta} = \inf_{\btheta \in \bar{\bTheta}} \btheta$ and $\overline{\btheta} = \sup_{\btheta \in \bar{\bTheta}} \btheta$. We need to verify Assumptions \ref{AssumptionTheta2}-\ref{AssumptionF4} and the conditions in Lemmas \ref{LemmaInvertibilityDX}-\ref{LemmaInvertibilityDDPrediction}.
	
	Let, as in the proof of Theorem \ref{theo:c}, $j \in \{1,...,J\}$ be given. First, Assumption \ref{AssumptionTheta2} is true by assumption and Assumptions \ref{AssumptionF3} and \ref{AssumptionPhi2} are trivially satisfied.
	
	We now verify the conditions in Lemma \ref{LemmaInvertibilityDX}. First,
	\begin{align*}
		\mathbb{E} \left[ \log^{+} \sup_{\bupsilon_{j} \in \bar{\bUpsilon}_{j}} \left| \left| \bar{\nabla}_{\bupsilon_j} \phi_{j} (Y_t,X_{j,t} (\bupsilon_{j});\bupsilon_{j}) \right| \right|_{2} \right] &\leq \mathbb{E} \left[ \log^{+} \sup_{\bupsilon_{j} \in \bar{\bUpsilon}_{j}} \left( 1 + Y_t^2 + X_{j,t} (\bupsilon_{j}) \right) \right] \\
		&\leq 4 \log 2 + 2 \mathbb{E} \left[ \log^{+} \left| Y_t \right| \right] + \mathbb{E} \left[ \log^{+} \sup_{\bupsilon_{j} \in \bar{\bUpsilon}_{j}} X_{j,t} (\bupsilon_{j}) \right],
	\end{align*}
	so Condition (i) is satisfied since $\mathbb{E} [ Y_t^6 ] < \infty$ implies that $\mathbb{E} [ \log^{+} | Y_t | ] < \infty$ and $\mathbb{E} [ \log^{+} \sup_{\bupsilon_{j} \in \bar{\bUpsilon}_{j}} X_{j,t} (\bupsilon_{j}) ] < \infty$, see the proof of Theorem \ref{theo:c}. We now show that $\mathbb{E} [ \sup_{\bupsilon_{j} \in \bar{\bUpsilon}_{j}} | | \nabla_{\bupsilon_j} X_{j,t} (\bupsilon_j) | |_{2}^{2} ] < \infty$, which implies that $\mathbb{E} [ \log^{+} \sup_{\bupsilon_{j} \in \bar{\bUpsilon}_{j}} | | \nabla_{\bupsilon_j} X_{j,t} (\bupsilon_j) | |_{2} ] < \infty$. Note that
	\begin{equation*}
		\nabla_{\bupsilon_j} X_{j,t} (\bupsilon_j) = \sum_{i=0}^{\infty} \beta_j^i \begin{bmatrix} 1 \\ Y_{t-1-i}^2 \\ X_{j,t-1-i} (\bupsilon_j) \end{bmatrix} \quad a.s.,
	\end{equation*}
	so, by Minkowskis inequality, 
	\begin{equation*}
		\mathbb{E} \left[ \sup_{\bupsilon_{j} \in \bar{\bUpsilon}_{j}} \left| \left| \nabla_{\bupsilon_j} X_{j,t} (\bupsilon_j) \right| \right|_{2}^{2} \right]^{\frac{1}{2}} \leq \frac{1}{1-\overline{\beta}_j} + \frac{1}{1-\overline{\beta}_j} \mathbb{E} \left[ Y_{t-1}^4 \right]^{\frac{1}{2}} + \frac{1}{1-\overline{\beta}_j} \mathbb{E} \left[ \sup_{\bupsilon_{j} \in \bar{\bUpsilon}_{j}} X_{j,t-1}^{2} (\bupsilon_j) \right]^{\frac{1}{2}}.
	\end{equation*}
	Hence, $\mathbb{E} [ \sup_{\bupsilon_{j} \in \bar{\bUpsilon}_{j}} || \nabla_{\bupsilon_j} X_{j,t} (\bupsilon_j) ||_{2}^{2} ] < \infty$ since $\mathbb{E} [ Y_t^6] < \infty$ and $\mathbb{E}[ \sup_{\bupsilon_{j} \in \bar{\bUpsilon}_{j}} X_{j,t}^{2} (\bupsilon_j)] < \infty$, see the proof of Theorem \ref{theo:c}. Finally,
	\begin{equation*}
		\sup_{\bupsilon_{j} \in \bar{\bUpsilon}_{j}} \left| \left| \bar{\nabla}_{\bupsilon_j} \phi_{j} (Y_t,\hat{X}_{j,t} (\bupsilon_{j});\bupsilon_{j}) - \bar{\nabla}_{\bupsilon_j} \phi_{j} (Y_t,X_{j,t} (\bupsilon_{j});\bupsilon_{j}) \right| \right|_{2} = \sup_{\bupsilon_{j} \in \bar{\bUpsilon}_{j}} \left| \hat{X}_{j,t} (\bupsilon_{j}) - X_{j,t} (\bupsilon_{j}) \right|
	\end{equation*}
	and
	\begin{equation*}
		\sup_{\bupsilon_{j} \in \bar{\bUpsilon}_{j}} \left| \bar{\nabla}_{x_j} \phi_{j} (Y_t,\hat{X}_{j,t} (\bupsilon_{j});\bupsilon_{j}) - \bar{\nabla}_{x_j} \phi_{j} (Y_t,X_{j,t} (\bupsilon_{j});\bupsilon_{j}) \right| = 0,
	\end{equation*}
	so Conditions (ii) and (iii) are also satisfied since $\sup_{\bupsilon_{j} \in \bar{\bUpsilon}_{j}} | \hat{X}_{j,t} (\bupsilon_{j}) - X_{j,t} (\bupsilon_{j}) | \overset{e.a.s.}{\rightarrow} 0$ as $t \rightarrow \infty$. The conditions in Lemma \ref{LemmaInvertibilityDDX} can be verified similarly.
	
	Moving on to Assumption \ref{AssumptionF4}, by Hölders inequality, for all $m \in (1,\frac{6}{4}]$,
	\begin{align*}
		&\mathbb{E} \left[ \sup_{\bupsilon_j \in \bar{\bUpsilon}_{j}} \left| \bar{\nabla}_{x_j} \log f_{j} (Y_t;X_{j,t} (\bupsilon_j),\bupsilon_j) \right|^2 \left| \left| \nabla_{\bupsilon_j} X_{j,t} (\bupsilon_j) \right| \right|_{2}^2 \right] \\
		&= \mathbb{E} \left[ \sup_{\bupsilon_j \in \bar{\bUpsilon}_{j}} \left| - \frac{1}{2} + \frac{1}{2} \frac{Y_t^2}{X_{j,t}(\bupsilon_j)} \right|^2 \left| \left| \frac{\nabla_{\bupsilon_j} X_{j,t} (\bupsilon_j)}{X_{j,t}(\bupsilon_j)} \right| \right|_{2}^2 \right] \\
		&\leq \mathbb{E} \left[ \sup_{\bupsilon_j \in \bar{\bUpsilon}_{j}} \left| - \frac{1}{2} + \frac{1}{2} \frac{Y_t^2}{X_{j,t}(\bupsilon_j)} \right|^{2m} \right]^{\frac{1}{m}} \mathbb{E} \left[ \sup_{\bupsilon_j \in \bar{\bUpsilon}_{j}} \left| \left| \frac{\nabla_{\bupsilon_j} X_{j,t} (\bupsilon_j)}{X_{j,t}(\bupsilon_j)} \right| \right|_{2}^{2m^c} \right]^{\frac{1}{m^c}} \\
		&\leq \left( 1 + \frac{1}{\underline{x}_j^{2m}} \mathbb{E} \left[ \left|Y_t\right|^{4m} \right] \right)^{\frac{1}{m}} \mathbb{E} \left[ \sup_{\bupsilon_j \in \bar{\bUpsilon}_{j}} \left| \left| \frac{\nabla_{\bupsilon_j} X_{j,t} (\bupsilon_j)}{X_{j,t}(\bupsilon_j)} \right| \right|_{2}^{2m^c} \right]^{\frac{1}{m^c}}
	\end{align*}
	with $\underline{x}_j = \frac{\underline{\omega}_j}{1-\underline{\beta}_j}$ and $m^c = \frac{m}{m-1}$ where the last inequality follows from the inequality $|x+y|^p \leq 2^p |x|^p + 2^p |y|^p$ for all $x,y \in (-\infty,\infty)$ and $p \in (0,\infty)$. Note that, by Equation \eqref{eq:FZ}, for all $n \in (0,\frac{1}{m^c})$,
	\begin{align*}
		\frac{\nabla_{\bupsilon_j} X_{j,t} (\bupsilon_j)}{X_{j,t} (\bupsilon_j)} 
		&=
		\begin{bmatrix} 
			\frac{1}{1-\beta_j} \frac{1}{X_{j,t} (\bupsilon_j)} \\
			\left( \sum_{i=0}^{\infty} \beta_j^{i} Y_{t-1-i}^2 \right) \frac{1}{X_{j,t} (\bupsilon_j)} \\
			\left( \frac{\omega_j}{(1-\beta_j)^2} + \alpha_j \sum_{i=0}^{\infty} i \beta_j^{i-1} Y_{t-1-i}^2 \right) \frac{1}{X_{j,t} (\bupsilon_j)}
		\end{bmatrix} \\
		&\leq 
		\begin{bmatrix} 
			\frac{1}{\omega_j} \\
			\frac{1}{\alpha_j} \\
			\frac{1}{1-\beta_j} + \frac{1}{\beta_j} \sum_{i=0}^{\infty} i \frac{\alpha_j \beta_j^{i} Y_{t-1-i}^2}{\frac{\omega_j}{1-\beta_j} + \alpha_j \beta_j^{i} Y_{t-1-i}^2}
		\end{bmatrix} \\
		&\leq
		\begin{bmatrix} 
			\frac{1}{\omega_j} \\
			\frac{1}{\alpha_j} \\
			\frac{1}{1-\beta_j} + \frac{1}{\omega_j^n} \alpha_j^n \frac{(1-\beta_j)^n}{\beta_j} \sum_{i=0}^{\infty} i \beta_j^{ni} \left|Y_{t-1-i}\right|^{2n}
		\end{bmatrix} \quad a.s.,
	\end{align*}
	where the inequalities are elementwise and the last one follows from the inequality $\frac{x}{1+x} \leq x^p$ for all $x \in [0,\infty)$ and $p \in (0,1)$, so, by Minkowskis inequality, 
	\begin{equation*}
		\mathbb{E} \left[ \sup_{\bupsilon_j \in \bar{\bUpsilon}_{j}} \left| \left| \frac{\nabla_{\bupsilon_j} X_{j,t} (\bupsilon_j)}{X_{j,t}(\bupsilon_j)} \right| \right|_{2}^{2m^c} \right]^{\frac{1}{2m^c}} \leq A_j + B_j \mathbb{E} \left[ \left|Y_{t-1} \right|^{4nm^c} \right]^{\frac{1}{2m^c}},
	\end{equation*}
	where $A_j = \frac{1}{\underline{\omega}_j} + \frac{1}{\underline{\alpha}_j} + \frac{1}{1-\overline{\beta}_j}$ and $B_j = \frac{1}{\underline{\omega}_j^n} \overline{\alpha}_j^n \frac{(1-\underline{\beta}_j)^n \overline{\beta}_j^n}{\underline{\beta}_j (1-\overline{\beta}_j^n)^2}$. Hence, $\mathbb{E} \left[ \sup_{\bupsilon_j \in \bar{\bUpsilon}_{j}} \left| \left| \frac{\nabla_{\bupsilon_j} X_{j,t} (\bupsilon_j)}{X_{j,t}(\bupsilon_j)} \right| \right|_{2}^{2m^c} \right] < \infty$ since $\mathbb{E} [ Y_t^6] < \infty$. The first condition in Assumption \ref{AssumptionF4} is thus satisfied since $\mathbb{E} [ Y_t^6] < \infty$ and $\mathbb{E} \left[ \sup_{\bupsilon_j \in \bar{\bUpsilon}_{j}} \left| \left| \frac{\nabla_{\bupsilon_j} X_{j,t} (\bupsilon_j)}{X_{j,t}(\bupsilon_j)} \right| \right|_{2}^{2m^c} \right] < \infty$. Moreover,
	\begin{align*}
		&\mathbb{E} \left[ \sup_{\bupsilon_j \in \bar{\bUpsilon}_{j}} \left| \bar{\nabla}_{x_j x_j} \log f_{j} (Y_t;{X}_{j,t} (\bupsilon_j),\bupsilon_j) \right| \left| \left| \nabla_{\bupsilon_j} X_{j,t} (\bupsilon_j) \right| \right|_{2}^2 \right] \\
		&= \mathbb{E} \left[ \sup_{\bupsilon_j \in \bar{\bUpsilon}_{j}} \left| \frac{1}{2} - \frac{Y_t^2}{{X}_{j,t} (\bupsilon_j)} \right| \left| \left| \frac{\nabla_{\bupsilon_j} X_{j,t} (\bupsilon_j)}{{X}_{j,t} (\bupsilon_j)} \right| \right|_{2}^2 \right] \\
		&\leq \mathbb{E} \left[ \sup_{\bupsilon_j \in \bar{\bUpsilon}_{j}} \left| \frac{1}{2} - \frac{Y_t^2}{{X}_{j,t} (\bupsilon_j)} \right|^{m} \right]^{\frac{1}{m}} \mathbb{E} \left[ \sup_{\bupsilon_j \in \bar{\bUpsilon}_{j}} \left| \left| \frac{\nabla_{\bupsilon_j} X_{j,t} (\bupsilon_j)}{{X}_{j,t} (\bupsilon_j)} \right| \right|_{2}^{2m^c} \right]^{\frac{1}{m^c}} \\
		&\leq \left( 1 + \frac{2^m}{\underline{x}_j^{m}} \mathbb{E} \left[ \left|Y_t\right|^{2m} \right] \right)^{\frac{1}{m}} \mathbb{E} \left[ \sup_{\bupsilon_j \in \bar{\bUpsilon}_{j}} \left| \left| \frac{\nabla_{\bupsilon_j} X_{j,t} (\bupsilon_j)}{{X}_{j,t} (\bupsilon_j)} \right| \right|_{2}^{2m^c} \right]^{\frac{1}{m^c}},
	\end{align*}
	so the second condition in Assumption \ref{AssumptionF4} is also satisfied. Finally, by the Cauchy–Schwarz inequality,
	\begin{align*}
		&\mathbb{E} \left[ \sup_{\bupsilon_j \in \bar{\bUpsilon}_{j}} \left| \bar{\nabla}_{x_j} \log f_{j} (Y_t;{X}_{j,t} (\bupsilon_j),\bupsilon_j) \right| \left| \left| \nabla_{\bupsilon_j \bupsilon_j} X_{j,t} (\bupsilon_j) \right| \right|_{2,2} \right] \\
		&= \mathbb{E} \left[ \sup_{\bupsilon_j \in \bar{\bUpsilon}_{j}} \left| - \frac{1}{2} \frac{1}{X_{j,t}(\bupsilon_j)} + \frac{1}{2} \frac{Y_t^2}{X_{j,t}^2(\bupsilon_j)} \right| \left| \left| \nabla_{\bupsilon_j \bupsilon_j} X_{j,t} (\bupsilon_j) \right| \right|_{2,2} \right] \\
		& \leq \mathbb{E} \left[ \sup_{\bupsilon_j \in \bar{\bUpsilon}_{j}} \left| - \frac{1}{2} \frac{1}{X_{j,t}(\bupsilon_j)} + \frac{1}{2} \frac{Y_t^2}{X_{j,t}^2(\bupsilon_j)} \right|^2 \right]^{\frac{1}{2}} \mathbb{E} \left[ \sup_{\bupsilon_j \in \bar{\bUpsilon}_{j}} \left| \left| \nabla_{\bupsilon_j \bupsilon_j} X_{j,t} (\bupsilon_j) \right| \right|_{2,2}^{2} \right]^{\frac{1}{2}} \\
		& \leq \left( \frac{1}{\underline{x}_j^2} + \frac{1}{\underline{x}_j^4} \mathbb{E} \left[ Y_t^4 \right] \right)^{\frac{1}{2}} \mathbb{E} \left[ \sup_{\bupsilon_j \in \bar{\bUpsilon}_{j}} \left| \left| \nabla_{\bupsilon_j \bupsilon_j} X_{j,t} (\bupsilon_j) \right| \right|_{2,2}^{2} \right]^{\frac{1}{2}},
	\end{align*}
	where the last inequality follows from the inequality $|x+y|^p \leq 2^p |x|^p + 2^p |y|^p$ for all $x,y \in (-\infty,\infty)$ and $p \in (0,\infty)$ as above. Note that
	\begin{equation*}
		\nabla_{\bupsilon_j \bupsilon_j} X_{j,t} (\bupsilon_j) = \sum_{i=0}^{\infty} \beta_j^i \left( \nabla_{\bupsilon_j} X_{j,t} (\bupsilon_j) \begin{bmatrix} 0 & 0 & 1 \end{bmatrix} + \begin{bmatrix} 0 \\ 0 \\ 1 \end{bmatrix} \nabla_{\bupsilon_j^{\prime}} X_{j,t} (\bupsilon_j) \right) \quad a.s.,
	\end{equation*}
	so, by Minkowskis inequality,
	\begin{equation*}
		\mathbb{E} \left[ \sup_{\bupsilon_j \in \bar{\bUpsilon}_{j}} \left| \left| \nabla_{\bupsilon_j \bupsilon_j} X_{j,t} (\bupsilon_j) \right| \right|_{2,2}^{2} \right]^{\frac{1}{2}} \leq \frac{2}{1-\overline{\beta}_j} \mathbb{E} \left[ \sup_{\bupsilon_j \in \bar{\bUpsilon}_{j}} \left| \left| \nabla_{\bupsilon_j} X_{j,t} (\bupsilon_j) \right| \right|_{2}^{2} \right]^{\frac{1}{2}}.
	\end{equation*}
	Hence, $\mathbb{E} [ \sup_{\bupsilon_j \in \bar{\bUpsilon}_{j}} || \nabla_{\bupsilon_j \bupsilon_j} X_{j,t} (\bupsilon_j) ||_{2,2}^{2} ] < \infty$ since $\mathbb{E} [ \sup_{\bupsilon_j \in \bar{\bUpsilon}_{j}} || \nabla_{\bupsilon_j} X_{j,t} (\bupsilon_j) ||_{2}^{2} ] < \infty$. The third condition in Assumption \ref{AssumptionF4} is thus also satisfied since $\mathbb{E} [ Y_t^6 ] < \infty$ and $\mathbb{E} [ \sup_{\bupsilon_j \in \bar{\bUpsilon}_{j}} || \nabla_{\bupsilon_j \bupsilon_j} X_{j,t} (\bupsilon_j) ||_{2,2}^{2} ] < \infty$.
	
	Finally, we verify the condition in Lemma \ref{LemmaInvertibilityDPrediction}. Note that
	\begin{align*}
		\mathbb{E} \left[ \sup_{\bupsilon_j \in \bar{\bUpsilon}_j} \sup_{x_j \in \mathcal{X}_{j}} \left| \bar{\nabla}_{x_j x_j} \log f_{j} (Y_t;x_j,\bupsilon_j) \right| \right] &= \mathbb{E} \left[ \sup_{\bupsilon_j \in \bar{\bUpsilon}_j} \sup_{x_j \in \mathcal{X}_{j}} \left| \frac{1}{2} \frac{1}{x_j^2} - \frac{Y_t^2}{x_j^3} \right| \right] \\
		&\leq \frac{1}{2\underline{x}_j^2} + \frac{1}{\underline{x}_j^3} \mathbb{E} \left[ Y_t^2 \right],
	\end{align*}
	so that condition is also satisfied since $\mathbb{E} [ Y_t^6 ] < \infty$. The condition in Lemma \ref{LemmaInvertibilityDDPrediction} can be verified similarly.
\end{proof}
\begin{remark}
	An inspection of the proof of Theorem \ref{theo:an} shows that we only need to assume that there exists an $\varepsilon > 0$ such that $\mathbb{E} [|Y_t|^{4+\varepsilon}] < \infty$. However, in the literature, there exist to the best of our knowledge no conditions under which this is directly the case. We thus assume that $\rho(\bSigma_0^{(\otimes 3)}) < 1$, which implies that $\mathbb{E} [Y_t^{6}] < \infty$, a stronger assumption than the assumption that there exists an $\varepsilon > 0$ such that $\mathbb{E} [|Y_t|^{4+\varepsilon}] < \infty$.
\end{remark}

\subsection{Finite-sample Properties}

To study the finite-sample properties of the MLE for the Markov-switching GARCH model by \cite{HaasMittnikPaolella2004b}, we perform a Monte Carlo simulation study.

Tables \ref{tab:MC1} and \ref{tab:MC2} report the estimated means and standard deviations (in parentheses) of the estimated parameters of a range of two-state Markov-switching GARCH models together with the true parameters. The benchmark model is a two-state Markov-switching GARCH model with true parameters $\omega_{1,0} = 0.025$, $\alpha_{1,0} = 0.05$, $\beta_{1,0} = 0.90$, $\omega_{2,0} = 0.25$, $\alpha_{2,0} = 0.30$, $\beta_{2,0} = 0.60$, $p_{11,0} = 0.95$, and $p_{22,0} = 0.90$, which are similar to the estimated parameters often found when a two-state Markov-switching GARCH model is estimated on data. Indeed, when a two-state Markov-switching GARCH model is estimated on data, one state is often a persistent low-volatility state in which $\alpha$ is relatively low and $\beta$ is relatively high. The other is often a persistent high-volatility state in which $\alpha$ is relatively high and $\beta$ is relatively low, which, according to \cite{HaasMittnikPaolella2004b}, may indicate a tendency to overreact to news possibly due to a prevailing panic-like mood. The means and standard deviations of the estimated parameters are estimated from $2500$ replications where a replication consists of first simulating $T$ observations from the model and then estimating the model from the $T$ observations; the R package MSGARCH developed by \cite{ArdiaBluteauBoudtCataniaTrottier2019} is used for both purposes. The finite-sample properties of the MLE for the benchmark model are good. Indeed, the estimated means of the estimated parameters converge to the true parameters and the estimated standard deviations of the estimated parameters converge to zero. 

Table \ref{tab:MC1} investigates both what happens when the two states of the benchmark model are more similar and what happens when the two states of the benchmark model are more different. In both cases, the finite-sample properties of the MLE are good; best, however, in the case where the two states are more different since, in this case, it is easier to determine whether an observation is from one state or the other making it easier to estimate the parameters in the two states.

\begin{table}[t]
	
	\centering
	
	\tiny
	
	\begin{tabular}{ccccccccc} 
		
		\toprule
		
		& $\omega_{1,0} = 0.025$ & $\alpha_{1,0} = 0.05$ & $\beta_{1,0} = 0.90$ & $\omega_{2,0} = 0.50$ & $\alpha_{2,0} = 0.40$ & $\beta_{2,0} = 0.50$ & $p_{11,0} = 0.95$ & $p_{22,0} = 0.90$ \\
		\cmidrule(lr){1-9}
		$T = 1250$ & $\underset{(0.108)}{0.059}$ & $\underset{(0.056)}{0.065}$ & $\underset{(0.174)}{0.844}$ & $\underset{(0.621)}{0.716}$ & $\underset{(0.148)}{0.402}$ & $\underset{(0.178)}{0.443}$ & $\underset{(0.071)}{0.945}$ & $\underset{(0.100)}{0.896}$ \vspace{0.1cm} \\
		$T = 2500$ & $\underset{(0.042)}{0.031}$ & $\underset{(0.027)}{0.053}$ & $\underset{(0.078)}{0.889}$ & $\underset{(0.268)}{0.564}$ & $\underset{(0.092)}{0.403}$ & $\underset{(0.110)}{0.478}$ & $\underset{(0.046)}{0.944}$ & $\underset{(0.067)}{0.894}$ \vspace{0.1cm} \\
		$T = 5000$ & $\underset{(0.009)}{0.026}$ & $\underset{(0.013)}{0.050}$ & $\underset{(0.021)}{0.900}$ & $\underset{(0.145)}{0.518}$ & $\underset{(0.061)}{0.400}$ & $\underset{(0.071)}{0.494}$ & $\underset{(0.022)}{0.947}$ & $\underset{(0.040)}{0.896}$ \vspace{0.1cm} \\
		\cmidrule(lr){1-9}
		
		& $\omega_{1,0} = 0.025$ & $\alpha_{1,0} = 0.05$ & $\beta_{1,0} = 0.90$ & $\omega_{2,0} = 0.25$ & $\alpha_{2,0} = 0.30$ & $\beta_{2,0} = 0.60$ & $p_{11,0} = 0.95$ & $p_{22,0} = 0.90$ \\
		\cmidrule(lr){1-9}
		$T = 1250$ & $\underset{(0.112)}{0.077}$ & $\underset{(0.060)}{0.070}$ & $\underset{(0.222)}{0.796}$ & $\underset{(0.553)}{0.500}$ & $\underset{(0.155)}{0.302}$ & $\underset{(0.217)}{0.509}$ & $\underset{(0.105)}{0.948}$ & $\underset{(0.119)}{0.908}$ \vspace{0.1cm} \\
		$T = 2500$ & $\underset{(0.069)}{0.046}$ & $\underset{(0.041)}{0.059}$ & $\underset{(0.143)}{0.858}$ & $\underset{(0.302)}{0.350}$ & $\underset{(0.110)}{0.309}$ & $\underset{(0.152)}{0.555}$ & $\underset{(0.078)}{0.944}$ & $\underset{(0.100)}{0.896}$ \vspace{0.1cm} \\
		$T = 5000$ & $\underset{(0.025)}{0.029}$ & $\underset{(0.022)}{0.052}$ & $\underset{(0.056)}{0.892}$ & $\underset{(0.160)}{0.289}$ & $\underset{(0.070)}{0.305}$ & $\underset{(0.096)}{0.580}$ & $\underset{(0.051)}{0.944}$ & $\underset{(0.070)}{0.893}$ \vspace{0.1cm} \\
		\cmidrule(lr){1-9}
		
		& $\omega_{1,0} = 0.025$ & $\alpha_{1,0} = 0.05$ & $\beta_{1,0} = 0.90$ & $\omega_{2,0} = 0.125$ & $\alpha_{2,0} = 0.20$ & $\beta_{2,0} = 0.70$ & $p_{11,0} = 0.95$ & $p_{22,0} = 0.90$ \\
		\cmidrule(lr){1-9}
		$T = 1250$ & $\underset{(0.112)}{0.093}$ & $\underset{(0.068)}{0.068}$ & $\underset{(0.260)}{0.741}$ & $\underset{(0.410)}{0.335}$ & $\underset{(0.146)}{0.191}$ & $\underset{(0.263)}{0.587}$ & $\underset{(0.121)}{0.954}$ & $\underset{(0.132)}{0.925}$ \vspace{0.1cm} \\
		$T = 2500$ & $\underset{(0.084)}{0.066}$ & $\underset{(0.049)}{0.065}$ & $\underset{(0.205)}{0.801}$ & $\underset{(0.371)}{0.282}$ & $\underset{(0.134)}{0.204}$ & $\underset{(0.224)}{0.613}$ & $\underset{(0.114)}{0.951}$ & $\underset{(0.124)}{0.918}$ \vspace{0.1cm} \\
		$T = 5000$ & $\underset{(0.048)}{0.043}$ & $\underset{(0.034)}{0.058}$ & $\underset{(0.123)}{0.859}$ & $\underset{(0.243)}{0.217}$ & $\underset{(0.094)}{0.207}$ & $\underset{(0.168)}{0.643}$ & $\underset{(0.081)}{0.953}$ & $\underset{(0.112)}{0.909}$ \vspace{0.1cm} \\
		\cmidrule(lr){1-9}
		
		\bottomrule
		
	\end{tabular}
	
	\captionsetup{font=footnotesize}
	
	\caption{The estimated means and standard deviations (in parentheses) of the estimated parameters of three two-state Markov-switching GARCH models together with the true parameters.}
	
	\label{tab:MC1}
	
\end{table}

Table \ref{tab:MC2} investigates what happens when the second state of the benchmark model is less persistent. Although the finite-sample properties of the MLEs in the first state are good, the finite-sample properties of the MLEs in the second state are, somewhat surprisingly, not entirely satisfactory. There is, however, a natural explanation for this. Because the second state is less persistent, it is less likely to observe a relatively long sequence of consecutive observations from the second state making it more difficult to estimate the parameters in the second state; something which is important to keep in mind when a two-state Markov-switching GARCH model is applied to data.

\begin{table}[t]
	
	\centering
	
	\tiny
	
	\begin{tabular}{ccccccccc} 
		
		\toprule
		
		& $\omega_{1,0} = 0.025$ & $\alpha_{1,0} = 0.05$ & $\beta_{1,0} = 0.90$ & $\omega_{2,0} = 0.25$ & $\alpha_{2,0} = 0.30$ & $\beta_{2,0} = 0.60$ & $p_{11,0} = 0.95$ & $p_{22,0} = 0.90$ \\
		\cmidrule(lr){1-9}
		$T = 1250$ & $\underset{(0.112)}{0.077}$ & $\underset{(0.060)}{0.070}$ & $\underset{(0.222)}{0.796}$ & $\underset{(0.553)}{0.500}$ & $\underset{(0.155)}{0.302}$ & $\underset{(0.217)}{0.509}$ & $\underset{(0.105)}{0.948}$ & $\underset{(0.119)}{0.908}$ \vspace{0.1cm} \\
		$T = 2500$ & $\underset{(0.069)}{0.046}$ & $\underset{(0.041)}{0.059}$ & $\underset{(0.143)}{0.858}$ & $\underset{(0.302)}{0.350}$ & $\underset{(0.110)}{0.309}$ & $\underset{(0.152)}{0.555}$ & $\underset{(0.078)}{0.944}$ & $\underset{(0.100)}{0.896}$ \vspace{0.1cm} \\
		$T = 5000$ & $\underset{(0.025)}{0.029}$ & $\underset{(0.022)}{0.052}$ & $\underset{(0.056)}{0.892}$ & $\underset{(0.160)}{0.289}$ & $\underset{(0.070)}{0.305}$ & $\underset{(0.096)}{0.580}$ & $\underset{(0.051)}{0.944}$ & $\underset{(0.070)}{0.893}$ \vspace{0.1cm} \\
		\cmidrule(lr){1-9}
		
		& $\omega_{1,0} = 0.025$ & $\alpha_{1,0} = 0.05$ & $\beta_{1,0} = 0.90$ & $\omega_{2,0} = 0.25$ & $\alpha_{2,0} = 0.30$ & $\beta_{2,0} = 0.60$ & $p_{11,0} = 0.95$ & $p_{22,0} = 0.70$ \\
		\cmidrule(lr){1-9}
		$T = 1250$ & $\underset{(0.114)}{0.087}$ & $\underset{(0.054)}{0.054}$ & $\underset{(0.266)}{0.757}$ & $\underset{(0.485)}{0.480}$ & $\underset{(0.204)}{0.228}$ & $\underset{(0.295)}{0.472}$ & $\underset{(0.101)}{0.956}$ & $\underset{(0.167)}{0.881}$ \vspace{0.1cm} \\
		$T = 2500$ & $\underset{(0.075)}{0.053}$ & $\underset{(0.032)}{0.051}$ & $\underset{(0.179)}{0.838}$ & $\underset{(0.418)}{0.457}$ & $\underset{(0.175)}{0.259}$ & $\underset{(0.258)}{0.476}$ & $\underset{(0.090)}{0.953}$ & $\underset{(0.179)}{0.838}$ \vspace{0.1cm} \\
		$T = 5000$ & $\underset{(0.050)}{0.037}$ & $\underset{(0.018)}{0.050}$ & $\underset{(0.116)}{0.875}$ & $\underset{(0.335)}{0.398}$ & $\underset{(0.138)}{0.284}$ & $\underset{(0.204)}{0.500}$ & $\underset{(0.072)}{0.949}$ & $\underset{(0.173)}{0.786}$ \vspace{0.1cm} \\
		\cmidrule(lr){1-9}
		
		& $\omega_{1,0} = 0.025$ & $\alpha_{1,0} = 0.05$ & $\beta_{1,0} = 0.90$ & $\omega_{2,0} = 0.25$ & $\alpha_{2,0} = 0.30$ & $\beta_{2,0} = 0.60$ & $p_{11,0} = 0.95$ & $p_{22,0} = 0.50$ \\
		\cmidrule(lr){1-9}
		$T = 1250$ & $\underset{(0.113)}{0.092}$ & $\underset{(0.058)}{0.051}$ & $\underset{(0.286)}{0.732}$ & $\underset{(0.416)}{0.405}$ & $\underset{(0.197)}{0.160}$ & $\underset{(0.332)}{0.516}$ & $\underset{(0.116)}{0.955}$ & $\underset{(0.186)}{0.890}$ \vspace{0.1cm} \\
		$T = 2500$ & $\underset{(0.084)}{0.063}$ & $\underset{(0.039)}{0.050}$ & $\underset{(0.220)}{0.807}$ & $\underset{(0.424)}{0.433}$ & $\underset{(0.199)}{0.187}$ & $\underset{(0.314)}{0.499}$ & $\underset{(0.082)}{0.965}$ & $\underset{(0.202)}{0.858}$ \vspace{0.1cm} \\
		$T = 5000$ & $\underset{(0.057)}{0.042}$ & $\underset{(0.022)}{0.048}$ & $\underset{(0.146)}{0.862}$ & $\underset{(0.384)}{0.440}$ & $\underset{(0.188)}{0.227}$ & $\underset{(0.285)}{0.475}$ & $\underset{(0.073)}{0.961}$ & $\underset{(0.232)}{0.787}$ \vspace{0.1cm} \\
		\cmidrule(lr){1-9}
		
		\bottomrule
		
	\end{tabular}
	
	\captionsetup{font=footnotesize}
	
	\caption{The estimated means and standard deviations (in parentheses) of the estimated parameters of three two-state Markov-switching GARCH models together with the true parameters.}
	
	\label{tab:MC2}
	
\end{table}

\section{Conclusion} \label{Conclusion}

State space models and their extensions are ubiquitous in economics and finance, so statistical inference for them – including estimation, which is typically done by maximum likelihood estimation – is of significant practical importance.

In this paper, we proved both consistency and asymptotic normality of the MLE for a Markov-switching observation-driven model, that is, an observation-driven state space model where $\textup{X}$ is finite. To the best of our knowledge, these results are the first of their kind in the literature. As a special case, we also gave conditions under which the MLE for the widely applied Markov-switching GARCH model by \cite{HaasMittnikPaolella2004b} is both consistent and asymptotically normal. Again, this is new to the literature and extends \cite{KandjiMisko2024}, who gave conditions under which it is only consistent.

Several extensions of this paper are possible. One is to generalise the results in this paper to a higher-order Markov-switching observation-driven model, that is, a model where $X_{j,t+1} = \phi_{j} (Y_{t},...,Y_{t-q+1},X_{j,t},...,X_{j,t-p+1};\bupsilon_{j})$. Another is to generalise them to a multivariate Markov-switching observation-driven model. Both follow by using similar arguments to the ones in this paper. A final, and very interesting, extension of this paper is to generalise the results to an observation-driven state space model where $\textup{X}$ is not necessarily finite to cover more of the examples in the introduction. We conjecture that the results in this paper can be extended to an observation-driven state space model where $\textup{X}$ is compact relatively easily by combining the arguments in this paper and the ones in \cite{DoucMoulinesRyden2004} and \cite{KasaharaShimotsu2019}. We, however, leave this for future research.

\bibliography{references}

\appendix

\numberwithin{equation}{section}

\section*{Appendix} 

The appendix is organised as follows. The main proofs, that is, the proofs of Lemmas \ref{LemmaInvertibilityX} and \ref{LemmaInvertibilityFilter}, Theorems \ref{TheoremConsistency} and \ref{TheoremAsymptoticNormality}, and Proposition \ref{prop:varcovarmatrix}, are collected in Section \ref{sec:MP}. The other proofs are collected in Section \ref{sec:OP}, and supplementary material to the other proofs is collected in Section \ref{SupplementaryMaterial}. In the following, $C \in \mathbb{R}$ is an arbitrary constant that can change from line to line.

\section{Main Proofs} \label{sec:MP}

\subsection{Proof of Lemma \ref{LemmaInvertibilityX}}

For each $j \in \{1,...,J\}$, $(X_{j,t} (\bupsilon_{j}))_{t \in \mathbb{Z}}$ is a stochastic process taking values in $\mathcal{X}_{j}$ given by
\begin{equation*}
	X_{j,t+1} (\bupsilon_{j}) = \phi_{j,t} (X_{j,t} (\bupsilon_{j});\bupsilon_{j})
\end{equation*}
with
\begin{equation*}
	\phi_{j,t} (X_{j,t} (\bupsilon_{j});\bupsilon_{j}) = \phi_{j} (Y_t,X_{j,t} (\bupsilon_{j});\bupsilon_{j}).
\end{equation*}
Note that
\begin{equation*}
	\Lambda (\phi_{j,t};\bupsilon_j) = \sup_{\underset{x_j \neq y_j}{x_j,y_j \in \mathcal{X}_{j}}} \frac{ | \phi_{j,t} (x_j;\bupsilon_j) - \phi_{j,t} (y_j;\bupsilon_j)| }{ |x_j - y_j| } \leq \sup_{x_j \in \mathcal{X}_{j}} \left| \nabla_{x_j} \phi_{j,t} (x_j;\bupsilon_j) \right| = \Lambda_{j,t} (\bupsilon_j).
\end{equation*}
The conclusion thus follows from Theorem 3.1 in \cite{Bougerol1993} since, for each $j \in \{1,...,J\}$,
\begin{enumerate} [(i)]
	\item there exists an $x_j \in \mathcal{X}_{j}$ such that $\mathbb{E} [\log^{+} \sup_{\bupsilon_{j} \in \bUpsilon_{j}} | \phi_{j,t} (x_j;\bupsilon_{j}) - x_j | ] < \infty$,
	\item $\mathbb{E} [\log^{+} \sup_{\bupsilon_{j} \in \bUpsilon_{j}} \Lambda_{j,t} (\bupsilon_{j}) ] < \infty$, and
	\item $-\infty \leq \mathbb{E} [\log \sup_{\bupsilon_{j} \in \bUpsilon_{j}} \Lambda_{j,t} (\bupsilon_{j})] < 0$
\end{enumerate}
by assumption.

\subsection{Proof of Lemma \ref{LemmaInvertibilityFilter}}

The proof consists of two parts. The first part considers the stochastic process $(\bpi_{t \mid t} (\btheta))_{t \in \mathbb{Z}}$, and the second part considers the stochastic process $(\hat{\bpi}_{t \mid t} (\btheta))_{t \in \mathbb{N}_0}$.

First, $(\bpi_{t \mid t} (\btheta))_{t \in \mathbb{Z}}$ is a stochastic process taking values in $\mathcal{S}$ given by
\begin{equation*}
	\bpi_{t \mid t} (\btheta) = \bphi_t (\bpi_{t-1 \mid t-1} (\btheta);\btheta)
\end{equation*} 
with 
\begin{equation*}
	\bphi_t (\bpi_{t-1 \mid t-1} (\btheta);\btheta) = \mathbf{F}_t (\mathbf{P}^{\prime} \bpi_{t-1 \mid t-1} (\btheta);\btheta) \mathbf{P}^{\prime} \bpi_{t-1 \mid t-1} (\btheta).
\end{equation*}
Let
\begin{equation*}
	\Lambda (\bphi_t;\btheta) = \sup_{\underset{\bx \neq \by}{\bx,\by \in \mathcal{S}}} \frac{|| \bphi_t (\bx;\btheta) - \bphi_t (\by;\btheta) ||_{2}}{|| \bx - \by ||_{2}}.
\end{equation*}
The fact that $(\bpi_{t \mid t} (\btheta))_{t \in \mathbb{Z}}$ is stationary and ergodic for all $\btheta \in \bTheta$ follows from Proposition 1 in \cite{Krabbe2025}, which is a special case of Theorem 3.1 in \cite{Bougerol1993}, if
\begin{enumerate} [(i)]
	\item there exists an $\bs \in \mathcal{S}$ such that $\mathbb{E} [\log^{+} \sup_{\btheta \in \bTheta} ||\bphi_t (\bs;\btheta) - \bs ||_{2}] < \infty$,
	\item $\mathbb{E} [\log^{+} \sup_{\btheta \in \bTheta} \Lambda (\bphi_t;\btheta)] < \infty$, and
	\item there exists an $r \in \mathbb{N}$ such that $-\infty \leq \mathbb{E} [\log \sup_{\btheta \in \bTheta} \Lambda (\bphi_t^{(r)};\btheta)] < 0$.
\end{enumerate}
Condition (i) is trivial. To show Conditions (ii) and (iii), note that
\begin{equation*}
	\mathbb{P}_{\btheta} (S_t = j \mid \bY_{-\infty}^{u}) = \sum_{i=1}^{J} \mathbb{P}_{\btheta} (S_t = j \mid S_{t-1} = i, \bY_{-\infty}^{u}) \mathbb{P}_{\btheta} (S_{t-1} = i \mid \bY_{-\infty}^{u})
\end{equation*}
for all $t,u \in \mathbb{Z}$ such that $t \leq u$ where 
\begin{align*}
	&\mathbb{P}_{\btheta} (S_t = j \mid S_{t-1} = i, \bY_{-\infty}^{u}) \\
	&= \frac{\mathbb{P}_{\btheta} (\bY_{t}^{u} \in \textup{d}\by_{t}^{u}, S_t = j \mid S_{t-1} = i, \bY_{-\infty}^{t-1})}{\sum_{k=1}^{J} \mathbb{P}_{\btheta} (\bY_{t}^{u} \in \textup{d}\by_{t}^{u}, S_t = k \mid S_{t-1} = i, \bY_{-\infty}^{t-1})} \\
	&= \frac{\mathbb{P}_{\btheta} (\bY_{t}^{u} \in \textup{d}\by_{t}^{u} \mid \bY_{-\infty}^{t-1}, S_t = j , S_{t-1} = i) \mathbb{P}_{\btheta} (S_t = j \mid S_{t-1} = i, \bY_{-\infty}^{t-1})}{\sum_{k=1}^{J} \mathbb{P}_{\btheta} (\bY_{t}^{u} \in \textup{d}\by_{t}^{u} \mid \bY_{-\infty}^{t-1}, S_t = k , S_{t-1} = i) \mathbb{P}_{\btheta} (S_t = k \mid S_{t-1} = i, \bY_{-\infty}^{t-1}) } \\
	&= \frac{\mathbb{P}_{\btheta} (\bY_{t}^{u} \in \textup{d}\by_{t}^{u} \mid \bY_{-\infty}^{t-1}, S_t = j) \mathbb{P}_{\btheta} (S_t = j \mid S_{t-1} = i)}{\sum_{k=1}^{J} \mathbb{P}_{\btheta} (\bY_{t}^{u} \in \textup{d}\by_{t}^{u} \mid \bY_{-\infty}^{t-1}, S_t = k) \mathbb{P}_{\btheta} (S_t = k \mid S_{t-1} = i) }.
\end{align*}
Hence, $\pi_{j,t \mid u} (\btheta) := \mathbb{P}_{\btheta} (S_t = j \mid \bY_{-\infty}^{u}), j \in \{1,...,J\}$ is given by
\begin{equation*}
	\pi_{j,t \mid u} (\btheta) = \sum_{i=1}^{J} \frac{\mathbb{P}_{\btheta} (\bY_{t}^{u} \in \textup{d}\by_{t}^{u} \mid \bY_{-\infty}^{t-1}, S_t = j) p_{ij}}{\sum_{k=1}^{J} \mathbb{P}_{\btheta} (\bY_{t}^{u} \in \textup{d}\by_{t}^{u} \mid \bY_{-\infty}^{t-1}, S_t = k) p_{ik}} \pi_{i,t-1 \mid u} (\btheta)
\end{equation*}
for all $t,u \in \mathbb{Z}$ such that $t \leq u$ and $\bpi_{t \mid u} (\btheta) := (\pi_{1,t \mid u} (\btheta),...,\pi_{J,t \mid u} (\btheta))^{\prime}$ is given by
\begin{equation*}
	\bpi_{t \mid u} (\btheta) = \bM_{t \mid u}^{\prime}(\btheta) \bpi_{t-1 \mid u} (\btheta)
\end{equation*}
for all $t,u \in \mathbb{Z}$ such that $t \leq u$ where $\bM_{t \mid u} (\btheta)$ is a row stochastic matrix, that is, a square matrix with non-negative elements where each row sum to one, with generic element 
\begin{equation*}
	[\bM_{t \mid u}(\btheta)]_{ij} := \frac{\mathbb{P}_{\btheta} (\bY_{t}^{u} \in \textup{d}\by_{t}^{u} \mid \bY_{-\infty}^{t-1}, S_t = j) p_{ij}}{\sum_{k=1}^{J} \mathbb{P}_{\btheta} (\bY_{t}^{u} \in \textup{d}\by_{t}^{u} \mid \bY_{-\infty}^{t-1}, S_t = k) p_{ik}}, \quad i,j \in \{1,...,J\}.
\end{equation*}
Thus, we have that
\begin{equation}
	\bpi_{t \mid t} (\btheta) = \bM_{t \mid t}^{\prime} (\btheta) \cdots \bM_{t-r+1 \mid t}^{\prime} (\btheta) \bpi_{t-r \mid t} (\btheta) \label{Observation}
\end{equation}
for all $r \in \mathbb{N}$ where
\begin{align*}
	[\bpi_{t-r \mid t} (\btheta)]_{j}
	&= \mathbb{P}_{\btheta} (S_{t-r} = j \mid \bY_{-\infty}^{t}) \\
	&= \frac{\mathbb{P}_{\btheta} (\bY_{t-r+1}^{t} \in \textup{d}\by_{t-r+1}^{t},S_{t-r} = j \mid \bY_{-\infty}^{t-r})}{\sum_{k=1}^{J} \mathbb{P}_{\btheta} (\bY_{t-r+1}^{t} \in \textup{d}\by_{t-r+1}^{t}, S_{t-r} = k \mid \bY_{-\infty}^{t-r})} \\
	&= \frac{\mathbb{P}_{\btheta} (\bY_{t-r+1}^{t} \in \textup{d}\by_{t-r+1}^{t} \mid \bY_{-\infty}^{t-r},S_{t-r} = j) \mathbb{P}_{\btheta} (S_{t-r} = j \mid \bY_{-\infty}^{t-r})}{\sum_{k=1}^{J} \mathbb{P}_{\btheta} (\bY_{t-r+1}^{t} \in \textup{d}\by_{t-r+1}^{t} \mid \bY_{-\infty}^{t-r},S_{t-r} = k) \mathbb{P}_{\btheta} (S_{t-r} = k \mid \bY_{-\infty}^{t-r})} \\
	&= \frac{\mathbb{P}_{\btheta}(\bY_{t-r+1}^{t} \in \textup{d}\by_{t-r+1}^{t} \mid \bY_{-\infty}^{t-r}, S_{t-r} = j) [\bpi_{t-r \mid t-r} (\btheta)]_{j} }{\sum_{k=1}^{J} \mathbb{P}_{\btheta}(\bY_{t-r+1}^{t} \in \textup{d}\by_{t-r+1}^{t} \mid \bY_{-\infty}^{t-r},S_{t-r} = k) [\bpi_{t-r \mid t-r} (\btheta)]_{k}}, \quad j \in \{1,...,J\}.
\end{align*}
This observation leads to Lemma \ref{LemmaInvertibilityC}, which is proved using Lemmas \ref{LemmaInvertibilityA} and \ref{LemmaInvertibilityB}.
\begin{lemmaappendix} \label{LemmaInvertibilityA}
	Let $\bM$ be a row stochastic matrix with generic element $m_{ij},i,j \in \{1,...,J\}$. Assume that there exist an $\varepsilon \in (0,1)$ and a $\bv \in \mathcal{S}$ such that
	\begin{equation*}
		m_{ij} \geq \varepsilon v_j
	\end{equation*}
	for all $i,j \in \{1,...,J\}$. Then, there exists an $\alpha \in (0,1)$ such that 
	\begin{equation*}
		|| \bM^{\prime} \bx - \bM^{\prime} \by ||_{1} \leq \alpha || \bx - \by ||_{1}
	\end{equation*}
	for all $\bx,\by \in \mathcal{S}$.
\end{lemmaappendix}
\begin{proof}
	Let $\tilde{\bM}$ be a matrix with generic element $\tilde{m}_{ij} = (1-\varepsilon)^{-1}(m_{ij} - \varepsilon v_j),i,j \in \{1,...,J\}$. Note that $\tilde{\bM}$ is a row stochastic matrix. Thus,
	\begin{align*}
		|| \bM^{\prime} \bx - \bM^{\prime} \by ||_{1} &= \sum_{j=1}^{J} \left| \sum_{i=1}^{J} m_{ij} (x_i-y_i) \right| \\
		&= (1-\varepsilon) \sum_{j=1}^{J} \left| \sum_{i=1}^{J} \tilde{m}_{ij} (x_i-y_i) \right| \\
		& \leq (1-\varepsilon) \sum_{j=1}^{J} \sum_{i=1}^{J} \tilde{m}_{ij} | x_i-y_i | \\
		&= (1-\varepsilon) \sum_{i=1}^{J} | x_i-y_i | = (1-\varepsilon) || \bx - \by ||_{1}
	\end{align*}
	for all $\bx,\by \in \mathcal{S}$.
\end{proof}
\begin{lemmaappendix} \label{LemmaInvertibilityB}
	If there exists an $\varepsilon \in (0,1)$ such that
	\begin{equation*}
		p_{ij} \geq \varepsilon
	\end{equation*}
	for all $i,j \in \{1,...,J\}$, then, for all $\tau,t \in \mathbb{Z}$ such that $\tau \leq t$, there exists a $\bv_{\tau \mid t} (\btheta) \in \mathcal{S}$ such that
	\begin{equation*}
		[\bM_{\tau \mid t} (\btheta)]_{ij} \geq \varepsilon [\bv_{\tau \mid t} (\btheta)]_{j}
	\end{equation*}
	for all $i,j \in \{1,...,J\}$.
\end{lemmaappendix}
\begin{proof}
	Let $\bv_{\tau \mid t} (\btheta)$ be a stochastic vector, that is, a vector with non-negative elements that sum to one, with generic element
	\begin{equation*}
		[\bv_{\tau \mid t} (\btheta)]_{j} := \frac{\mathbb{P}_{\btheta} (\bY_{\tau}^{t} \in \textup{d}\by_{\tau}^{t} \mid \bY_{-\infty}^{\tau-1}, S_{\tau} = j) }{\sum_{k=1}^{J} \mathbb{P}_{\btheta} (\bY_{\tau}^{t} \in \textup{d}\by_{\tau}^{t} \mid \bY_{-\infty}^{\tau-1}, S_{\tau} = k) }, \quad j \in \{1,...,J\}.
	\end{equation*}
	Then,
	\begin{align*}
		[\bM_{\tau \mid t} (\btheta)]_{ij} &= \frac{\mathbb{P}_{\btheta} (\bY_{\tau}^{t} \in \textup{d}\by_{\tau}^{t} \mid \bY_{-\infty}^{\tau-1}, S_{\tau} = j) p_{ij}}{\sum_{k=1}^{J} \mathbb{P}_{\btheta} (\bY_{\tau}^{t} \in \textup{d}\by_{\tau}^{t} \mid \bY_{-\infty}^{\tau-1}, S_{\tau} = k) p_{ik}} \\
		&\geq \varepsilon \frac{\mathbb{P}_{\btheta} (\bY_{\tau}^{t} \in \textup{d}\by_{\tau}^{t} \mid \bY_{-\infty}^{\tau-1}, S_{\tau} = j) }{\sum_{k=1}^{J} \mathbb{P}_{\btheta} (\bY_{\tau}^{t} \in \textup{d}\by_{\tau}^{t} \mid \bY_{-\infty}^{\tau-1}, S_{\tau} = k) } = \varepsilon [\bv_{\tau \mid t} (\btheta)]_{j}
	\end{align*}
	for all $i,j \in \{1,...,J\}$.
\end{proof}
\begin{lemmaappendix} \label{LemmaInvertibilityC}
	Assume that there exists an $\varepsilon \in (0,1)$ such that 
	\begin{equation*}
		p_{ij} \geq \varepsilon
	\end{equation*}
	for all $i,j \in \{1,...,J\}$. Then, there exists an $\alpha \in (0,1)$ such that 
	\begin{equation*}
		|| \bphi_t^{(r)} (\bx;\btheta) - \bphi_t^{(r)} (\by;\btheta) ||_{2} \leq \alpha^{r} C || \bx - \by ||_{2}
	\end{equation*}
	for all $r \in \mathbb{N}$ and $\bx,\by \in \mathcal{S}$.
\end{lemmaappendix}
\begin{proof}
	First, $|| \bz ||_{2} \leq || \bz ||_{1}$ for all $\bz \in \mathbb{R}^{J}$ implies that
	\begin{equation}
		|| \bphi_t^{(r)} (\bx;\btheta) - \bphi_t^{(r)} (\by;\btheta) ||_{2} \leq || \bphi_t^{(r)} (\bx;\btheta) - \bphi_t^{(r)} (\by;\btheta) ||_{1} \label{A}
	\end{equation}
	for all $r \in \mathbb{N}$ and $\bx,\by \in \mathcal{S}$. To ease the notation, let $p_{j,t} := \mathbb{P}_{\btheta}(\bY_{t-r+1}^{t} \in \textup{d}\by_{t-r+1}^{t} \mid \bY_{-\infty}^{t-r}, S_{t-r} = j)$. Note that, by Equation \eqref{Observation},
	\begin{equation*}
		\bphi_t^{(r)} (\bz;\btheta) = \bM_{t \mid t}^{\prime} (\btheta) \cdots \bM_{t-r+1 \mid t}^{\prime} (\btheta) \tilde{\bz}
	\end{equation*}
	for all $r \in \mathbb{N}$ and $\bz \in \mathcal{S}$ where
	\begin{equation*}
		\tilde{z}_{j} = \frac{p_{j,t} z_{j} }{\sum_{k=1}^{J} p_{k,t} z_{k}}.
	\end{equation*}
	Thus, by Lemmas \ref{LemmaInvertibilityA} and \ref{LemmaInvertibilityB}, there exists an $\alpha \in (0,1)$ such that
	\begin{equation}
		|| \bphi_t^{(r)} (\bx;\btheta) - \bphi_t^{(r)} (\by;\btheta) ||_{1} \leq \alpha^r || \tilde{\bx} - \tilde{\by} ||_{1} \label{eq:a}
	\end{equation}
	for all $r \in \mathbb{N}$ and $\bx,\by \in \mathcal{S}$ where
	\begin{equation*}
		||\tilde{\bx} - \tilde{\by} ||_{1} = \sum_{j=1}^{J} \left| \frac{p_{j,t} x_j}{\sum_{k=1}^{J} p_{k,t} x_k} - \frac{p_{j,t} y_j}{\sum_{k=1}^{J} p_{k,t} y_k} \right|.
	\end{equation*}
	Note that 
	\begin{align*}
		&\left| \frac{p_{j,t} x_j}{\sum_{k=1}^{J} p_{k,t} x_k} - \frac{p_{j,t} y_j}{\sum_{k=1}^{J} p_{k,t} y_k} \right| \\
		&= \left| \frac{p_{j,t} x_j}{\sum_{k=1}^{J} p_{k,t} x_k} - \frac{p_{j,t} y_j}{\sum_{k=1}^{J} p_{k,t} x_k} + \frac{p_{j,t} y_j}{\sum_{k=1}^{J} p_{k,t} x_k} - \frac{p_{j,t} y_j}{\sum_{k=1}^{J} p_{k,t} y_k} \right| \\
		&\leq \left| \frac{p_{j,t} x_j}{\sum_{k=1}^{J} p_{k,t} x_k} - \frac{p_{j,t} y_j}{\sum_{k=1}^{J} p_{k,t} x_k} \right| + \left| \frac{p_{j,t} y_j}{\sum_{k=1}^{J} p_{k,t} x_k} - \frac{p_{j,t} y_j}{\sum_{k=1}^{J} p_{k,t} y_k} \right| \\
		&= \frac{p_{j,t}}{\sum_{k=1}^{J} p_{k,t} x_k} \left| x_j - y_j \right| + \frac{p_{j,t} y_j}{\sum_{k=1}^{J} p_{k,t} y_k} \frac{1}{\sum_{k=1}^{J} p_{k,t} x_k} \left| \sum_{k=1}^{J} p_{k,t } (x_{k} - y_{k} ) \right| \\
		&\leq \frac{p_{j,t}}{\sum_{k=1}^{J} p_{k,t} x_k} \left| x_j - y_j \right| + \sum_{k=1}^{J} \frac{p_{k,t}}{\sum_{k=1}^{J} p_{k,t} x_k} \left| x_{k} - y_{k} \right| \\
		&\leq \frac{1}{\varepsilon} \left| x_j - y_j \right| + \frac{1}{\varepsilon} \sum_{k=1}^{J} \left| x_{k} - y_{k} \right|
	\end{align*}
	since
	\begin{equation*}
		p_{i,t} = \sum_{j = 1}^{J} \mathbb{P}_{\btheta}(\bY_{t-r+1}^{t} \in \textup{d}\by_{t-r+1}^{t} \mid \bY_{-\infty}^{t-r}, S_{t-r+1} = j) p_{ij}.
	\end{equation*}
	Hence, Equation \eqref{eq:a} becomes
	\begin{equation}
		|| \bphi_t^{(r)} (\bx;\btheta) - \bphi_t^{(r)} (\by;\btheta) ||_{1} \leq \alpha^r \frac{J+1}{\varepsilon} ||\bx - \by ||_{1} \label{B}
	\end{equation}
	for all $r \in \mathbb{N}$ and $\bx,\by \in \mathcal{S}$. Finally, $|| \bz ||_{1} \leq \sqrt{J} ||\bz||_{2}$ for all $\bz \in \mathbb{R}^{J}$ implies that
	\begin{equation}
		\alpha^r \frac{J+1}{\varepsilon} || \bx - \by ||_{1} \leq \alpha^r \frac{J+1}{\varepsilon} \sqrt{J} || \bx - \by ||_{2} \label{C}
	\end{equation}
	for all $r \in \mathbb{N}$ and $\bx,\by \in \mathcal{S}$. Together, Equations \eqref{A}-\eqref{C} show that there exists an $\alpha \in (0,1)$ such that 
	\begin{equation*}
		|| \bphi_t^{(r)} (\bx;\btheta) - \bphi_t^{(r)} (\by;\btheta) ||_{2} \leq \alpha^r \frac{J+1}{\varepsilon} \sqrt{J} || \bx - \by ||_{2}
	\end{equation*}
	for all $r \in \mathbb{N}$ and $\bx,\by \in \mathcal{S}$, which concludes the proof.
\end{proof}
\noindent
Conditions (ii) and (iii) follow straightforwardly from Lemma \ref{LemmaInvertibilityC}.

Moreover, $(\hat{\bpi}_{t \mid t} (\btheta))_{t \in \mathbb{N}_0}$ is a stochastic process taking values in $\mathcal{S}$ given by
\begin{equation*}
	\hat{\bpi}_{t \mid t} (\btheta) = \hat{\bphi}_t (\hat{\bpi}_{t-1 \mid t-1} (\btheta);\btheta)
\end{equation*}
with
\begin{equation*}
	\hat{\bphi}_t (\hat{\bpi}_{t-1 \mid t-1} (\btheta);\btheta) = \hat{\mathbf{F}}_t (\mathbf{P}^{\prime} \hat{\bpi}_{t-1 \mid t-1} (\btheta);\btheta) \mathbf{P}^{\prime} \hat{\bpi}_{t-1 \mid t-1} (\btheta),
\end{equation*}
where $\hat{\bpi}_{0 \mid 0} (\btheta)$ is given. The fact that also $\sup_{\btheta \in \bTheta} || \hat{\bpi}_{t \mid t} (\btheta) - \bpi_{t \mid t} (\btheta) ||_{2} \overset{e.a.s.}{\rightarrow} 0$ as $t \rightarrow \infty$ follows from Proposition 3 in \cite{Krabbe2025}, which is a generalisation of Theorem 2.10 in \cite{StraumannMikosch2006}, if
\begin{enumerate} [(i)]
	\item $\mathbb{E} [\log^{+} \sup_{\btheta \in \bTheta} || \bpi_{t \mid t} (\btheta) ||_{2} ] < \infty$,
	\item there exists an $\bs \in \mathcal{S}$ such that $\sup_{\btheta \in \bTheta} || \hat{\bphi}_t(\bs;\btheta) - \bphi_t(\bs;\btheta) ||_{2} \overset{e.a.s.}{\rightarrow} 0$ as $t \rightarrow \infty$, and
	\item $\sup_{\btheta \in \bTheta} \Lambda (\hat{\bphi}_t - \bphi_t;\btheta) \overset{e.a.s.}{\rightarrow} 0$ as $t \rightarrow \infty$.
\end{enumerate}
Condition (i) is trivial. For Condition (ii), we have that
\begin{equation}
\begin{aligned}
	&| [\hat{\bphi}_t(\bs;\btheta) - \bphi_t(\bs;\btheta)]_{j} | \\
	&= \left| \frac{\sum_{i=1}^{J} p_{ij} s_i f_{j}(Y_t;\hat{X}_{j,t} (\bupsilon_j),\bupsilon_j) }{\sum_{k=1}^{J} \sum_{l=1}^{J} p_{lk} s_l f_{k}(Y_t;\hat{X}_{k,t} (\bupsilon_k),\bupsilon_k)} - \frac{\sum_{i=1}^{J} p_{ij} s_i f_{j}(Y_t;X_{j,t} (\bupsilon_j),\bupsilon_j)}{\sum_{k=1}^{J} \sum_{l=1}^{J} p_{lk} s_l f_{k}(Y_t;X_{k,t} (\bupsilon_k),\bupsilon_k)} \right|
\end{aligned} \label{eq:term1}
\end{equation}
for all $j \in \{1,...,J\}$. Let $m_{1,t}^{\btheta} : \mathcal{X}_{1} \times \cdots \times \mathcal{X}_{J} \rightarrow \mathbb{R}$ be given by
\begin{equation*}
	m_{1,t}^{\btheta} (\bx) = \frac{\sum_{i=1}^{J} p_{ij} s_i f_{j}(Y_t;x_{j},\bupsilon_j)}{\sum_{k=1}^{J} \sum_{l=1}^{J} p_{lk} s_l f_{k}(Y_t;x_{k},\bupsilon_k)}.
\end{equation*}
Then, by the mean value theorem and the Cauchy–Schwarz inequality, there exists an $\bar{\bX}_t (\boldsymbol{\upsilon}) \in \mathcal{X}_{1} \times \cdots \times \mathcal{X}_{J}$ such that
\begin{equation*}
	| m_{1,t}^{\btheta} (\hat{\bX}_t (\boldsymbol{\upsilon})) - m_{1,t}^{\btheta} (\bX_t (\boldsymbol{\upsilon})) | \leq || \nabla_{\bx} m_{1,t}^{\btheta} (\bar{\bX}_t (\boldsymbol{\upsilon})) ||_{2} || \hat{\bX}_t (\boldsymbol{\upsilon}) - \bX_t (\boldsymbol{\upsilon}) ||_{2},
\end{equation*}
where, if $m \neq j$,
\begin{align*}
	&| [\nabla_{\bx} m_{1,t}^{\btheta} (\bar{\bX}_t (\boldsymbol{\upsilon}))]_{m} | \\
	&\leq \frac{(\sum_{i=1}^{J} p_{ij} s_i f_{j}(Y_t;\bar{X}_{j,t}(\bupsilon_j),\bupsilon_j))(\sum_{l=1}^{J} p_{lm} s_l | \nabla_{x_m} f_{m}(Y_t;\bar{X}_{m,t}(\bupsilon_m),\bupsilon_m) |)}{(\sum_{k=1}^{J} \sum_{l=1}^{J} p_{lk} s_l f_{k}(Y_t;\bar{X}_{k,t}(\bupsilon_k),\bupsilon_k))^2}  \\
	&\leq \frac{f_{j}(Y_t;\bar{X}_{j,t}(\bupsilon_j),\bupsilon_j) | \nabla_{x_m} f_{m}(Y_t;\bar{X}_{m,t}(\bupsilon_m),\bupsilon_m) | }{(\sum_{k=1}^{J} \sum_{l=1}^{J} p_{lk} s_l f_{k}(Y_t;\bar{X}_{k,t}(\bupsilon_k),\bupsilon_k))^2}  \\
	&\leq C \frac{f_{j}(Y_t;\bar{X}_{j,t}(\bupsilon_j),\bupsilon_j) | \nabla_{x_m} f_{m}(Y_t;\bar{X}_{m,t}(\bupsilon_m),\bupsilon_m) | }{(\sum_{k=1}^{J} f_{k}(Y_t;\bar{X}_{k,t}(\bupsilon_k),\bupsilon_k) )^2}  \\
	&\leq C | \nabla_{x_m} \log f_{m}(Y_t;\bar{X}_{m,t}(\bupsilon_m),\bupsilon_m) |
\end{align*}
and, if $m = j$,
\begin{align*}
	&| [\nabla_{\bx} m_{1,t}^{\btheta} (\bar{\bX}_t (\boldsymbol{\upsilon}))]_{m} | \\
	&\leq \frac{\sum_{i=1}^{J} p_{im} s_i | \nabla_{x_m} f_{m}(Y_t;\bar{X}_{m,t}(\bupsilon_m),\bupsilon_m) |}{\sum_{k=1}^{J} \sum_{l=1}^{J} p_{lk} s_l f_{k}(Y_t;\bar{X}_{k,t}(\bupsilon_k),\bupsilon_k)} \\
	&+ \frac{(\sum_{i=1}^{J} p_{im} s_i f_{m}(Y_t;\bar{X}_{m,t}(\bupsilon_m),\bupsilon_m) )(\sum_{l=1}^{J} p_{lm} s_l | \nabla_{x_m} f_{m}(Y_t;\bar{X}_{m,t}(\bupsilon_m),\bupsilon_m) | )}{(\sum_{k=1}^{J} \sum_{l=1}^{J} p_{lk} s_l f_{k}(Y_t;\bar{X}_{k,t}(\bupsilon_k),\bupsilon_k))^2} \\
	&\leq \frac{| \nabla_{x_m} f_{m}(Y_t;\bar{X}_{m,t}(\bupsilon_m),\bupsilon_m) | }{\sum_{k=1}^{J} \sum_{l=1}^{J} p_{lk} s_l f_{k}(Y_t;\bar{X}_{k,t}(\bupsilon_k),\bupsilon_k)} \\
	&+ \frac{f_{m}(Y_t;\bar{X}_{m,t}(\bupsilon_m),\bupsilon_m) | \nabla_{x_m} f_{m}(Y_t;\bar{X}_{m,t}(\bupsilon_m),\bupsilon_m) | }{(\sum_{k=1}^{J} \sum_{l=1}^{J} p_{lk} s_l f_{k}(Y_t;\bar{X}_{k,t}(\bupsilon_k),\bupsilon_k))^2} \\
	&\leq C \frac{| \nabla_{x_m} f_{m}(Y_t;\bar{X}_{m,t}(\bupsilon_m),\bupsilon_m) | }{\sum_{k=1}^{J} f_{k}(Y_t;\bar{X}_{k,t}(\bupsilon_k),\bupsilon_k)} \\
	&+ C \frac{f_{m}(Y_t;\bar{X}_{m,t}(\bupsilon_m),\bupsilon_m) | \nabla_{x_m} f_{m}(Y_t;\bar{X}_{m,t}(\bupsilon_m),\bupsilon_m) | }{(\sum_{k=1}^{J} f_{k}(Y_t;\bar{X}_{k,t}(\bupsilon_k),\bupsilon_k) )^2} \\
	&\leq C | \nabla_{x_m} \log f_{m}(Y_t;\bar{X}_{m,t}(\bupsilon_m),\bupsilon_m) |
\end{align*}
so
\begin{equation*}
	|| \nabla_{\bx} m_{1,t}^{\btheta} (\bar{\bX}_t (\boldsymbol{\upsilon})) ||_{2} \leq C \sum_{m=1}^{J} | \nabla_{x_m} \log f_{m}(Y_t;\bar{X}_{m,t}(\bupsilon_m),\bupsilon_m) |.
\end{equation*}
Hence, we have that
\begin{align*}
	&\sup_{\btheta \in \bTheta} | [\hat{\bphi}_t(\bs;\btheta) - \bphi_t(\bs;\btheta)]_{j} | \\
	&\leq C \sum_{m,n=1}^{J} \sup_{\bupsilon_m \in \bUpsilon_m} \sup_{x_m \in \mathcal{X}_{m}} | \nabla_{x_m} \log f_{m}(Y_t;x_m,\bupsilon_m) | \sup_{\bupsilon_n \in \bUpsilon_n} | \hat{X}_{n,t} (\bupsilon_n) - X_{n,t} (\bupsilon_n) |
\end{align*}
for all $j \in \{1,...,J\}$. Condition (ii) thus follows from Lemmas 2.1 and 2.2 in \citet{StraumannMikosch2006} since $(Y_t)_{t \in \mathbb{Z}}$ is stationary by Assumption \ref{AssumptionY}, for each $j \in \{1,...,J\}$, there exists an $m_j > 0$ such that $\mathbb{E} \left[ \sup_{\bupsilon_j \in \bUpsilon_j} \sup_{x_j \in \mathcal{X}_{j}} \left| \nabla_{x_j} \log f_{j} (Y_t;x_j,\bupsilon_j) \right|^{m_j} \right] < \infty$ by assumption, and, for each $j \in \{1,...,J\}$, $\sup_{\bupsilon_{j} \in \bUpsilon_{j}} | \hat{X}_{j,t} (\bupsilon_j) - X_{j,t} (\bupsilon_j) | \overset{e.a.s.}{\rightarrow} 0$ as $t \rightarrow \infty$ by Lemma \ref{LemmaInvertibilityX}. For Condition (iii), note that, by the mean value theorem and the Cauchy–Schwarz inequality,
\begin{equation*}
	| [(\hat{\bphi}_t(\bx;\btheta) - \bphi_t(\bx;\btheta)) - (\hat{\bphi}_t(\by;\btheta) - \bphi_t(\by;\btheta))]_{j} | \leq C \sup_{\bs \in \mathcal{S}} || \nabla_{\bs} [\hat{\bphi}_t(\bs;\btheta)]_{j} - \nabla_{\bs} [\bphi_t(\bs;\btheta)]_{j} ||_{2} ||\bx - \by ||_{2}
\end{equation*}
for all $j \in \{1,...,J\}$.\footnote{Note that if $ f : \mathcal{S} \rightarrow \mathbb{R} $ where $s_J = 1-\sum_{j=1}^{J-1}s_j$, then $[\nabla_{\bs}^{R} f (\bs)]_j := \nabla_{s_j} f (\bs) - \nabla_{s_J} f (\bs), j \in \{1,...,J-1\}$ satisfies $$ || \nabla_{\bs}^{R} f (\bs) ||_2 \leq C || \nabla_{\bs}^{U} f (\bs) ||_2, $$ where $[\nabla_{\bs}^{U} f (\bs)]_j := \nabla_{s_j} f (\bs), j \in \{1,...,J\}$.} We have that
\begin{equation}
\begin{aligned} 
	&| [ \nabla_{\bs} [\hat{\bphi}_t(\bs;\btheta)]_{j} - \nabla_{\bs} [\bphi_t(\bs;\btheta)]_{j} ]_{r} | \\
	&\leq \left| \frac{ p_{rj} f_{j}(Y_t;\hat{X}_{j,t} (\bupsilon_j),\bupsilon_j)}{\sum_{k=1}^{J} \sum_{l=1}^{J} p_{lk} s_l f_{k}(Y_t;\hat{X}_{k,t} (\bupsilon_k),\bupsilon_k)} - \frac{ p_{rj} f_{j}(Y_t;X_{j,t} (\bupsilon_j),\bupsilon_j)}{\sum_{k=1}^{J} \sum_{l=1}^{J} p_{lk} s_l f_{k}(Y_t;X_{k,t} (\bupsilon_k),\bupsilon_k)} \right| \\
	&+ C \left| \frac{\sum_{i=1}^{J} p_{ij} s_i f_{j}(Y_t;\hat{X}_{j,t} (\bupsilon_j),\bupsilon_j)}{\sum_{k=1}^{J} \sum_{l=1}^{J}  p_{lk} s_l f_{k}(Y_t;\hat{X}_{k,t} (\bupsilon_k),\bupsilon_k)} -  \frac{\sum_{i=1}^{J} p_{ij} s_i f_{j}(Y_t;X_{j,t} (\bupsilon_j),\bupsilon_j)}{\sum_{k=1}^{J} \sum_{l=1}^{J} p_{lk} s_l f_{k}(Y_t;X_{k,t} (\bupsilon_k),\bupsilon_k)} \right| \\
	&+ C \left| \frac{\sum_{k=1}^{J} p_{rk} f_{k}(Y_t;\hat{X}_{k,t} (\bupsilon_k),\bupsilon_k) }{\sum_{k=1}^{J} \sum_{l=1}^{J} p_{lk} s_l f_{k}(Y_t;\hat{X}_{k,t} (\bupsilon_k),\bupsilon_k)} - \frac{\sum_{k=1}^{J} p_{rk} f_{k}(Y_t;X_{k,t} (\bupsilon_k),\bupsilon_k)}{\sum_{k=1}^{J} \sum_{l=1}^{J} p_{lk} s_l f_{k}(Y_t;X_{k,t} (\bupsilon_k),\bupsilon_k)} \right|
\end{aligned} \label{eq:term2}
\end{equation}
for all $j \in \{1,...,J\}$ and $r \in \{1,...,J\}$ since $\hat{a}\hat{b} - ab = (\hat{a}-a)(\hat{b}-b) + (\hat{a}-a)b + a(\hat{b}-b)$. For the first term on the right-hand side of Equation \eqref{eq:term2}, let $m_{2,t}^{\btheta} : \mathcal{X}_{1} \times \cdots \times \mathcal{X}_{J} \rightarrow \mathbb{R}$ be given by
\begin{equation*}
	m_{2,t}^{\btheta} (\bx) = \frac{p_{rj} f_{j}(Y_t;x_{j},\bupsilon_j)}{\sum_{k=1}^{J} \sum_{l=1}^{J} p_{lk} s_l f_{k}(Y_t;x_{k},\bupsilon_k)}.
\end{equation*}
Then, by the mean value theorem and the Cauchy–Schwarz inequality, there exists an $\bar{\bX}_t (\boldsymbol{\upsilon}) \in \mathcal{X}_{1} \times \cdots \times \mathcal{X}_{J}$ (not necessarily equal to the one above) such that
\begin{equation*}
	| m_{2,t}^{\btheta} (\hat{\bX}_t (\boldsymbol{\upsilon})) - m_{2,t}^{\btheta} (\bX_t (\boldsymbol{\upsilon})) | \leq || \nabla_{\bx} m_{2,t}^{\btheta} (\bar{\bX}_t (\boldsymbol{\upsilon})) ||_{2} || \hat{\bX}_t (\boldsymbol{\upsilon}) - \bX_t (\boldsymbol{\upsilon}) ||_{2},
\end{equation*}
where, if $m \neq j$,
\begin{align*}
	&| [\nabla_{\bx} m_{2,t}^{\btheta} (\bar{\bX}_t (\boldsymbol{\upsilon}))]_{m} | \\
	&\leq \frac{p_{rj} f_{j}(Y_t;\bar{X}_{j,t}(\bupsilon_j),\bupsilon_j)(\sum_{l=1}^{J}  p_{lm} s_l | \nabla_{x_m} f_{m}(Y_t;\bar{X}_{m,t}(\bupsilon_m),\bupsilon_m) |)}{(\sum_{k=1}^{J} \sum_{l=1}^{J} p_{lk} s_l f_{k}(Y_t;\bar{X}_{k,t}(\bupsilon_k),\bupsilon_k))^2}  \\
	&\leq \frac{f_{j}(Y_t;\bar{X}_{j,t}(\bupsilon_j),\bupsilon_j) | \nabla_{x_m} f_{m}(Y_t;\bar{X}_{m,t}(\bupsilon_m),\bupsilon_m) | }{(\sum_{k=1}^{J} \sum_{l=1}^{J} p_{lk} s_l f_{k}(Y_t;\bar{X}_{k,t}(\bupsilon_k),\bupsilon_k))^2}  \\
	&\leq C \frac{f_{j}(Y_t;\bar{X}_{j,t}(\bupsilon_j),\bupsilon_j) | \nabla_{x_m} f_{m}(Y_t;\bar{X}_{m,t}(\bupsilon_m),\bupsilon_m) | }{(\sum_{k=1}^{J} f_{k}(Y_t;\bar{X}_{k,t}(\bupsilon_k),\bupsilon_k) )^2}  \\
	&\leq C | \nabla_{x_m} \log f_{m}(Y_t;\bar{X}_{m,t}(\bupsilon_m),\bupsilon_m) |
\end{align*}
and, if $m = j$,
\begin{align*}
	&| [\nabla_{\bx} m_{2,t}^{\btheta} (\bar{\bX}_t (\boldsymbol{\upsilon}))]_{m} | \\
	&\leq \frac{p_{rm} | \nabla_{x_m} f_{m}(Y_t;\bar{X}_{m,t}(\bupsilon_m),\bupsilon_m) |}{\sum_{k=1}^{J} \sum_{l=1}^{J}  p_{lk} s_l f_{k}(Y_t;\bar{X}_{k,t}(\bupsilon_k),\bupsilon_k)} \\
	&+ \frac{p_{rm} f_{m}(Y_t;\bar{X}_{m,t}(\bupsilon_m),\bupsilon_m)(\sum_{l=1}^{J}  p_{lm} s_l | \nabla_{x_m} f_{m}(Y_t;\bar{X}_{m,t}(\bupsilon_m),\bupsilon_m) |)}{(\sum_{k=1}^{J} \sum_{l=1}^{J}  p_{lk} s_l f_{k}(Y_t;\bar{X}_{k,t}(\bupsilon_k),\bupsilon_k))^2} \\
	&\leq \frac{| \nabla_{x_m} f_{m}(Y_t;\bar{X}_{m,t}(\bupsilon_m),\bupsilon_m) | }{\sum_{k=1}^{J} \sum_{l=1}^{J}  p_{lk} s_l f_{k}(Y_t;\bar{X}_{k,t}(\bupsilon_k),\bupsilon_k)} \\
	&+ \frac{f_{m}(Y_t;\bar{X}_{m,t}(\bupsilon_m),\bupsilon_m) | \nabla_{x_m} f_{m}(Y_t;\bar{X}_{m,t}(\bupsilon_m),\bupsilon_m) | }{(\sum_{k=1}^{J} \sum_{l=1}^{J}  p_{lk} s_l f_{k}(Y_t;\bar{X}_{k,t}(\bupsilon_k),\bupsilon_k))^2} \\
	&\leq C \frac{| \nabla_{x_m} f_{m}(Y_t;\bar{X}_{m,t}(\bupsilon_m),\bupsilon_m) | }{\sum_{k=1}^{J} f_{k}(Y_t;\bar{X}_{k,t}(\bupsilon_k),\bupsilon_k)} \\
	&+ C \frac{f_{m}(Y_t;\bar{X}_{m,t}(\bupsilon_m),\bupsilon_m) | \nabla_{x_m} f_{m}(Y_t;\bar{X}_{m,t}(\bupsilon_m),\bupsilon_m) | }{(\sum_{k=1}^{J} f_{k}(Y_t;\bar{X}_{k,t}(\bupsilon_k),\bupsilon_k) )^2} \\
	&\leq C | \nabla_{x_m} \log f_{m}(Y_t;\bar{X}_{m,t}(\bupsilon_m),\bupsilon_m) |
\end{align*}
so
\begin{equation*}
	|| \nabla_{\bx} m_{2,t}^{\btheta} (\bar{\bX}_t (\boldsymbol{\upsilon})) ||_{2} \leq C \sum_{m=1}^{J} | \nabla_{x_m} \log f_{m}(Y_t;\bar{X}_{m,t}(\bupsilon_m),\bupsilon_m) |.
\end{equation*}
The second term on the right-hand side of Equation \eqref{eq:term2} is identical to the term on the right-hand side of Equation \eqref{eq:term1}. For the last term on the right-hand side of Equation \eqref{eq:term2}, let $m_{3,t}^{\btheta} : \mathcal{X}_{1} \times \cdots \times \mathcal{X}_{J} \rightarrow \mathbb{R}$ be given by
\begin{equation*}
	m_{3,t}^{\btheta} (\bx) = \frac{\sum_{k=1}^{J} p_{rk} f_{k}(Y_t;x_k,\bupsilon_k)}{\sum_{k=1}^{J} \sum_{l=1}^{J} p_{lk} s_l f_{k}(Y_t;x_k,\bupsilon_k)}.
\end{equation*}
Again, by the mean value theorem and the Cauchy–Schwarz inequality, there exists an $\bar{\bX}_t (\boldsymbol{\upsilon}) \in \mathcal{X}_{1} \times \cdots \times \mathcal{X}_{J}$ (not necessarily equal to the ones above) such that
\begin{equation*}
	| m_{3,t}^{\btheta} (\hat{\bX}_t (\boldsymbol{\upsilon})) - m_{3,t}^{\btheta} (\bX_t (\boldsymbol{\upsilon})) | \leq || \nabla_{\bx} m_{3,t}^{\btheta} (\bar{\bX}_t (\boldsymbol{\upsilon})) ||_{2} || \hat{\bX}_t (\boldsymbol{\upsilon}) - \bX_t (\boldsymbol{\upsilon}) ||_{2},
\end{equation*}
where
\begin{align*}
	&| [\nabla_{\bx} m_{3,t}^{\btheta} (\bar{\bX}_t (\boldsymbol{\upsilon}))]_{m} | \\
	&\leq \frac{p_{rm} |\nabla_{x_m} f_{m}(Y_t;\bar{X}_{m,t}(\bupsilon_m),\bupsilon_m)|}{\sum_{k=1}^{J} \sum_{l=1}^{J} p_{lk} s_l f_{k}(Y_t;\bar{X}_{k,t}(\bupsilon_k),\bupsilon_k)} \\
	&+ \frac{(\sum_{k=1}^{J} p_{rk} f_{k}(Y_t;\bar{X}_{k,t}(\bupsilon_k),\bupsilon_k))(\sum_{l=1}^{J} p_{lm} s_l |\nabla_{x_m} f_{m}(Y_t;\bar{X}_{m,t}(\bupsilon_m),\bupsilon_m)|) }{(\sum_{k=1}^{J} \sum_{l=1}^{J} p_{lk} s_l f_{k}(Y_t;\bar{X}_{k,t}(\bupsilon_k),\bupsilon_k))^2} \\
	&\leq \frac{|\nabla_{x_m} f_{m}(Y_t;\bar{X}_{m,t}(\bupsilon_m),\bupsilon_m)| }{\sum_{k=1}^{J} \sum_{l=1}^{J} p_{lk} s_l f_{k}(Y_t;\bar{X}_{k,t}(\bupsilon_k),\bupsilon_k)} \\
	&+ \frac{(\sum_{k=1}^{J} f_{k}(Y_t;\bar{X}_{k,t}(\bupsilon_k),\bupsilon_k) ) |\nabla_{x_m} f_{m}(Y_t;\bar{X}_{m,t}(\bupsilon_m),\bupsilon_m)| }{(\sum_{k=1}^{J} \sum_{l=1}^{J} p_{lk} s_l f_{k}(Y_t;\bar{X}_{k,t}(\bupsilon_k),\bupsilon_k))^2} \\
	&\leq C \frac{|\nabla_{x_m} f_{m}(Y_t;\bar{X}_{m,t}(\bupsilon_m),\bupsilon_m)| }{\sum_{k=1}^{J} f_{k}(Y_t;\bar{X}_{k,t}(\bupsilon_k),\bupsilon_k)} \\
	&+ C \frac{(\sum_{k=1}^{J} f_{k}(Y_t;\bar{X}_{k,t}(\bupsilon_k),\bupsilon_k) ) |\nabla_{x_m} f_{m}(Y_t;\bar{X}_{m,t}(\bupsilon_m),\bupsilon_m)| }{(\sum_{k=1}^{J} f_{k}(Y_t;\bar{X}_{k,t}(\bupsilon_k),\bupsilon_k))^2} \\
	&\leq C | \nabla_{x_m} \log f_{m}(Y_t;\bar{X}_{m,t}(\bupsilon_m),\bupsilon_m) |
\end{align*}
so
\begin{equation*}
	|| \nabla_{\bx} m_{3,t}^{\btheta} (\bar{\bX}_t (\boldsymbol{\upsilon})) ||_{2} \leq C \sum_{m=1}^{J} | \nabla_{x_m} \log f_{m}(Y_t;\bar{X}_{m,t}(\bupsilon_m),\bupsilon_m) |.
\end{equation*}
Hence, we have that
\begin{align*}
	&\sup_{\btheta \in \bTheta} \sup_{\bs \in \mathcal{S}} | [ \nabla_{\bs} [\hat{\bphi}_t(\bs;\btheta)]_{j} - \nabla_{\bs} [\bphi_t(\bs;\btheta)]_{j} ]_{r} | \\
	&\leq C \sum_{m,n=1}^{J} \sup_{\bupsilon_m \in \bUpsilon_m} \sup_{x_m \in \mathcal{X}_{m}} | \nabla_{x_m} \log f_{m}(Y_t;x_m,\bupsilon_m) | \sup_{\bupsilon_n \in \bUpsilon_n} | \hat{X}_{n,t} (\bupsilon_n) - X_{n,t} (\bupsilon_n) |
\end{align*}
for all $j \in \{1,...,J\}$ and $r \in \{1,...,J\}$. Condition (iii) thus follows by using the same arguments as above. 

\subsection{Proof of Theorem \ref{TheoremConsistency}}


The conclusion follows from Lemmas 3.1 and 4.1 in \citet{PotscherPrucha1997} if
\begin{enumerate}[(i)]
	\item $\bTheta$ is compact, 
	\item $\sup_{\btheta \in \bTheta} | \hat{L}_T (\btheta) - L(\btheta) | \overset{a.s.}{\rightarrow} 0$ as $T \rightarrow \infty$,
	\item $\btheta \mapsto L(\btheta)$ is continuous, and 
	\item $L(\btheta) \leq L(\btheta_0)$ for all $\btheta \in \bTheta$ with equality if and only if  $\btheta = \btheta_0$.
\end{enumerate}
Condition (i) follows from Assumption \ref{AssumptionTheta1}. Condition (ii) follows from Lemmas \ref{LemmaConsistencyA} and \ref{LemmaConsistencyB} since
\begin{equation*}
	\sup_{\btheta \in \bTheta} | \hat{L}_T (\btheta) - L(\btheta) | \leq \sup_{\btheta \in \bTheta} | \hat{L}_T (\btheta) - L_T (\btheta) | + \sup_{\btheta \in \bTheta} | L_T (\btheta) - L(\btheta) |
\end{equation*}
and Condition (iii) is a by-product of the uniform law of large numbers used in the proof of Lemma \ref{LemmaConsistencyB}.
\begin{lemmaappendix} \label{LemmaConsistencyA}
	Under the assumptions in Theorem \ref{TheoremConsistency},
	\begin{equation*}
		\sup_{\btheta \in \bTheta} | \hat{L}_T (\btheta) - L_T (\btheta) | \overset{a.s.}{\rightarrow} 0 \quad \text{as} \quad T \rightarrow \infty.
	\end{equation*}
\end{lemmaappendix}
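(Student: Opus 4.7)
The plan is to reduce the uniform convergence of $\hat{L}_T - L_T$ to pointwise e.a.s.\ convergence of the log-density increments. Observe first that
\begin{equation*}
	\sup_{\btheta \in \bTheta} |\hat{L}_T(\btheta) - L_T(\btheta)| \leq \frac{1}{T} \sum_{t=1}^T \sup_{\btheta \in \bTheta} |\log \hat{f}(Y_t;\btheta) - \log f(Y_t;\btheta)|,
\end{equation*}
so that since e.a.s.\ convergence implies a.s.\ convergence, Cesaro's lemma will deliver the claim as soon as the summand tends to $0$ e.a.s.

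To control $|\log(\hat{f}/f)|$ uniformly in $\btheta$, I exploit the mixture structure by writing
\begin{equation*}
	\frac{\hat{f}(Y_t;\btheta)}{f(Y_t;\btheta)} = \sum_{j=1}^J w_{j,t}(\btheta) \, r_{j,t}(\btheta), \quad w_{j,t}(\btheta) := \frac{\pi_{j,t \mid t-1}(\btheta) \, f_j(Y_t;X_{j,t}(\bupsilon_j),\bupsilon_j)}{f(Y_t;\btheta)},
\end{equation*}
with ratios $r_{j,t}(\btheta) := \bigl(\hat{\pi}_{j,t \mid t-1}(\btheta)/\pi_{j,t \mid t-1}(\btheta)\bigr) \cdot \bigl(f_j(Y_t;\hat{X}_{j,t}(\bupsilon_j),\bupsilon_j)/f_j(Y_t;X_{j,t}(\bupsilon_j),\bupsilon_j)\bigr)$. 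Because $(w_{j,t})_j$ is a convex weight, $\hat{f}/f \in [\min_j r_{j,t}, \max_j r_{j,t}]$ and hence $|\log(\hat{f}/f)| \leq \max_j |\log r_{j,t}|$. Splitting the log of the product and applying the mean value theorem to $\log$ on each factor then yields
\begin{equation*}
	|\log r_{j,t}(\btheta)| \leq \frac{|\hat{\pi}_{j,t \mid t-1}(\btheta) - \pi_{j,t \mid t-1}(\btheta)|}{\underline{p}} + \sup_{x_j \in \mathcal{X}_j} |\nabla_{x_j} \log f_j(Y_t;x_j,\bupsilon_j)| \cdot |\hat{X}_{j,t}(\bupsilon_j) - X_{j,t}(\bupsilon_j)|,
\end{equation*}
where $\underline{p} := \inf_{\btheta \in \bTheta} \min_{i,k} p_{ik} > 0$ by compactness of $\bTheta$ and positivity of the transition probabilities; both $\pi_{j,t \mid t-1}(\btheta)$ (Remark \ref{Remark}) and $\hat{\pi}_{j,t \mid t-1}(\btheta)$ (applying the same bound to $\hat{\bpi}_{t \mid t-1} = \bP^{\prime} \hat{\bpi}_{t-1 \mid t-1}$ since $\hat{\bpi}_{t-1 \mid t-1}$ is a probability vector for $t \geq 1$) are bounded below by $\underline{p}$.

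Taking $\sup_{\btheta \in \bTheta}$ and $\sup_{\bupsilon_j \in \bUpsilon_j}$ in the display, the first term tends to $0$ e.a.s.\ by Corollary \ref{CorollaryInvertibilityPrediction}, while the second term is a product of a strictly stationary factor with a finite $m_j$-th moment (by the assumption in Lemma \ref{LemmaInvertibilityFilter})—which by Lemma 2.1 in \citet{StraumannMikosch2006} is $o(\gamma^t)$ a.s.\ for every $\gamma > 1$—and a factor that tends to $0$ e.a.s.\ by Lemma \ref{LemmaInvertibilityX}; by the standard multiplicative property of e.a.s.\ sequences (choose the moment rate strictly slower than the e.a.s.\ rate), the product still tends to $0$ e.a.s. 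Hence $\sup_{\btheta \in \bTheta} |\log \hat{f}(Y_t;\btheta) - \log f(Y_t;\btheta)| \overset{e.a.s.}{\to} 0$, and the Cesaro average converges to $0$ almost surely, which is the conclusion.

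The main obstacle is identifying the right convex-combination bound so that both the predictive filter $\pi_{j,t\mid t-1}$ and its estimated counterpart $\hat{\pi}_{j,t\mid t-1}$ are uniformly bounded below by the same $\underline{p}$, since otherwise the mean value bound for the log of the $\pi$-ratio fails to be uniform in $\btheta$; once this is secured, the remainder is a routine combination of Lemma \ref{LemmaInvertibilityX}, Corollary \ref{CorollaryInvertibilityPrediction}, and the moment hypothesis in Lemma \ref{LemmaInvertibilityFilter}.
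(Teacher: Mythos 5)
Your proof is correct, and the overall architecture matches the paper's: reduce to showing $\sup_{\btheta \in \bTheta}|\log \hat{f}(Y_t;\btheta) - \log f(Y_t;\btheta)| \overset{e.a.s.}{\rightarrow} 0$, then conclude by the Cesaro-type Lemma 2.1 in \citet{StraumannMikosch2006}, feeding in Lemma \ref{LemmaInvertibilityX}, Corollary \ref{CorollaryInvertibilityPrediction}, and the moment hypothesis of Lemma \ref{LemmaInvertibilityFilter}. Where you genuinely diverge is in the per-observation bound. The paper inserts the intermediate quantity $\tilde{f}(Y_t;\btheta) = \sum_j \pi_{j,t\mid t-1}(\btheta) f_j(Y_t;\hat{X}_{j,t}(\bupsilon_j),\bupsilon_j)$ and applies the multivariate mean value theorem twice — once to $\bs \mapsto \log \sum_j s_j f_j(Y_t;\hat{X}_{j,t},\bupsilon_j)$, whose gradient it bounds by a constant via Remark \ref{Remark}, and once to $\bx \mapsto \log \sum_j \pi_{j,t\mid t-1} f_j(Y_t;x_j,\bupsilon_j)$, whose gradient it bounds by $\sum_k |\nabla_{x_k}\log f_k|$. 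You instead write $\hat{f}/f$ as the convex combination $\sum_j w_{j,t} r_{j,t}$ and use $|\log(\hat{f}/f)| \leq \max_j |\log r_{j,t}|$, reducing everything to two scalar log-ratios per state. Your route is more elementary — it avoids differentiating the mixture log-density with respect to the weights entirely — but it is not free: the mean value bound for $\log(\hat{\pi}_{j,t\mid t-1}/\pi_{j,t\mid t-1})$ needs both probabilities bounded below by $\underline{p}$, which is exactly the same lower-bound input (Remark \ref{Remark} plus $\hat{\bpi}_{t\mid t-1} = \bP^{\prime}\hat{\bpi}_{t-1\mid t-1}$ with $\hat{\bpi}_{t-1\mid t-1} \in \mathcal{S}$) that the paper uses to bound $\nabla_{\bs} m_{1,t}^{\btheta}$ at the intermediate point. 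Both arguments land on the same final estimate, namely $C\,\|\hat{\bpi}_{t\mid t-1} - \bpi_{t\mid t-1}\|_2$ plus a sum of terms $\sup_{x_j}|\nabla_{x_j}\log f_j|\cdot|\hat{X}_{j,t} - X_{j,t}|$, and the same application of Lemmata 2.1--2.2 of \citet{StraumannMikosch2006} closes both. One minor shared caveat: your scalar mean value step for $\log f_j$ in $x_j$, like the paper's segment argument for $\bar{\bX}_t(\bupsilon)$, implicitly uses that the segment between $\hat{X}_{j,t}$ and $X_{j,t}$ stays in $\mathcal{X}_j$; this is harmless here since you immediately pass to $\sup_{x_j \in \mathcal{X}_j}$ of the gradient.
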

\begin{proof}
	We have that
	\begin{equation*}
		\sup_{\btheta \in \bTheta} | \hat{L}_T (\btheta) - L_T (\btheta) | \leq \frac{1}{T} \sum_{t=1}^{T} \sup_{\btheta \in \bTheta} | \log \hat{f} (Y_t;\btheta) - \log f (Y_t;\btheta) |.
	\end{equation*}
	
	The conclusion thus follows from Lemma 2.1 in \citet{StraumannMikosch2006} if $\sup_{\btheta \in \bTheta} | \log \hat{f} (Y_t;\btheta) - \log f (Y_t;\btheta) | \overset{e.a.s.}{\rightarrow} 0$ as $t \rightarrow \infty$. 
	
	We have that
	\begin{equation}
		| \log \hat{f} (Y_t;\btheta) - \log f (Y_t;\btheta) | \leq | \log \hat{f} (Y_t;\btheta) - \log \tilde{f} (Y_t;\btheta) | + | \log \tilde{f} (Y_t;\btheta) - \log f (Y_t;\btheta) |, \label{eq:TermAConsistency}
	\end{equation}
	where $\tilde{f} (Y_t;\btheta) = \sum_{j=1}^{J} \pi_{j,t \mid t-1} (\btheta) f_{j} (Y_t;\hat{X}_{j,t} (\bupsilon_j),\bupsilon_j)$. For the first term on the right-hand side of Equation \eqref{eq:TermAConsistency}, let $m_{1,t}^{\btheta} : \mathcal{S}_{\btheta} \rightarrow \mathbb{R}$ be given by
	\begin{equation*}
		m_{1,t}^{\btheta} (\bs) = \log \sum_{j=1}^{J} s_j f_{j} (Y_t;\hat{X}_{j,t}(\bupsilon_j),\bupsilon_j).
	\end{equation*}
	Then, by the mean value theorem and the Cauchy-Schwarz inequality, there exists a $\bar{\bpi}_{t \mid t-1} (\btheta) \in \mathcal{S}_{\btheta}$ such that
	\begin{equation*}
		| m_{1,t}^{\btheta} (\hat{\bpi}_{t \mid t-1} (\btheta)) - m_{1,t}^{\btheta} (\bpi_{t \mid t-1} (\btheta)) | \leq C || \nabla_{\bs} m_{1,t}^{\btheta} (\bar{\bpi}_{t \mid t-1} (\btheta)) ||_{2} || \hat{\bpi}_{t \mid t-1} (\btheta) - \bpi_{t \mid t-1} (\btheta) ||_{2},
	\end{equation*}
	where
	\begin{align*}
		| [\nabla_{\bs} m_{1,t}^{\btheta} (\bar{\bpi}_{t \mid t-1} (\btheta) )]_{k} | &= \frac{f_{k} (Y_t;\hat{X}_{k,t}(\bupsilon_k),\bupsilon_k)}{ \sum_{j=1}^{J} \bar{\pi}_{j,t \mid t-1} (\btheta) f_{j} (Y_t;\hat{X}_{j,t}(\bupsilon_j),\bupsilon_j)} \\
		&= \frac{1}{\bar{\pi}_{k,t \mid t-1} (\btheta) }\frac{\bar{\pi}_{k,t \mid t-1} (\btheta) f_{k} (Y_t;\hat{X}_{k,t}(\bupsilon_k),\bupsilon_k)}{ \sum_{j=1}^{J} \bar{\pi}_{j,t \mid t-1} (\btheta) f_{j} (Y_t;\hat{X}_{j,t}(\bupsilon_j),\bupsilon_j)} \\
		& \leq C
	\end{align*}
	by Remark \ref{Remark} so
	\begin{equation*}
		|| \nabla_{\bs} m_{1,t}^{\btheta} (\bar{\bpi}_{t \mid t-1} (\btheta) ) ||_{2} \leq C.
	\end{equation*}
	For the second term on the right-hand side of Equation \eqref{eq:TermAConsistency}, let $m_{2,t}^{\btheta} : \mathcal{X}_{1} \times \cdots \times \mathcal{X}_{J} \rightarrow \mathbb{R}$ be given by
	\begin{equation*}
		m_{2,t}^{\btheta} (\bx) = \log \sum_{j=1}^{J} \pi_{j,t \mid t-1} (\btheta) f_{j} (Y_t;x_j,\bupsilon_j).
	\end{equation*}
	Again, by the mean value theorem and the Cauchy-Schwarz inequality, there exists an $\bar{\bX}_t (\mathbf{\bupsilon}) \in \mathcal{X}_{1} \times \cdots \times \mathcal{X}_{J}$ such that
	\begin{equation*}
		| m_{2,t}^{\btheta} (\hat{\bX}_t (\mathbf{\bupsilon})) - m_{2,t}^{\btheta} (\bX_t (\mathbf{\bupsilon})) | \leq || \nabla_{\bx} m_{2,t}^{\btheta} (\bar{\bX}_t (\mathbf{\bupsilon})) ||_{2} || \hat{\bX}_t (\mathbf{\bupsilon}) - \bX_t (\mathbf{\bupsilon}) ||_{2},
	\end{equation*}
	where
	\begin{align*}
		| [\nabla_{\bx} m_{2,t}^{\btheta} (\bar{\bX}_t (\mathbf{\bupsilon}))]_{k} | &= \frac{\pi_{k,t \mid t-1} (\btheta) | \nabla_{x_k} f_{k} (Y_t;\bar{X}_{k,t}(\bupsilon_k),\bupsilon_k) |}{\sum_{j=1}^{J} \pi_{j,t \mid t-1} (\btheta) f_{j} (Y_t;\bar{X}_{j,t}(\bupsilon_j),\bupsilon_j)} \\
		&= \frac{\pi_{k,t \mid t-1} (\btheta) f_{k} (Y_t;\bar{X}_{k,t}(\bupsilon_k),\bupsilon_k)}{\sum_{j=1}^{J} \pi_{j,t \mid t-1} (\btheta) f_{j} (Y_t;\bar{X}_{j,t}(\bupsilon_j),\bupsilon_j)} | \nabla_{x_k} \log f_{k} (Y_t;\bar{X}_{k,t}(\bupsilon_k),\bupsilon_k) | \\
		& \leq | \nabla_{x_k} \log f_{k} (Y_t;\bar{X}_{k,t}(\bupsilon_k),\bupsilon_k) |
	\end{align*}
	so
	\begin{equation*}
		|| \nabla_{\bx} m_{2,t}^{\btheta} (\bar{\bX}_t (\mathbf{\bupsilon})) ||_{2} \leq \sum_{k=1}^{J} \left| \nabla_{x_k} \log f_{k}(Y_t;\bar{X}_{k,t}(\bupsilon_k),\bupsilon_k) \right|.
	\end{equation*}
	Hence, we have that
	\begin{align*}
		| \log \hat{f} (Y_t;\btheta) - \log f (Y_t;\btheta) | &\leq C || \hat{\bpi}_{t \mid t-1} (\btheta) - \bpi_{t \mid t-1} (\btheta) ||_{2} \\
		&+ \sum_{k,l=1}^{J} | \nabla_{x_k} \log f_{k} (Y_t;\bar{X}_{k,t}(\bupsilon_k),\bupsilon_k) | | \hat{X}_{l,t} (\bupsilon_l) - X_{l,t} (\bupsilon_l) |.
	\end{align*}
	
	It thus follows from Lemmas 2.1 and 2.2 in \citet{StraumannMikosch2006} that $\sup_{\btheta \in \bTheta} | \log \hat{f} (Y_t;\btheta) - \log f (Y_t;\btheta) | \overset{e.a.s.}{\rightarrow} 0$ as $t \rightarrow \infty$ since $(Y_t)_{t \in \mathbb{Z}}$ is stationary by Assumption \ref{AssumptionY}, for each $j \in \{1,...,J\}$, there exists an $m_j > 0$ such that $\mathbb{E} \left[ \sup_{\bupsilon_j \in \bUpsilon_j} \sup_{x_j \in \mathcal{X}_{j}} \left| \nabla_{x_j} \log f_{j} (Y_t;x_j,\bupsilon_j) \right|^{m_j} \right] < \infty$ by assumption, for each $j \in \{1,...,J\}$, $\sup_{\bupsilon_{j} \in \bUpsilon_{j}} | \hat{X}_{j,t} (\bupsilon_j) - X_{j,t} (\bupsilon_j) | \overset{e.a.s.}{\rightarrow} 0$ as $t \rightarrow \infty$ by Lemma \ref{LemmaInvertibilityX}, and $\sup_{\btheta \in \bTheta} || \hat{\bpi}_{t \mid t-1} (\btheta) - \bpi_{t \mid t-1} (\btheta) ||_{2} \overset{e.a.s.}{\rightarrow} 0$ as $t \rightarrow \infty$ by Corollary \ref{CorollaryInvertibilityPrediction}.
\end{proof}
\begin{lemmaappendix} \label{LemmaConsistencyB}
	Under the assumptions in Theorem \ref{TheoremConsistency},
	\begin{equation*}
		\sup_{\btheta \in \bTheta} | L_T (\btheta) - L(\btheta) | \overset{a.s.}{\rightarrow} 0 \quad \text{as} \quad T \rightarrow \infty.
	\end{equation*}
\end{lemmaappendix}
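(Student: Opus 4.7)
The plan is to apply a uniform strong law of large numbers for continuous stationary ergodic sequences indexed by a compact parameter space. Three ingredients are needed: pointwise stationarity and ergodicity of $(\log f(Y_t;\btheta))_{t \in \mathbb{Z}}$, continuity of $\btheta \mapsto \log f(Y_t;\btheta)$, and an integrable dominating function $\sup_{\btheta \in \bTheta} |\log f(Y_t;\btheta)|$. Given these, a standard uniform ergodic theorem (e.g., Theorem 2.7 in \citet{StraumannMikosch2006}, or Lemma 3.1/4.3 in \citet{PotscherPrucha1997}) yields the conclusion and, as a by-product, continuity of $L(\btheta)$.

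For pointwise stationarity and ergodicity, I note that $(Y_t)_{t \in \mathbb{Z}}$ is stationary and ergodic by Assumption \ref{AssumptionY}, and that $(X_{j,t}(\bupsilon_j))_{t \in \mathbb{Z}}$ and $(\bpi_{t \mid t-1}(\btheta))_{t \in \mathbb{Z}}$ are stationary and ergodic by Lemma \ref{LemmaInvertibilityX} and Corollary \ref{CorollaryInvertibilityPrediction}. Since $f(Y_t;\btheta)$ is a continuous (hence measurable) function of these, stationarity and ergodicity of $(\log f(Y_t;\btheta))_{t \in \mathbb{Z}}$ for each fixed $\btheta$ follow. The ergodic theorem then gives pointwise convergence $L_T(\btheta) \to L(\btheta)$ a.s., and yields continuity of $L(\btheta)$ once dominance and continuity below are established.

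For the dominating function, I would exploit Remark \ref{Remark}: $\pi_{j,t \mid t-1}(\btheta) \geq \min_i p_{ij}$, and since $\bTheta$ is compact with $p_{ij} > 0$, there exists $\epsilon > 0$ with $\pi_{j,t \mid t-1}(\btheta) \geq \epsilon$ uniformly in $(\btheta,t)$. This sandwiches
\[
\epsilon \sum_{j=1}^J f_j(Y_t;X_{j,t}(\bupsilon_j),\bupsilon_j) \leq f(Y_t;\btheta) \leq \sum_{j=1}^J f_j(Y_t;X_{j,t}(\bupsilon_j),\bupsilon_j),
\]
so
\[
\sup_{\btheta \in \bTheta} |\log f(Y_t;\btheta)| \leq C + \sum_{j=1}^J \sup_{\bupsilon_j \in \bUpsilon_j} |\log f_j(Y_t;X_{j,t}(\bupsilon_j),\bupsilon_j)|,
\]
whose expectation is finite by Assumption \ref{AssumptionF2}.

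For continuity of $\btheta \mapsto \log f(Y_t;\btheta)$, the $f_j$ are continuous in $(x_j,\bupsilon_j)$ by Assumption \ref{AssumptionF1}(i); continuity of $\bupsilon_j \mapsto X_{j,t}(\bupsilon_j)$ follows from Lemma \ref{LemmaInvertibilityX} together with the fact that the iterated maps $\hat{X}_{j,t}(\bupsilon_j)$ are continuous and converge uniformly in $\bupsilon_j$ (so the limit is continuous); an identical argument applies to $\btheta \mapsto \bpi_{t \mid t-1}(\btheta)$ via Corollary \ref{CorollaryInvertibilityPrediction}. Together with the uniform positive lower bound on $f$, this makes $\log f(Y_t;\btheta)$ continuous in $\btheta$. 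The main obstacle is this continuity/equicontinuity step: the sequences $X_{j,t}(\bupsilon_j)$ and $\bpi_{t \mid t-1}(\btheta)$ are only characterized as limits, so continuity in the parameters must be inherited from the uniform a.s.\ convergence of continuous iterates. Once this is in hand, the uniform ergodic theorem closes the argument.
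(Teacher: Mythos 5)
Your proposal is correct and takes essentially the same route as the paper: a uniform ergodic theorem (the paper invokes the ULLN of Rao, 1962) applied to $(\log f(Y_t;\btheta))_{t\in\mathbb{Z}}$, with stationarity and ergodicity inherited from Lemma \ref{LemmaInvertibilityX} and Corollary \ref{CorollaryInvertibilityPrediction} and the integrable envelope $\sum_{j=1}^{J}\sup_{\bupsilon_j\in\bUpsilon_j}|\log f_j(Y_t;X_{j,t}(\bupsilon_j),\bupsilon_j)|$ supplied by Assumption \ref{AssumptionF2}. The only cosmetic difference is that the paper derives the envelope from the sandwich $\min_j f_j \le f \le \max_j f_j$ (so no lower bound on the predictive probabilities is needed), whereas you route through Remark \ref{Remark}; both yield the same domination.
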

\begin{proof}
	The conclusion follows from the uniform law of large numbers by \citet{Rao1962} if $(\log f (Y_t;\btheta))_{t \in \mathbb{Z}}$ is stationary and ergodic for all $\btheta \in \bTheta$ and $\mathbb{E} \left[ \sup_{\btheta \in \bTheta} |\log f (Y_t;\btheta)| \right] < \infty$.
	
	First, $(\log f (Y_t;\btheta))_{t \in \mathbb{Z}}$ is stationary and ergodic for all $\btheta \in \bTheta$ since $(Y_t)_{t \in \mathbb{Z}}$ is stationary and ergodic by Assumption \ref{AssumptionY}, for each $j \in \{1,...,J\}$, $(X_{j,t}(\bupsilon_{j}))_{t \in \mathbb{Z}}$ is stationary and ergodic for all $\bupsilon_j \in \bUpsilon_j$ by Lemma \ref{LemmaInvertibilityX}, and $(\bpi_{t \mid t-1} (\btheta))_{t \in \mathbb{Z}}$ is stationary and ergodic for all $\btheta \in \bTheta$ by Corollary \ref{CorollaryInvertibilityPrediction}. 
	
	We have that
	\begin{equation*}
		|\log f (Y_t;\btheta)| \leq \sum_{j=1}^{J} | \log f_{j} (Y_t;X_{j,t} (\bupsilon_j),\bupsilon_j) |
	\end{equation*}
	since
	\begin{align*}
		\log f (Y_t;\btheta)
		&\geq \log \min_{j \in \{1,...,J\}} f_{j} (Y_t;X_{j,t} (\bupsilon_j),\bupsilon_j) \\
		&= \min_{j \in \{1,...,J\}} \log f_{j} (Y_t;X_{j,t} (\bupsilon_j),\bupsilon_j) \\
		& \geq \min_{j \in \{1,...,J\}} - | \log f_{j} (Y_t;X_{j,t} (\bupsilon_j),\bupsilon_j) | \geq - \sum_{j=1}^{J} | \log f_{j} (Y_t;X_{j,t} (\bupsilon_j),\bupsilon_j) |
	\end{align*}
	and
	\begin{align*}
		\log f (Y_t;\btheta)
		&\leq \log \max_{j \in \{1,...,J\}} f_{j} (Y_t;X_{j,t} (\bupsilon_j),\bupsilon_j) \\
		&= \max_{j \in \{1,...,J\}} \log f_{j} (Y_t;X_{j,t} (\bupsilon_j),\bupsilon_j) \\
		& \leq \max_{j \in \{1,...,J\}} | \log f_{j} (Y_t;X_{j,t} (\bupsilon_j),\bupsilon_j) | \leq \sum_{j=1}^{J} | \log f_{j} (Y_t;X_{j,t} (\bupsilon_j),\bupsilon_j) |.
	\end{align*}
	Hence, $\mathbb{E} \left[ \sup_{\btheta \in \bTheta} |\log f (Y_t;\btheta)| \right] < \infty$ since, for each $j \in \{1,...,J\}$, $\mathbb{E}\left[ \sup_{\bupsilon_j \in \bUpsilon_j} | \log f_{j} (Y_t;X_{j,t}(\bupsilon_j),\bupsilon_j)| \right] < \infty$ by Assumption \ref{AssumptionF2}.
\end{proof}
Finally, Condition (iv) follows from Lemma \ref{LemmaConsistencyC}.
\begin{lemmaappendix} \label{LemmaConsistencyC}
	Under the assumptions in Theorem \ref{TheoremConsistency},
	\begin{equation*}
		L(\btheta) \leq L(\btheta_0)
	\end{equation*}
	for all $\btheta \in \bTheta$ with equality if and only if $\btheta = \btheta_0$.
\end{lemmaappendix}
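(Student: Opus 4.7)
The plan is to establish the inequality via a conditional Jensen / Kullback--Leibler argument and to extract the identifiability half of the conclusion from the equality case by invoking Assumption \ref{AssumptionIdentification}. As a preparatory step I would first verify that, at the true parameter, the stationary predictor coincides with the true regime probability, i.e.\ $\pi_{j,t \mid t-1}(\btheta_0) = \mathbb{P}_{\btheta_0}(S_t = j \mid \bY_{-\infty}^{t-1})$ a.s. The true conditional regime probability is itself a stationary, ergodic solution of the Hamilton filter recursion (by stationarity of $(S_t,Y_t)_{t \in \mathbb{Z}}$ under Assumption \ref{AssumptionY}), and Lemma \ref{LemmaInvertibilityFilter} supplies uniqueness of such a stationary solution. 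Consequently $f(\cdot;\btheta_0)$ is a version of the true conditional density of $Y_t$ given $\bY_{-\infty}^{t-1}$. For arbitrary $\btheta \in \bTheta$, the map $y \mapsto f(y;\btheta)$ is still a legitimate density in $y$ conditional on the past, because $\bpi_{t \mid t-1}(\btheta) \in \mathcal{S}_{\btheta}$ and each $f_j(\cdot;X_{j,t}(\bupsilon_j),\bupsilon_j)$ is a pdf.

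Next I would apply the conditional Jensen inequality to the log-ratio $\log(f(Y_t;\btheta)/f(Y_t;\btheta_0))$ given $\bY_{-\infty}^{t-1}$, yielding
\begin{equation*}
\mathbb{E}\left[\log\frac{f(Y_t;\btheta)}{f(Y_t;\btheta_0)} \;\middle|\; \bY_{-\infty}^{t-1}\right] \leq \log \mathbb{E}\left[\frac{f(Y_t;\btheta)}{f(Y_t;\btheta_0)} \;\middle|\; \bY_{-\infty}^{t-1}\right] = \log \int_{\mathcal{Y}} f(y;\btheta)\,\textup{d}y = 0
\end{equation*}
almost surely; taking an unconditional expectation gives $L(\btheta) - L(\btheta_0) \leq 0$. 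For the equality case, if $L(\btheta) = L(\btheta_0)$, then strict concavity of the logarithm forces equality in Jensen, which in turn forces $f(Y_t;\btheta) = f(Y_t;\btheta_0)$ a.s. To connect this one-step equality to Assumption \ref{AssumptionIdentification}, I would use the chain-rule factorisation
\begin{equation*}
f^{(m)}(Y_t,\ldots,Y_{t-m+1};\btheta) = \prod_{s=t-m+1}^{t} f(Y_s;\btheta) \quad a.s.,
\end{equation*}
which holds for every $\btheta \in \bTheta$ because each factor is the one-step conditional density of $Y_s$ given $\bY_{-\infty}^{s-1}$ under the stationary filter at $\btheta$. By stationarity of the joint process, the equality $f(Y_s;\btheta) = f(Y_s;\btheta_0)$ a.s.\ transfers to every index $s$, so the two $m$-step conditional densities agree almost surely for the $m$ provided by Assumption \ref{AssumptionIdentification}, and this yields $\btheta = \btheta_0$.

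The main obstacle I anticipate is the opening identification of the stationary predictor with the true conditional probability at $\btheta_0$. Making this rigorous requires showing that $(\mathbb{P}_{\btheta_0}(S_t = j \mid \bY_{-\infty}^{t-1}))_{t\in\mathbb{Z}}$ is a stationary ergodic process satisfying the Hamilton recursion and then invoking uniqueness from Lemma \ref{LemmaInvertibilityFilter}; once this piece is in place, both the Jensen step and the chain-rule factorisation are essentially routine.
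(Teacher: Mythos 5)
Your proof is correct and follows essentially the same route as the paper's: both rest on the chain-rule factorisation $f^{(m)}(\bY_{t-m+1}^{t};\btheta) = \prod_{s=t-m+1}^{t} f(Y_s;\btheta)$, stationarity, concavity of the logarithm, and Assumption \ref{AssumptionIdentification}; the paper merely applies $\log x \leq x-1$ unconditionally at the $m$-step level, whereas you apply conditional Jensen at the one-step level and then pass to the $m$-step densities. Your preparatory step --- identifying the stationary predictor at $\btheta_0$ with the true conditional regime probability so that $f(\cdot;\btheta_0)$ is the genuine one-step conditional density of $Y_t$ given $\bY_{-\infty}^{t-1}$ --- is left implicit in the paper but is needed there as well (it is what makes $\mathbb{E}[f^{(m)}(\cdot;\btheta)/f^{(m)}(\cdot;\btheta_0)] \leq 1$ legitimate), so flagging it explicitly is a point in your favour.
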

\begin{proof}
	We have that
	\begin{equation*}
		L(\btheta) - L(\btheta_0) = \mathbb{E} \left[ \log \frac{f (Y_t;\btheta)}{f (Y_t;\btheta_0)} \right].
	\end{equation*}
	Observe that
	\begin{align*}
		\mathbb{E} \left[ \log \frac{f^{(m)} (\bY_{t-m+1}^{t};\btheta)}{f^{(m)} (\bY_{t-m+1}^{t};\btheta_0)} \right] &= \mathbb{E} \left[ \log \prod_{i=1}^{m} \frac{f (Y_{t-i+1} ; \btheta)}{f (Y_{t-i+1} ; \btheta_0)} \right] \\
		&= \sum_{i=1}^{m} \mathbb{E} \left[ \log \frac{f (Y_{t-i+1} ; \btheta)}{f (Y_{t-i+1} ; \btheta_0)} \right] = m \mathbb{E} \left[ \log \frac{f (Y_{t} ; \btheta)}{f (Y_{t} ; \btheta_0)} \right],
	\end{align*}
	so
	\begin{equation*}
		\mathbb{E} \left[ \log \frac{f (Y_{t} ; \btheta)}{f (Y_{t} ; \btheta_0)} \right] = \frac{1}{m} \mathbb{E} \left[ \log \frac{f^{(m)} (\bY_{t-m+1}^{t};\btheta)}{f^{(m)} (\bY_{t-m+1}^{t};\btheta_0)} \right].
	\end{equation*}
	Note that $\log x \leq x-1$ for all $x > 0$ with equality if and only if $x = 1$. Thus, 
	\begin{equation*}
		\mathbb{E} \left[ \log \frac{f^{(m)} (\bY_{t-m+1}^{t};\btheta)}{f^{(m)} (\bY_{t-m+1}^{t};\btheta_0)} \right] \leq 0
	\end{equation*}
	for all $\btheta \in \bTheta$ with equality if and only if $f^{(m)} (\bY_{t-m+1}^{t};\btheta) = f^{(m)} (\bY_{t-m+1}^{t};\btheta_0)$ a.s. Hence, we have that 
	\begin{equation*}
		L(\btheta) - L(\btheta_0) \leq 0
	\end{equation*}
	for all $\btheta \in \bTheta$ with equality if and only if $\btheta = \btheta_0$ by Assumption \ref{AssumptionIdentification}.
\end{proof}

\subsection{Proof of Theorem \ref{TheoremAsymptoticNormality}}

The conclusion follows from Lemma 8.1 in \citet{PotscherPrucha1997} if
\begin{enumerate}[(i)]
	\item $\btheta_0 \in \textup{int}(\bTheta)$,
	\item $\hat{\btheta}_T \overset{\mathbb{P}}{\rightarrow} \btheta_0$ as $T \rightarrow \infty$,
	\item $\btheta \mapsto \hat{L}_T (\btheta)$ is twice continuously differentiable a.s.,
	\item $\sqrt{T} \nabla_{\btheta} \hat{L}_T (\btheta_0) \overset{d}{\rightarrow} \mathcal{N} (\mathbf{0},\bI(\btheta_0))$ as $T \rightarrow \infty$,
	\item $\sup_{\btheta \in \bar{\bTheta}} || \nabla_{\btheta \btheta} \hat{L}_T (\btheta) - (-\bI(\btheta)) ||_{2,2} \overset{\mathbb{P}}{\rightarrow} 0 $ as $T \rightarrow \infty$, and
	\item $\bI(\btheta_0)$ is invertible.
\end{enumerate}
First, Condition (i) follows from Assumption \ref{AssumptionTheta2}, Condition (ii) from Theorem \ref{TheoremConsistency}, and Condition (iii) follows from Assumptions \ref{AssumptionF3} and \ref{AssumptionPhi2}. Moreover, Condition (iv) follows from Lemmas \ref{LemmaAsymptoticNormalityA}, \ref{LemmaAsymptoticNormalityB}, and \ref{LemmaAsymptoticNormalityC}.
\begin{lemmaappendix} \label{LemmaAsymptoticNormalityA}
	Under the assumptions in Theorem \ref{TheoremAsymptoticNormality},
	\begin{equation*}
		\sup_{\btheta \in \bar{\bTheta}} || \sqrt{T} \nabla_{\btheta} \hat{L}_T (\btheta) - \sqrt{T} \nabla_{\btheta} L_T (\btheta) ||_{2} \overset{a.s.}{\rightarrow} 0 \quad \text{as} \quad T \rightarrow \infty.
	\end{equation*}
\end{lemmaappendix}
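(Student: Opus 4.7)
The plan is to reduce the statement to pointwise-in-$t$ exponential almost sure convergence, similar to the proof of Lemma \ref{LemmaConsistencyA}. Specifically, one has the bound
\begin{equation*}
	\sup_{\btheta \in \bTheta} \left| \left| \sqrt{T} \nabla_{\btheta} \hat{L}_T (\btheta) - \sqrt{T} \nabla_{\btheta} L_T (\btheta) \right| \right|_{2} \leq \frac{1}{\sqrt{T}} \sum_{t=1}^{T} \sup_{\btheta \in \bTheta} \left| \left| \nabla_{\btheta} \log \hat{f} (Y_t;\btheta) - \nabla_{\btheta} \log f (Y_t;\btheta) \right| \right|_{2}.
\end{equation*}
I would show that the summand converges e.a.s.\ to zero; then by the definition of e.a.s.\ convergence in the footnote of Section \ref{Consistency}, the sum $\sum_{t=1}^{\infty} \sup_{\btheta} \| \cdots \|_{2}$ is finite almost surely, so its Cesaro average divided by $\sqrt{T}$ (in fact already divided by $T$) tends to zero almost surely, which is more than enough.

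The key step is therefore to prove that $\sup_{\btheta \in \bTheta} \| \nabla_{\btheta} \log \hat{f} (Y_t;\btheta) - \nabla_{\btheta} \log f (Y_t;\btheta) \|_{2} \overset{e.a.s.}{\rightarrow} 0$. Expanding the derivative of $\log \hat{f} (Y_t;\btheta) = \log \sum_{j=1}^{J} \hat{\pi}_{j,t \mid t-1}(\btheta) f_j(Y_t;\hat{X}_{j,t}(\bupsilon_j),\bupsilon_j)$ via the chain rule produces four types of terms involving $\hat{\pi}_{j,t \mid t-1}(\btheta)$, $\nabla_{\btheta} \hat{\pi}_{j,t \mid t-1}(\btheta)$, $\hat{X}_{j,t}(\bupsilon_j)$, and $\nabla_{\bupsilon_j} \hat{X}_{j,t}(\bupsilon_j)$, and analogously for the $\log f$ term. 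I would then telescope the difference, replacing one hatted quantity at a time by its stationary counterpart, and apply the mean value theorem (along the segment between them) exactly as in the proof of Lemma \ref{LemmaConsistencyA} to bound the telescoping increments. The denominators $\hat{f}(Y_t;\btheta)$ and $f(Y_t;\btheta)$ are bounded below since both $\hat{\bpi}_{t \mid t-1}(\btheta)$ and $\bpi_{t \mid t-1}(\btheta)$ lie in $\mathcal{S}_{\btheta}$ (see Remark \ref{Remark}; the same lower bound applies to $\hat{\bpi}_{t \mid t-1}(\btheta)$ because $\hat{\bpi}_{t \mid t-1}(\btheta) = \mathbf{P}^{\prime} \hat{\bpi}_{t-1 \mid t-1}(\btheta)$), so these denominators contribute only constants.

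After this decomposition, each increment is a product of one of
\begin{equation*}
	\sup_{\bupsilon_j} | \hat{X}_{j,t}(\bupsilon_j) - X_{j,t}(\bupsilon_j) |, \quad \sup_{\bupsilon_j} \| \nabla_{\bupsilon_j} \hat{X}_{j,t}(\bupsilon_j) - \nabla_{\bupsilon_j} X_{j,t}(\bupsilon_j) \|_{2},
\end{equation*}
\begin{equation*}
	\sup_{\btheta} \| \hat{\bpi}_{t \mid t-1}(\btheta) - \bpi_{t \mid t-1}(\btheta) \|_{2}, \quad \sup_{\btheta} \| \nabla_{\btheta} \hat{\bpi}_{t \mid t-1}(\btheta) - \nabla_{\btheta} \bpi_{t \mid t-1}(\btheta) \|_{2},
\end{equation*}
which all go to zero e.a.s.\ by Lemmata \ref{LemmaInvertibilityX}, \ref{LemmaInvertibilityDX}, Corollary \ref{CorollaryInvertibilityPrediction}, and Lemma \ref{LemmaInvertibilityDPrediction}, multiplied by stationary factors of the form $\sup_{\bupsilon_j} |\bar{\nabla}_{x_j} \log f_j(\cdot)|$, $\sup_{\bupsilon_j} \| \bar{\nabla}_{\bupsilon_j} \log f_j(\cdot) \|_{2}$, or $\sup_{\bupsilon_j} \| \nabla_{\bupsilon_j} X_{j,t}(\bupsilon_j)\|_2$, each having a sufficient finite moment under Assumption \ref{AssumptionF4} together with the moment bound on $\nabla_{\bupsilon_j} X_{j,t}(\bupsilon_j)$ from Lemma \ref{LemmaInvertibilityDX}. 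By Lemmata 2.1 and 2.2 in \citet{StraumannMikosch2006}, a product of an e.a.s.-null sequence and a stationary sequence with some finite moment is still e.a.s.-null, which yields the required convergence.

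The main obstacle is bookkeeping: getting the moment exponents to match up when bounding the mixed terms. For example, the term containing $\nabla_{\bupsilon_j} \hat{X}_{j,t}(\bupsilon_j) - \nabla_{\bupsilon_j} X_{j,t}(\bupsilon_j)$ gets multiplied (via the derivative of $f_j$ with respect to $x_j$) by $\bar{\nabla}_{x_j} \log f_j(Y_t;X_{j,t}(\bupsilon_j),\bupsilon_j)$, which only has a finite $2k_j/(k_j-1)$ moment by Assumption \ref{AssumptionF4}; because $\nabla_{\bupsilon_j} X_{j,t}(\bupsilon_j)$ itself has a finite $2k_j$ moment by Lemma \ref{LemmaInvertibilityDX}, Hölder's inequality still yields a finite (second) moment, which is all that Lemma 2.1 in \citet{StraumannMikosch2006} requires. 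The precise exponent choices in Assumption \ref{AssumptionF4} are tailored exactly for this purpose, so the argument closes without further conditions.
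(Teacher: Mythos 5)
Your proposal follows essentially the same route as the paper's proof: the same initial bound by $\frac{1}{\sqrt{T}}\sum_{t=1}^{T}\sup_{\btheta\in\bTheta}||\nabla_{\btheta}\hat{f}_t/\hat{f}_t-\nabla_{\btheta}f_t/f_t||_2$, reduction to e.a.s.\ convergence of the summand via Lemma 2.1 in \citet{StraumannMikosch2006}, and then the same telescoping/mean-value-theorem decomposition through the intermediate $\tilde{f}_t$, with denominators controlled via Remark \ref{Remark} and the increments bounded by products of the e.a.s.-null differences from Lemmata \ref{LemmaInvertibilityX}, \ref{LemmaInvertibilityDX}, Corollary \ref{CorollaryInvertibilityPrediction}, and Lemma \ref{LemmaInvertibilityDPrediction} with stationary factors having the moments supplied by Assumption \ref{AssumptionF4}. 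The Hölder-exponent bookkeeping you describe is also exactly how the paper closes the argument, so no gap.
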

\begin{proof}
	We have that
	\begin{equation*}
		\sup_{\btheta \in \bar{\bTheta}} || \sqrt{T} \nabla_{\btheta} \hat{L}_T (\btheta) - \sqrt{T} \nabla_{\btheta} L_T (\btheta) ||_{2} \leq \frac{1}{\sqrt{T}} \sum_{t=1}^{T} \sup_{\btheta \in \bar{\bTheta}} \left| \left| \frac{\nabla_{\btheta} \hat{f}_{t}}{\hat{f}_{t}} - \frac{\nabla_{\btheta} f_{t}}{f_{t} } \right| \right|_{2},
	\end{equation*}
	where $\hat{f}_{t}$ denotes $\hat{f} (Y_t;\btheta)$ and $f_{t}$ denotes $f (Y_t;\btheta)$ and that
	\begin{align*}
		\sup_{\btheta \in \bar{\bTheta}} \left| \left| \frac{\nabla_{\btheta} \hat{f}_{t}}{\hat{f}_{t}} - \frac{\nabla_{\btheta} {f}_{t}}{{f}_{t}} \right| \right|_{2} &\leq \sum_{j=1}^{J} \sup_{\btheta \in \bar{\bTheta}} \left| \left| \frac{\nabla_{\btheta} \hat{\pi}_{j,t \mid t-1} \hat{f}_{j,t}}{\hat{f}_{t}} - \frac{\nabla_{\btheta} \pi_{j,t \mid t-1} f_{j,t}}{f_{t}} \right| \right|_{2} \\
		&+ \sum_{j=1}^{J} \sup_{\btheta \in \bar{\bTheta}} \left| \left| \frac{\hat{\pi}_{j,t \mid t-1} \nabla_{\btheta} \hat{f}_{j,t}}{\hat{f}_{t}} - \frac{\pi_{j,t \mid t-1} \nabla_{\btheta} f_{j,t}}{f_{t}} \right| \right|_{2},
	\end{align*}
	where $\hat{\pi}_{j,t \mid t-1}$ denotes $\hat{\pi}_{j,t \mid t-1} (\btheta)$, $\hat{f}_{j,t}$ denotes $f_j (Y_t;\hat{X}_{j,t},\bupsilon_j)$, $\hat{X}_{j,t}$ denotes $\hat{X}_{j,t} (\bupsilon_j)$, ${\pi}_{j,t \mid t-1}$ denotes ${\pi}_{j,t \mid t-1} (\btheta)$, ${f}_{j,t}$ denotes $f_j (Y_t;{X}_{j,t},\bupsilon_j)$, and ${X}_{j,t}$ denotes ${X}_{j,t} (\bupsilon_j)$.
	
	As in the proof of Lemma \ref{LemmaConsistencyA}, the conclusion thus follows from Lemma 2.1 in \citet{StraumannMikosch2006} if $\sup_{\btheta \in \bar{\bTheta}} \left| \left| \frac{\nabla_{\btheta} \hat{f}_{t}}{\hat{f}_{t}} - \frac{\nabla_{\btheta} f_{t}}{f_{t} } \right| \right|_{2} \overset{e.a.s.}{\rightarrow} 0$ as $t \rightarrow \infty$. This follows if, for each $j \in \{1,...,J\}$, $\sup_{\btheta \in \bar{\bTheta}} \left| \left| \frac{\nabla_{\btheta} \hat{\pi}_{j,t \mid t-1} \hat{f}_{j,t}}{\hat{f}_{t}} - \frac{\nabla_{\btheta} \pi_{j,t \mid t-1} f_{j,t}}{f_{t}} \right| \right|_{2} \overset{e.a.s.}{\rightarrow} 0$ as $t \rightarrow \infty$ and $\sup_{\btheta \in \bar{\bTheta}} \left| \left| \frac{\hat{\pi}_{j,t \mid t-1} \nabla_{\btheta} \hat{f}_{j,t}}{\hat{f}_{t}} - \frac{\pi_{j,t \mid t-1} \nabla_{\btheta} f_{j,t}}{f_{t}} \right| \right|_{2} \overset{e.a.s.}{\rightarrow} 0$ as $t \rightarrow \infty$. 
	
	First,
	\begin{align*}
		\left| \left| \frac{\nabla_{\btheta} \hat{\pi}_{j,t \mid t-1} \hat{f}_{j,t}}{\hat{f}_{t}} - \frac{\nabla_{\btheta} \pi_{j,t \mid t-1} f_{j,t}}{f_{t}} \right| \right|_{2} &\leq \left| \left| \nabla_{\btheta} \hat{\pi}_{j,t \mid t-1} - \nabla_{\btheta} \pi_{j,t \mid t-1} \right| \right|_{2} \left| \frac{\hat{f}_{j,t}}{\hat{f}_{t}} - \frac{{f}_{j,t}}{f_{t}} \right| \\
		&+ C \left| \left| \nabla_{\btheta} \hat{\pi}_{j,t \mid t-1} - \nabla_{\btheta} \pi_{j,t \mid t-1} \right| \right|_{2} \\
		&+ \left| \left| \nabla_{\btheta} \pi_{j,t \mid t-1} \right| \right|_{2} \left| \frac{\hat{f}_{j,t}}{\hat{f}_{t}} - \frac{{f}_{j,t}}{f_{t}} \right|.
	\end{align*}
	We have that
	\begin{equation*}
		\left| \frac{\hat{f}_{j,t}}{\hat{f}_{t}} - \frac{{f}_{j,t}}{f_{t}} \right| \leq \left| \frac{\hat{f}_{j,t}}{\hat{f}_{t}} - \frac{\hat{f}_{j,t}}{\tilde{f}_{t}} \right| + \left| \frac{\hat{f}_{j,t}}{\tilde{f}_{t}} - \frac{{f}_{j,t}}{f_{t}} \right|,
	\end{equation*}
	where $\tilde{f}_{t}$ denotes $\tilde{f} (Y_t;\btheta) = \sum_{j=1}^{J} \pi_{j,t \mid t-1} (\btheta) f_{j} (Y_t;\hat{X}_{j,t} (\bupsilon_j),\bupsilon_j)$. By using the same arguments as in the proof of Lemma \ref{LemmaConsistencyA}, we have that
	\begin{equation*}
		\left| \frac{\hat{f}_{j,t}}{\hat{f}_{t}} - \frac{{f}_{j,t}}{f_{t}} \right| \leq C \left| \left| \hat{\bpi}_{t \mid t-1} - \bpi_{t \mid t-1} \right| \right|_{2} + C \sum_{k,l=1}^{J} \left| \bar{\nabla}_{x_k} \log \bar{f}_{k,t} \right| | \hat{X}_{l,t} - X_{l,t} |,
	\end{equation*}
	where $\bar{f}_{j,t}$ denotes $f_{j}(Y_t;\bar{X}_{j,t} (\bupsilon_j),\bupsilon_j)$. It thus follows from Lemmas 2.1 and 2.2 in \citet{StraumannMikosch2006} that, for each $j \in \{1,...,J\}$, $\sup_{\btheta \in \bar{\bTheta}} \left| \left| \frac{\nabla_{\btheta} \hat{\pi}_{j,t \mid t-1} \hat{f}_{j,t}}{\hat{f}_{t}} - \frac{\nabla_{\btheta} \pi_{j,t \mid t-1} f_{j,t}}{f_{t}} \right| \right|_{2} \overset{e.a.s.}{\rightarrow} 0$ as $t \rightarrow \infty$ since, in addition to the arguments used in the proof of Lemma \ref{LemmaConsistencyA}, for each $j \in \{1,...,J\}$, $\sup_{\btheta \in \bar{\bTheta}} || \nabla_{\btheta} \hat{\pi}_{j,t \mid t-1} - \nabla_{\btheta} \pi_{j,t \mid t-1} ||_{2} \overset{e.a.s.}{\rightarrow} 0$ as $t \rightarrow \infty$ where $(\nabla_{\btheta} \pi_{j,t \mid t-1} )_{t \in \mathbb{Z}}$ is stationary for all $\btheta \in \bar{\bTheta}$ and $\mathbb{E} \left[ \sup_{\btheta \in \bar{\bTheta}} || \nabla_{\btheta} \pi_{j,t \mid t-1} ||_2^2 \right] < \infty$ by Lemma \ref{LemmaInvertibilityDPrediction}.
	
	Moreover,
	\begin{align*}
		\left| \left| \frac{\hat{\pi}_{j,t \mid t-1} \nabla_{\btheta} \hat{f}_{j,t}}{\hat{f}_{t}} - \frac{\pi_{j,t \mid t-1} \nabla_{\btheta} f_{j,t}}{f_{t}} \right| \right|_{2} &\leq \left| \hat{\pi}_{j,t \mid t-1} - {\pi}_{j,t \mid t-1} \right| \left| \left| \frac{ \nabla_{\btheta} \hat{f}_{j,t}}{\hat{f}_{t}} - \frac{ \nabla_{\btheta} f_{j,t}}{f_{t}} \right| \right|_{2} \\
		&+ \left| \hat{\pi}_{j,t \mid t-1} - {\pi}_{j,t \mid t-1} \right| \left| \left| \frac{ \nabla_{\btheta} f_{j,t}}{f_{t}} \right| \right|_{2} \\
		&+ \left| \left| \frac{ \nabla_{\btheta} \hat{f}_{j,t}}{\hat{f}_{t}} - \frac{ \nabla_{\btheta} f_{j,t}}{f_{t}} \right| \right|_{2},
	\end{align*}
	where
	\begin{align*}
		\left| \left| \frac{ \nabla_{\btheta} \hat{f}_{j,t}}{\hat{f}_{t}} - \frac{ \nabla_{\btheta} f_{j,t}}{f_{t}} \right| \right|_{2} &\leq \left| \left| \frac{ \bar{\nabla}_{x_j} \hat{f}_{j,t} \nabla_{\btheta} \hat{X}_{j,t}}{\hat{f}_{t}} - \frac{ \bar{\nabla}_{x_j} f_{j,t} \nabla_{\btheta} {X}_{j,t}}{f_{t}} \right| \right|_{2} \\
		&+ \left| \left| \frac{ \bar{\nabla}_{\btheta} \hat{f}_{j,t}}{\hat{f}_{t}} - \frac{ \bar{\nabla}_{\btheta} f_{j,t}}{f_{t}} \right| \right|_{2}.
	\end{align*}
	First, we have that
	\begin{align*}
		\left| \left| \frac{ \bar{\nabla}_{x_j} \hat{f}_{j,t} \nabla_{\btheta} \hat{X}_{j,t}}{\hat{f}_{t}} - \frac{ \bar{\nabla}_{x_j} f_{j,t} \nabla_{\btheta} {X}_{j,t}}{f_{t}} \right| \right|_{2} &\leq \left| \frac{ \bar{\nabla}_{x_j} \hat{f}_{j,t}}{\hat{f}_{t}} - \frac{ \bar{\nabla}_{x_j} f_{j,t} }{f_{t}} \right| \left| \left| \nabla_{\btheta} \hat{X}_{j,t} - \nabla_{\btheta} {X}_{j,t} \right| \right|_{2} \\
		&+ \left| \frac{ \bar{\nabla}_{x_j} \hat{f}_{j,t}}{\hat{f}_{t}} - \frac{ \bar{\nabla}_{x_j} f_{j,t} }{f_{t}} \right| \left| \left| \nabla_{\btheta} {X}_{j,t} \right| \right|_{2} \\
		&+ \left| \frac{ \bar{\nabla}_{x_j} f_{j,t} }{f_{t}} \right| \left| \left| \nabla_{\btheta} \hat{X}_{j,t} - \nabla_{\btheta} {X}_{j,t} \right| \right|_{2},
	\end{align*}
	where
	\begin{equation*}
		\left| \frac{ \bar{\nabla}_{x_j} \hat{f}_{j,t}}{\hat{f}_{t}} - \frac{ \bar{\nabla}_{x_j} f_{j,t} }{f_{t}} \right| \leq \left| \frac{ \bar{\nabla}_{x_j} \hat{f}_{j,t}}{\hat{f}_{t}} - \frac{ \bar{\nabla}_{x_j} \hat{f}_{j,t} }{\tilde{f}_{t}} \right| + \left| \frac{ \bar{\nabla}_{x_j} \hat{f}_{j,t}}{\tilde{f}_{t}} - \frac{ \bar{\nabla}_{x_j} f_{j,t} }{f_{t}} \right|.
	\end{equation*}
	By using the same arguments as in the proof of Lemma \ref{LemmaConsistencyA} again, we have that
	\begin{align*}
		\left| \frac{ \bar{\nabla}_{x_j} \hat{f}_{j,t}}{\hat{f}_{t}} - \frac{ \bar{\nabla}_{x_j} f_{j,t} }{f_{t}} \right| &\leq C \left| \bar{\nabla}_{x_j} \log \hat{f}_{j,t} \right| \left| \left| \hat{\bpi}_{t \mid t-1} - \bpi_{t \mid t-1} \right| \right|_{2} \\
		&+ C \sum_{l = 1}^{J} \left| \bar{\nabla}_{x_j x_j} \log \bar{f}_{j,t} \right| | \hat{X}_{l,t} - X_{l,t} | \\
		&+ C \sum_{k,l=1}^{J} \left| \bar{\nabla}_{x_j} \log \bar{f}_{j,t} \right|  \left| \bar{\nabla}_{x_k} \log \bar{f}_{k,t} \right| | \hat{X}_{l,t} - X_{l,t} |.
	\end{align*} 
	Second, we have that
	\begin{equation*}
		\left| \left| \frac{ \bar{\nabla}_{\btheta} \hat{f}_{j,t}}{\hat{f}_{t}} - \frac{ \bar{\nabla}_{\btheta} f_{j,t}}{f_{t}} \right| \right|_{2} \leq \left| \left| \frac{ \bar{\nabla}_{\btheta} \hat{f}_{j,t}}{\hat{f}_{t}} - \frac{ \bar{\nabla}_{\btheta} \hat{f}_{j,t}}{\tilde{f}_{t}} \right| \right|_{2} + \left| \left| \frac{ \bar{\nabla}_{\btheta} \hat{f}_{j,t}}{\tilde{f}_{t}} - \frac{ \bar{\nabla}_{\btheta} f_{j,t}}{f_{t}} \right| \right|_{2}.
	\end{equation*}
	By using the same arguments as in the proof of Lemma \ref{LemmaConsistencyA} once again, we have that
	\begin{align*}
		\left| \left| \frac{ \bar{\nabla}_{\btheta} \hat{f}_{j,t}}{\hat{f}_{t}} - \frac{ \bar{\nabla}_{\btheta} f_{j,t}}{f_{t}} \right| \right|_{2} &\leq C \sum_{h=1}^{d_j} \left| \bar{\nabla}_{[\bupsilon_j]_h} \log \hat{f}_{j,t} \right| \left| \left| \hat{\bpi}_{t \mid t-1} - \bpi_{t \mid t-1} \right| \right|_{2} \\
		&+ C \sum_{h = 1}^{d_j} \sum_{l = 1}^{J} \left| \bar{\nabla}_{[\bupsilon_j]_h x_j} \log \bar{f}_{j,t} \right| | \hat{X}_{l,t} - X_{l,t} | \\
		&+ C \sum_{h = 1}^{d_j} \sum_{k,l=1}^{J} \left| \bar{\nabla}_{[\bupsilon_j]_h} \log \bar{f}_{j,t} \right| \left| \bar{\nabla}_{x_k} \log \bar{f}_{k,t} \right| | \hat{X}_{l,t} - X_{l,t} |.
	\end{align*}
	It thus follows from Lemmas 2.1 and 2.2 in \citet{StraumannMikosch2006} that, for each $j \in \{1,...,J\}$, also $\sup_{\btheta \in \bar{\bTheta}} \left| \left| \frac{\hat{\pi}_{j,t \mid t-1} \nabla_{\btheta} \hat{f}_{j,t}}{\hat{f}_{t}} - \frac{\pi_{j,t \mid t-1} \nabla_{\btheta} f_{j,t}}{f_{t}} \right| \right|_{2} \overset{e.a.s.}{\rightarrow} 0$ as $t \rightarrow \infty$ since, in addition to the arguments used in the proof of Lemma \ref{LemmaConsistencyA}, for each $j \in \{1,...,J\}$, $\sup_{\bupsilon_{j} \in \bar{\bUpsilon}_{j}} || \nabla_{\bupsilon_{j}}  \hat{X}_{j,t} - \nabla_{\bupsilon_{j}} X_{j,t} ||_{2} \overset{e.a.s.}{\rightarrow} 0$ as $t \rightarrow \infty$ where $(\nabla_{\bupsilon_{j}}  {X}_{j,t})_{t \in \mathbb{Z}}$ is stationary for all $\bupsilon_j \in \bar{\bUpsilon}_{j}$ and $\mathbb{E} \left[ \log^{+} \sup_{\bupsilon_{j} \in \bar{\bUpsilon}_{j}} || \nabla_{\bupsilon_{j}}  {X}_{j,t} ||_2 \right] < \infty$ by Lemma \ref{LemmaInvertibilityDX}. Moreover, for each $j \in \{1,...,J\}$, there exists an $m_j > 0$ such that $\mathbb{E} \left[ \sup_{\bupsilon_j \in \bar{\bUpsilon}_j} \sup_{x_j \in \mathcal{X}_{j}} \left| \left| \bar{\nabla}_{\bupsilon_{j}} \log f_{j,t} \right| \right|_{2}^{m_j} \right] < \infty$, $\mathbb{E} \left[ \sup_{\bupsilon_j \in \bar{\bUpsilon}_j} \sup_{x_j \in \mathcal{X}_{j}} \left| \bar{\nabla}_{x_j x_j} \log f_{j,t} \right|^{m_j} \right] < \infty$, and $\mathbb{E} \left[ \sup_{\bupsilon_j \in \bar{\bUpsilon}_j} \sup_{x_j \in \mathcal{X}_{j}} \left| \left| \bar{\nabla}_{\bupsilon_j x_j} \log f_{j,t} \right| \right|_{2}^{m_j} \right] < \infty$ by assumption.
\end{proof}
\begin{lemmaappendix} \label{LemmaAsymptoticNormalityB}
	Under the assumptions in Theorem \ref{TheoremAsymptoticNormality},
	\begin{equation*}
		\sqrt{T} \nabla_{\btheta} L_T (\btheta_0) \overset{d}{\rightarrow} \mathcal{N} \left(\mathbf{0},\mathbb{E} \left[ \nabla_{\btheta} \log f (Y_t;\btheta_0) \nabla_{\btheta^{\prime}} \log f (Y_t;\btheta_0) \right] \right) \quad \text{as} \quad T \rightarrow \infty.
	\end{equation*}
\end{lemmaappendix}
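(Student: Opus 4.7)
The plan is to apply a martingale central limit theorem combined with the Cramér--Wold device to the normalised sum $\sqrt{T}\nabla_{\btheta} L_T(\btheta_0) = T^{-1/2}\sum_{t=1}^{T} \bz_t$, where $\bz_t := \nabla_{\btheta}\log f(Y_t;\btheta_0)$ and $f(y;\btheta)$ is the stationary one-step-ahead conditional density. Let $\mathcal{F}_t := \sigma(\bY_{-\infty}^{t})$. The first step is to show that $(\bz_t,\mathcal{F}_t)_{t\in\mathbb{Z}}$ is a stationary ergodic martingale difference sequence. Because $\bpi_{t\mid t-1}(\btheta_0)$ and $X_{j,t}(\bupsilon_{j,0})$ are the stationary filter and time-varying parameters generated under the true data generating process (Assumption \ref{AssumptionY} and Lemmata \ref{LemmaInvertibilityX} and \ref{LemmaInvertibilityFilter}), $f(\cdot;\btheta_0)$ coincides with the true conditional density of $Y_t$ given $\mathcal{F}_{t-1}$. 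Under Assumptions \ref{AssumptionF3}, \ref{AssumptionPhi2}, and \ref{AssumptionF4}, a standard dominated convergence argument then permits interchanging differentiation and integration, yielding
\begin{equation*}
	\mathbb{E}[\bz_t \mid \mathcal{F}_{t-1}] = \int \nabla_{\btheta} f(y;\btheta_0)\,\textup{d}y = \nabla_{\btheta}\int f(y;\btheta_0)\,\textup{d}y = \mathbf{0}.
\end{equation*}
Stationarity and ergodicity of $(\bz_t)$ follow from the fact that $\bz_t$ is a measurable function of the stationary ergodic processes $(Y_t)$, $(\bpi_{t\mid t-1}(\btheta_0))$, $(\nabla_{\btheta}\bpi_{t\mid t-1}(\btheta_0))$, $(X_{j,t}(\bupsilon_{j,0}))$, and $(\nabla_{\bupsilon_j} X_{j,t}(\bupsilon_{j,0}))$, guaranteed by Assumption \ref{AssumptionY} and Lemmata \ref{LemmaInvertibilityX}, \ref{LemmaInvertibilityFilter}, \ref{LemmaInvertibilityDX}, and \ref{LemmaInvertibilityDPrediction}.

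The second step is to verify $\mathbb{E}[\|\bz_t\|_2^{2}] < \infty$. Using the chain rule together with the elementary bound $\pi_{j,t\mid t-1}(\btheta_0) f_{j,t}/f(Y_t;\btheta_0) \leq 1$, where $f_{j,t} := f_j(Y_t;X_{j,t}(\bupsilon_{j,0}),\bupsilon_{j,0})$, one obtains
\begin{equation*}
	\|\bz_t\|_2 \leq C\,\|\nabla_{\btheta}\bpi_{t\mid t-1}(\btheta_0)\|_2 + \sum_{j=1}^{J} |\bar{\nabla}_{x_j}\log f_{j,t}|\,\|\nabla_{\bupsilon_j} X_{j,t}(\bupsilon_{j,0})\|_2 + \sum_{j=1}^{J} \|\bar{\nabla}_{\bupsilon_j}\log f_{j,t}\|_2.
\end{equation*}
The first and third terms lie in $L^2$ by Lemma \ref{LemmaInvertibilityDPrediction} and Assumption \ref{AssumptionF4}, respectively. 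For the cross term I would apply Hölder's inequality with conjugate exponents $k_j/(k_j-1)$ and $k_j$, observing that the exponents appearing in Assumption \ref{AssumptionF4} and Lemma \ref{LemmaInvertibilityDX} are calibrated precisely so that $\mathbb{E}[|\bar{\nabla}_{x_j}\log f_{j,t}|^{2k_j/(k_j-1)}]$ and $\mathbb{E}[\|\nabla_{\bupsilon_j}X_{j,t}\|_2^{2k_j}]$ are both finite.

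The final step is to apply the Cramér--Wold device: for any fixed $\bc\in\mathbb{R}^{d}$, $(\bc^{\prime}\bz_t,\mathcal{F}_t)$ is a stationary ergodic square-integrable martingale difference sequence, so Billingsley's martingale CLT gives $T^{-1/2}\sum_{t=1}^{T}\bc^{\prime}\bz_t \overset{d}{\rightarrow} \mathcal{N}(0,\bc^{\prime}\bV\bc)$, where $\bV := \mathbb{E}[\bz_t\bz_t^{\prime}] = \mathbb{E}[\nabla_{\btheta}\log f(Y_t;\btheta_0)\nabla_{\btheta^{\prime}}\log f(Y_t;\btheta_0)]$. Since $\bc$ is arbitrary, the claimed multivariate convergence follows.

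The main obstacle is the $L^2$ bookkeeping in the second step: because the moment exponents supplied by Assumption \ref{AssumptionF4} and Lemma \ref{LemmaInvertibilityDX} are exactly matched to the Hölder conjugacy required here, any looseness in bounding the chain-rule expansion may cause the argument to fail. A secondary technicality is the rigorous justification of the differentiation--integration interchange underlying the martingale step, which relies on uniform local $L^1$ domination of $y\mapsto \nabla_{\btheta} f(y;\btheta)$ in a neighbourhood of $\btheta_0$; this is available from Assumptions \ref{AssumptionF3} and \ref{AssumptionF4} but merits an explicit dominating function in the full proof.
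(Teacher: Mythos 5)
Your proposal is correct and follows essentially the same route as the paper: Billingsley's martingale CLT plus the Cramér--Wold device, the martingale-difference property via interchanging $\nabla_{\btheta}$ and $\int_{\mathcal{Y}}$, stationarity and ergodicity inherited from Lemmata \ref{LemmaInvertibilityX}, \ref{LemmaInvertibilityFilter}, \ref{LemmaInvertibilityDX}, and \ref{LemmaInvertibilityDPrediction}, and the second-moment bound via the ratios $f_{j,t}/f_t \leq C$ and $\pi_{j,t\mid t-1}f_{j,t}/f_t \leq 1$ followed by H\"older with the conjugate exponents $2k_j/(k_j-1)$ and $2k_j$ matched to Assumption \ref{AssumptionF4} and Lemma \ref{LemmaInvertibilityDX}. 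The only cosmetic difference is that you bound $\|\bz_t\|_2$ by a triangle inequality before squaring, whereas the paper expands $\|\nabla_{\btheta} f\|_2^2/f^2$ as a double sum; the resulting moment requirements are identical.
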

\begin{proof}
	The conclusion follows from the central limit theorem by \citet{Billingsley1961} together with the Cramér–Wold theorem if $(\nabla_{\btheta} \log f (Y_t;\btheta_0))_{t \in \mathbb{Z}}$ is a stationary and ergodic martingale difference sequence and $\mathbb{E} \left[ || \nabla_{\btheta} \log f (Y_t;\btheta_0) ||_{2}^{2} \right] < \infty$.
	
	First, $(\nabla_{\btheta} \log f (Y_t;\btheta_0))_{t \in \mathbb{Z}}$ is a stationary and ergodic martingale difference sequence since, in addition to the arguments used in the proof of Lemma \ref{LemmaConsistencyB}, for each $j \in \{1,...,J\}$, $(\nabla_{\bupsilon_{j}} X_{j,t}(\bupsilon_{j}))_{t \in \mathbb{Z}}$ is stationary and ergodic for all $\bupsilon_j \in \bar{\bUpsilon}_{j}$ by Lemma \ref{LemmaInvertibilityDX}, for each $j \in \{1,...,J\}$, $(\nabla_{\btheta} \pi_{j,t \mid t-1} (\btheta))_{t \in \mathbb{Z}}$ is stationary and ergodic for all $\btheta \in \bar{\bTheta}$ by Lemma \ref{LemmaInvertibilityDPrediction}, and 
	\begin{equation*}
		\mathbb{E} \left[ \nabla_{\btheta} \log f (Y_t;\btheta_0) \mid \bY_{-\infty}^{t-1} \right] = \int_{\mathcal{Y}} \left. \nabla_{\btheta} f (y;\btheta) \right|_{\btheta = \btheta_0} \textup{d}y = \left. \nabla_{\btheta} \int_{\mathcal{Y}} f (y;\btheta) \textup{d}y \right|_{\btheta = \btheta_0} = \mathbf{0}.
	\end{equation*}
	
	We have that
	\begin{equation*}
		|| \nabla_{\btheta} \log f (Y_t;\btheta_0) ||_{2}^{2} = \frac{1}{f^2 (Y_t;\btheta_0)} || \nabla_{\btheta} f (Y_t;\btheta_0) ||_{2}^{2}
	\end{equation*}
	where
	\begin{align*}
		& || \nabla_{\btheta} f (Y_t;\btheta_0) ||_{2}^{2} \\
		&\leq \sum_{i,j=1}^{J} || \nabla_{\btheta} \pi_{i,t \mid t-1} (\btheta_0) ||_2 f_{i} (Y_t;X_{i,t} (\bupsilon_{i,0}),\bupsilon_{i,0}) || \nabla_{\btheta} \pi_{j,t \mid t-1} (\btheta_0) ||_2 f_{j} (Y_t;X_{j,t} (\bupsilon_{j,0}),\bupsilon_{j,0}) \\
		&+ 2 \sum_{i,j=1}^{J} || \nabla_{\btheta} \pi_{i,t \mid t-1} (\btheta_0) ||_2 f_{i} (Y_t;X_{i,t} (\bupsilon_{i,0}),\bupsilon_{i,0}) \pi_{j,t \mid t-1} (\btheta_0) || \nabla_{\btheta} f_{j} (Y_t;X_{j,t} (\bupsilon_{j,0}),\bupsilon_{j,0}) ||_{2} \\
		&+ \sum_{i,j=1}^{J} \pi_{i,t \mid t-1} (\btheta_0) || \nabla_{\btheta} f_{i} (Y_t;X_{i,t} (\bupsilon_{i,0}),\bupsilon_{i,0}) ||_{2} \pi_{j,t \mid t-1} (\btheta_0) || \nabla_{\btheta} f_{j} (Y_t;X_{j,t} (\bupsilon_{j,0}),\bupsilon_{j,0}) ||_{2},
	\end{align*}
	so
	\begin{align*}
		& \mathbb{E} \left[ || \nabla_{\btheta} \log f (Y_t;\btheta_0) ||_{2}^{2} \right] \\ &\leq C \left( \sum_{j=1}^{J} \left( \mathbb{E} \left[ || \nabla_{\btheta} \pi_{j,t \mid t-1} (\btheta_0) ||_{2}^{2} \right] \right)^{1/2} \right)^{2} \\
		&+ C \sum_{i,j=1}^{J} \left( \mathbb{E} \left[ || \nabla_{\btheta} \pi_{i,t \mid t-1} (\btheta_0) ||_{2}^{2} \right] \right)^{1/2} \left( \mathbb{E} \left[ || \nabla_{\btheta} \log f_{j} (Y_t;X_{j,t} (\bupsilon_{j,0}),\bupsilon_{j,0}) ||_{2}^{2} \right] \right)^{1/2} \\
		&+ \left( \sum_{j=1}^{J} \left( \mathbb{E} \left[ || \nabla_{\btheta} \log f_{j} (Y_t;X_{j,t} (\bupsilon_{j,0}),\bupsilon_{j,0}) ||_{2}^{2} \right] \right)^{1/2} \right)^{2}.
	\end{align*}
	Moreover, we have that
	\begin{equation*}
		|| \nabla_{\btheta} \log f_{j} (Y_t;X_{j,t} (\bupsilon_{j,0}),\bupsilon_{j,0}) ||_{2}^{2} = \frac{1}{f_{j}^{2} (Y_t;X_{j,t} (\bupsilon_{j,0}),\bupsilon_{j,0})} || \nabla_{\btheta} f_{j} (Y_t;X_{j,t} (\bupsilon_{j,0}),\bupsilon_{j,0}) ||_{2}^{2}
	\end{equation*}
	where
	\begin{align*} 
		&|| \nabla_{\btheta} f_{j} (Y_t;X_{j,t} (\bupsilon_{j,0}),\bupsilon_{j,0}) ||_{2}^{2} \\
		&\leq \left| \bar{\nabla}_{x_j} f_{j} (Y_t;X_{j,t} (\bupsilon_{j,0}),\bupsilon_{j,0}) \right|^2 || \nabla_{\btheta} X_{j,t} (\bupsilon_{j,0}) ||_{2}^{2} \\
		&+ 2 \left| \bar{\nabla}_{x_j} f_{j} (Y_t;X_{j,t} (\bupsilon_{j,0}),\bupsilon_{j,0}) \right| || \nabla_{\btheta} X_{j,t} (\bupsilon_{j,0}) ||_{2} || \bar{\nabla}_{\btheta} f_{j} (Y_t;X_{j,t} (\bupsilon_{j,0}),\bupsilon_{j,0}) ||_{2} \\
		&+ || \bar{\nabla}_{\btheta} f_{j} (Y_t;X_{j,t} (\bupsilon_{j,0}),\bupsilon_{j,0}) ||_{2}^{2},
	\end{align*}
	so
	\begin{align*}
		&\mathbb{E} \left[ || \nabla_{\btheta} \log f_{j} (Y_t;X_{j,t} (\bupsilon_{j,0}),\bupsilon_{j,0}) ||_{2}^{2} \right] \\
		&\leq \mathbb{E} \left[ \left| \bar{\nabla}_{x_j} \log f_{j} (Y_t;X_{j,t} (\bupsilon_{j,0}),\bupsilon_{j,0}) \right|^2 ||  \nabla_{\btheta} X_{j,t} (\bupsilon_{j,0}) ||_{2}^{2} \right] \\
		&+ 2 \left( \mathbb{E} \left[ \left| \bar{\nabla}_{x_j} \log f_{j} (Y_t;X_{j,t} (\bupsilon_{j,0}),\bupsilon_{j,0}) \right|^2 ||  \nabla_{\btheta} X_{j,t} (\bupsilon_{j,0}) ||_{2}^{2} \right] \right)^{1/2} \\
		& \quad \cdot \left( \mathbb{E} \left[ || \bar{\nabla}_{\btheta} \log f_{j} (Y_t;X_{j,t} (\bupsilon_{j,0}),\bupsilon_{j,0}) ||_{2}^{2} \right] \right)^{1/2} \\
		&+ \mathbb{E} \left[ || \bar{\nabla}_{\btheta} \log f_{j} (Y_t;X_{j,t} (\bupsilon_{j,0}),\bupsilon_{j,0}) ||_{2}^{2} \right].
	\end{align*}
	Hence, $\mathbb{E} \left[ || \nabla_{\btheta} \log f (Y_t;\btheta_0) ||_{2}^{2} \right] < \infty$ since, for each $j \in \{1,...,J\}$, $\mathbb{E} \left[ \sup_{\btheta \in \bar{\bTheta}} || \nabla_{\btheta} \pi_{j,t \mid t-1} (\btheta) ||_{2}^{2} \right] < \infty$ by Lemma \ref{LemmaInvertibilityDPrediction} and, for each $j \in \{1,...,J\}$, $\mathbb{E} \left[ \sup_{\bupsilon_j \in \bar{\bUpsilon}_{j}} \left| \bar{\nabla}_{x_j} \log f_{j} (Y_t;X_{j,t} (\bupsilon_j),\bupsilon_j) \right|^2 ||  \nabla_{\bupsilon_{j}} X_{j,t} (\bupsilon_{j}) ||_{2}^{2}  \right] < \infty$ and $\mathbb{E} \left[ \sup_{\bupsilon_j \in \bar{\bUpsilon}_{j}} || \bar{\nabla}_{\bupsilon_j} \log f_{j} (Y_t;X_{j,t} (\bupsilon_j),\bupsilon_j) ||_{2}^{2} \right] < \infty$ by Assumption \ref{AssumptionF4}.
\end{proof}
\begin{lemmaappendix} \label{LemmaAsymptoticNormalityC}
	Under the assumptions in Theorem \ref{TheoremAsymptoticNormality},
	\begin{equation*}
		\bI(\btheta_0) = \mathbb{E} \left[ \nabla_{\btheta} \log f (Y_t;\btheta_0) \nabla_{\btheta^{\prime}} \log f (Y_t;\btheta_0) \right].
	\end{equation*}
\end{lemmaappendix}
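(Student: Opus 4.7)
The plan is to establish the standard information matrix equality by combining calculus with the tower property of conditional expectation and by exploiting the fact that $y \mapsto f(y;\btheta)$ is, by construction, a proper probability density on $\mathcal{Y}$ for every $\btheta \in \bTheta$. I start from the algebraic identity
\begin{equation*}
	\nabla_{\btheta \btheta} \log f(Y_t;\btheta) = \frac{\nabla_{\btheta \btheta} f(Y_t;\btheta)}{f(Y_t;\btheta)} - \nabla_{\btheta} \log f(Y_t;\btheta) \nabla_{\btheta'} \log f(Y_t;\btheta),
\end{equation*}
which follows from Assumption \ref{AssumptionF3} and Assumption \ref{AssumptionPhi2}. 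Taking unconditional expectations and negating gives
\begin{equation*}
	\bI(\btheta_0) = - \mathbb{E}\!\left[ \frac{\nabla_{\btheta \btheta} f(Y_t;\btheta_0)}{f(Y_t;\btheta_0)} \right] + \mathbb{E}\!\left[ \nabla_{\btheta} \log f(Y_t;\btheta_0) \nabla_{\btheta'} \log f(Y_t;\btheta_0) \right],
\end{equation*}
so it suffices to show that the first term on the right-hand side vanishes.

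The plan for the vanishing first term is to condition on $\bY_{-\infty}^{t-1}$. Since $f(\cdot;\btheta_0)$ is the conditional density of $Y_t$ given $\bY_{-\infty}^{t-1}$, the tower property and a direct computation yield
\begin{equation*}
	\mathbb{E}\!\left[ \frac{\nabla_{\btheta \btheta} f(Y_t;\btheta_0)}{f(Y_t;\btheta_0)} \,\Big|\, \bY_{-\infty}^{t-1} \right] = \int_{\mathcal{Y}} \left. \nabla_{\btheta \btheta} f(y;\btheta) \right|_{\btheta = \btheta_0} \textup{d}y,
\end{equation*}
and I would then interchange differentiation and integration to get
\begin{equation*}
	\int_{\mathcal{Y}} \left. \nabla_{\btheta \btheta} f(y;\btheta) \right|_{\btheta = \btheta_0} \textup{d}y = \left. \nabla_{\btheta \btheta} \int_{\mathcal{Y}} f(y;\btheta) \textup{d}y \right|_{\btheta = \btheta_0} = \nabla_{\btheta \btheta} 1 = \mathbf{0},
\end{equation*}
exactly as done for the first-derivative analogue in the proof of Lemma \ref{LemmaAsymptoticNormalityB}. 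Substituting back and taking unconditional expectations yields the claimed identity.

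The main technical obstacle will be justifying the interchange of differentiation and integration for the \emph{second} derivative. The first-derivative interchange was taken for granted in Lemma \ref{LemmaAsymptoticNormalityB}; for the second derivative, one needs a dominating function for $\btheta \mapsto \nabla_{\btheta \btheta} f(y;\btheta)$ that is integrable in $y$ uniformly in a neighbourhood of $\btheta_0$. This domination can be produced by expanding $\nabla_{\btheta \btheta} f$ via the chain rule in terms of $\nabla_{\btheta} X_{j,t}(\bupsilon_j)$, $\nabla_{\btheta \btheta} X_{j,t}(\bupsilon_j)$, $\nabla_{\btheta} \bpi_{t\mid t-1}(\btheta)$, $\nabla_{\btheta \btheta} \bpi_{t\mid t-1}(\btheta)$, and the first- and second-order derivatives of $f_j$ in $(x_j,\bupsilon_j)$, and then invoking the moment bounds supplied by Assumption \ref{AssumptionF4} and by Lemmata \ref{LemmaInvertibilityDX}--\ref{LemmaInvertibilityDDPrediction}. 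Apart from this domination step, the argument is purely routine.
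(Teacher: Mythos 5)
Your proposal follows exactly the paper's own argument: decompose $\nabla_{\btheta\btheta}\log f$ into $\nabla_{\btheta\btheta}f/f$ minus the outer product of scores, condition on $\bY_{-\infty}^{t-1}$, and kill the first term by interchanging $\nabla_{\btheta\btheta}$ with $\int_{\mathcal{Y}}\cdot\,\textup{d}y$. The only difference is that you flag the need to justify that interchange via a domination argument, whereas the paper simply asserts it (as it also does for the first-derivative analogue in Lemma \ref{LemmaAsymptoticNormalityB}), so your version is, if anything, slightly more careful.
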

\begin{proof}
	We have that
	\begin{align*}
		\mathbb{E} \left[ \nabla_{\btheta \btheta} \log f (Y_t;\btheta_0) \mid \bY_{-\infty}^{t-1} \right] &= - \mathbb{E} \left[ \nabla_{\btheta} \log f (Y_t;\btheta_0) \nabla_{\btheta^{\prime}} \log f (Y_t;\btheta_0) \mid \bY_{-\infty}^{t-1} \right] \\
		&+ \int_{\mathcal{Y}} \left. \nabla_{\btheta \btheta} f (y;\btheta) \right|_{\btheta = \btheta_0} \textup{d}y \\
		&= - \mathbb{E} \left[ \nabla_{\btheta} \log f (Y_t;\btheta_0) \nabla_{\btheta^{\prime}} \log f (Y_t;\btheta_0) \mid \bY_{-\infty}^{t-1} \right] \\
		&+ \left. \nabla_{\btheta \btheta} \int_{\mathcal{Y}} f (y;\btheta) \textup{d}y \right|_{\btheta = \btheta_0} \\
		&= - \mathbb{E} \left[ \nabla_{\btheta} \log f (Y_t;\btheta_0) \nabla_{\btheta^{\prime}} \log f (Y_t;\btheta_0) \mid \bY_{-\infty}^{t-1} \right]
	\end{align*}
	and thus, by the law of total expectation, that
	\begin{equation*}
		\mathbb{E} \left[ \nabla_{\btheta \btheta} \log f (Y_t;\btheta_0) \right] = - \mathbb{E} \left[ \nabla_{\btheta} \log f (Y_t;\btheta_0) \nabla_{\btheta^{\prime}} \log f (Y_t;\btheta_0) \right].
	\end{equation*}
\end{proof}
Condition (v) follows from Lemmas \ref{LemmaAsymptoticNormalityD} and \ref{LemmaAsymptoticNormalityE} since
\begin{equation*}
	\sup_{\btheta \in \bar{\bTheta}} || \nabla_{\btheta \btheta} \hat{L}_T (\btheta) - (-\bI(\btheta)) ||_{2,2} \leq \sup_{\btheta \in \bar{\bTheta}} || \nabla_{\btheta \btheta} \hat{L}_T (\btheta) - \nabla_{\btheta \btheta} L_T (\btheta) ||_{2,2} + \sup_{\btheta \in \bar{\bTheta}} || \nabla_{\btheta \btheta} L_T (\btheta) - (-\bI(\btheta)) ||_{2,2}.
\end{equation*}
\begin{lemmaappendix} \label{LemmaAsymptoticNormalityD}
	Under the assumptions in Theorem \ref{TheoremAsymptoticNormality},
	\begin{equation*}
		\sup_{\btheta \in \bar{\bTheta}} || \nabla_{\btheta \btheta} \hat{L}_T (\btheta) - \nabla_{\btheta \btheta} L_T (\btheta) ||_{2,2} \overset{a.s.}{\rightarrow} 0 \quad \text{as} \quad T \rightarrow \infty.
	\end{equation*}
\end{lemmaappendix}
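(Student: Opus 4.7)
The plan is to mirror the structure of Lemma \ref{LemmaAsymptoticNormalityA}, but now at the second-derivative level. First I would reduce the uniform control of the Cesàro-averaged Hessian difference to a pointwise e.a.s.\ statement by the routine inequality
\begin{equation*}
\sup_{\btheta \in \bTheta} || \nabla_{\btheta \btheta} \hat{L}_T (\btheta) - \nabla_{\btheta \btheta} L_T (\btheta) ||_{2,2} \leq \frac{1}{T} \sum_{t=1}^{T} \sup_{\btheta \in \bTheta} || \nabla_{\btheta \btheta} \log \hat{f}(Y_t;\btheta) - \nabla_{\btheta \btheta} \log f(Y_t;\btheta) ||_{2,2},
\end{equation*}
and then invoking Lemma 2.1 in \cite{StraumannMikosch2006}: if the summand converges e.a.s.\ to zero as $t \to \infty$, then the Cesàro average converges a.s.\ to zero. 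Hence the task reduces to proving $\sup_{\btheta \in \bTheta} || \nabla_{\btheta \btheta} \log \hat{f}(Y_t;\btheta) - \nabla_{\btheta \btheta} \log f(Y_t;\btheta) ||_{2,2} \overset{e.a.s.}{\to} 0$.

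The second step is to decompose this difference via the identity
\begin{equation*}
\nabla_{\btheta \btheta} \log f(Y_t;\btheta) = \frac{\nabla_{\btheta \btheta} f(Y_t;\btheta)}{f(Y_t;\btheta)} - \frac{\nabla_{\btheta} f(Y_t;\btheta) \nabla_{\btheta'} f(Y_t;\btheta)}{f(Y_t;\btheta)^2}
\end{equation*}
and repeatedly apply the algebraic identity $\hat{a}\hat{b} - ab = (\hat{a}-a)(\hat{b}-b) + (\hat{a}-a)b + a(\hat{b}-b)$ together with the intermediate proxy $\tilde{f}(Y_t;\btheta) = \sum_j \pi_{j,t\mid t-1}(\btheta) f_j(Y_t;\hat{X}_{j,t}(\bupsilon_j),\bupsilon_j)$ used in Lemma \ref{LemmaAsymptoticNormalityA}. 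After expanding $\nabla_{\btheta\btheta} f = \sum_j [\nabla_{\btheta\btheta}\pi_{j,t\mid t-1}\,f_j + 2\nabla_{\btheta}\pi_{j,t\mid t-1}\nabla_{\btheta'}f_j + \pi_{j,t\mid t-1}\nabla_{\btheta\btheta}f_j]$ and applying the chain rule twice to $f_j(Y_t;X_{j,t}(\bupsilon_j),\bupsilon_j)$ (producing terms of types $\bar\nabla_{x_j x_j} f_j \nabla_{\btheta}X_j \nabla_{\btheta'}X_j$, $\bar\nabla_{x_j}f_j \nabla_{\btheta\btheta}X_j$, $\bar\nabla_{\bupsilon_j x_j}f_j \nabla_{\btheta}X_j$, and $\bar\nabla_{\bupsilon_j\bupsilon_j}f_j$), each hat--to--non-hat difference decomposes into products of a factor that converges e.a.s.\ to zero and factors whose stationary versions possess the required moments.

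The e.a.s.\ convergences needed are exactly those furnished by the preceding lemmas: Lemma \ref{LemmaInvertibilityX}, \ref{LemmaInvertibilityDX}, \ref{LemmaInvertibilityDDX} for $\hat{X}_{j,t}$ and its first and second derivatives; Corollary \ref{CorollaryInvertibilityPrediction}, Lemma \ref{LemmaInvertibilityDPrediction}, \ref{LemmaInvertibilityDDPrediction} for $\hat{\bpi}_{t\mid t-1}(\btheta)$ and its first and second derivatives; and the growth bounds on $\bar\nabla_{x_j}\log f_j$, $\bar\nabla_{\bupsilon_j}\log f_j$, $\bar\nabla_{x_j x_j}\log f_j$, $\bar\nabla_{\bupsilon_j x_j}\log f_j$, $\bar\nabla_{\bupsilon_j \bupsilon_j}\log f_j$ supplied by Assumption \ref{AssumptionF4} and the conditions in Lemmata \ref{LemmaInvertibilityDPrediction}--\ref{LemmaInvertibilityDDPrediction}. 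Ratios of the form $\pi_{j,t\mid t-1}f_{j,t}/f_t$ and its hat counterpart are bounded by $1$, and denominators are uniformly bounded below by Remark \ref{Remark}, so the log-derivative identities collapse $\bar\nabla_{\bullet} f_j / f_t$ factors to $\bar\nabla_{\bullet}\log f_j$ factors with an absolute constant $C$.

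With those reductions in hand, each summand in the expanded difference is of the form (e.a.s.\ null factor) $\times$ (stationary factor with finite moments), and I would conclude using Lemma 2.2 in \cite{StraumannMikosch2006} that the product converges e.a.s.\ to zero. Hölder's inequality balances the exponents $2k_j/(k_j-1)$ and $2k_j$ appearing in Assumption \ref{AssumptionF4} and Lemma \ref{LemmaInvertibilityDX} (and the analogous $k_j/(k_j-1)$, $k_j$ pair for second-order terms via Lemma \ref{LemmaInvertibilityDDX}), guaranteeing integrability of the dominating random variables. The main obstacle is purely bookkeeping: the Hessian produces substantially more cross-terms than the score did in Lemma \ref{LemmaAsymptoticNormalityA}, and one has to verify that every single cross-term is matched by the right moment/e.a.s.\ ingredient; nothing truly new beyond the tools already assembled is required.
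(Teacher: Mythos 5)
Your proposal follows essentially the same route as the paper's proof: the Cesàro reduction via Lemma 2.1 of \citet{StraumannMikosch2006}, the split of the Hessian difference into the score outer-product term and the $\nabla_{\btheta\btheta}f/f$ term, the further expansion into the $\nabla_{\btheta\btheta}\pi$, cross, and $\nabla_{\btheta\btheta}f_j$ pieces, the $\hat{a}\hat{b}-ab$ identity with the intermediate $\tilde{f}$, and the same inventory of e.a.s.\ and moment inputs from Lemmata \ref{LemmaInvertibilityX}--\ref{LemmaInvertibilityDDPrediction}, Remark \ref{Remark}, and Assumption \ref{AssumptionF4}. The plan is correct and, as you note, the remaining work is the bookkeeping the paper carries out term by term.
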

\begin{proof}
	We have that
	\begin{align*}
		\sup_{\btheta \in \bar{\bTheta}} || \nabla_{\btheta \btheta} \hat{L}_T (\btheta) - \nabla_{\btheta \btheta} L_T (\btheta) ||_{2,2} &\leq \frac{1}{T} \sum_{t=1}^{T} \sup_{\btheta \in \bar{\bTheta}} \left| \left| \frac{\nabla_{\btheta} \hat{f}_{t}}{\hat{f}_{t}} \frac{\nabla_{\btheta^{\prime}} \hat{f}_{t}}{\hat{f}_{t}} - \frac{\nabla_{\btheta} f_{t}}{f_{t} } \frac{\nabla_{\btheta^{\prime}} f_{t}}{f_{t} } \right| \right|_{2,2} \\
		&+ \frac{1}{T} \sum_{t=1}^{T} \sup_{\btheta \in \bar{\bTheta}} \left| \left| \frac{\nabla_{\btheta \btheta} \hat{f}_{t}}{\hat{f}_{t}} - \frac{\nabla_{\btheta \btheta} f_{t}}{f_{t} } \right| \right|_{2,2},
	\end{align*}
	where $\hat{f}_{t}$ denotes $\hat{f}(Y_t;\btheta)$ and $f_{t}$ denotes $f(Y_t;\btheta)$, that
	\begin{align*}
		\sup_{\btheta \in \bar{\bTheta}} \left| \left| \frac{\nabla_{\btheta} \hat{f}_{t}}{\hat{f}_{t}} \frac{\nabla_{\btheta^{\prime}} \hat{f}_{t}}{\hat{f}_{t}} - \frac{\nabla_{\btheta} f_{t}}{f_{t} } \frac{\nabla_{\btheta^{\prime}} f_{t}}{f_{t} } \right| \right|_{2,2} &\leq \sup_{\btheta \in \bar{\bTheta}} \left| \left| \frac{\nabla_{\btheta} \hat{f}_{t}}{\hat{f}_{t}} - \frac{\nabla_{\btheta} f_{t}}{f_{t} } \right| \right|_{2}^{2} \\
		&+ 2 \sup_{\btheta \in \bar{\bTheta}} \left| \left| \frac{\nabla_{\btheta} f_{t}}{f_{t} } \right| \right|_{2} \sup_{\btheta \in \bar{\bTheta}} \left| \left| \frac{\nabla_{\btheta} \hat{f}_{t}}{\hat{f}_{t}} - \frac{\nabla_{\btheta} f_{t}}{f_{t} } \right| \right|_{2},
	\end{align*}
	and that
	\begin{align*}
		\sup_{\btheta \in \bar{\bTheta}} \left| \left| \frac{\nabla_{\btheta \btheta} \hat{f}_{t}}{\hat{f}_{t}} - \frac{\nabla_{\btheta \btheta} f_{t}}{f_{t} } \right| \right|_{2,2} &\leq \sum_{j=1}^{J} \sup_{\btheta \in \bar{\bTheta}} \left| \left| \frac{\nabla_{\btheta \btheta} \hat{\pi}_{j,t \mid t-1} \hat{f}_{j,t}}{\hat{f}_{t}} - \frac{\nabla_{\btheta \btheta} \pi_{j,t \mid t-1} f_{j,t}}{f_{t}} \right| \right|_{2,2} \\
		&+ 2 \sum_{j=1}^{J} \sup_{\btheta \in \bar{\bTheta}} \left| \left| \frac{\nabla_{\btheta} \hat{\pi}_{j,t \mid t-1} \nabla_{\btheta^{\prime}} \hat{f}_{j,t}}{\hat{f}_{t}} - \frac{\nabla_{\btheta} \pi_{j,t \mid t-1} \nabla_{\btheta^{\prime}} f_{j,t}}{f_{t}} \right| \right|_{2,2}  \\
		&+ \sum_{j=1}^{J} \sup_{\btheta \in \bar{\bTheta}} \left| \left| \frac{\hat{\pi}_{j,t \mid t-1} \nabla_{\btheta \btheta} \hat{f}_{j,t}}{\hat{f}_{t}} - \frac{\pi_{j,t \mid t-1} \nabla_{\btheta \btheta} f_{j,t}}{f_{t}} \right| \right|_{2,2},
	\end{align*}
	where $\hat{\pi}_{j,t \mid t-1}$ denotes $\hat{\pi}_{j,t \mid t-1} (\btheta)$, $\hat{f}_{j,t}$ denotes $f_{j} (Y_t;\hat{X}_{j,t},\bupsilon_j)$, $\hat{X}_{j,t}$ denotes $\hat{X}_{j,t} (\bupsilon_j)$, ${\pi}_{j,t \mid t-1}$ denotes ${\pi}_{j,t \mid t-1} (\btheta)$, ${f}_{j,t}$ denotes $f_{j} (Y_t;{X}_{j,t},\bupsilon_j)$, and ${X}_{j,t}$ denotes ${X}_{j,t} (\bupsilon_j)$.
	
	The conclusion thus follows from Lemma 2.1 in \citet{StraumannMikosch2006} if $\sup_{\btheta \in \bar{\bTheta}} \left| \left| \frac{\nabla_{\btheta} \hat{f}_{t}}{\hat{f}_{t}} \frac{\nabla_{\btheta^{\prime}} \hat{f}_{t}}{\hat{f}_{t}} - \frac{\nabla_{\btheta} f_{t}}{f_{t} } \frac{\nabla_{\btheta^{\prime}} f_{t}}{f_{t} } \right| \right|_{2,2} \overset{e.a.s.}{\rightarrow} 0$ as $t \rightarrow \infty$ and $\sup_{\btheta \in \bar{\bTheta}} \left| \left| \frac{\nabla_{\btheta \btheta} \hat{f}_{t}}{\hat{f}_{t}} - \frac{\nabla_{\btheta \btheta} f_{t}}{f_{t} } \right| \right|_{2,2} \overset{e.a.s.}{\rightarrow} 0$ as $t \rightarrow \infty$. The former follows from the arguments used in the proof of Lemma \ref{LemmaAsymptoticNormalityA} and the latter follows if, for each $j \in \{1,...,J\}$, $\sup_{\btheta \in \bar{\bTheta}} \left| \left| \frac{\nabla_{\btheta \btheta} \hat{\pi}_{j,t \mid t-1} \hat{f}_{j,t}}{\hat{f}_{t}} - \frac{\nabla_{\btheta \btheta} \pi_{j,t \mid t-1} f_{j,t}}{f_{t}} \right| \right|_{2,2} \overset{e.a.s.}{\rightarrow} 0$ as $t \rightarrow \infty$, $\sup_{\btheta \in \bar{\bTheta}} \left| \left| \frac{\nabla_{\btheta} \hat{\pi}_{j,t \mid t-1} \nabla_{\btheta^{\prime}} \hat{f}_{j,t}}{\hat{f}_{t}} - \frac{\nabla_{\btheta} \pi_{j,t \mid t-1} \nabla_{\btheta^{\prime}} f_{j,t}}{f_{t}} \right| \right|_{2,2} \overset{e.a.s.}{\rightarrow} 0$ as $t \rightarrow \infty$, and $\sup_{\btheta \in \bar{\bTheta}} \left| \left| \frac{\hat{\pi}_{j,t \mid t-1} \nabla_{\btheta \btheta} \hat{f}_{j,t}}{\hat{f}_{t}} - \frac{\pi_{j,t \mid t-1} \nabla_{\btheta \btheta} f_{j,t}}{f_{t}} \right| \right|_{2,2} \overset{e.a.s.}{\rightarrow} 0$ as $t \rightarrow \infty$.
	
	First,
	\begin{align*}
		\left| \left| \frac{\nabla_{\btheta \btheta} \hat{\pi}_{j,t \mid t-1} \hat{f}_{j,t}}{\hat{f}_{t}} - \frac{\nabla_{\btheta \btheta} \pi_{j,t \mid t-1} f_{j,t}}{f_{t}} \right| \right|_{2,2} &\leq \left| \left| \nabla_{\btheta \btheta} \hat{\pi}_{j,t \mid t-1} - \nabla_{\btheta \btheta} \pi_{j,t \mid t-1} \right| \right|_{2,2} \left| \frac{\hat{f}_{j,t}}{\hat{f}_{t}} - \frac{f_{j,t}}{f_{t}} \right| \\
		&+ C \left| \left| \nabla_{\btheta \btheta} \hat{\pi}_{j,t \mid t-1} - \nabla_{\btheta \btheta} \pi_{j,t \mid t-1} \right| \right|_{2,2} \\
		&+ \left| \left| \nabla_{\btheta \btheta} \pi_{j,t \mid t-1} \right| \right|_{2,2} \left| \frac{\hat{f}_{j,t}}{\hat{f}_{t}} - \frac{f_{j,t}}{f_{t}} \right|.
	\end{align*}
	It thus follows from Lemmas 2.1 and 2.2 in \citet{StraumannMikosch2006} that, for each $j \in \{1,...,J\}$, $\sup_{\btheta \in \bar{\bTheta}} \left| \left| \frac{\nabla_{\btheta \btheta} \hat{\pi}_{j,t \mid t-1} \hat{f}_{j,t}}{\hat{f}_{t}} - \frac{\nabla_{\btheta \btheta} \pi_{j,t \mid t-1} f_{j,t}}{f_{t}} \right| \right|_{2,2} \overset{e.a.s.}{\rightarrow} 0$ as $t \rightarrow \infty$ since, in addition to the arguments used in the proofs of Lemmas \ref{LemmaConsistencyA} and \ref{LemmaAsymptoticNormalityA}, for each $j \in \{1,...,J\}$, $\sup_{\btheta \in \bar{\bTheta}} || \nabla_{\btheta \btheta} \hat{\pi}_{j,t \mid t-1} - \nabla_{\btheta \btheta} \pi_{j,t \mid t-1}  ||_{2,2} \overset{e.a.s.}{\rightarrow} 0$ as $t \rightarrow \infty$ where $(\nabla_{\btheta \btheta} \pi_{j,t \mid t-1} )_{t \in \mathbb{Z}}$ is stationary for all $\btheta \in \bar{\bTheta}$ and $\mathbb{E} \left[ \sup_{\btheta \in \bar{\bTheta}} || \nabla_{\btheta \btheta} \pi_{j,t \mid t-1} ||_{2,2} \right] < \infty$ by Lemma \ref{LemmaInvertibilityDDPrediction}.
	
	Moreover,
	\begin{align*}
		\left| \left| \frac{\nabla_{\btheta} \hat{\pi}_{j,t \mid t-1} \nabla_{\btheta^{\prime}} \hat{f}_{j,t}}{\hat{f}_{t}} - \frac{\nabla_{\btheta} \pi_{j,t \mid t-1} \nabla_{\btheta^{\prime}} f_{j,t}}{f_{t}} \right| \right|_{2,2} &\leq \left| \left| \nabla_{\btheta} \hat{\pi}_{j,t \mid t-1} - \nabla_{\btheta} \pi_{j,t \mid t-1}\right| \right|_{2} \left| \left| \frac{\nabla_{\btheta} \hat{f}_{j,t}}{\hat{f}_{t}} - \frac{\nabla_{\btheta} f_{j,t}}{f_{t}} \right| \right|_{2} \\
		&+ \left| \left| \nabla_{\btheta} \hat{\pi}_{j,t \mid t-1} - \nabla_{\btheta} \pi_{j,t \mid t-1}\right| \right|_{2} \left| \left| \frac{\nabla_{\btheta} f_{j,t}}{f_{t}} \right| \right|_{2} \\
		&+ \left| \left| \nabla_{\btheta} \pi_{j,t \mid t-1}\right| \right|_{2} \left| \left| \frac{\nabla_{\btheta} \hat{f}_{j,t}}{\hat{f}_{t}} - \frac{\nabla_{\btheta} f_{j,t}}{f_{t}} \right| \right|_{2}.
	\end{align*}
	It thus follows from the arguments used in the proof of Lemma \ref{LemmaAsymptoticNormalityA} that, for each $j \in \{1,...,J\}$, also $\sup_{\btheta \in \bar{\bTheta}} \left| \left| \frac{\nabla_{\btheta} \hat{\pi}_{j,t \mid t-1} \nabla_{\btheta^{\prime}} \hat{f}_{j,t}}{\hat{f}_{t}} - \frac{\nabla_{\btheta} \pi_{j,t \mid t-1} \nabla_{\btheta^{\prime}} f_{j,t}}{f_{t}} \right| \right|_{2,2} \overset{e.a.s.}{\rightarrow} 0$ as $t \rightarrow \infty$.
	
	Finally,
	\begin{align*}
		\left| \left| \frac{\hat{\pi}_{j,t \mid t-1} \nabla_{\btheta \btheta} \hat{f}_{j,t}}{\hat{f}_{t}} - \frac{\pi_{j,t \mid t-1} \nabla_{\btheta \btheta} f_{j,t}}{f_{t}} \right| \right|_{2,2} &\leq \left| \hat{\pi}_{j,t \mid t-1} - \pi_{j,t \mid t-1} \right| \left| \left| \frac{ \nabla_{\btheta \btheta} \hat{f}_{j,t}}{\hat{f}_{t}} - \frac{\nabla_{\btheta \btheta} f_{j,t}}{f_{t}} \right| \right|_{2,2} \\
		&+ \left| \hat{\pi}_{j,t \mid t-1} - \pi_{j,t \mid t-1} \right| \left| \left| \frac{\nabla_{\btheta \btheta} f_{j,t}}{f_{t}} \right| \right|_{2,2} \\
		&+ \left| \left| \frac{ \nabla_{\btheta \btheta} \hat{f}_{j,t}}{\hat{f}_{t}} - \frac{\nabla_{\btheta \btheta} f_{j,t}}{f_{t}} \right| \right|_{2,2},
	\end{align*}
	where 
	\begin{align*}
		\left| \left| \frac{ \nabla_{\btheta \btheta} \hat{f}_{j,t}}{\hat{f}_{t}} - \frac{\nabla_{\btheta \btheta} f_{j,t}}{f_{t}} \right| \right|_{2,2} &\leq \left| \left| \frac{ \bar{\nabla}_{x_j x_j} \hat{f}_{j,t} \nabla_{\btheta} \hat{X}_{j,t} \nabla_{\btheta^{\prime}} \hat{X}_{j,t}}{\hat{f}_{t}} - \frac{ \bar{\nabla}_{x_j x_j} f_{j,t} \nabla_{\btheta} {X}_{j,t} \nabla_{\btheta^{\prime}} {X}_{j,t}}{f_{t}} \right| \right|_{2,2} \\
		&+ \left| \left| \frac{ \bar{\nabla}_{x_j} \hat{f}_{j,t} \nabla_{\btheta \btheta} \hat{X}_{j,t}}{\hat{f}_{t}} - \frac{ \bar{\nabla}_{x_j} f_{j,t} \nabla_{\btheta \btheta} {X}_{j,t}}{f_{t}} \right| \right|_{2,2} \\
		&+ 2 \left| \left| \frac{ \bar{\nabla}_{\btheta x_j} \hat{f}_{j,t} \nabla_{\btheta^{\prime}} \hat{X}_{j,t} }{\hat{f}_{t}} - \frac{ \bar{\nabla}_{\btheta x_j} f_{j,t} \nabla_{\btheta^{\prime}} X_{j,t}}{f_{t}} \right| \right|_{2,2} \\
		&+ \left| \left| \frac{ \bar{\nabla}_{\btheta \btheta} \hat{f}_{j,t}}{\hat{f}_{t}} - \frac{ \bar{\nabla}_{\btheta \btheta} f_{j,t}}{f_{t}} \right| \right|_{2,2}.
	\end{align*}
	First, we have that
	\begin{align*}
		&\left| \left| \frac{ \bar{\nabla}_{x_j x_j} \hat{f}_{j,t} \nabla_{\btheta} \hat{X}_{j,t} \nabla_{\btheta^{\prime}} \hat{X}_{j,t}}{\hat{f}_{t}} - \frac{ \bar{\nabla}_{x_j x_j} f_{j,t} \nabla_{\btheta} {X}_{j,t} \nabla_{\btheta^{\prime}} {X}_{j,t}}{f_{t}} \right| \right|_{2,2} \\
		&\leq \left| \frac{ \bar{\nabla}_{x_j x_j} \hat{f}_{j,t} }{\hat{f}_{t}} - \frac{ \bar{\nabla}_{x_j x_j} f_{j,t}}{f_{t}} \right| \left| \left| \nabla_{\btheta} \hat{X}_{j,t} \nabla_{\btheta^{\prime}} \hat{X}_{j,t} - \nabla_{\btheta} {X}_{j,t} \nabla_{\btheta^{\prime}} {X}_{j,t} \right| \right|_{2,2} \\
		&+ \left| \frac{ \bar{\nabla}_{x_j x_j} \hat{f}_{j,t} }{\hat{f}_{t}} - \frac{ \bar{\nabla}_{x_j x_j} f_{j,t}}{f_{t}} \right| \left| \left| \nabla_{\btheta} {X}_{j,t} \right| \right|_{2}^{2} \\
		&+ \left| \frac{ \bar{\nabla}_{x_j x_j} f_{j,t}}{f_{t}} \right| \left| \left| \nabla_{\btheta} \hat{X}_{j,t} \nabla_{\btheta^{\prime}} \hat{X}_{j,t} - \nabla_{\btheta} {X}_{j,t} \nabla_{\btheta^{\prime}} {X}_{j,t} \right| \right|_{2,2},
	\end{align*}
	where
	\begin{equation*}
		\left| \frac{ \bar{\nabla}_{x_j x_j} \hat{f}_{j,t} }{\hat{f}_{t}} - \frac{ \bar{\nabla}_{x_j x_j} f_{j,t}}{f_{t}} \right| \leq \left| \frac{ \bar{\nabla}_{x_j x_j} \hat{f}_{j,t} }{\hat{f}_{t}} - \frac{ \bar{\nabla}_{x_j x_j} \hat{f}_{j,t}}{\tilde{f}_{t}} \right| + \left| \frac{ \bar{\nabla}_{x_j x_j} \hat{f}_{j,t} }{\tilde{f}_{t}} - \frac{ \bar{\nabla}_{x_j x_j} f_{j,t}}{f_{t}} \right|
	\end{equation*}
	where $\tilde{f}_{t}$ denotes $\tilde{f} (Y_t;\btheta) = \sum_{j=1}^{J} \pi_{j,t \mid t-1} (\btheta) f_{j} (Y_t;\hat{X}_{j,t} (\bupsilon_j),\bupsilon_j)$ and
	\begin{equation*}
		\left| \left| \nabla_{\btheta} \hat{X}_{j,t} \nabla_{\btheta^{\prime}} \hat{X}_{j,t} - \nabla_{\btheta} {X}_{j,t} \nabla_{\btheta^{\prime}} {X}_{j,t} \right| \right|_{2,2} \leq \left| \left| \nabla_{\btheta} \hat{X}_{j,t} - \nabla_{\btheta} {X}_{j,t} \right| \right|_{2}^{2} + 2 \left| \left| \nabla_{\btheta} {X}_{j,t} \right| \right|_{2} \left| \left| \nabla_{\btheta} \hat{X}_{j,t} - \nabla_{\btheta} {X}_{j,t} \right| \right|_{2}.
	\end{equation*}
	By using the same arguments as in the proof of Lemma \ref{LemmaConsistencyA}, we have that
	\begin{align*}
		\left| \frac{ \bar{\nabla}_{x_j x_j} \hat{f}_{j,t} }{\hat{f}_{t}} - \frac{ \bar{\nabla}_{x_j x_j} f_{j,t}}{f_{t}} \right| &\leq C \left| \bar{\nabla}_{x_j x_j} \log \hat{f}_{j,t} \right| \left| \left| \hat{\bpi}_{t \mid t-1} - \bpi_{t \mid t-1} \right| \right|_{2} \\
		&+ C \left| \bar{\nabla}_{x_j} \log \hat{f}_{j,t} \right|^{2} \left| \left| \hat{\bpi}_{t \mid t-1} - \bpi_{t \mid t-1} \right| \right|_{2} \\
		&+ C \sum_{l = 1}^{J} \left| \bar{\nabla}_{x_j x_j x_j} \log \bar{f}_{j,t} \right| | \hat{X}_{l,t} - X_{l,t} | \\
		&+ C \sum_{k,l=1}^{J}  \left| \bar{\nabla}_{x_j x_j} \log \bar{f}_{j,t} \right| \left| \bar{\nabla}_{x_k} \log \bar{f}_{k,t} \right| | \hat{X}_{l,t} - X_{l,t} | \\
		&+ C \sum_{k,l=1}^{J}  \left| \bar{\nabla}_{x_j} \log \bar{f}_{j,t} \right|^2 \left| \bar{\nabla}_{x_k} \log \bar{f}_{k,t} \right| | \hat{X}_{l,t} - X_{l,t} |,
	\end{align*}
	where $\bar{f}_{j,t}$ denotes $f_{j} (Y_t;\bar{X}_{j,t}(\bupsilon_j),\bupsilon_j)$. Second, we have that
	\begin{align*}
		\left| \left| \frac{ \bar{\nabla}_{x_j} \hat{f}_{j,t} \nabla_{\btheta \btheta} \hat{X}_{j,t}}{\hat{f}_{t}} - \frac{ \bar{\nabla}_{x_j} f_{j,t} \nabla_{\btheta \btheta} {X}_{j,t}}{f_{t}} \right| \right|_{2,2} &\leq \left| \frac{ \bar{\nabla}_{x_j} \hat{f}_{j,t}}{\hat{f}_{t}} - \frac{ \bar{\nabla}_{x_j} f_{j,t} }{f_{t}} \right| \left| \left| \nabla_{\btheta \btheta} \hat{X}_{j,t} - \nabla_{\btheta \btheta} {X}_{j,t} \right| \right|_{2,2} \\
		&+ \left| \frac{ \bar{\nabla}_{x_j} \hat{f}_{j,t}}{\hat{f}_{t}} - \frac{ \bar{\nabla}_{x_j} f_{j,t} }{f_{t}} \right| \left| \left| \nabla_{\btheta \btheta} {X}_{j,t} \right| \right|_{2,2} \\
		&+ \left| \frac{ \bar{\nabla}_{x_j} f_{j,t} }{f_{t}} \right| \left| \left| \nabla_{\btheta \btheta} \hat{X}_{j,t} - \nabla_{\btheta \btheta} {X}_{j,t} \right| \right|_{2,2},
	\end{align*}
	where 
	\begin{equation*}
		\left| \frac{ \bar{\nabla}_{x_j} \hat{f}_{j,t}}{\hat{f}_{t}} - \frac{ \bar{\nabla}_{x_j} f_{j,t} }{f_{t}} \right| \leq \left| \frac{ \bar{\nabla}_{x_j} \hat{f}_{j,t}}{\hat{f}_{t}} - \frac{ \bar{\nabla}_{x_j} \hat{f}_{j,t} }{\tilde{f}_{t}} \right| + \left| \frac{ \bar{\nabla}_{x_j} \hat{f}_{j,t}}{\tilde{f}_{t}} - \frac{ \bar{\nabla}_{x_j} f_{j,t} }{f_{t}} \right|
	\end{equation*}
	as in the proof of Lemma \ref{LemmaAsymptoticNormalityA}. Third, we have that
	\begin{align*}
		\left| \left| \frac{ \bar{\nabla}_{\btheta x_j} \hat{f}_{j,t} \nabla_{\btheta^{\prime}} \hat{X}_{j,t} }{\hat{f}_{t}} - \frac{ \bar{\nabla}_{\btheta x_j} f_{j,t} \nabla_{\btheta^{\prime}} X_{j,t}}{f_{t}} \right| \right|_{2,2} &\leq \left| \left| \frac{ \bar{\nabla}_{\btheta x_j} \hat{f}_{j,t} }{\hat{f}_{t}} - \frac{ \bar{\nabla}_{\btheta x_j} f_{j,t} }{f_{t}} \right| \right|_{2} \left| \left| \nabla_{\btheta} \hat{X}_{j,t} - \nabla_{\btheta} X_{j,t} \right| \right|_{2} \\
		&+ \left| \left| \frac{ \bar{\nabla}_{\btheta x_j} \hat{f}_{j,t} }{\hat{f}_{t}} - \frac{ \bar{\nabla}_{\btheta x_j} f_{j,t} }{f_{t}} \right| \right|_{2} \left| \left| \nabla_{\btheta} X_{j,t} \right| \right|_{2} \\
		&+ \left| \left| \frac{ \bar{\nabla}_{\btheta x_j} f_{j,t} }{f_{t}} \right| \right|_{2} \left| \left| \nabla_{\btheta} \hat{X}_{j,t} - \nabla_{\btheta} X_{j,t} \right| \right|_{2},
	\end{align*}
	where
	\begin{equation*}
		\left| \left| \frac{ \bar{\nabla}_{\btheta x_j} \hat{f}_{j,t} }{\hat{f}_{t}} - \frac{ \bar{\nabla}_{\btheta x_j} f_{j,t} }{f_{t}} \right| \right|_{2} \leq \left| \left| \frac{ \bar{\nabla}_{\btheta x_j} \hat{f}_{j,t} }{\hat{f}_{t}} - \frac{ \bar{\nabla}_{\btheta x_j} \hat{f}_{j,t} }{\tilde{f}_{t}} \right| \right|_{2} + \left| \left| \frac{ \bar{\nabla}_{\btheta x_j} \hat{f}_{j,t} }{\tilde{f}_{t}} - \frac{ \bar{\nabla}_{\btheta x_j} f_{j,t} }{f_{t}} \right| \right|_{2}.
	\end{equation*}
	By using the same arguments as in the proof of Lemma \ref{LemmaConsistencyA} again, we have that
	\begin{align*}
		\left| \left| \frac{ \bar{\nabla}_{\btheta x_j} \hat{f}_{j,t} }{\hat{f}_{t}} - \frac{ \bar{\nabla}_{\btheta x_j} f_{j,t} }{f_{t}} \right| \right|_{2}
		&\leq C \sum_{h=1}^{d_j} \left| \bar{\nabla}_{[\bupsilon_j]_h x_j} \log \hat{f}_{j,t} \right| \left| \left| \hat{\bpi}_{t \mid t-1} - \bpi_{t \mid t-1} \right| \right|_{2} \\
		&+ C \sum_{h=1}^{d_j} \left| \bar{\nabla}_{[\bupsilon_j]_h} \log \hat{f}_{j,t} \right| \left| \bar{\nabla}_{x_j} \log \hat{f}_{j,t} \right| \left| \left| \hat{\bpi}_{t \mid t-1} - \bpi_{t \mid t-1} \right| \right|_{2} \\
		&+ C \sum_{h = 1}^{d_j} \sum_{l = 1}^{J} \left| \bar{\nabla}_{[\bupsilon_j]_h x_j x_j} \log \bar{f}_{j,t} \right| | \hat{X}_{l,t} - X_{l,t} | \\
		&+ C \sum_{h = 1}^{d_j} \sum_{k,l=1}^{J} \left| \bar{\nabla}_{[\bupsilon_j]_h x_j} \log \bar{f}_{j,t} \right| \left| \bar{\nabla}_{x_k} \log \bar{f}_{k,t} \right| | \hat{X}_{l,t} - X_{l,t} | \\
		&+ C \sum_{h = 1}^{d_j} \sum_{l = 1}^{J} \left| \bar{\nabla}_{x_j x_j} \log \bar{f}_{j,t} \right| \left| \bar{\nabla}_{[\bupsilon_j]_h} \log \bar{f}_{j,t} \right| | \hat{X}_{l,t} - X_{l,t} | \\
		&+ C \sum_{h = 1}^{d_j} \sum_{k,l=1}^{J} \left| \bar{\nabla}_{[\bupsilon_j]_h} \log \bar{f}_{j,t} \right| \left| \bar{\nabla}_{x_j} \log \bar{f}_{j,t} \right| \left| \bar{\nabla}_{x_k} \log \bar{f}_{k,t} \right| | \hat{X}_{l,t} - X_{l,t} |.
	\end{align*}
	Finally, we have that
	\begin{equation*}
		\left| \left| \frac{ \bar{\nabla}_{\btheta \btheta} \hat{f}_{j,t}}{\hat{f}_{t}} - \frac{ \bar{\nabla}_{\btheta \btheta} f_{j,t}}{f_{t}} \right| \right|_{2,2} \leq \left| \left| \frac{ \bar{\nabla}_{\btheta \btheta} \hat{f}_{j,t}}{\hat{f}_{t}} - \frac{ \bar{\nabla}_{\btheta \btheta} \hat{f}_{j,t}}{\tilde{f}_{t}} \right| \right|_{2,2} + \left| \left| \frac{ \bar{\nabla}_{\btheta \btheta} \hat{f}_{j,t}}{\tilde{f}_{t}} - \frac{ \bar{\nabla}_{\btheta \btheta} f_{j,t}}{f_{t}} \right| \right|_{2,2}.
	\end{equation*}
	By using the same arguments as in the proof of Lemma \ref{LemmaConsistencyA} once again, we have that
	\begin{align*}
		\left| \left| \frac{ \bar{\nabla}_{\btheta \btheta} \hat{f}_{j,t}}{\hat{f}_{t}} - \frac{ \bar{\nabla}_{\btheta \btheta} f_{j,t}}{f_{t}} \right| \right|_{2,2}
		&\leq C \sum_{h,i=1}^{d_j} \left| \bar{\nabla}_{[\bupsilon_j]_h [\bupsilon_j]_i} \log \hat{f}_{j,t} \right| \left| \left| \hat{\bpi}_{t \mid t-1} - \bpi_{t \mid t-1} \right| \right|_{2} \\
		&+ C \sum_{h,i=1}^{d_j} \left| \bar{\nabla}_{[\bupsilon_j]_h} \log \hat{f}_{j,t} \right| \left| \bar{\nabla}_{[\bupsilon_j]_i} \log \hat{f}_{j,t} \right| \left| \left| \hat{\bpi}_{t \mid t-1} - \bpi_{t \mid t-1} \right| \right|_{2} \\
		&+ C \sum_{h,i = 1}^{d_j} \sum_{l = 1}^{J} \left| \bar{\nabla}_{[\bupsilon_j]_h [\bupsilon_j]_i x_j} \log \bar{f}_{j,t} \right| | \hat{X}_{l,t} - X_{l,t} | \\
		&+ C \sum_{h,i = 1}^{d_j} \sum_{k,l=1}^{J} \left| \bar{\nabla}_{[\bupsilon_j]_h [\bupsilon_j]_i} \log \bar{f}_{j,t} \right| \left| \bar{\nabla}_{x_k} \log \bar{f}_{k,t} \right| | \hat{X}_{l,t} - X_{l,t} | \\
		&+ C \sum_{h,i = 1}^{d_j} \sum_{l = 1}^{J} \left| \bar{\nabla}_{[\bupsilon_j]_h x_j} \log \bar{f}_{j,t} \right| \left| \bar{\nabla}_{[\bupsilon_j]_i} \log \bar{f}_{j,t} \right| | \hat{X}_{l,t} - X_{l,t} | \\
		&+ C \sum_{h,i = 1}^{d_j} \sum_{l = 1}^{J} \left| \bar{\nabla}_{[\bupsilon_j]_i x_j} \log \bar{f}_{j,t} \right| \left| \bar{\nabla}_{[\bupsilon_j]_h} \log \bar{f}_{j,t} \right| | \hat{X}_{l,t} - X_{l,t} | \\
		&+ C \sum_{h,i = 1}^{d_j} \sum_{k,l=1}^{J} \left| \bar{\nabla}_{[\bupsilon_j]_h} \log \bar{f}_{j,t} \right|  \left| \bar{\nabla}_{[\bupsilon_j]_i} \log \bar{f}_{j,t} \right|  \left| \bar{\nabla}_{x_k} \log \bar{f}_{k,t} \right| | \hat{X}_{l,t} - X_{l,t} |.
	\end{align*}
	It thus follows from Lemmas 2.1 and 2.2 in \citet{StraumannMikosch2006} that, for each $j \in \{1,...,J\}$, also $\sup_{\btheta \in \bar{\bTheta}} \left| \left| \frac{\hat{\pi}_{j,t \mid t-1} \nabla_{\btheta \btheta} \hat{f}_{j,t}}{\hat{f}_{t}} - \frac{\pi_{j,t \mid t-1} \nabla_{\btheta \btheta} f_{j,t}}{f_{t}} \right| \right|_{2,2} \overset{e.a.s.}{\rightarrow} 0$ as $t \rightarrow \infty$ since, in addition to the arguments used in the proofs of Lemmas \ref{LemmaConsistencyA} and \ref{LemmaAsymptoticNormalityA}, for each $j \in \{1,...,J\}$, $\sup_{\bupsilon_{j} \in \bar{\bUpsilon}_{j}} || \nabla_{\bupsilon_{j} \bupsilon_{j}}  \hat{X}_{j,t} - \nabla_{\bupsilon_{j} \bupsilon_{j}} X_{j,t} ||_{2,2} \overset{e.a.s.}{\rightarrow} 0$ as $t \rightarrow \infty$ where $(\nabla_{\bupsilon_{j} \bupsilon_{j}}  {X}_{j,t} )_{t \in \mathbb{Z}}$ is stationary for all $\bupsilon_j \in \bar{\bUpsilon}_{j}$ and $\mathbb{E} \left[ \log^{+} \sup_{\bupsilon_{j} \in \bar{\bUpsilon}_{j}} || \nabla_{\bupsilon_{j} \bupsilon_{j}}  {X}_{j,t} ||_{2,2} \right] < \infty$ by Lemma \ref{LemmaInvertibilityDDX}. Moreover, for each $j \in \{1,...,J\}$, there exists an $m_j > 0$ such that $\mathbb{E} \left[ \sup_{\bupsilon_j \in \bar{\bUpsilon}_j} \sup_{x_j \in \mathcal{X}_{j}} \left| \left| \bar{\nabla}_{\bupsilon_j \bupsilon_j} \log f_{j,t} \right| \right|_{2,2}^{m_j} \right] < \infty$, $\mathbb{E} \left[ \sup_{\bupsilon_j \in \bar{\bUpsilon}_j} \sup_{x_j \in \mathcal{X}_{j}} \left| \bar{\nabla}_{x_j x_j x_j} \log f_{j,t}  \right|^{m_j} \right] < \infty$, $\mathbb{E} \left[ \sup_{\bupsilon_j \in \bar{\bUpsilon}_j} \sup_{x_j \in \mathcal{X}_{j}} \left| \left| \bar{\nabla}_{\bupsilon_j x_j x_j} \log f_{j,t} \right| \right|_{2}^{m_j} \right] < \infty$, and $\mathbb{E} \left[ \sup_{\bupsilon_j \in \bar{\bUpsilon}_j} \sup_{x_j \in \mathcal{X}_{j}} \left| \left| \bar{\nabla}_{\bupsilon_j \bupsilon_j x_j} \log f_{j,t} \right| \right|_{2,2}^{m_j} \right] < \infty$ by assumption.
\end{proof}
\begin{lemmaappendix} \label{LemmaAsymptoticNormalityE}
	Under the assumptions in Theorem \ref{TheoremAsymptoticNormality},
	\begin{equation*}
		\sup_{\btheta \in \bar{\bTheta}} || \nabla_{\btheta \btheta} L_T (\btheta) - (-\bI(\btheta)) ||_{2,2} \overset{a.s.}{\rightarrow} 0 \quad \text{as} \quad T \rightarrow \infty.
	\end{equation*}
\end{lemmaappendix}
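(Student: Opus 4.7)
The plan is to invoke the uniform law of large numbers by \citet{Rao1962}, exactly as in the proof of Lemma \ref{LemmaConsistencyB}, but now applied componentwise to the matrix-valued process $(\nabla_{\btheta \btheta} \log f (Y_t;\btheta))_{t \in \mathbb{Z}}$. Writing $\nabla_{\btheta \btheta} L_T(\btheta) = \frac{1}{T} \sum_{t=1}^T \nabla_{\btheta \btheta} \log f (Y_t;\btheta)$ and $-\bI(\btheta) = \mathbb{E}[\nabla_{\btheta \btheta} \log f (Y_t;\btheta)]$, the conclusion follows once we verify (i) that $(\nabla_{\btheta \btheta} \log f (Y_t;\btheta))_{t \in \mathbb{Z}}$ is stationary and ergodic for every $\btheta \in \bTheta$, and (ii) that $\mathbb{E}[\sup_{\btheta \in \bTheta} \| \nabla_{\btheta \btheta} \log f (Y_t;\btheta) \|_{2,2}] < \infty$.

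For (i), I would write $\nabla_{\btheta \btheta} \log f (Y_t;\btheta)$ as a measurable function of $Y_t$, of $(X_{j,t}(\bupsilon_j), \nabla_{\bupsilon_j} X_{j,t}(\bupsilon_j), \nabla_{\bupsilon_j \bupsilon_j} X_{j,t}(\bupsilon_j))_{j=1}^J$, and of $(\bpi_{t \mid t-1}(\btheta), \nabla_{\btheta} \bpi_{t \mid t-1}(\btheta), \nabla_{\btheta \btheta} \bpi_{t \mid t-1}(\btheta))$. Each of these input processes is stationary and ergodic by Assumption \ref{AssumptionY}, Lemmata \ref{LemmaInvertibilityX}, \ref{LemmaInvertibilityDX}, \ref{LemmaInvertibilityDDX}, Corollary \ref{CorollaryInvertibilityPrediction}, and Lemmata \ref{LemmaInvertibilityDPrediction}, \ref{LemmaInvertibilityDDPrediction}, so stationarity and ergodicity is preserved by the continuous transformation guaranteed by Assumptions \ref{AssumptionF3} and \ref{AssumptionPhi2}.

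For (ii), I would expand
\begin{equation*}
	\nabla_{\btheta \btheta} \log f (Y_t;\btheta) = \frac{\nabla_{\btheta \btheta} f (Y_t;\btheta)}{f (Y_t;\btheta)} - \frac{\nabla_{\btheta} f (Y_t;\btheta) \nabla_{\btheta^{\prime}} f (Y_t;\btheta)}{f^2 (Y_t;\btheta)},
\end{equation*}
and expand each term using the decompositions of $\nabla_{\btheta} f_t$ and $\nabla_{\btheta \btheta} f_t$ into contributions of $\nabla_\btheta \pi_{j,t \mid t-1}$, $\nabla_{\btheta \btheta} \pi_{j,t \mid t-1}$, $\bar{\nabla}_{x_j} f_{j,t}$, $\bar{\nabla}_{x_j x_j} f_{j,t}$, $\bar{\nabla}_{\bupsilon_j} f_{j,t}$, $\bar{\nabla}_{\bupsilon_j x_j} f_{j,t}$, $\bar{\nabla}_{\bupsilon_j \bupsilon_j} f_{j,t}$, $\nabla_\btheta X_{j,t}(\bupsilon_j)$, and $\nabla_{\btheta \btheta} X_{j,t}(\bupsilon_j)$, exactly as carried out in the proof of Lemma \ref{LemmaAsymptoticNormalityB}. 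After cancelling each $f_{j,t}/f_t$ factor using $f_{j,t}/f_t \leq 1/\pi_{j,t \mid t-1} \leq C$ (cf.\ Remark \ref{Remark}), every resulting term becomes a product of a score-type $\bar{\nabla}_\ast \log f_{j,t}$ factor and a derivative of $X_{j,t}(\bupsilon_j)$ or of $\bpi_{t \mid t-1}(\btheta)$. Repeatedly applying Hölder's inequality with the conjugate exponents $2k_j/(k_j-1)$, $k_j/(k_j-1)$, and $2$ and their counterparts, one matches each factor to a finite moment granted by Assumption \ref{AssumptionF4} (for the second-order score terms involving $X_{j,t}$), by Lemmata \ref{LemmaInvertibilityDX}--\ref{LemmaInvertibilityDDX} (for $\nabla_{\bupsilon_j} X_{j,t}$ and $\nabla_{\bupsilon_j \bupsilon_j} X_{j,t}$), and by Lemmata \ref{LemmaInvertibilityDPrediction}--\ref{LemmaInvertibilityDDPrediction} (for $\nabla_\btheta \bpi_{t \mid t-1}$ and $\nabla_{\btheta \btheta} \bpi_{t \mid t-1}$).

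The main obstacle is bookkeeping: the second-order term $\nabla_\btheta f_t \nabla_{\btheta'} f_t / f_t^2$ produces a product of two scores, each of which already contains both a derivative of $\bpi_{t \mid t-1}$ and a derivative of $X_{j,t}$, so one has to choose Hölder exponents so that the moment of $\|\nabla_{\bupsilon_j} X_{j,t}\|_2^2$ (which needs $k_j$) is paired with the matching $2k_j/(k_j-1)$ moment of $\bar{\nabla}_{x_j}\log f_{j,t}$ from Assumption \ref{AssumptionF4}, while the moments on $\nabla_\btheta \bpi_{t \mid t-1}$ (only second moment) and on $\bar{\nabla}_{\bupsilon_j}\log f_{j,t}$ (only second moment) are paired together. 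The second-order term $\nabla_{\btheta \btheta} f_t / f_t$ is easier: one only needs the first moment of $\|\nabla_{\btheta \btheta} \bpi_{t \mid t-1}\|_2$ and of $\|\nabla_{\btheta \btheta} X_{j,t}\|_{2,2}$, both of which are supplied by Lemmata \ref{LemmaInvertibilityDDX} and \ref{LemmaInvertibilityDDPrediction}. As a by-product, $\btheta \mapsto \mathbb{E}[\nabla_{\btheta \btheta} \log f(Y_t;\btheta)]$ is continuous, which is implicit in the statement since $\bI(\btheta)$ is to be interpreted as a continuous function of $\btheta$.
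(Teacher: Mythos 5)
Your proposal is correct and follows essentially the same route as the paper: both verify the two hypotheses of the uniform law of large numbers of \citet{Rao1962} — stationarity and ergodicity of $(\nabla_{\btheta \btheta} \log f (Y_t;\btheta))_{t \in \mathbb{Z}}$ via the stationarity of the ingredient processes from Lemmata \ref{LemmaInvertibilityX}--\ref{LemmaInvertibilityDDPrediction}, and the uniform first-moment bound via the decomposition $\nabla_{\btheta \btheta} \log f = \nabla_{\btheta \btheta} f / f - \nabla_{\btheta} f \nabla_{\btheta^{\prime}} f / f^{2}$, cancellation of $f_{j,t}/f_t$ factors against $\pi_{j,t \mid t-1}$, and H\"older pairings of the $2k_j/(k_j-1)$ and $k_j/(k_j-1)$ score moments from Assumption \ref{AssumptionF4} with the $2k_j$ and $k_j$ moments of $\nabla_{\bupsilon_j} X_{j,t}$ and $\nabla_{\bupsilon_j \bupsilon_j} X_{j,t}$. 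Your bookkeeping of which exponents pair with which factors matches the paper's computation, so no gap.
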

\begin{proof}
	As in the proof of Lemma \ref{LemmaConsistencyB}, the conclusion follows from the uniform law of large numbers by \citet{Rao1962} if $(\nabla_{\btheta \btheta} \log f (Y_t;\btheta))_{t \in \mathbb{Z}}$ is stationary and ergodic for all $\btheta \in \bar{\bTheta}$ and $\mathbb{E} \left[ \sup_{\btheta \in \bar{\bTheta}} || \nabla_{\btheta \btheta} \log f (Y_t;\btheta) ||_{2,2} \right] < \infty$.
	
	First, $(\nabla_{\btheta \btheta} \log f (Y_t;\btheta))_{t \in \mathbb{Z}}$ is stationary and ergodic for all $\btheta \in \bar{\bTheta}$ since, in addition to the arguments used in the proofs of Lemmas \ref{LemmaConsistencyB} and \ref{LemmaAsymptoticNormalityB}, for each $j \in \{1,...,J\}$, $(\nabla_{\bupsilon_{j} \bupsilon_{j}} X_{j,t}(\bupsilon_{j}))_{t \in \mathbb{Z}}$ is stationary and ergodic for all $\bupsilon_j \in \bar{\bUpsilon}_j$ by Lemma \ref{LemmaInvertibilityDDX} and, for each $j \in \{1,...,J\}$, $(\nabla_{\btheta \btheta} \pi_{j,t \mid t-1} (\btheta))_{t \in \mathbb{Z}}$ is stationary and ergodic for all $\btheta \in \bar{\bTheta}$ by Lemma \ref{LemmaInvertibilityDDPrediction}. 
	
	We have that
	\begin{equation*}
		|| \nabla_{\btheta \btheta} \log f (Y_t;\btheta) ||_{2,2} \leq \frac{1}{{f}^{2} (Y_t;\btheta)} || \nabla_{\btheta} {f} (Y_t;\btheta) ||_{2}^{2} + \frac{1}{{f} (Y_t;\btheta)} || \nabla_{\btheta \btheta} {f} (Y_t;\btheta) ||_{2,2}
	\end{equation*}
	where
	\begin{align*}
		|| \nabla_{\btheta \btheta} {f} (Y_t;\btheta) ||_{2,2} &\leq \sum_{j=1}^{J} || \nabla_{\btheta \btheta} {\pi}_{j,t \mid t-1} (\btheta) ||_{2,2} f_{j} (Y_t;{X}_{j,t} (\bupsilon_j),\bupsilon_j) \\
		&+ 2 \sum_{j=1}^{J} || \nabla_{\btheta} {\pi}_{j,t \mid t-1} (\btheta) ||_{2} || \nabla_{\btheta} f_{j} (Y_t;{X}_{j,t} (\bupsilon_j),\bupsilon_j) ||_{2} \\
		&+ \sum_{j=1}^{J} {\pi}_{j,t \mid t-1} (\btheta) || \nabla_{\btheta \btheta} f_{j} (Y_t;{X}_{j,t} (\bupsilon_j),\bupsilon_j) ||_{2,2},
	\end{align*}
	so
	\begin{align*}
		&\mathbb{E} \left[ \sup_{\btheta \in \bar{\bTheta}} || \nabla_{\btheta \btheta} \log f (Y_t;\btheta) ||_{2,2} \right] \\
		&\leq \mathbb{E} \left[ \sup_{\btheta \in \bar{\bTheta}} || \nabla_{\btheta} \log {f} (Y_t;\btheta) ||_{2}^{2} \right] \\
		&+ C \sum_{j=1}^{J} \mathbb{E} \left[ \sup_{\btheta \in \bar{\bTheta}} || \nabla_{\btheta \btheta} {\pi}_{j,t \mid t-1} (\btheta) ||_{2,2} \right] \\
		&+ C \sum_{j=1}^{J} \left( \mathbb{E} \left[ \sup_{\btheta \in \bar{\bTheta}} || \nabla_{\btheta} {\pi}_{j,t \mid t-1} (\btheta) ||_{2}^{2} \right] \right)^{1/2} \left( \mathbb{E} \left[ \sup_{\btheta \in \bar{\bTheta}} || \nabla_{\btheta} \log f_{j} (Y_t;{X}_{j,t} (\bupsilon_j),\bupsilon_j) ||_{2}^{2} \right] \right)^{1/2} \\
		&+ \sum_{j=1}^{J} \mathbb{E} \left[ \sup_{\btheta \in \bar{\bTheta}} || \nabla_{\btheta} \log f_{j} (Y_t;{X}_{j,t} (\bupsilon_j),\bupsilon_j) ||_{2}^{2} \right] \\
		&+ \sum_{j=1}^{J} \mathbb{E} \left[ \sup_{\btheta \in \bar{\bTheta}} || \nabla_{\btheta \btheta} \log f_{j} (Y_t;{X}_{j,t} (\bupsilon_j),\bupsilon_j) ||_{2,2} \right].
	\end{align*}
	Moreover, we have that
	\begin{align*}
		|| \nabla_{\btheta \btheta} \log f_{j} (Y_t;{X}_{j,t} (\bupsilon_j),\bupsilon_j) ||_{2,2} &\leq \frac{1}{f_{j}^2 (Y_t;{X}_{j,t} (\bupsilon_j),\bupsilon_j)} || \nabla_{\btheta} f_{j} (Y_t;{X}_{j,t} (\bupsilon_j),\bupsilon_j) ||_{2}^{2} \\
		&+ \frac{1}{f_{j} (Y_t;{X}_{j,t} (\bupsilon_j),\bupsilon_j)} || \nabla_{\btheta \btheta} f_{j} (Y_t;{X}_{j,t} (\bupsilon_j),\bupsilon_j) ||_{2,2}
	\end{align*}
	where
	\begin{align*}
		|| \nabla_{\btheta \btheta} f_{j} (Y_t;{X}_{j,t} (\bupsilon_j),\bupsilon_j) ||_{2,2} &\leq | \bar{\nabla}_{x_j x_j} f_{j} (Y_t;{X}_{j,t} (\bupsilon_j),\bupsilon_j) | || \nabla_{\btheta} {X}_{j,t} (\bupsilon_j) ||_{2}^{2} \\
		&+ | \bar{\nabla}_{x_j} f_{j} (Y_t;{X}_{j,t} (\bupsilon_j),\bupsilon_j) | || \nabla_{\btheta \btheta} {X}_{j,t}  (\bupsilon_j) ||_{2,2} \\
		&+ 2 || \bar{\nabla}_{\btheta x_j} f_{j} (Y_t;{X}_{j,t} (\bupsilon_j),\bupsilon_j) ||_{2} || \nabla_{\btheta} {X}_{j,t} (\bupsilon_j) ||_{2} \\
		&+ || \bar{\nabla}_{\btheta \btheta} f_{j} (Y_t;{X}_{j,t} (\bupsilon_j),\bupsilon_j) ||_{2,2},
	\end{align*}
	so
	\begin{align*}
		&\mathbb{E} \left[ \sup_{\btheta \in \bar{\bTheta}} || \nabla_{\btheta \btheta} \log f_{j} (Y_t;{X}_{j,t} (\bupsilon_j),\bupsilon_j) ||_{2,2} \right] \\
		&\leq \mathbb{E} \left[ \sup_{\btheta \in \bar{\bTheta}} || \nabla_{\btheta} \log f_{j} (Y_t;{X}_{j,t} (\bupsilon_j),\bupsilon_j) ||_{2}^{2} \right] \\
		&+ \mathbb{E} \left[ \sup_{\btheta \in \bar{\bTheta}} | \bar{\nabla}_{x_j} \log f_{j} (Y_t;{X}_{j,t} (\bupsilon_j),\bupsilon_j) |^{2} || \nabla_{\btheta} {X}_{j,t} (\bupsilon_j) ||_{2}^{2} \right] \\
		&+ \mathbb{E} \left[ \sup_{\btheta \in \bar{\bTheta}} | \bar{\nabla}_{x_j x_j} \log f_{j} (Y_t;{X}_{j,t} (\bupsilon_j),\bupsilon_j) | || \nabla_{\btheta} {X}_{j,t} (\bupsilon_j) ||_{2}^{2} \right] \\
		&+ \mathbb{E} \left[ \sup_{\btheta \in \bar{\bTheta}} | \bar{\nabla}_{x_j} \log f_{j} (Y_t;{X}_{j,t} (\bupsilon_j),\bupsilon_j) | || \nabla_{\btheta \btheta} {X}_{j,t}  (\bupsilon_j) ||_{2,2} \right] \\
		&+ 2 \left( \mathbb{E} \left[ \sup_{\btheta \in \bar{\bTheta}} | \bar{\nabla}_{x_j} \log f_{j} (Y_t;{X}_{j,t} (\bupsilon_j),\bupsilon_j) |^{2} || \nabla_{\btheta} {X}_{j,t} (\bupsilon_j) ||_{2}^{2} \right] \right)^{1/2} \\
		&\quad \cdot \left( \mathbb{E} \left[ \sup_{\btheta \in \bar{\bTheta}} || \bar{\nabla}_{\btheta} \log f_{j} (Y_t;{X}_{j,t} (\bupsilon_j),\bupsilon_j) ||_{2}^{2} \right] \right)^{1/2} \\
		&+ 2 \mathbb{E} \left[ \sup_{\btheta \in \bar{\bTheta}} || \bar{\nabla}_{\btheta x_j} \log f_{j} (Y_t;{X}_{j,t} (\bupsilon_j),\bupsilon_j) ||_{2} || \nabla_{\btheta} {X}_{j,t} (\bupsilon_j) ||_{2} \right] \\
		&+ \mathbb{E} \left[ \sup_{\btheta \in \bar{\bTheta}} || \bar{\nabla}_{\btheta} \log f_{j} (Y_t;{X}_{j,t} (\bupsilon_j),\bupsilon_j) ||_{2}^{2} \right]   \\
		&+ \mathbb{E} \left[ \sup_{\btheta \in \bar{\bTheta}} || \bar{\nabla}_{\btheta \btheta} \log f_{j} (Y_t;{X}_{j,t} (\bupsilon_j),\bupsilon_j) ||_{2,2} \right].
	\end{align*}
	Hence, $\mathbb{E} \left[ \sup_{\btheta \in \bar{\bTheta}} || \nabla_{\btheta \btheta} \log f (Y_t;\btheta) ||_{2,2} \right] < \infty$ since, in addition to the arguments used in the proof of Lemma \ref{LemmaAsymptoticNormalityB}, for each $j \in \{1,...,J\}$, $\mathbb{E} \left[ \sup_{\btheta \in \bar{\bTheta}} || \nabla_{\btheta \btheta} {\pi}_{j,t \mid t-1} (\btheta) ||_{2,2} \right] < \infty$ by Lemma \ref{LemmaInvertibilityDDPrediction} and, for each $j \in \{1,...,J\}$, $\mathbb{E} \left[ \sup_{\bupsilon_j \in \bar{\bUpsilon}_{j}} | \bar{\nabla}_{x_j x_j} \log f_{j} (Y_t;{X}_{j,t} (\bupsilon_j),\bupsilon_j) | || \nabla_{\bupsilon_j} {X}_{j,t} (\bupsilon_j) ||_{2}^{2} \right] < \infty$, $\mathbb{E} \left[ \sup_{\bupsilon_j \in \bar{\bUpsilon}_{j}} | \bar{\nabla}_{x_j} \log f_{j} (Y_t;{X}_{j,t} (\bupsilon_j),\bupsilon_j) | || \nabla_{\bupsilon_j \bupsilon_j} {X}_{j,t}  (\bupsilon_j) ||_{2,2} \right] < \infty$, $\mathbb{E} \left[ \sup_{\bupsilon_j \in \bar{\bUpsilon}_{j}} || \bar{\nabla}_{\bupsilon_j x_j} \log f_{j} (Y_t;{X}_{j,t} (\bupsilon_j),\bupsilon_j) ||_{2} || \nabla_{\bupsilon_j} {X}_{j,t} (\bupsilon_j) ||_{2} \right] < \infty$, and $\mathbb{E} \left[ \sup_{\bupsilon_j \in \bar{\bUpsilon}_{j}} || \bar{\nabla}_{\bupsilon_j \bupsilon_j} \log f_{j} (Y_t;{X}_{j,t} (\bupsilon_j),\bupsilon_j) ||_{2,2} \right] < \infty$ by Assumption \ref{AssumptionF4}.
\end{proof}
Finally, Condition (vi) is true by assumption.

\subsection{Proof of Proposition \ref{prop:varcovarmatrix}}

Note that
\begin{align*}
	|| \nabla_{\btheta \btheta} \hat{L}_T (\hat{\btheta}_T) - (-\bI(\btheta_0)) ||_{2,2} 
	&\leq \sup_{\btheta \in \bar{\bTheta}} || \nabla_{\btheta \btheta} \hat{L}_T (\btheta) - \nabla_{\btheta \btheta} L_T (\btheta) ||_{2,2} \\
	&+ \sup_{\btheta \in \bar{\bTheta}} || \nabla_{\btheta \btheta} L_T (\btheta) - (-\bI(\btheta)) ||_{2,2} \\
	&+ || (-\bI(\hat{\btheta}_T)) - (-\bI(\btheta_0)) ||_{2,2}.
\end{align*}
The conclusion thus follows since $\sup_{\btheta \in \bar{\bTheta}} || \nabla_{\btheta \btheta} \hat{L}_T (\btheta) - \nabla_{\btheta \btheta} L_T (\btheta) ||_{2,2} \overset{a.s.}{\rightarrow} 0$ as $T \rightarrow \infty$ by Lemma \ref{LemmaAsymptoticNormalityD}, $\sup_{\btheta \in \bar{\bTheta}} || \nabla_{\btheta \btheta} L_T (\btheta) - (-\bI(\btheta)) ||_{2,2} \overset{a.s.}{\rightarrow} 0$ as $T \rightarrow \infty$ by Lemma \ref{LemmaAsymptoticNormalityE}, and, by the continuous mapping theorem, $|| (-\bI(\hat{\btheta}_T)) - (-\bI(\btheta_0)) ||_{2,2} \overset{a.s.}{\rightarrow} 0$ as $T \rightarrow \infty$ since $\btheta \mapsto (-\bI(\btheta))$ is continuous by the uniform law of large numbers used in the proof of Lemma \ref{LemmaAsymptoticNormalityE} and $\hat{\btheta}_T \overset{a.s.}{\rightarrow} \btheta_0$ as $T \rightarrow \infty$ by Theorem \ref{TheoremConsistency}.

\section{Other Proofs} \label{sec:OP}

\subsection{Proof of Lemma \ref{LemmaInvertibilityDX}}

For each $j \in \{1,...,J\}$, $(\nabla_{\bupsilon_{j}} X_{j,t} (\bupsilon_{j}))_{t \in \mathbb{Z}}$ is a stochastic process taking values in $\mathbb{R}^{d_j}$ given by
\begin{equation*}
	\nabla_{\bupsilon_{j}} X_{j,t+1} (\bupsilon_{j}) = \bold{\phi}_{j,t}^{x} (\nabla_{\bupsilon_{j}} X_{j,t} (\bupsilon_{j});\bupsilon_{j})
\end{equation*}
with
\begin{equation*}
	\bold{\phi}_{j,t}^{x} (\nabla_{\bupsilon_{j}} X_{j,t} (\bupsilon_{j});\bupsilon_{j}) = \bar{\nabla}_{x_j} \phi_{j} (Y_t,X_{j,t} (\bupsilon_{j});\bupsilon_{j}) \nabla_{\bupsilon_{j}} X_{j,t} (\bupsilon_{j}) + \bar{\nabla}_{\bupsilon_j} \phi_{j} (Y_t,X_{j,t} (\bupsilon_{j});\bupsilon_{j}).
\end{equation*}
The first conclusion thus follows from Theorem 2.10 in \cite{StraumannMikosch2006} since, for each $j \in \{1,...,J\}$,
\begin{enumerate} [(i)]
	\item $\mathbb{E} [\log^{+} \sup_{\bupsilon_{j} \in \bar{\bUpsilon}_{j}} || \bar{\nabla}_{\bupsilon_j} \phi_{j} (Y_t,X_{j,t} (\bupsilon_{j});\bupsilon_{j}) ||_{2} ] < \infty$,
	\item $\mathbb{E} [\log^{+} \sup_{\bupsilon_{j} \in \bar{\bUpsilon}_{j}} | \bar{\nabla}_{x_j} \phi_{j} (Y_t,X_{j,t} (\bupsilon_{j});\bupsilon_{j}) | ] < \infty$, and
	\item $-\infty \leq \mathbb{E} [\log \sup_{\bupsilon_{j} \in \bar{\bUpsilon}_{j}} | \bar{\nabla}_{x_j} \phi_{j} (Y_t,X_{j,t} (\bupsilon_{j});\bupsilon_{j}) | ] < 0$
\end{enumerate}
by assumption.\footnote{Conditions (ii) and (iii) follow from Conditions (ii) and (iii) in Lemma \ref{LemmaInvertibilityX}.}

Moreover, for each $j \in \{1,...,J\}$, $(\nabla_{\bupsilon_{j}} \hat{X}_{j,t} (\bupsilon_{j}))_{t \in \mathbb{N}}$ is a stochastic process taking values in $\mathbb{R}^{d_j}$ given by
\begin{equation*}
	\nabla_{\bupsilon_{j}} \hat{X}_{j,t+1} (\bupsilon_{j}) = \hat{\bold{\phi}}_{j,t}^{x} (\nabla_{\bupsilon_{j}} \hat{X}_{j,t} (\bupsilon_{j});\bupsilon_{j})
\end{equation*}
with
\begin{equation*}
	\hat{\bold{\phi}}_{j,t}^{x} (\nabla_{\bupsilon_{j}} \hat{X}_{j,t} (\bupsilon_{j});\bupsilon_{j}) = \bar{\nabla}_{x_j} \phi_{j} (Y_t,\hat{X}_{j,t} (\bupsilon_{j});\bupsilon_{j}) \nabla_{\bupsilon_{j}} \hat{X}_{j,t} (\bupsilon_{j}) + \bar{\nabla}_{\bupsilon_j} \phi_{j} (Y_t,\hat{X}_{j,t} (\bupsilon_{j});\bupsilon_{j}).
\end{equation*}
The second conclusion thus follows from Theorem 2.10 in \cite{StraumannMikosch2006} as well since, for each $j \in \{1,...,J\}$,
\begin{enumerate} [(i)]
	\item $\mathbb{E} [\log^{+} \sup_{\bupsilon_{j} \in \bar{\bUpsilon}_{j}} || \nabla_{\bupsilon_{j}} X_{j,t} (\bupsilon_{j}) ||_{2} ] < \infty$,
	\item $\sup_{\bupsilon_{j} \in \bar{\bUpsilon}_{j}} || \bar{\nabla}_{\bupsilon_j} \phi_{j} (Y_t,\hat{X}_{j,t} (\bupsilon_{j});\bupsilon_{j}) - \bar{\nabla}_{\bupsilon_j} \phi_{j} (Y_t,X_{j,t} (\bupsilon_{j});\bupsilon_{j}) ||_{2} \overset{e.a.s.}{\rightarrow} 0$ as $t \rightarrow \infty$, and 
	\item $\sup_{\bupsilon_{j} \in \bar{\bUpsilon}_{j}} | \bar{\nabla}_{x_j} \phi_{j} (Y_t,\hat{X}_{j,t} (\bupsilon_{j});\bupsilon_{j}) - \bar{\nabla}_{x_j} \phi_{j} (Y_t,X_{j,t} (\bupsilon_{j});\bupsilon_{j}) | \overset{e.a.s.}{\rightarrow} 0$ as $t \rightarrow \infty$
\end{enumerate}
by assumption.

\subsection{Proof of Lemma \ref{LemmaInvertibilityDDX}}

For each $j \in \{1,...,J\}$, $(\nabla_{\bupsilon_{j} \bupsilon_{j}} X_{j,t} (\bupsilon_{j}))_{t \in \mathbb{Z}}$ is a stochastic process taking values in $\mathbb{R}^{d_j \times d_j}$ given by
\begin{equation*}
	\nabla_{\bupsilon_{j} \bupsilon_{j}} X_{j,t+1} (\bupsilon_{j}) = \bold{\phi}_{j,t}^{xx} (\nabla_{\bupsilon_{j} \bupsilon_{j}} X_{j,t} (\bupsilon_{j});\bupsilon_{j})
\end{equation*}
with
\begin{align*}
	&\bold{\phi}_{j,t}^{xx} (\nabla_{\bupsilon_{j} \bupsilon_{j}} X_{j,t} (\bupsilon_{j});\bupsilon_{j}) \\
	&= \bar{\nabla}_{x_j} \phi_{j} (Y_t,X_{j,t} (\bupsilon_{j});\bupsilon_{j}) \nabla_{\bupsilon_{j} \bupsilon_{j}} X_{j,t} (\bupsilon_{j}) + \bar{\nabla}_{x_j x_j} \phi_{j} (Y_t,X_{j,t} (\bupsilon_{j});\bupsilon_{j}) \nabla_{\bupsilon_{j}} X_{j,t} (\bupsilon_{j}) \nabla_{\bupsilon_{j}^{\prime}} X_{j,t} (\bupsilon_{j}) \\
	&+ \nabla_{\bupsilon_{j}} X_{j,t} (\bupsilon_{j}) \bar{\nabla}_{x_j \bupsilon_j} \phi_{j} (Y_t,X_{j,t} (\bupsilon_{j});\bupsilon_{j}) + \bar{\nabla}_{\bupsilon_j x_j} \phi_{j} (Y_t,X_{j,t} (\bupsilon_{j});\bupsilon_{j}) \nabla_{\bupsilon_{j}^{\prime}} X_{j,t} (\bupsilon_{j}) \\
	&+ \bar{\nabla}_{\bupsilon_j \bupsilon_j} \phi_{j} (Y_t,X_{j,t} (\bupsilon_{j});\bupsilon_{j}).
\end{align*}
The first conclusion thus follows from Theorem 2.10 in \cite{StraumannMikosch2006} since, for each $j \in \{1,...,J\}$,
\begin{enumerate} [(i)]
	\item $\mathbb{E} [\log^{+} \sup_{\bupsilon_{j} \in \bar{\bUpsilon}_{j}} | \bar{\nabla}_{x_j x_j} \phi_{j} (Y_t,X_{j,t} (\bupsilon_{j});\bupsilon_{j}) | ] < \infty$, $\mathbb{E} [\log^{+} \sup_{\bupsilon_{j} \in \bar{\bUpsilon}_{j}} || \bar{\nabla}_{\bupsilon_j x_j} \phi_{j} (Y_t,X_{j,t} (\bupsilon_{j});\bupsilon_{j}) ||_{2} ] < \infty$ and $\mathbb{E} [\log^{+} \sup_{\bupsilon_{j} \in \bar{\bUpsilon}_{j}} || \bar{\nabla}_{\bupsilon_j \bupsilon_j} \phi_{j} (Y_t,X_{j,t} (\bupsilon_{j});\bupsilon_{j}) ||_{2,2} ] < \infty$,
	\item $\mathbb{E} [\log^{+} \sup_{\bupsilon_{j} \in \bar{\bUpsilon}_{j}} | \bar{\nabla}_{x_j} \phi_{j} (Y_t,X_{j,t} (\bupsilon_{j});\bupsilon_{j}) | ] < \infty$, and
	\item $-\infty \leq \mathbb{E} [\log \sup_{\bupsilon_{j} \in \bar{\bUpsilon}_{j}} | \bar{\nabla}_{x_j} \phi_{j} (Y_t,X_{j,t} (\bupsilon_{j});\bupsilon_{j}) | ] < 0$
\end{enumerate}
by assumption.

Moreover, for each $j \in \{1,...,J\}$, $(\nabla_{\bupsilon_{j} \bupsilon_{j}} \hat{X}_{j,t} (\bupsilon_{j}))_{t \in \mathbb{N}}$ is a stochastic process taking values in $\mathbb{R}^{d_j \times d_j}$ given by
\begin{equation*}
	\nabla_{\bupsilon_{j} \bupsilon_{j}} \hat{X}_{j,t+1} (\bupsilon_{j}) = \hat{\bold{\phi}}_{j,t}^{xx} (\nabla_{\bupsilon_{j} \bupsilon_{j}} \hat{X}_{j,t} (\bupsilon_{j});\bupsilon_{j})
\end{equation*}
with
\begin{align*}
	&\hat{\bold{\phi}}_{j,t}^{xx} (\nabla_{\bupsilon_{j} \bupsilon_{j}} \hat{X}_{j,t} (\bupsilon_{j});\bupsilon_{j}) \\
	&= \bar{\nabla}_{x_j} \phi_{j} (Y_t,\hat{X}_{j,t} (\bupsilon_{j});\bupsilon_{j}) \nabla_{\bupsilon_{j} \bupsilon_{j}} \hat{X}_{j,t} (\bupsilon_{j}) + \bar{\nabla}_{x_j x_j} \phi_{j} (Y_t,\hat{X}_{j,t} (\bupsilon_{j});\bupsilon_{j}) \nabla_{\bupsilon_{j}} \hat{X}_{j,t} (\bupsilon_{j}) \nabla_{\bupsilon_{j}^{\prime}} \hat{X}_{j,t} (\bupsilon_{j}) \\
	&+ \nabla_{\bupsilon_{j}} \hat{X}_{j,t} (\bupsilon_{j}) \bar{\nabla}_{x_j \bupsilon_j} \phi_{j} (Y_t,\hat{X}_{j,t} (\bupsilon_{j});\bupsilon_{j}) + \bar{\nabla}_{\bupsilon_j x_j} \phi_{j} (Y_t,\hat{X}_{j,t} (\bupsilon_{j});\bupsilon_{j}) \nabla_{\bupsilon_{j}^{\prime}} \hat{X}_{j,t} (\bupsilon_{j}) \\
	&+ \bar{\nabla}_{\bupsilon_j \bupsilon_j} \phi_{j} (Y_t,\hat{X}_{j,t} (\bupsilon_{j});\bupsilon_{j}).
\end{align*}
The second conclusion thus follows from Theorem 2.10 in \cite{StraumannMikosch2006} as well since, for each $j \in \{1,...,J\}$,
\begin{enumerate} [(i)]
	\item $\mathbb{E} [ \log^{+} \sup_{\bupsilon_{j} \in \bar{\bUpsilon}_{j}} || \nabla_{\bupsilon_{j} \bupsilon_{j}} X_{j,t} (\bupsilon_{j}) ||_{2,2} ] < \infty$,
	\item $\sup_{\bupsilon_{j} \in \bar{\bUpsilon}_{j}} | \bar{\nabla}_{x_j x_j} \phi_{j} (Y_t,\hat{X}_{j,t} (\bupsilon_{j});\bupsilon_{j}) - \bar{\nabla}_{x_j x_j} \phi_{j} (Y_t,X_{j,t} (\bupsilon_{j});\bupsilon_{j}) | \overset{e.a.s.}{\rightarrow} 0$ as $t \rightarrow \infty$, $\sup_{\bupsilon_{j} \in \bar{\bUpsilon}_{j}} || \bar{\nabla}_{\bupsilon_j x_j} \phi_{j} (Y_t,\hat{X}_{j,t} (\bupsilon_{j});\bupsilon_{j}) - \bar{\nabla}_{\bupsilon_j x_j} \phi_{j} (Y_t,X_{j,t} (\bupsilon_{j});\bupsilon_{j}) ||_{2} \overset{e.a.s.}{\rightarrow} 0$ as $t \rightarrow \infty$ and $\sup_{\bupsilon_{j} \in \bar{\bUpsilon}_{j}} || \bar{\nabla}_{\bupsilon_j \bupsilon_j} \phi_{j} (Y_t,\hat{X}_{j,t} (\bupsilon_{j});\bupsilon_{j}) - \bar{\nabla}_{\bupsilon_j \bupsilon_j} \phi_{j} (Y_t,X_{j,t} (\bupsilon_{j});\bupsilon_{j}) ||_{2,2} \overset{e.a.s.}{\rightarrow} 0$ as $t \rightarrow \infty$, and 
	\item $\sup_{\bupsilon_{j} \in \bar{\bUpsilon}_{j}} | \bar{\nabla}_{x_j} \phi_{j} (Y_t,\hat{X}_{j,t} (\bupsilon_{j});\bupsilon_{j}) - \bar{\nabla}_{x_j} \phi_{j} (Y_t,X_{j,t} (\bupsilon_{j});\bupsilon_{j}) | \overset{e.a.s.}{\rightarrow} 0$ as $t \rightarrow \infty$
\end{enumerate}
by assumption.\footnote{Condition (iii) follows from Condition (iii) in Lemma \ref{LemmaInvertibilityDX}.}

\subsection{Proof of Lemma \ref{LemmaInvertibilityDPrediction}}

The conclusion follows from Lemma \ref{LemmaInvertibilityFilter} and the following lemma.
\begin{lemmaappendix} \label{LemmaInvertibilityDFilter}
	Assume that Assumptions \ref{AssumptionY}-\ref{AssumptionPhi1}, \ref{AssumptionF3}-\ref{AssumptionF4}, and the conditions in Lemmas \ref{LemmaInvertibilityX}-\ref{LemmaInvertibilityDX} hold. Moreover, assume that for each $j \in \{1,...,J\}$, there exists an $m_j > 0$ such that
	\begin{enumerate} [(i)]
		\item $\mathbb{E} [ \sup_{\bupsilon_j \in \bar{\bUpsilon}_j} \sup_{x_j \in \mathcal{X}_{j}} | | \bar{\nabla}_{\bupsilon_j} \log f_{j} (Y_t;x_j,\bupsilon_j) | |_{2}^{m_j} ] < \infty$,
		\item $\mathbb{E} [ \sup_{\bupsilon_j \in \bar{\bUpsilon}_j} \sup_{x_j \in \mathcal{X}_{j}} | \bar{\nabla}_{x_j x_j} \log f_{j} (Y_t;x_j,\bupsilon_j) |^{m_j} ] < \infty$, and
		\item $\mathbb{E} [ \sup_{\bupsilon_j \in \bar{\bUpsilon}_j} \sup_{x_j \in \mathcal{X}_{j}} | | \bar{\nabla}_{\bupsilon_j x_j} \log f_{j} (Y_t;x_j,\bupsilon_j) | |_{2}^{m_j} ] < \infty$.
	\end{enumerate}
	Then, $( \nabla_{\btheta} \bpi_{t \mid t} (\btheta) )_{t \in \mathbb{Z}}$ is stationary and ergodic for all $\btheta \in \bar{\bTheta}$ with $\mathbb{E} [ \sup_{\btheta \in \bar{\bTheta}} || \nabla_{\btheta} \bpi_{t \mid t} (\btheta) ||_{2}^{2} ] < \infty$ and
	\begin{equation*}
		\sup_{\btheta \in \bar{\bTheta}} | | \nabla_{\btheta} \hat{\bpi}_{t \mid t} (\btheta) - \nabla_{\btheta} \bpi_{t \mid t} (\btheta) | |_{2} \overset{e.a.s.}{\rightarrow} 0 \quad \text{as} \quad t \rightarrow \infty
	\end{equation*}
	for any initialisation $\nabla_{\btheta} \hat{\bpi}_{0 \mid 0} (\btheta) \in \mathbb{R}^{(J-1)d}$.
\end{lemmaappendix}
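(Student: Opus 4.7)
The plan is to proceed in parallel with the proof of Lemma \ref{LemmaInvertibilityFilter}, splitting the argument into a stationary part and a non-stationary part. Differentiating the recursion $\bpi_{t \mid t}(\btheta) = \bphi_t(\bpi_{t-1 \mid t-1}(\btheta);\btheta)$ in $\btheta$ yields the affine stochastic recursion
\[
\nabla_{\btheta} \bpi_{t \mid t}(\btheta) = \bA_t(\btheta)\, \nabla_{\btheta} \bpi_{t-1 \mid t-1}(\btheta) + \bB_t(\btheta),
\]
where $\bA_t(\btheta) := \bar{\nabla}_{\bs} \bphi_t(\bpi_{t-1 \mid t-1}(\btheta);\btheta)$ and $\bB_t(\btheta) := \bar{\nabla}_{\btheta} \bphi_t(\bpi_{t-1 \mid t-1}(\btheta);\btheta)$. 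Since $\bphi_t(\cdot;\btheta)$ is Lipschitz with constant $\Lambda(\bphi_t;\btheta)$, the operator norm of $\bA_t(\btheta)$ is controlled by $\Lambda(\bphi_t;\btheta)$, and the $r$-step product satisfies $\sup_{\btheta \in \bTheta} || \bA_t(\btheta) \cdots \bA_{t-r+1}(\btheta) ||_{2,2} \leq \alpha^r C$ for some $\alpha \in (0,1)$ by Lemma \ref{LemmaInvertibilityB}. This $r$-step contraction allows me to apply the linear-SRE variant of Theorem 3.1 in \citet{Bougerol1993} (as in the proofs of Lemmata \ref{LemmaInvertibilityDX} and \ref{LemmaInvertibilityDDX}) to obtain a unique stationary and ergodic solution $(\nabla_{\btheta} \bpi_{t \mid t}(\btheta))_{t \in \mathbb{Z}}$. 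The required log-moment bounds on $\bA_t$ and $\bB_t$ come from computing the partial derivatives of the ratios defining $\bphi_t$ and bounding them—term by term—using Remark \ref{Remark} (for bounded inverse denominators), Conditions (i)-(iii) on the $\log f_j$-derivatives, and the stationarity of $(\nabla_{\bupsilon_j} X_{j,t}(\bupsilon_j))_{t \in \mathbb{Z}}$ from Lemma \ref{LemmaInvertibilityDX}.

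For the $L^2$ bound $\mathbb{E} [ \sup_{\btheta \in \bTheta} || \nabla_{\btheta} \bpi_{t \mid t}(\btheta) ||_2^2 ] < \infty$, the linear SRE has the explicit series solution
\[
\nabla_{\btheta} \bpi_{t \mid t}(\btheta) = \sum_{k=0}^{\infty} \bA_t(\btheta) \cdots \bA_{t-k+1}(\btheta)\, \bB_{t-k}(\btheta),
\]
and combined with Minkowski's inequality and the geometric decay of the coefficient products from Lemma \ref{LemmaInvertibilityB}, this reduces the claim to $\mathbb{E} [ \sup_{\btheta \in \bTheta} || \bB_t(\btheta) ||_2^2 ] < \infty$. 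Bounding $\bB_t$ by a polynomial in $\bar{\nabla}_{x_j} \log f_{j,t}$, $\bar{\nabla}_{\bupsilon_j} \log f_{j,t}$, and $\nabla_{\bupsilon_j} X_{j,t}(\bupsilon_j)$ and applying Hölder's inequality, the required integrability follows from Assumption \ref{AssumptionF4} (providing the $L^{2k_j/(k_j-1)}$ and $L^2$ bounds on the log-density derivatives) and the $L^{2k_j}$ bound on $\nabla_{\bupsilon_j} X_{j,t}$ from Lemma \ref{LemmaInvertibilityDX}.

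For the e.a.s.\ convergence of the non-stationary sequence, following the template of the second half of the proof of Lemma \ref{LemmaInvertibilityFilter} (and of Theorem 2.10 in \citet{StraumannMikosch2006}), it suffices to show that $\sup_{\btheta \in \bTheta} || \hat{\bA}_t(\btheta) - \bA_t(\btheta) ||_{2,2}$ and $\sup_{\btheta \in \bTheta} || \hat{\bB}_t(\btheta) - \bB_t(\btheta) ||_2$ both converge e.a.s.\ to zero, where $\hat{\bA}_t$ and $\hat{\bB}_t$ are defined from $\hat{\bphi}_t$ and $\hat{\bpi}_{t-1 \mid t-1}$. Each difference is bounded via mean-value arguments (in the spirit of the $m_{1,t}^\btheta, m_{2,t}^\btheta, m_{3,t}^\btheta$ computations in the proof of Lemma \ref{LemmaInvertibilityFilter}) by a product of a polynomial in $\log f_j$-derivatives and the e.a.s.-vanishing quantities $|| \hat{\bpi}_{t-1 \mid t-1} - \bpi_{t-1 \mid t-1} ||_2$, $| \hat{X}_{j,t} - X_{j,t} |$, and $|| \nabla_{\bupsilon_j} \hat{X}_{j,t} - \nabla_{\bupsilon_j} X_{j,t} ||_2$, supplied by Lemmata \ref{LemmaInvertibilityFilter}, \ref{LemmaInvertibilityX}, and \ref{LemmaInvertibilityDX}, with the multiplicative integrability needed by Lemmata 2.1-2.2 in \citet{StraumannMikosch2006} furnished by Assumption \ref{AssumptionF4} and Conditions (i)-(iii).

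The main obstacle will be the proliferation of terms produced by repeatedly applying the chain rule and the quotient rule to the filter map $\bphi_t$, which itself involves ratios of the form $f_j / \sum_k \pi_k f_k$. Each differentiation introduces additional $\log f_j$-derivatives and ratios that must be uniformly bounded in $\btheta$ and in $\bs \in \mathcal{S}$; carefully tracking the moment exponents through the Hölder applications so that they match those provided by Assumption \ref{AssumptionF4} and the $k_j$ of Lemma \ref{LemmaInvertibilityDX} is the core technical exercise, though it proceeds mechanically once the template from the proofs of Lemma \ref{LemmaInvertibilityFilter} and Lemma \ref{LemmaAsymptoticNormalityA} is in hand.
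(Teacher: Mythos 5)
Your proposal follows essentially the same route as the paper: the paper's map $\bphi_t^{\bpi}(\bx;\btheta)$ is exactly your affine recursion $\bA_t(\btheta)\bx + \bB_t(\btheta)$ (with the slope given by the reduced $\bpi$-gradient of the filter map and the intercept by its $\btheta$-partial), the $r$-step contraction is inherited from $\Lambda(\bphi_t^{(r)};\btheta)$ via the same Lemma as in the proof of Lemma \ref{LemmaInvertibilityFilter}, the $L^2$ bound comes from the same geometric series representation reducing to a second moment on the intercept, and the e.a.s.\ convergence is obtained by the same perturbation argument applied to $\hat{\bA}_t - \bA_t$ and $\hat{\bB}_t - \bB_t$ with the same mean-value/moment bookkeeping. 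No substantive difference in approach.
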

\begin{proof}
	First, $(\nabla_{\btheta} \bpi_{t \mid t} (\btheta))_{t \in \mathbb{Z}}$ is a stochastic process taking values in $\mathbb{R}^{(J-1)d}$ given by
	\begin{equation*}
		\nabla_{\btheta} \bpi_{t \mid t} (\btheta) = \bphi_{t}^{\bpi} (\nabla_{\btheta} \bpi_{t-1 \mid t-1} (\btheta) ; \btheta)
	\end{equation*}
	with
	\begin{align*}
		&\left[ \bphi_{t}^{\bpi} (\nabla_{\btheta} \bpi_{t-1 \mid t-1} (\btheta) ; \btheta) \right]_{(n-1)d+m} \\
		&= \sum_{j=1}^{J-1} \left( \bar{\nabla}_{\left[ \bpi \right]_j} \left[ \bphi_{t} (\bpi_{t-1 \mid t-1} (\btheta) ; \btheta) \right]_{n} - \bar{\nabla}_{\left[ \bpi \right]_J} \left[ \bphi_{t} (\bpi_{t-1 \mid t-1} (\btheta) ; \btheta) \right]_{n} \right) \left[ \nabla_{\btheta} \bpi_{t-1 \mid t-1} (\btheta) \right]_{(j-1)d+m} \\
		&+ \bar{\nabla}_{\left[ \btheta \right]_m} \left[ \bphi_{t} (\bpi_{t-1 \mid t-1} (\btheta) ; \btheta) \right]_{n}, \quad (n,m) \in \{1,...,J-1\} \times \{1,...,d\};
	\end{align*}
	see Section \ref{SupplementaryMaterial} for details. Let
	\begin{equation*}
		\Lambda (\bphi_{t}^{\bpi};\btheta) = \sup_{\underset{\bx \neq \by}{\bx,\by \in \mathbb{R}^{(J-1)d}}} \frac{|| \bphi_{t}^{\bpi} (\bx;\btheta) - \bphi_{t}^{\bpi} (\by;\btheta) ||_{2}}{|| \bx - \by ||_{2}}.
	\end{equation*}
	The fact that $( \nabla_{\btheta} \bpi_{t \mid t} (\btheta) )_{t \in \mathbb{Z}}$ is stationary and ergodic for all $\btheta \in \bar{\bTheta}$ with $\mathbb{E} [ \sup_{\btheta \in \bar{\bTheta}} || \nabla_{\btheta} \bpi_{t \mid t} (\btheta) ||_{2}^{2} ] < \infty$ follows from Theorem 2.10 in \cite{StraumannMikosch2006} if
	\begin{enumerate} [(i)]
		\item $\mathbb{E} [\log^{+} \sup_{\btheta \in \bar{\bTheta}} ||\bphi_{t}^{\bpi} (\mathbf{0};\btheta)||_{2}] < \infty$,
		\item $\mathbb{E} [\log^{+} \sup_{\btheta \in \bar{\bTheta}} \Lambda (\bphi_{t}^{\bpi};\btheta)] < \infty$, and
		\item there exists an $r \in \mathbb{N}$ such that $- \infty \leq \mathbb{E} [\log \sup_{\btheta \in \bar{\bTheta}} \Lambda ((\bphi_{t}^{\bpi})^{(r)};\btheta)] < 0$.
	\end{enumerate}
	The former is a direct consequence of the theorem, and the latter is a bi-product of the theorem; see later. Condition (i) follows if, for each $(n,m) \in \{1,...,J-1\} \times \{1,...,d\}$, $\mathbb{E} [\log^{+} \sup_{\btheta \in \bar{\bTheta}} | \bar{\nabla}_{[ \btheta ]_m} [ \bphi_{t} (\bpi_{t-1 \mid t-1} (\btheta) ; \btheta) ]_{n} |] < \infty$. This is true if, for each $j \in \{1,...,J-1\}$,
	\small \begin{equation*}
		\mathbb{E} \left[ \log^{+} \sup_{\btheta \in \bar{\bTheta}} \left| \frac{\pi_{a,t-1 \mid t-1} (\btheta) f_{b} (Y_t;X_{b,t}(\bupsilon_b),\bupsilon_b)}{\sum_{k=1}^{J} \sum_{l = 1}^{J} p_{l k} \pi_{l, t-1 \mid t-1} (\btheta) f_{k} (Y_t;X_{k,t}(\bupsilon_k),\bupsilon_k) } \right| \right] < \infty
	\end{equation*} \normalsize
	and
	\small \begin{equation*}
		\mathbb{E} \left[ \log^{+} \sup_{\btheta \in \bar{\bTheta}} \left| \frac{\left( \sum_{i=1}^{J} p_{ij} \pi_{i,t-1 \mid t-1} (\btheta) f_{j} (Y_t;X_{j,t}(\bupsilon_j),\bupsilon_j) \right) \pi_{a,t-1 \mid t-1} (\btheta) f_{b} (Y_t;X_{b,t}(\bupsilon_b),\bupsilon_b) }{\left( \sum_{k=1}^{J} \sum_{l = 1}^{J} p_{l k} \pi_{l, t-1 \mid t-1} (\btheta) f_{k} (Y_t;X_{k,t}(\bupsilon_k),\bupsilon_k) \right)^2} \right| \right] < \infty
	\end{equation*} \normalsize
	for all $a,b \in \{1,...,J\}$ and 
	\small \begin{equation*}
		\mathbb{E} \left[ \log^{+} \sup_{\btheta \in \bar{\bTheta}} \left| \frac{\sum_{i=1}^{J} p_{ij} \pi_{i,t-1 \mid t-1} (\btheta) \nabla_{[\bupsilon_a]_{b}} f_{a} (Y_t;X_{a,t}(\bupsilon_a),\bupsilon_a)}{\sum_{k=1}^{J} \sum_{l = 1}^{J} p_{l k} \pi_{l, t-1 \mid t-1} (\btheta) f_{k} (Y_t;X_{k,t}(\bupsilon_k),\bupsilon_k) } \right| \right] < \infty
	\end{equation*} \normalsize
	and
	\small \begin{equation*}
		\mathbb{E} \left[ \log^{+} \sup_{\btheta \in \bar{\bTheta}} \left| \frac{\left( \sum_{i=1}^{J} p_{ij} \pi_{i,t-1 \mid t-1} (\btheta) f_{j} (Y_t;X_{j,t}(\bupsilon_j),\bupsilon_j) \right) \left( \sum_{l=1}^{J} p_{l a} \pi_{l,t-1 \mid t-1} (\btheta) \nabla_{[\bupsilon_a]_{b}} f_{a}(Y_t;X_{a,t}(\bupsilon_a),\bupsilon_a) \right)}{\left( \sum_{k=1}^{J} \sum_{l = 1}^{J} p_{l k} \pi_{l, t-1 \mid t-1} (\btheta) f_{k} (Y_t;X_{k,t}(\bupsilon_k),\bupsilon_k) \right)^2} \right| \right] < \infty
	\end{equation*} \normalsize
	for all $a \in \{1,...,J\}$ and $b \in \{1,...,d_a\}$; see Section \ref{SupplementaryMaterial} for details. Condition (i) thus follows since, for each $j \in \{1,...,J\}$, $\mathbb{E} [ \sup_{\bupsilon_j \in \bar{\bUpsilon}_{j}} \sup_{x_j \in \mathcal{X}_{j}} | \bar{\nabla}_{x_j} \log f_{j} (Y_t;x_j,\bupsilon_j) |^{m_j} ] < \infty$, $\mathbb{E} [ \log^{+} \sup_{\bupsilon_{j} \in \bar{\bUpsilon}_{j}} | | \nabla_{\bupsilon_j} X_{j,t} (\bupsilon_j) | |_{2} ] < \infty$, and $\mathbb{E} [ \sup_{\bupsilon_j \in \bar{\bUpsilon}_{j}} \sup_{x_j \in \mathcal{X}_{j}} || \bar{\nabla}_{\bupsilon_j} \log f_{j} (Y_t;x_j,\bupsilon_j) ||_{2}^{m_j} ] < \infty$ by assumption. To show Conditions (ii) and (iii), note that, for each $(n,m) \in \{1,...,J-1\} \times \{1,...,d\}$,
	\begin{align*}
		&\left[ (\bphi_{t}^{\bpi})^{(r)} (\bx ; \btheta) \right]_{(n-1)d+m} \\
		&= \sum_{j=1}^{J-1} \left( \bar{\nabla}_{\left[ \bpi \right]_j} \left[ \bphi_{t}^{(r)} (\bpi_{t-r \mid t-r} (\btheta) ; \btheta) \right]_{n} - \bar{\nabla}_{\left[ \bpi \right]_J} \left[ \bphi_{t}^{(r)} (\bpi_{t-r \mid t-r} (\btheta) ; \btheta) \right]_{n} \right) \left[ \bx \right]_{(j-1)d+m} \\
		&+ \bar{\nabla}_{\left[ \btheta \right]_m} \left[ \bphi_{t}^{(r)} (\bpi_{t-r \mid t-r} (\btheta) ; \btheta) \right]_{n}
	\end{align*}
	for all $r \in \mathbb{N}$ and $\bx \in \mathbb{R}^{(J-1)d}$. Note that
	\begin{equation*}
		|| \bphi_{t}^{(r)} (\bx;\btheta) - \bphi_{t}^{(r)} (\by;\btheta) ||_{2} \leq \Lambda (\bphi_{t}^{(r)};\btheta) || \bx - \by ||_{2}
	\end{equation*}
	for all $r \in \mathbb{N}$ and $\bx, \by \in \mathcal{S}$ so, for each $j,n \in \{1,...,J-1\}$, 
	\begin{equation*}
		\left| \bar{\nabla}_{\left[ \bpi \right]_j} \left[ \bphi_{t}^{(r)} (\bpi_{t-r \mid t-r} (\btheta) ; \btheta) \right]_{n} - \bar{\nabla}_{\left[ \bpi \right]_J} \left[ \bphi_{t}^{(r)} (\bpi_{t-r \mid t-r} (\btheta) ; \btheta) \right]_{n} \right| \leq \Lambda (\bphi_{t}^{(r)};\btheta)
	\end{equation*}
	for all $r \in \mathbb{N}$; see \citet{StraumannMikosch2006} for a similar argument. Hence, for each $(n,m) \in \{1,...,J-1\} \times \{1,...,d\}$,
	\begin{equation*}
		\left| \left[ (\bphi_{t}^{\bpi})^{(r)} (\bx ; \btheta) \right]_{(n-1)d+m} - \left[ (\bphi_{t}^{\bpi})^{(r)} (\by ; \btheta) \right]_{(n-1)d+m} \right| \leq \Lambda (\bphi_{t}^{(r)};\btheta) \sum_{j=1}^{J} \left| \left[ \bx \right]_{(j-1)d+m} - \left[ \by \right]_{(j-1)d+m} \right|
	\end{equation*}
	for all $r \in \mathbb{N}$ and $\bx, \by \in \mathbb{R}^{(J-1)d}$. Thus,
	\begin{equation*}
		\Lambda ((\bphi_{t}^{\bpi})^{(r)};\btheta) \leq C \Lambda (\bphi_{t}^{(r)};\btheta)
	\end{equation*}
	for all $r \in \mathbb{N}$. Conditions (ii) and (iii) thus follow by using the same arguments as in the proof of Lemma \ref{LemmaInvertibilityFilter}.
	
	To show that $\mathbb{E} [ \sup_{\btheta \in \bar{\bTheta}} || \nabla_{\btheta} \bpi_{t \mid t} (\btheta) ||_{2}^{2} ] < \infty$, we have, under Conditions (i)-(iii), that
	\begin{equation*}
		\nabla_{\btheta} \bpi_{t \mid t} (\btheta) = \sum_{k=0}^{\infty} ( (\bphi_{t}^{\bpi})^{(k)} ( \bV_{t-k} (\btheta) ; \btheta) - (\bphi_{t}^{\bpi})^{(k)} ( \mathbf{0} ; \btheta) ),
	\end{equation*}
	where, by convention, $(\bphi_{t}^{\bpi})^{(0)} ( \bv ; \btheta) = \bv$ and
	\begin{equation*}
		[\bV_{t-k} (\btheta) ]_{(n-1)d+m} = \bar{\nabla}_{\left[ \btheta \right]_m} \left[ \bphi_{t-k} (\bpi_{t-k-1 \mid t-k-1} (\btheta) ; \btheta) \right]_{n}, \quad (n,m) \in \{1,...,J-1\} \times \{1,...,d\}.
	\end{equation*}
	Thus,
	\begin{equation*}
		|| \nabla_{\btheta} \bpi_{t \mid t} (\btheta) ||_{2}^{2} \leq \sum_{k=0}^{\infty} \sum_{l=0}^{\infty} || (\bphi_{t}^{\bpi})^{(k)} ( \bV_{t-k} (\btheta) ; \btheta) - (\bphi_{t}^{\bpi})^{(k)} ( \mathbf{0} ; \btheta) ||_{2} || (\bphi_{t}^{\bpi})^{(l)} ( \bV_{t-l} (\btheta) ; \btheta) - (\bphi_{t}^{\bpi})^{(l)} ( \mathbf{0} ; \btheta) ||_{2}.
	\end{equation*}
	Note that, by using the same arguments as above and in the proof of Lemma \ref{LemmaInvertibilityFilter}, there exists an $\alpha \in (0,1)$ such that
	\begin{equation*}
		\sup_{\underset{\bx \neq \by}{\bx,\by \in \mathbb{R}^{(J-1)d}}} \frac{|| (\bphi_{t}^{\bpi})^{(k)} (\bx;\btheta) - (\bphi_{t}^{\bpi})^{(k)} (\by;\btheta) ||_{2}}{|| \bx - \by ||_{2}} \leq C \alpha^{k}
	\end{equation*}
	for all $k \in \mathbb{N}_{0}$. Thus,
	\begin{equation*}
		|| \nabla_{\btheta} \bpi_{t \mid t} (\btheta) ||_{2}^{2} \leq C \sum_{k=0}^{\infty} \sum_{l=0}^{\infty} \alpha^{k} ||\bV_{t-k} (\btheta) ||_2 \alpha^{l} ||\bV_{t-l} (\btheta) ||_2.
	\end{equation*}
	Hence, we have that
	\begin{equation*}
		\mathbb{E} \left[\sup_{\btheta \in \bar{\bTheta}} || \nabla_{\btheta} \bpi_{t \mid t} (\btheta) ||_{2}^{2} \right] \leq C \mathbb{E} \left[\sup_{\btheta \in \bar{\bTheta}} ||\bV_{t} (\btheta)||_2^2 \right],
	\end{equation*}
	so $\mathbb{E} [ \sup_{\btheta \in \bar{\bTheta}} || \nabla_{\btheta} \bpi_{t \mid t} (\btheta) ||_{2}^{2} ] < \infty$ since, for each $j \in \{1,...,J\}$, $\mathbb{E} [ \sup_{\bupsilon_j \in \bar{\bUpsilon}_{j}} | \bar{\nabla}_{x_j} \log f_{j} (Y_t;X_{j,t} (\bupsilon_j),\bupsilon_j) |^{2} | | \nabla_{\bupsilon_j} X_{j,t} (\bupsilon_j) ||_{2}^{2} ] < \infty$ and $\mathbb{E} [ \sup_{\bupsilon_j \in \bar{\bUpsilon}_{j}} || \bar{\nabla}_{\bupsilon_j} \log f_{j} (Y_t;X_{j,t} (\bupsilon_j),\bupsilon_j) ||_{2}^{2} ] < \infty$ by assumption.
	
	Moreover, $(\nabla_{\btheta} \hat{\bpi}_{t \mid t} (\btheta))_{t \in \mathbb{N}_0}$ is a stochastic process taking values in $\mathbb{R}^{(J-1)d}$ given by
	\begin{equation*}
		\nabla_{\btheta} \hat{\bpi}_{t \mid t} (\btheta) = \hat{\bphi}_{t}^{\bpi} (\nabla_{\btheta} \hat{\bpi}_{t-1 \mid t-1} (\btheta) ; \btheta)
	\end{equation*}
	with
	\begin{align*}
		&\left[ \hat{\bphi}_{t}^{\bpi} (\nabla_{\btheta} \hat{\bpi}_{t-1 \mid t-1} (\btheta) ; \btheta) \right]_{(n-1)d+m} \\
		&= \sum_{j=1}^{J-1} \left( \bar{\nabla}_{\left[ \bpi \right]_j} \left[ \hat{\bphi}_{t} (\hat{\bpi}_{t-1 \mid t-1} (\btheta) ; \btheta) \right]_{n} - \bar{\nabla}_{\left[ \bpi \right]_J} \left[ \hat{\bphi}_{t} (\hat{\bpi}_{t-1 \mid t-1} (\btheta) ; \btheta) \right]_{n} \right) \left[ \nabla_{\btheta} \hat{\bpi}_{t-1 \mid t-1} (\btheta) \right]_{(j-1)d+m} \\
		&+ \bar{\nabla}_{\left[ \btheta \right]_m} \left[ \hat{\bphi}_{t} (\hat{\bpi}_{t-1 \mid t-1} (\btheta) ; \btheta) \right]_{n}, \quad (n,m) \in \{1,...,J-1\} \times \{1,...,d\},
	\end{align*}
	where $\nabla_{\btheta} \hat{\bpi}_{0 \mid 0} (\btheta)$ is given. The fact that also $\sup_{\btheta \in \bar{\bTheta}} || \nabla_{\btheta} \hat{\bpi}_{t \mid t} (\btheta) - \nabla_{\btheta} \bpi_{t \mid t} (\btheta) ||_{2} \overset{e.a.s.}{\rightarrow} 0$ as $t \rightarrow \infty$ follows also from Theorem 2.10 in \cite{StraumannMikosch2006} if
	\begin{enumerate} [(i)]
		\item $\mathbb{E} [\log^{+} \sup_{\btheta \in \bar{\bTheta}} || \nabla_{\btheta} \bpi_{t \mid t} (\btheta) ||_{2} ] < \infty$,
		\item $\sup_{\btheta \in \bar{\bTheta}} || \hat{\bphi}_t^{\bpi} (\mathbf{0};\btheta) - \bphi_t^{\bpi} (\mathbf{0};\btheta) ||_{2} \overset{e.a.s.}{\rightarrow} 0$ as $t \rightarrow \infty$, and
		\item $\sup_{\btheta \in \bar{\bTheta}} \Lambda (\hat{\bphi}_t^{\bpi} - \bphi_t^{\bpi};\btheta) \overset{e.a.s.}{\rightarrow} 0$ as $t \rightarrow \infty$.
	\end{enumerate}
	Condition (i) follows since $\mathbb{E} [ \sup_{\btheta \in \bar{\bTheta}} || \nabla_{\btheta} \bpi_{t \mid t} (\btheta) ||_{2}^{2} ] < \infty$. Condition (ii) follows if, for each $(n,m) \in \{1,...,J-1\} \times \{1,...,d\}$, $\sup_{\btheta \in \bar{\bTheta}} | \bar{\nabla}_{[ \btheta ]_m} [ \hat{\bphi}_{t} (\hat{\bpi}_{t-1 \mid t-1} (\btheta) ; \btheta) ]_{n} - \bar{\nabla}_{[ \btheta ]_m} [ \bphi_{t} (\bpi_{t-1 \mid t-1} (\btheta) ; \btheta)]_{n} | \overset{e.a.s.}{\rightarrow} 0$ as $t \rightarrow \infty$. This is true if, for each $j \in \{1,...,J-1\}$,
	\small \begin{align*}
		\sup_{\btheta \in \bar{\bTheta}} & \left| \frac{\hat{\pi}_{a,t-1 \mid t-1} (\btheta) f_{b} (Y_t;\hat{X}_{b,t}(\bupsilon_b),\bupsilon_b)}{\sum_{k=1}^{J} \sum_{l = 1}^{J} p_{l k} \hat{\pi}_{l, t-1 \mid t-1} (\btheta) f_{k} (Y_t;\hat{X}_{k,t}(\bupsilon_k),\bupsilon_k) } \right. \\
		& \left. - \frac{\pi_{a,t-1 \mid t-1} (\btheta) f_{b} (Y_t;X_{b,t}(\bupsilon_b),\bupsilon_b)}{\sum_{k=1}^{J} \sum_{l = 1}^{J} p_{l k} \pi_{l, t-1 \mid t-1} (\btheta) f_{k} (Y_t;X_{k,t}(\bupsilon_k),\bupsilon_k) } \right| \overset{e.a.s.}{\rightarrow} 0 \quad \text{as} \quad t \rightarrow \infty
	\end{align*} \normalsize
	and
	\small \begin{align*}
		\sup_{\btheta \in \bar{\bTheta}} & \left| \frac{\left( \sum_{i=1}^{J} p_{ij} \hat{\pi}_{i,t-1 \mid t-1} (\btheta) f_{j} (Y_t;\hat{X}_{j,t}(\bupsilon_j),\bupsilon_j) \right) \hat{\pi}_{a,t-1 \mid t-1} (\btheta) f_{b} (Y_t;\hat{X}_{b,t}(\bupsilon_b),\bupsilon_b) }{\left( \sum_{k=1}^{J} \sum_{l = 1}^{J} p_{l k} \hat{\pi}_{l, t-1 \mid t-1} (\btheta) f_{k} (Y_t;\hat{X}_{k,t}(\bupsilon_k),\bupsilon_k) \right)^2} \right. \\
		& \left. - \frac{\left( \sum_{i=1}^{J} p_{ij} \pi_{i,t-1 \mid t-1} (\btheta) f_{j} (Y_t;X_{j,t}(\bupsilon_j),\bupsilon_j) \right) \pi_{a,t-1 \mid t-1} (\btheta) f_{b} (Y_t;X_{b,t}(\bupsilon_b),\bupsilon_b) }{\left( \sum_{k=1}^{J} \sum_{l = 1}^{J} p_{l k} \pi_{l, t-1 \mid t-1} (\btheta) f_{k} (Y_t;X_{k,t}(\bupsilon_k),\bupsilon_k) \right)^2} \right| \overset{e.a.s.}{\rightarrow} 0 \quad \text{as} \quad t \rightarrow \infty
	\end{align*} \normalsize
	for all $a,b \in \{1,...,J\}$ and 
	\small \begin{align*}
		\sup_{\btheta \in \bar{\bTheta}} & \left| \frac{\sum_{i=1}^{J} p_{ij} \hat{\pi}_{i,t-1 \mid t-1} (\btheta) \nabla_{[\bupsilon_a]_{b}} f_{a} (Y_t;\hat{X}_{a,t}(\bupsilon_a),\bupsilon_a)}{\sum_{k=1}^{J} \sum_{l = 1}^{J} p_{l k} \hat{\pi}_{l, t-1 \mid t-1} (\btheta) f_{k} (Y_t;\hat{X}_{k,t}(\bupsilon_k),\bupsilon_k) } \right. \\
		& \left. - \frac{\sum_{i=1}^{J} p_{ij} \pi_{i,t-1 \mid t-1} (\btheta) \nabla_{[\bupsilon_a]_{b}} f_{a} (Y_t;X_{a,t}(\bupsilon_a),\bupsilon_a)}{\sum_{k=1}^{J} \sum_{l = 1}^{J} p_{l k} \pi_{l, t-1 \mid t-1} (\btheta) f_{k} (Y_t;X_{k,t}(\bupsilon_k),\bupsilon_k) } \right| \overset{e.a.s.}{\rightarrow} 0 \quad \text{as} \quad t \rightarrow \infty
	\end{align*} \normalsize
	and
	\small \begin{align*}
		\sup_{\btheta \in \bar{\bTheta}} & \left| \frac{\left( \sum_{i=1}^{J} p_{ij} \hat{\pi}_{i,t-1 \mid t-1} (\btheta) f_{j} (Y_t;\hat{X}_{j,t}(\bupsilon_j),\bupsilon_j) \right) \left( \sum_{l=1}^{J} p_{l a} \hat{\pi}_{l,t-1 \mid t-1} (\btheta) \nabla_{[\bupsilon_a]_{b}} f_{a}(Y_t;\hat{X}_{a,t}(\bupsilon_a),\bupsilon_a) \right)}{\left( \sum_{k=1}^{J} \sum_{l = 1}^{J} p_{l k} \hat{\pi}_{l, t-1 \mid t-1} (\btheta) f_{k} (Y_t;\hat{X}_{k,t}(\bupsilon_k),\bupsilon_k) \right)^2} \right. \\
		& \left. - \frac{\left( \sum_{i=1}^{J} p_{ij} \pi_{i,t-1 \mid t-1} (\btheta) f_{j} (Y_t;X_{j,t}(\bupsilon_j),\bupsilon_j) \right) \left( \sum_{l=1}^{J} p_{l a} \pi_{l,t-1 \mid t-1} (\btheta) \nabla_{[\bupsilon_a]_{b}} f_{a}(Y_t;X_{a,t}(\bupsilon_a),\bupsilon_a) \right)}{\left( \sum_{k=1}^{J} \sum_{l = 1}^{J} p_{l k} \pi_{l, t-1 \mid t-1} (\btheta) f_{k} (Y_t;X_{k,t}(\bupsilon_k),\bupsilon_k) \right)^2} \right| \overset{e.a.s.}{\rightarrow} 0 \quad \text{as} \quad t \rightarrow \infty
	\end{align*} \normalsize
	for all $a \in \{1,...,J\}$ and $b \in \{1,...,d_a\}$; see Section \ref{SupplementaryMaterial} for details once again. Condition (ii) thus follows by using the same arguments as in the proof of Lemma \ref{LemmaInvertibilityFilter} since $(Y_t)_{t \in \mathbb{Z}}$ is stationary by Assumption \ref{AssumptionY}. Moreover, for each $j \in \{1,...,J\}$, $\mathbb{E} [ \sup_{\bupsilon_j \in \bar{\bUpsilon}_{j}} \sup_{x_j \in \mathcal{X}_{j}} | \bar{\nabla}_{x_j} \log f_{j} (Y_t;x_j,\bupsilon_j) |^{m_j} ] < \infty$, $\mathbb{E} [ \log^{+}  \sup_{\bupsilon_{j} \in \bar{\bUpsilon}_{j}} | | \nabla_{\bupsilon_j} X_{j,t} (\bupsilon_j) | |_{2} ] < \infty$, $\mathbb{E} [ \sup_{\bupsilon_j \in \bar{\bUpsilon}_{j}} \sup_{x_j \in \mathcal{X}_{j}} || \bar{\nabla}_{\bupsilon_j} \log f_{j} (Y_t;x_j,\bupsilon_j) ||_{2}^{m_j} ] < \infty$, $\mathbb{E} [ \sup_{\bupsilon_j \in \bar{\bUpsilon}_{j}} \sup_{x_j \in \mathcal{X}_{j}} | \bar{\nabla}_{x_j x_j} \log f_{j} (Y_t;x_j,\bupsilon_j) |^{m_j} ] < \infty$, and $\mathbb{E} [ \sup_{\bupsilon_j \in \bar{\bUpsilon}_{j}} \sup_{x_j \in \mathcal{X}_{j}} || \bar{\nabla}_{\bupsilon_j x_j} \log f_{j} (Y_t;x_j,\bupsilon_j) ||_{2}^{m_j} ] < \infty$ by assumption. Finally, for each $j \in \{1,...,J\}$, $\sup_{\bupsilon_{j} \in \bar{\bUpsilon}_{j}} | \hat{X}_{j,t} (\bupsilon_j) - X_{j,t} (\bupsilon_j) | \overset{e.a.s.}{\rightarrow} 0$ as $t \rightarrow \infty$ by Lemma \ref{LemmaInvertibilityX}, $\sup_{\btheta \in \bar{\bTheta}} || \hat{\bpi}_{t \mid t} (\btheta) - \bpi_{t \mid t} (\btheta) ||_{2} \overset{e.a.s.}{\rightarrow} 0$ as $t \rightarrow \infty$ by Lemma \ref{LemmaInvertibilityFilter}, and, for each $j \in \{1,...,J\}$, $\sup_{\bupsilon_j \in \bar{\bUpsilon}_{j}} || \nabla_{\bupsilon_{j}} \hat{X}_{j,t} (\bupsilon_j) - \nabla_{\bupsilon_{j}} X_{j,t} (\bupsilon_j) ||_{2}  \overset{e.a.s.}{\rightarrow} 0$ as $t \rightarrow \infty$ by Lemma \ref{LemmaInvertibilityDX}. 
	
	 To show Condition (iii), by using the same arguments as above once again,
	\begin{equation*}
		\Lambda (\hat{\bphi}_t^{\bpi} - \bphi_t^{\bpi};\btheta) \leq C \Lambda (\hat{\bphi}_{t} - \bphi_{t};\btheta).
	\end{equation*}
	Condition (iii) thus follows by using the same arguments as in the proof of Lemma \ref{LemmaInvertibilityFilter}.
\end{proof}

\subsection{Proof of Lemma \ref{LemmaInvertibilityDDPrediction}}

The conclusion follows from Lemmas \ref{LemmaInvertibilityFilter} and \ref{LemmaInvertibilityDFilter} and the following lemma.
\begin{lemmaappendix} \label{LemmaInvertibilityDDFilter}
	Assume that Assumptions \ref{AssumptionY}-\ref{AssumptionPhi1}, \ref{AssumptionF3}-\ref{AssumptionF4}, and the conditions in Lemmas \ref{LemmaInvertibilityX}-\ref{LemmaInvertibilityDDX} and \ref{LemmaInvertibilityDFilter} hold. Moreover, assume that for each $j \in \{1,...,J\}$, there exists an $m_j > 0$ such that
	\begin{enumerate} [(i)]
		\item $\mathbb{E} [ \sup_{\bupsilon_j \in \bar{\bUpsilon}_j} \sup_{x_j \in \mathcal{X}_{j}} | | \bar{\nabla}_{\bupsilon_j \bupsilon_j} \log f_{j} (Y_t;x_j,\bupsilon_j) | |_{2,2}^{m_j} ] < \infty$,
		\item $\mathbb{E} [ \sup_{\bupsilon_j \in \bar{\bUpsilon}_j} \sup_{x_j \in \mathcal{X}_{j}} | \bar{\nabla}_{x_j x_j x_j} \log f_{j} (Y_t;x_j,\bupsilon_j) |^{m_j} ] < \infty$,
		\item $\mathbb{E} [ \sup_{\bupsilon_j \in \bar{\bUpsilon}_j} \sup_{x_j \in \mathcal{X}_{j}} | | \bar{\nabla}_{\bupsilon_j x_j x_j} \log f_{j} (Y_t;x_j,\bupsilon_j) | |_{2}^{m_j} ] < \infty$, and
		\item $\mathbb{E} [ \sup_{\bupsilon_j \in \bar{\bUpsilon}_j} \sup_{x_j \in \mathcal{X}_{j}} | | \bar{\nabla}_{\bupsilon_j \bupsilon_j x_j} \log f_{j} (Y_t;x_j,\bupsilon_j) | |_{2,2}^{m_j} ] < \infty$.
	\end{enumerate}
	Then, $( \nabla_{\btheta \btheta} \bpi_{t \mid t} (\btheta) )_{t \in \mathbb{Z}}$ is stationary and ergodic for all $\btheta \in \bar{\bTheta}$ with $\mathbb{E} [ \sup_{\btheta \in \bar{\bTheta}} || \nabla_{\btheta \btheta} \bpi_{t \mid t} (\btheta) ||_{2} ] < \infty$ and
	\begin{equation*}
		\sup_{\btheta \in \bar{\bTheta}} | | \nabla_{\btheta \btheta} \hat{\bpi}_{t \mid t} (\btheta) - \nabla_{\btheta \btheta} \bpi_{t \mid t} (\btheta) | |_{2} \overset{e.a.s.}{\rightarrow} 0 \quad \text{as} \quad t \rightarrow \infty
	\end{equation*}
	for any initialisation $\nabla_{\btheta \btheta} \hat{\bpi}_{0 \mid 0} (\btheta) \in \mathbb{R}^{(J-1)d^2}$.
\end{lemmaappendix}
\begin{proof}
	First, $(\nabla_{\btheta \btheta} \bpi_{t \mid t} (\btheta))_{t \in \mathbb{Z}}$ is a stochastic process taking values in $\mathbb{R}^{(J-1)d^2}$ given by
	\begin{equation*}
		\nabla_{\btheta \btheta} \bpi_{t \mid t} (\btheta) = \bphi_{t}^{\bpi \bpi} (\nabla_{\btheta \btheta} \bpi_{t-1 \mid t-1} (\btheta) ; \btheta)
	\end{equation*}
	with
	\begin{align*}
		&\left[ \bphi_{t}^{\bpi \bpi} (\nabla_{\btheta \btheta} \bpi_{t-1 \mid t-1} (\btheta) ; \btheta) \right]_{(n-1)d^2+(m_1-1)d+m_2} \\
		&= \sum_{j=1}^{J-1} \left( \bar{\nabla}_{\left[ \bpi \right]_j} \left[ \bphi_{t} (\bpi_{t-1 \mid t-1} (\btheta) ; \btheta) \right]_{n} - \bar{\nabla}_{\left[ \bpi \right]_J} \left[ \bphi_{t} (\bpi_{t-1 \mid t-1} (\btheta) ; \btheta) \right]_{n} \right) \\
		& \quad \quad \quad \cdot \left[ \nabla_{\btheta \btheta} \bpi_{t-1 \mid t-1} (\btheta) \right]_{(j-1)d^2+(m_1-1)d+m_2} \\
		& + \sum_{j=1}^{J-1} \left( \bar{\nabla}_{\left[ \bpi \right]_j \left[ \btheta \right]_{m_2}} \left[ \bphi_{t} (\bpi_{t-1 \mid t-1} (\btheta) ; \btheta) \right]_{n} - \bar{\nabla}_{\left[ \bpi \right]_J \left[ \btheta \right]_{m_2}} \left[ \bphi_{t} (\bpi_{t-1 \mid t-1} (\btheta) ; \btheta) \right]_{n} \right) \nabla_{\left[ \btheta \right]_{m_1}} \left[ \bpi_{t-1 \mid t-1} (\btheta) \right]_{j} \\
		& + \sum_{j=1}^{J-1} \sum_{i=1}^{J-1} \Bigl[ \left( \bar{\nabla}_{\left[ \bpi \right]_j \left[ \bpi \right]_i} \left[ \bphi_{t} (\bpi_{t-1 \mid t-1} (\btheta) ; \btheta) \right]_{n} - \bar{\nabla}_{\left[ \bpi \right]_j \left[ \bpi \right]_J} \left[ \bphi_{t} (\bpi_{t-1 \mid t-1} (\btheta) ; \btheta) \right]_{n} \right) \\
		& \quad \quad \quad \quad \quad - \left( \bar{\nabla}_{\left[ \bpi \right]_J \left[ \bpi \right]_i} \left[ \bphi_{t} (\bpi_{t-1 \mid t-1} (\btheta) ; \btheta) \right]_{n} - \bar{\nabla}_{\left[ \bpi \right]_J \left[ \bpi \right]_J} \left[ \bphi_{t} (\bpi_{t-1 \mid t-1} (\btheta) ; \btheta) \right]_{n} \right) \Bigr] \\
		& \quad \quad \quad \quad \quad \cdot \nabla_{\left[ \btheta \right]_{m_1}} \left[ \bpi_{t-1 \mid t-1} (\btheta) \right]_{j} \nabla_{\left[ \btheta \right]_{m_2}} \left[ \bpi_{t-1 \mid t-1} (\btheta) \right]_{i} \\
		& + \sum_{i=1}^{J-1} \left( \bar{\nabla}_{\left[ \btheta \right]_{m_1} \left[ \bpi \right]_i} \left[ \bphi_{t} (\bpi_{t-1 \mid t-1} (\btheta) ; \btheta) \right]_{n} - \bar{\nabla}_{\left[ \btheta \right]_{m_1} \left[ \bpi \right]_J} \left[ \bphi_{t} (\bpi_{t-1 \mid t-1} (\btheta) ; \btheta) \right]_{n}  \right) \nabla_{\left[ \btheta \right]_{m_2}} \left[ \bpi_{t-1 \mid t-1} (\btheta) \right]_{i} \\
		&+ \bar{\nabla}_{\left[ \btheta \right]_{m_1} \left[ \btheta \right]_{m_2}} \left[ \bphi_{t} (\bpi_{t-1 \mid t-1} (\btheta) ; \btheta) \right]_{n}, \quad (n,m_1,m_2) \in \{1,...,J-1\} \times \{1,...,d\} \times \{1,...,d\};
	\end{align*}
	see Section \ref{SupplementaryMaterial} for details. Let
	\begin{equation*}
		\Lambda (\bphi_{t}^{\bpi \bpi};\btheta) = \sup_{\underset{\bx \neq \by}{\bx,\by \in \mathbb{R}^{(J-1)d^2}}} \frac{|| \bphi_{t}^{\bpi \bpi} (\bx;\btheta) - \bphi_{t}^{\bpi \bpi} (\by;\btheta) ||_{2}}{|| \bx - \by ||_{2}}.
	\end{equation*}
	The fact that $( \nabla_{\btheta \btheta} \bpi_{t \mid t} (\btheta) )_{t \in \mathbb{Z}}$ is stationary and ergodic for all $\btheta \in \bar{\bTheta}$ with $\mathbb{E} [ \sup_{\btheta \in \bar{\bTheta}} || \nabla_{\btheta \btheta} \bpi_{t \mid t} (\btheta) ||_{2} ] < \infty$ follows from Theorem 2.10 in \cite{StraumannMikosch2006} if
	\begin{enumerate} [(i)]
		\item $\mathbb{E} [\log^{+} \sup_{\btheta \in \bar{\bTheta}} ||\bphi_{t}^{\bpi \bpi} (\mathbf{0};\btheta)||_{2}] < \infty$,
		\item $\mathbb{E} [\log^{+} \sup_{\btheta \in \bar{\bTheta}} \Lambda (\bphi_{t}^{\bpi \bpi};\btheta)] < \infty$, and
		\item there exists an $r \in \mathbb{N}$ such that $- \infty \leq \mathbb{E} [\log \sup_{\btheta \in \bar{\bTheta}} \Lambda ((\bphi_{t}^{\bpi \bpi})^{(r)};\btheta)] < 0$.
	\end{enumerate}
	The proof follows the same lines as the proof of Lemma \ref{LemmaInvertibilityDFilter}, so we omit the details.
	
	Moreover, $(\nabla_{\btheta \btheta} \hat{\bpi}_{t \mid t} (\btheta))_{t \in \mathbb{N}_0}$ is a stochastic process taking values in $\mathbb{R}^{(J-1)d^2}$ given by
	\begin{equation*}
		\nabla_{\btheta \btheta} \hat{\bpi}_{t \mid t} (\btheta) = \hat{\bphi}_{t}^{\bpi \bpi} (\nabla_{\btheta \btheta} \hat{\bpi}_{t-1 \mid t-1} (\btheta) ; \btheta)
	\end{equation*}
	with
	\begin{align*}
		&\left[ \hat{\bphi}_{t}^{\bpi \bpi} (\nabla_{\btheta \btheta} \hat{\bpi}_{t-1 \mid t-1} (\btheta) ; \btheta) \right]_{(n-1)d^2+(m_1-1)d+m_2} \\
		&= \sum_{j=1}^{J-1} \left( \bar{\nabla}_{\left[ \bpi \right]_j} \left[ \hat{\bphi}_{t} (\hat{\bpi}_{t-1 \mid t-1} (\btheta) ; \btheta) \right]_{n} - \bar{\nabla}_{\left[ \bpi \right]_J} \left[ \hat{\bphi}_{t} (\hat{\bpi}_{t-1 \mid t-1} (\btheta) ; \btheta) \right]_{n} \right) \\
		& \quad \quad \quad \cdot \left[ \nabla_{\btheta \btheta} \hat{\bpi}_{t-1 \mid t-1} (\btheta) \right]_{(j-1)d^2+(m_1-1)d+m_2} \\
		& + \sum_{j=1}^{J-1} \left( \bar{\nabla}_{\left[ \bpi \right]_j \left[ \btheta \right]_{m_2}} \left[ \hat{\bphi}_{t} (\hat{\bpi}_{t-1 \mid t-1} (\btheta) ; \btheta) \right]_{n} - \bar{\nabla}_{\left[ \bpi \right]_J \left[ \btheta \right]_{m_2}} \left[ \hat{\bphi}_{t} (\hat{\bpi}_{t-1 \mid t-1} (\btheta) ; \btheta) \right]_{n} \right) \nabla_{\left[ \btheta \right]_{m_1}} \left[ \hat{\bpi}_{t-1 \mid t-1} (\btheta) \right]_{j} \\
		& + \sum_{j=1}^{J-1} \sum_{i=1}^{J-1} \Bigl[ \left( \bar{\nabla}_{\left[ \bpi \right]_j \left[ \bpi \right]_i} \left[ \hat{\bphi}_{t} (\hat{\bpi}_{t-1 \mid t-1} (\btheta) ; \btheta) \right]_{n} - \bar{\nabla}_{\left[ \bpi \right]_j \left[ \bpi \right]_J} \left[ \hat{\bphi}_{t} (\hat{\bpi}_{t-1 \mid t-1} (\btheta) ; \btheta) \right]_{n} \right) \\
		& \quad \quad \quad \quad \quad - \left( \bar{\nabla}_{\left[ \bpi \right]_J \left[ \bpi \right]_i} \left[ \hat{\bphi}_{t} (\hat{\bpi}_{t-1 \mid t-1} (\btheta) ; \btheta) \right]_{n} - \bar{\nabla}_{\left[ \bpi \right]_J \left[ \bpi \right]_J} \left[ \hat{\bphi}_{t} (\hat{\bpi}_{t-1 \mid t-1} (\btheta) ; \btheta) \right]_{n} \right) \Bigr] \\
		& \quad \quad \quad \quad \quad \cdot \nabla_{\left[ \btheta \right]_{m_1}} \left[ \hat{\bpi}_{t-1 \mid t-1} (\btheta) \right]_{j} \nabla_{\left[ \btheta \right]_{m_2}} \left[ \hat{\bpi}_{t-1 \mid t-1} (\btheta) \right]_{i} \\
		& + \sum_{i=1}^{J-1} \left( \bar{\nabla}_{\left[ \btheta \right]_{m_1} \left[ \bpi \right]_i} \left[ \hat{\bphi}_{t} (\hat{\bpi}_{t-1 \mid t-1} (\btheta) ; \btheta) \right]_{n} - \bar{\nabla}_{\left[ \btheta \right]_{m_1} \left[ \bpi \right]_J} \left[ \hat{\bphi}_{t} (\hat{\bpi}_{t-1 \mid t-1} (\btheta) ; \btheta) \right]_{n}  \right) \nabla_{\left[ \btheta \right]_{m_2}} \left[ \hat{\bpi}_{t-1 \mid t-1} (\btheta) \right]_{i} \\
		&+ \bar{\nabla}_{\left[ \btheta \right]_{m_1} \left[ \btheta \right]_{m_2}} \left[ \hat{\bphi}_{t} (\hat{\bpi}_{t-1 \mid t-1} (\btheta) ; \btheta) \right]_{n}, \quad (n,m_1,m_2) \in \{1,...,J-1\} \times \{1,...,d\} \times \{1,...,d\},
	\end{align*}
	where $\nabla_{\btheta \btheta} \hat{\bpi}_{0 \mid 0} (\btheta)$ is given. The fact that also $\sup_{\btheta \in \bar{\bTheta}} || \nabla_{\btheta \btheta} \hat{\bpi}_{t \mid t} (\btheta) - \nabla_{\btheta \btheta} \bpi_{t \mid t} (\btheta) ||_{2} \overset{e.a.s.}{\rightarrow} 0$ as $t \rightarrow \infty$ follows also from Theorem 2.10 in \cite{StraumannMikosch2006} if
	\begin{enumerate} [(i)]
		\item $\mathbb{E} [\log^{+} \sup_{\btheta \in \bar{\bTheta}} || \nabla_{\btheta \btheta} \bpi_{t \mid t} (\btheta) ||_{2} ] < \infty$,
		\item $\sup_{\btheta \in \bar{\bTheta}} || \hat{\bphi}_t^{\bpi \bpi} (\mathbf{0};\btheta) - \bphi_t^{\bpi \bpi} (\mathbf{0};\btheta) ||_{2} \overset{e.a.s.}{\rightarrow} 0$ as $t \rightarrow \infty$, and
		\item $\sup_{\btheta \in \bar{\bTheta}} \Lambda (\hat{\bphi}_t^{\bpi \bpi} - \bphi_t^{\bpi \bpi};\btheta) \overset{e.a.s.}{\rightarrow} 0$ as $t \rightarrow \infty$.
	\end{enumerate}
	We omit the details once again for the same reason as above.
\end{proof}

\section{Supplementary Material} \label{SupplementaryMaterial}

In the following, $\pi_{j,t \mid t}$ denotes ${\pi}_{j,t \mid t} (\btheta)$ and ${f}_{j,t}$ denotes $f_j (Y_t;{X}_{j,t} (\bupsilon_j),\bupsilon_j)$.

For each $j \in \{1,...,J-1\}$, $\nabla_{p_{ab}} \pi_{j,t \mid t}$ is given by
\begin{align*}
	&\nabla_{p_{ab}} \pi_{j,t \mid t} \\
	&= \frac{1_{\{j=b\}} f_{j,t} \pi_{a,t-1 \mid t-1}}{\sum_{k=1}^{J} \sum_{l = 1}^{J} f_{k,t} p_{l k} \pi_{l, t-1 \mid t-1}}  \\
	&+ \frac{\sum_{i=1}^{J-1} f_{j,t} (p_{ij}-p_{Jj}) \nabla_{p_{ab}} \pi_{i,t-1 \mid t-1} }{\sum_{k=1}^{J} \sum_{l = 1}^{J} f_{k,t} p_{l k} \pi_{l, t-1 \mid t-1} } \\
	&- \frac{\left( \sum_{i=1}^{J} f_{j,t} p_{ij} \pi_{i,t-1 \mid t-1} \right) (f_{b,t}-f_{J,t}) \pi_{a,t-1 \mid t-1} }{\left( \sum_{k=1}^{J} \sum_{l = 1}^{J} f_{k,t} p_{l k} \pi_{l, t-1 \mid t-1} \right)^2} \\
	&- \frac{\left( \sum_{i=1}^{J} f_{j,t} p_{ij} \pi_{i,t-1 \mid t-1} \right) \left( \sum_{k=1}^{J-1} \sum_{l = 1}^{J-1} (f_{k,t}-f_{J,t}) (p_{l k}-p_{J k}) \nabla_{p_{ab}} \pi_{l,t-1 \mid t-1} \right) }{\left( \sum_{k=1}^{J} \sum_{l = 1}^{J} f_{k,t} p_{l k} \pi_{l, t-1 \mid t-1} \right)^2}
\end{align*}
for all $a \in \{1,...,J\}$ and $b \in \{1,...,J-1\}$ and $\nabla_{[\bupsilon_a]_{b}} \pi_{j,t \mid t}$ is given by
\begin{align*}
	&\nabla_{[\bupsilon_a]_{b}} \pi_{j,t \mid t} \\
	&= \frac{1_{\{j=a,a \in \{1,...,J-1\}\}}  \sum_{i=1}^{J} \nabla_{[\bupsilon_a]_{b}} f_{j,t} p_{ij} \pi_{i,t-1 \mid t-1}}{\sum_{k=1}^{J} \sum_{l = 1}^{J} f_{k,t} p_{l k} \pi_{l, t-1 \mid t-1} } \\
	&+ \frac{\sum_{i=1}^{J-1} f_{j,t} (p_{ij}-p_{Jj}) \nabla_{[\bupsilon_a]_{b}} \pi_{i,t-1 \mid t-1} }{\sum_{k=1}^{J} \sum_{l = 1}^{J} f_{k,t} p_{l k} \pi_{l, t-1 \mid t-1} } \\
	&- \frac{\left( \sum_{i=1}^{J} f_{j,t} p_{ij} \pi_{i,t-1 \mid t-1} \right) \left( \sum_{l=1}^{J} \nabla_{[\bupsilon_a]_{b}} f_{a,t} p_{l a} \pi_{l,t-1 \mid t-1} \right)}{\left( \sum_{k=1}^{J} \sum_{l = 1}^{J} f_{k,t} p_{l k} \pi_{l, t-1 \mid t-1}  \right)^2} \\
	&- \frac{\left( \sum_{i=1}^{J} f_{j,t} p_{ij} \pi_{i,t-1 \mid t-1} \right) \left( \sum_{k=1}^{J-1} \sum_{l = 1}^{J-1} (f_{k,t}-f_{J,t}) (p_{l k}-p_{J k}) \nabla_{[\bupsilon_a]_{b}} \pi_{l,t-1 \mid t-1} \right) }{\left( \sum_{k=1}^{J} \sum_{l = 1}^{J} f_{k,t} p_{l k} \pi_{l, t-1 \mid t-1} \right)^2}
\end{align*}
for all $a \in \{1,...,J\}$ and $b \in \{1,...,d_a\}$.

Moreover, for each $j \in \{1,...,J-1\}$, $\nabla_{p_{ab} p_{\alpha \beta}} \pi_{j,t \mid t}$ is given by
\begin{align*}
	&\nabla_{p_{ab} p_{\alpha \beta}} \pi_{j,t \mid t} \\
	&= \frac{1_{\{j=b\}} f_{j,t} \nabla_{p_{\alpha \beta}} \pi_{a,t-1 \mid t-1}}{\sum_{k=1}^{J} \sum_{l = 1}^{J} f_{k,t} p_{l k} \pi_{l, t-1 \mid t-1}} \\
	&- \frac{1_{\{j=b\}} f_{j,t} \pi_{a,t-1 \mid t-1} (f_{\beta,t}-f_{J,t}) \pi_{\alpha,t-1 \mid t-1} }{\left(\sum_{k=1}^{J} \sum_{l = 1}^{J} f_{k,t} p_{l k} \pi_{l, t-1 \mid t-1}\right)^2} \\
	&- \frac{1_{\{j=b\}} f_{j,t} \pi_{a,t-1 \mid t-1} \left( \sum_{k=1}^{J-1} \sum_{l = 1}^{J-1} (f_{k,t}-f_{J,t}) (p_{l k}-p_{J k}) \nabla_{p_{\alpha \beta}} \pi_{l,t-1 \mid t-1} \right) }{\left(\sum_{k=1}^{J} \sum_{l = 1}^{J} f_{k,t} p_{l k} \pi_{l, t-1 \mid t-1}\right)^2} \\
	&+ \frac{\sum_{i=1}^{J-1} f_{j,t} (1_{\{i = \alpha,j = \beta\}}-1_{\{J = \alpha,j = \beta\}}) \nabla_{p_{ab}} \pi_{i,t-1 \mid t-1} }{\sum_{k=1}^{J} \sum_{l = 1}^{J} f_{k,t} p_{l k} \pi_{l, t-1 \mid t-1} } \\
	&+ \frac{\sum_{i=1}^{J-1} f_{j,t} (p_{ij}-p_{Jj}) \nabla_{p_{ab} p_{\alpha \beta}} \pi_{i,t-1 \mid t-1} }{\sum_{k=1}^{J} \sum_{l = 1}^{J} f_{k,t} p_{l k} \pi_{l, t-1 \mid t-1} } \\
	&- \frac{\left( \sum_{i=1}^{J-1} f_{j,t} (p_{ij}-p_{Jj}) \nabla_{p_{ab}} \pi_{i,t-1 \mid t-1} \right) (f_{\beta,t}-f_{J,t}) \pi_{\alpha,t-1 \mid t-1} }{\left(\sum_{k=1}^{J} \sum_{l = 1}^{J} f_{k,t} p_{l k} \pi_{l, t-1 \mid t-1}\right)^2} \\
	&- \frac{\left( \sum_{i=1}^{J-1} f_{j,t} (p_{ij}-p_{Jj}) \nabla_{p_{ab}} \pi_{i,t-1 \mid t-1} \right) \left( \sum_{k=1}^{J-1} \sum_{l = 1}^{J-1} (f_{k,t}-f_{J,t}) (p_{l k}-p_{J k}) \nabla_{p_{\alpha \beta}} \pi_{l,t-1 \mid t-1} \right) }{\left(\sum_{k=1}^{J} \sum_{l = 1}^{J} f_{k,t} p_{l k} \pi_{l, t-1 \mid t-1}\right)^2} \\
	& - \frac{1_{\{j=\beta\}} f_{j,t} \pi_{\alpha,t-1 \mid t-1} (f_{b,t}-f_{J,t}) \pi_{a,t-1 \mid t-1} }{\left( \sum_{k=1}^{J} \sum_{l = 1}^{J} f_{k,t} p_{l k} \pi_{l, t-1 \mid t-1} \right)^2} \\
	&- \frac{\left( \sum_{i=1}^{J-1} f_{j,t} (p_{ij}-p_{Jj}) \nabla_{p_{\alpha \beta}} \pi_{i,t-1 \mid t-1} \right) (f_{b,t}-f_{J,t}) \pi_{a,t-1 \mid t-1} }{\left( \sum_{k=1}^{J} \sum_{l = 1}^{J} f_{k,t} p_{l k} \pi_{l, t-1 \mid t-1} \right)^2} \\ 
	&- \frac{\left( \sum_{i=1}^{J} f_{j,t} p_{ij} \pi_{i,t-1 \mid t-1} \right) (f_{b,t}-f_{J,t}) \nabla_{p_{\alpha \beta}} \pi_{a,t-1 \mid t-1}}{\left( \sum_{k=1}^{J} \sum_{l = 1}^{J} f_{k,t} p_{l k} \pi_{l, t-1 \mid t-1} \right)^2} \\
	&+ 2 \frac{\left( \sum_{i=1}^{J} f_{j,t} p_{ij} \pi_{i,t-1 \mid t-1} \right) (f_{b,t}-f_{J,t}) \pi_{a,t-1 \mid t-1}  (f_{\beta,t}-f_{J,t}) \pi_{\alpha,t-1 \mid t-1}}{\left( \sum_{k=1}^{J} \sum_{l = 1}^{J} f_{k,t} p_{l k} \pi_{l, t-1 \mid t-1} \right)^3} \\
	&+ 2 \frac{\left( \sum_{i=1}^{J} f_{j,t} p_{ij} \pi_{i,t-1 \mid t-1} \right) (f_{b,t}-f_{J,t}) \pi_{a,t-1 \mid t-1} }{\left( \sum_{k=1}^{J} \sum_{l = 1}^{J} f_{k,t} p_{l k} \pi_{l, t-1 \mid t-1} \right)^3} \\
	& \quad \cdot \left( \sum_{k=1}^{J-1} \sum_{l = 1}^{J-1} (f_{k,t}-f_{J,t}) (p_{l k}-p_{J k}) \nabla_{p_{\alpha \beta}} \pi_{l,t-1 \mid t-1} \right) \\
	& - \frac{1_{\{j=\beta\}} f_{j,t} \pi_{\alpha,t-1 \mid t-1} \left( \sum_{k=1}^{J-1} \sum_{l = 1}^{J-1} (f_{k,t}-f_{J,t}) (p_{l k}-p_{J k}) \nabla_{p_{ab}} \pi_{l,t-1 \mid t-1} \right) }{\left( \sum_{k=1}^{J} \sum_{l = 1}^{J} f_{k,t} p_{l k} \pi_{l, t-1 \mid t-1} \right)^2} \\
	& - \frac{\left( \sum_{i=1}^{J-1} f_{j,t} (p_{ij}-p_{Jj}) \nabla_{p_{\alpha \beta}} \pi_{i,t-1 \mid t-1} \right) \left( \sum_{k=1}^{J-1} \sum_{l = 1}^{J-1} (f_{k,t}-f_{J,t}) (p_{l k}-p_{J k}) \nabla_{p_{ab}} \pi_{l,t-1 \mid t-1} \right) }{\left( \sum_{k=1}^{J} \sum_{l = 1}^{J} f_{k,t} p_{l k} \pi_{l, t-1 \mid t-1} \right)^2} \\ 
	&- \frac{\left( \sum_{i=1}^{J} f_{j,t} p_{ij} \pi_{i,t-1 \mid t-1} \right) \left( \sum_{k=1}^{J-1} \sum_{l = 1}^{J-1} (f_{k,t}-f_{J,t}) (1_{\{l = \alpha, k = \beta\}}-1_{\{J = \alpha, k= \beta\}}) \nabla_{p_{ab}} \pi_{l,t-1 \mid t-1} \right) }{\left( \sum_{k=1}^{J} \sum_{l = 1}^{J} f_{k,t} p_{l k} \pi_{l, t-1 \mid t-1} \right)^2} \\
	&- \frac{\left( \sum_{i=1}^{J} f_{j,t} p_{ij} \pi_{i,t-1 \mid t-1} \right) \left( \sum_{k=1}^{J-1} \sum_{l = 1}^{J-1} (f_{k,t}-f_{J,t}) (p_{lk}-p_{Jk}) \nabla_{p_{ab} p_{\alpha \beta}} \pi_{l,t-1 \mid t-1} \right) }{\left( \sum_{k=1}^{J} \sum_{l = 1}^{J} f_{k,t} p_{l k} \pi_{l, t-1 \mid t-1} \right)^2} \\
	& + 2 \frac{\left( \sum_{i=1}^{J} f_{j,t} p_{ij} \pi_{i,t-1 \mid t-1} \right) \left( \sum_{k=1}^{J-1} \sum_{l = 1}^{J-1} (f_{k,t}-f_{J,t}) (p_{l k}-p_{J k}) \nabla_{p_{ab}} \pi_{l,t-1 \mid t-1} \right) }{\left( \sum_{k=1}^{J} \sum_{l = 1}^{J} f_{k,t} p_{l k} \pi_{l, t-1 \mid t-1} \right)^3} \\
	& \quad \cdot (f_{\beta,t}-f_{J,t}) \pi_{\alpha,t-1 \mid t-1} \\
	& + 2 \frac{\left( \sum_{i=1}^{J} f_{j,t} p_{ij} \pi_{i,t-1 \mid t-1} \right) \left( \sum_{k=1}^{J-1} \sum_{l = 1}^{J-1} (f_{k,t}-f_{J,t}) (p_{l k}-p_{J k}) \nabla_{p_{ab}} \pi_{l,t-1 \mid t-1} \right)}{\left( \sum_{k=1}^{J} \sum_{l = 1}^{J} f_{k,t} p_{l k} \pi_{l, t-1 \mid t-1} \right)^3} \\
	&\quad \cdot \left( \sum_{k=1}^{J-1} \sum_{l = 1}^{J-1} (f_{k,t}-f_{J,t}) (p_{l k}-p_{J k}) \nabla_{p_{\alpha \beta}} \pi_{l,t-1 \mid t-1} \right)
\end{align*}
for all $a \in \{1,...,J\}$, $b \in \{1,...,J-1\}$, $\alpha \in \{1,...,J\}$, and $\beta \in \{1,...,J-1\}$, $\nabla_{p_{ab} [\bupsilon_{\alpha}]_{\beta}} \pi_{j,t \mid t}$ by
\begin{align*}
	&\nabla_{p_{ab} [\bupsilon_{\alpha}]_{\beta}} \pi_{j,t \mid t} \\
	&= \frac{1_{\{j=b\}} 1_{\{j=\alpha,\alpha \in \{1,...,J-1\}\}} \nabla_{[\bupsilon_{\alpha}]_{\beta}} f_{j,t} \pi_{a,t-1 \mid t-1}}{\sum_{k=1}^{J} \sum_{l = 1}^{J} f_{k,t} p_{l k} \pi_{l, t-1 \mid t-1}} \\
	&+ \frac{1_{\{j=b\}} f_{j,t} \nabla_{[\bupsilon_{\alpha}]_{\beta}} \pi_{a,t-1 \mid t-1}}{\sum_{k=1}^{J} \sum_{l = 1}^{J} f_{k,t} p_{l k} \pi_{l, t-1 \mid t-1}} \\
	&- \frac{1_{\{j=b\}} f_{j,t} \pi_{a,t-1 \mid t-1} \left( \sum_{l=1}^{J} \nabla_{[\bupsilon_{\alpha}]_{\beta}} f_{\alpha,t} p_{l \alpha} \pi_{l,t-1 \mid t-1} \right) }{\left(\sum_{k=1}^{J} \sum_{l = 1}^{J} f_{k,t} p_{l k} \pi_{l, t-1 \mid t-1} \right)^2} \\
	&- \frac{1_{\{j=b\}} f_{j,t} \pi_{a,t-1 \mid t-1} \left( \sum_{k=1}^{J-1} \sum_{l = 1}^{J-1} (f_{k,t}-f_{J,t}) (p_{l k}-p_{J k}) \nabla_{[\bupsilon_{\alpha}]_{\beta}} \pi_{l,t-1 \mid t-1} \right) }{\left(\sum_{k=1}^{J} \sum_{l = 1}^{J} f_{k,t} p_{l k} \pi_{l, t-1 \mid t-1} \right)^2} \\
	&+ \frac{\sum_{i=1}^{J-1} 1_{\{j=\alpha,\alpha \in \{1,...,J-1\}\}} \nabla_{[\bupsilon_{\alpha}]_{\beta}} f_{j,t} (p_{ij}-p_{Jj}) \nabla_{p_{ab}} \pi_{i,t-1 \mid t-1} }{\sum_{k=1}^{J} \sum_{l = 1}^{J} f_{k,t} p_{l k} \pi_{l, t-1 \mid t-1} } \\
	&+ \frac{\sum_{i=1}^{J-1} f_{j,t} (p_{ij}-p_{Jj}) \nabla_{p_{ab} [\bupsilon_{\alpha}]_{\beta}} \pi_{i,t-1 \mid t-1} }{\sum_{k=1}^{J} \sum_{l = 1}^{J} f_{k,t} p_{l k} \pi_{l, t-1 \mid t-1} } \\
	&- \frac{\left(\sum_{i=1}^{J-1} f_{j,t} (p_{ij}-p_{Jj}) \nabla_{p_{ab}} \pi_{i,t-1 \mid t-1}\right) \left( \sum_{l=1}^{J} \nabla_{[\bupsilon_{\alpha}]_{\beta}} f_{\alpha,t} p_{l \alpha} \pi_{l,t-1 \mid t-1} \right) }{\left(\sum_{k=1}^{J} \sum_{l = 1}^{J} f_{k,t} p_{l k} \pi_{l, t-1 \mid t-1} \right)^2} \\
	&- \frac{\left(\sum_{i=1}^{J-1} f_{j,t} (p_{ij}-p_{Jj}) \nabla_{p_{ab}} \pi_{i,t-1 \mid t-1}\right) \left( \sum_{k=1}^{J-1} \sum_{l = 1}^{J-1} (f_{k,t}-f_{J,t}) (p_{l k}-p_{J k}) \nabla_{[\bupsilon_{\alpha}]_{\beta}} \pi_{l,t-1 \mid t-1} \right) }{\left( \sum_{k=1}^{J} \sum_{l = 1}^{J} f_{k,t} p_{l k} \pi_{l, t-1 \mid t-1} \right)^2} \\
	&- \frac{\left( \sum_{i=1}^{J} 1_{\{j=\alpha,\alpha \in \{1,...,J-1\}\}} \nabla_{[\bupsilon_{\alpha}]_{\beta}} f_{j,t} p_{ij} \pi_{i,t-1 \mid t-1} \right) (f_{b,t}-f_{J,t}) \pi_{a,t-1 \mid t-1} }{\left( \sum_{k=1}^{J} \sum_{l = 1}^{J} f_{k,t} p_{l k} \pi_{l, t-1 \mid t-1} \right)^2} \\
	&- \frac{\left( \sum_{i=1}^{J-1} f_{j,t} (p_{ij}-p_{Jj}) \nabla_{[\bupsilon_{\alpha}]_{\beta}} \pi_{i,t-1 \mid t-1} \right) (f_{b,t}-f_{J,t}) \pi_{a,t-1 \mid t-1} }{\left( \sum_{k=1}^{J} \sum_{l = 1}^{J} f_{k,t} p_{l k} \pi_{l, t-1 \mid t-1} \right)^2} \\
	&- \frac{\left( \sum_{i=1}^{J} f_{j,t} p_{ij} \pi_{i,t-1 \mid t-1} \right) (1_{\{b = \alpha\}} \nabla_{[\bupsilon_{\alpha}]_{\beta}} f_{b,t}-1_{\{J = \alpha\}} \nabla_{[\bupsilon_{\alpha}]_{\beta}} f_{J,t}) \pi_{a,t-1 \mid t-1} }{\left( \sum_{k=1}^{J} \sum_{l = 1}^{J} f_{k,t} p_{l k} \pi_{l, t-1 \mid t-1} \right)^2} \\
	&- \frac{\left( \sum_{i=1}^{J} f_{j,t} p_{ij} \pi_{i,t-1 \mid t-1} \right) (f_{b,t}-f_{J,t}) \nabla_{[\bupsilon_{\alpha}]_{\beta}} \pi_{a,t-1 \mid t-1} }{\left( \sum_{k=1}^{J} \sum_{l = 1}^{J} f_{k,t} p_{l k} \pi_{l, t-1 \mid t-1} \right)^2} \\
	&+ 2 \frac{\left( \sum_{i=1}^{J} f_{j,t} p_{ij} \pi_{i,t-1 \mid t-1} \right) (f_{b,t}-f_{J,t}) \pi_{a,t-1 \mid t-1} \left( \sum_{l=1}^{J} \nabla_{[\bupsilon_{\alpha}]_{\beta}} f_{\alpha,t} p_{l \alpha} \pi_{l,t-1 \mid t-1} \right)}{\left( \sum_{k=1}^{J} \sum_{l = 1}^{J} f_{k,t} p_{l k} \pi_{l, t-1 \mid t-1} \right)^3} \\
	&+ 2 \frac{\left( \sum_{i=1}^{J} f_{j,t} p_{ij} \pi_{i,t-1 \mid t-1} \right) (f_{b,t}-f_{J,t}) \pi_{a,t-1 \mid t-1} }{\left( \sum_{k=1}^{J} \sum_{l = 1}^{J} f_{k,t} p_{l k} \pi_{l, t-1 \mid t-1} \right)^3} \\
	& \quad \cdot \left( \sum_{k=1}^{J-1} \sum_{l = 1}^{J-1} (f_{k,t}-f_{J,t}) (p_{l k}-p_{J k}) \nabla_{[\bupsilon_{\alpha}]_{\beta}} \pi_{l,t-1 \mid t-1} \right) \\
	&- \frac{\left( \sum_{i=1}^{J} 1_{\{j=\alpha,\alpha \in \{1,...,J-1\}\}} \nabla_{[\bupsilon_{\alpha}]_{\beta}} f_{j,t} p_{ij} \pi_{i,t-1 \mid t-1} \right) }{\left( \sum_{k=1}^{J} \sum_{l = 1}^{J} f_{k,t} p_{l k} \pi_{l, t-1 \mid t-1} \right)^2} \\
	& \quad \cdot \left( \sum_{k=1}^{J-1} \sum_{l = 1}^{J-1} (f_{k,t}-f_{J,t}) (p_{l k}-p_{J k}) \nabla_{p_{ab}} \pi_{l,t-1 \mid t-1} \right) \\
	&- \frac{\left( \sum_{i=1}^{J-1} f_{j,t} (p_{ij}-p_{Jj}) \nabla_{[\bupsilon_{\alpha}]_{\beta}} \pi_{i,t-1 \mid t-1} \right) \left( \sum_{k=1}^{J-1} \sum_{l = 1}^{J-1} (f_{k,t}-f_{J,t}) (p_{l k}-p_{J k}) \nabla_{p_{ab}} \pi_{l,t-1 \mid t-1} \right) }{\left( \sum_{k=1}^{J} \sum_{l = 1}^{J} f_{k,t} p_{l k} \pi_{l, t-1 \mid t-1} \right)^2} \\
	& - \frac{\left( \sum_{i=1}^{J} f_{j,t} p_{ij} \pi_{i,t-1 \mid t-1} \right) }{\left( \sum_{k=1}^{J} \sum_{l = 1}^{J} f_{k,t} p_{l k} \pi_{l, t-1 \mid t-1} \right)^2} \\
	& \quad \cdot \left( \sum_{k=1}^{J-1} \sum_{l = 1}^{J-1} (1_{\{k = \alpha\}} \nabla_{[\bupsilon_{\alpha}]_{\beta}} f_{k,t} - 1_{\{J = \alpha\}} \nabla_{[\bupsilon_{\alpha}]_{\beta}} f_{J,t}) (p_{l k}-p_{J k}) \nabla_{p_{ab}} \pi_{l,t-1 \mid t-1} \right) \\
	& - \frac{\left( \sum_{i=1}^{J} f_{j,t} p_{ij} \pi_{i,t-1 \mid t-1} \right) \left( \sum_{k=1}^{J-1} \sum_{l = 1}^{J-1} (f_{k,t}-f_{J,t}) (p_{l k}-p_{J k}) \nabla_{p_{ab} [\bupsilon_{\alpha}]_{\beta}} \pi_{l,t-1 \mid t-1} \right) }{\left( \sum_{k=1}^{J} \sum_{l = 1}^{J} f_{k,t} p_{l k} \pi_{l, t-1 \mid t-1} \right)^2} \\
	&+ 2 \frac{\left( \sum_{i=1}^{J} f_{j,t} p_{ij} \pi_{i,t-1 \mid t-1} \right) \left( \sum_{k=1}^{J-1} \sum_{l = 1}^{J-1} (f_{k,t}-f_{J,t}) (p_{l k}-p_{J k}) \nabla_{p_{ab}} \pi_{l,t-1 \mid t-1} \right) }{\left( \sum_{k=1}^{J} \sum_{l = 1}^{J} f_{k,t} p_{l k} \pi_{l, t-1 \mid t-1} \right)^3} \\
	&\quad \cdot  \left( \sum_{l=1}^{J} \nabla_{[\bupsilon_{\alpha}]_{\beta}} f_{\alpha,t} p_{l \alpha} \pi_{l,t-1 \mid t-1} \right) \\
	&+ 2 \frac{\left( \sum_{i=1}^{J} f_{j,t} p_{ij} \pi_{i,t-1 \mid t-1} \right) \left( \sum_{k=1}^{J-1} \sum_{l = 1}^{J-1} (f_{k,t}-f_{J,t}) (p_{l k}-p_{J k}) \nabla_{p_{ab}} \pi_{l,t-1 \mid t-1} \right) }{\left( \sum_{k=1}^{J} \sum_{l = 1}^{J} f_{k,t} p_{l k} \pi_{l, t-1 \mid t-1} \right)^3} \\
	&\quad \cdot  \left( \sum_{k=1}^{J-1} \sum_{l = 1}^{J-1} (f_{k,t}-f_{J,t}) (p_{l k}-p_{J k}) \nabla_{[\bupsilon_{\alpha}]_{\beta}} \pi_{l,t-1 \mid t-1} \right)
\end{align*}
for all $a \in \{1,...,J\}$, $b \in \{1,...,J-1\}$, $\alpha \in \{1,...,J\}$, and $\beta \in \{1,...,d_{\alpha}\}$, and $\nabla_{[\bupsilon_a]_{b} [\bupsilon_{\alpha}]_{\beta}} \pi_{j,t \mid t}$ is given by
\begin{align*}
	&\nabla_{[\bupsilon_a]_{b} [\bupsilon_{\alpha}]_{\beta}} \pi_{j,t \mid t} \\
	&= \frac{1_{\{j=a,a \in \{1,...,J-1\}\}}  \sum_{i=1}^{J} 1_{\{j=\alpha,\alpha \in \{1,...,J-1\}\}} \nabla_{[\bupsilon_a]_{b} [\bupsilon_{\alpha}]_{\beta}} f_{j,t} p_{ij} \pi_{i,t-1 \mid t-1}}{\sum_{k=1}^{J} \sum_{l = 1}^{J} f_{k,t} p_{l k} \pi_{l, t-1 \mid t-1} } \\
	&+ \frac{1_{\{j=a,a \in \{1,...,J-1\}\}}  \sum_{i=1}^{J-1} \nabla_{[\bupsilon_a]_{b}} f_{j,t} (p_{ij}-p_{Jj}) \nabla_{[\bupsilon_{\alpha}]_{\beta}} \pi_{i,t-1 \mid t-1}}{\sum_{k=1}^{J} \sum_{l = 1}^{J} f_{k,t} p_{l k} \pi_{l, t-1 \mid t-1} } \\
	&- \frac{\left( 1_{\{j=a,a \in \{1,...,J-1\}\}}  \sum_{i=1}^{J} \nabla_{[\bupsilon_a]_{b}} f_{j,t} p_{ij} \pi_{i,t-1 \mid t-1} \right) \left( \sum_{l=1}^{J} \nabla_{[\bupsilon_{\alpha}]_{\beta}} f_{\alpha,t} p_{l \alpha} \pi_{l,t-1 \mid t-1} \right)}{\left( \sum_{k=1}^{J} \sum_{l = 1}^{J} f_{k,t} p_{l k} \pi_{l, t-1 \mid t-1} \right)^2} \\
	&- \frac{\left( 1_{\{j=a,a \in \{1,...,J-1\}\}}  \sum_{i=1}^{J} \nabla_{[\bupsilon_a]_{b}} f_{j,t} p_{ij} \pi_{i,t-1 \mid t-1} \right) }{\left( \sum_{k=1}^{J} \sum_{l = 1}^{J} f_{k,t} p_{l k} \pi_{l, t-1 \mid t-1} \right)^2} \\
	& \quad \cdot \left( \sum_{k=1}^{J-1} \sum_{l = 1}^{J-1} (f_{k,t}-f_{J,t}) (p_{l k}-p_{J k}) \nabla_{[\bupsilon_{\alpha}]_{\beta}} \pi_{l,t-1 \mid t-1} \right) \\
	&+ \frac{\sum_{i=1}^{J-1} 1_{\{j=\alpha,\alpha \in \{1,...,J-1\}\}} \nabla_{[\bupsilon_{\alpha}]_{\beta}} f_{j,t} (p_{ij}-p_{Jj}) \nabla_{[\bupsilon_a]_{b}} \pi_{i,t-1 \mid t-1} }{\sum_{k=1}^{J} \sum_{l = 1}^{J} f_{k,t} p_{l k} \pi_{l, t-1 \mid t-1} } \\
	&+ \frac{\sum_{i=1}^{J-1} f_{j,t} (p_{ij}-p_{Jj}) \nabla_{[\bupsilon_a]_{b} [\bupsilon_{\alpha}]_{\beta}} \pi_{i,t-1 \mid t-1} }{\sum_{k=1}^{J} \sum_{l = 1}^{J} f_{k,t} p_{l k} \pi_{l, t-1 \mid t-1} } \\
	&- \frac{\left( \sum_{i=1}^{J-1} f_{j,t} (p_{ij}-p_{Jj}) \nabla_{[\bupsilon_a]_{b}} \pi_{i,t-1 \mid t-1} \right) \left( \sum_{l=1}^{J} \nabla_{[\bupsilon_{\alpha}]_{\beta}} f_{\alpha,t} p_{l \alpha} \pi_{l,t-1 \mid t-1} \right) }{\left( \sum_{k=1}^{J} \sum_{l = 1}^{J} f_{k,t} p_{l k} \pi_{l, t-1 \mid t-1} \right)^2} \\
	&- \frac{\left( \sum_{i=1}^{J-1} f_{j,t} (p_{ij}-p_{Jj}) \nabla_{[\bupsilon_a]_{b}} \pi_{i,t-1 \mid t-1} \right) \left( \sum_{k=1}^{J-1} \sum_{l = 1}^{J-1} (f_{k,t}-f_{J,t}) (p_{l k}-p_{J k}) \nabla_{[\bupsilon_{\alpha}]_{\beta}} \pi_{l,t-1 \mid t-1} \right) }{\left( \sum_{k=1}^{J} \sum_{l = 1}^{J} f_{k,t} p_{l k} \pi_{l, t-1 \mid t-1} \right)^2} \\
	&- \frac{\left( \sum_{i=1}^{J} 1_{\{j=\alpha,\alpha \in \{1,...,J-1\}\}} \nabla_{[\bupsilon_{\alpha}]_{\beta}} f_{j,t} p_{ij} \pi_{i,t-1 \mid t-1} \right) \left( \sum_{l=1}^{J} \nabla_{[\bupsilon_a]_{b}} f_{a,t} p_{l a} \pi_{l,t-1 \mid t-1} \right)}{\left( \sum_{k=1}^{J} \sum_{l = 1}^{J} f_{k,t} p_{l k} \pi_{l, t-1 \mid t-1}  \right)^2} \\
	&- \frac{\left( \sum_{i=1}^{J-1} f_{j,t} (p_{ij}-p_{Jj}) \nabla_{[\bupsilon_{\alpha}]_{\beta}} \pi_{i,t-1 \mid t-1} \right) \left( \sum_{l=1}^{J} \nabla_{[\bupsilon_a]_{b}} f_{a,t} p_{l a} \pi_{l,t-1 \mid t-1} \right)}{\left( \sum_{k=1}^{J} \sum_{l = 1}^{J} f_{k,t} p_{l k} \pi_{l, t-1 \mid t-1}  \right)^2} \\
	&- \frac{\left( \sum_{i=1}^{J} f_{j,t} p_{ij} \pi_{i,t-1 \mid t-1} \right) \left( \sum_{l=1}^{J} 1_{\{a = \alpha\}} \nabla_{[\bupsilon_a]_{b} [\bupsilon_{\alpha}]_{\beta}} f_{a,t} p_{l a} \pi_{l,t-1 \mid t-1} \right)}{\left( \sum_{k=1}^{J} \sum_{l = 1}^{J} f_{k,t} p_{l k} \pi_{l, t-1 \mid t-1}  \right)^2} \\
	&- \frac{\left( \sum_{i=1}^{J} f_{j,t} p_{ij} \pi_{i,t-1 \mid t-1} \right) \left( \sum_{l=1}^{J-1} \nabla_{[\bupsilon_a]_{b}} f_{a,t} (p_{l a}-p_{Ja}) \nabla_{[\bupsilon_{\alpha}]_{\beta}} \pi_{l,t-1 \mid t-1} \right)}{\left( \sum_{k=1}^{J} \sum_{l = 1}^{J} f_{k,t} p_{l k} \pi_{l, t-1 \mid t-1}  \right)^2} \\
	&+ 2 \frac{\left( \sum_{i=1}^{J} f_{j,t} p_{ij} \pi_{i,t-1 \mid t-1} \right) \left( \sum_{l=1}^{J} \nabla_{[\bupsilon_a]_{b}} f_{a,t} p_{l a} \pi_{l,t-1 \mid t-1} \right) \left( \sum_{l=1}^{J} \nabla_{[\bupsilon_{\alpha}]_{\beta}} f_{\alpha,t} p_{l \alpha} \pi_{l,t-1 \mid t-1} \right) }{\left( \sum_{k=1}^{J} \sum_{l = 1}^{J} f_{k,t} p_{l k} \pi_{l, t-1 \mid t-1}  \right)^3} \\
	&+ 2 \frac{\left( \sum_{i=1}^{J} f_{j,t} p_{ij} \pi_{i,t-1 \mid t-1} \right) \left( \sum_{l=1}^{J} \nabla_{[\bupsilon_a]_{b}} f_{a,t} p_{l a} \pi_{l,t-1 \mid t-1} \right)}{\left( \sum_{k=1}^{J} \sum_{l = 1}^{J} f_{k,t} p_{l k} \pi_{l, t-1 \mid t-1}  \right)^3} \\
	& \quad \cdot \left( \sum_{k=1}^{J-1} \sum_{l = 1}^{J-1} (f_{k,t}-f_{J,t}) (p_{l k}-p_{J k}) \nabla_{[\bupsilon_{\alpha}]_{\beta}} \pi_{l,t-1 \mid t-1} \right) \\
	&- \frac{\left( \sum_{i=1}^{J} 1_{\{j=\alpha,\alpha \in \{1,...,J-1\}\}} \nabla_{[\bupsilon_{\alpha}]_{\beta}} f_{j,t} p_{ij} \pi_{i,t-1 \mid t-1} \right) }{\left( \sum_{k=1}^{J} \sum_{l = 1}^{J} f_{k,t} p_{l k} \pi_{l, t-1 \mid t-1}  \right)^2} \\
	& \quad \cdot \left( \sum_{k=1}^{J-1} \sum_{l = 1}^{J-1} (f_{k,t}-f_{J,t}) (p_{l k}-p_{J k}) \nabla_{[\bupsilon_a]_{b}} \pi_{l,t-1 \mid t-1} \right) \\
	&- \frac{\left( \sum_{i=1}^{J-1} f_{j,t} (p_{ij}-p_{Jj}) \nabla_{[\bupsilon_{\alpha}]_{\beta}} \pi_{i,t-1 \mid t-1} \right) \left( \sum_{k=1}^{J-1} \sum_{l = 1}^{J-1} (f_{k,t}-f_{J,t}) (p_{l k}-p_{J k}) \nabla_{[\bupsilon_a]_{b}} \pi_{l,t-1 \mid t-1} \right)}{\left( \sum_{k=1}^{J} \sum_{l = 1}^{J} f_{k,t} p_{l k} \pi_{l, t-1 \mid t-1}  \right)^2} \\
	&- \frac{\left( \sum_{i=1}^{J} f_{j,t} p_{ij} \pi_{i,t-1 \mid t-1} \right)}{\left( \sum_{k=1}^{J} \sum_{l = 1}^{J} f_{k,t} p_{l k} \pi_{l, t-1 \mid t-1}  \right)^2} \\
	& \quad \cdot \left( \sum_{k=1}^{J-1} \sum_{l = 1}^{J-1} (1_{\{k = \alpha\}} \nabla_{[\bupsilon_{\alpha}]_{\beta}} f_{k,t} - 1_{\{J = \alpha\}} \nabla_{[\bupsilon_{\alpha}]_{\beta}} f_{J,t}) (p_{l k}-p_{J k}) \nabla_{[\bupsilon_a]_{b}} \pi_{l,t-1 \mid t-1} \right) \\
	&- \frac{\left( \sum_{i=1}^{J} f_{j,t} p_{ij} \pi_{i,t-1 \mid t-1} \right) \left( \sum_{k=1}^{J-1} \sum_{l = 1}^{J-1} (f_{k,t}-f_{J,t}) (p_{l k}-p_{J k}) \nabla_{[\bupsilon_a]_{b} [\bupsilon_{\alpha}]_{\beta}} \pi_{l,t-1 \mid t-1} \right)}{\left( \sum_{k=1}^{J} \sum_{l = 1}^{J} f_{k,t} p_{l k} \pi_{l, t-1 \mid t-1}  \right)^2} \\
	&+ 2 \frac{\left( \sum_{i=1}^{J} f_{j,t} p_{ij} \pi_{i,t-1 \mid t-1} \right) \left( \sum_{k=1}^{J-1} \sum_{l = 1}^{J-1} (f_{k,t}-f_{J,t}) (p_{l k}-p_{J k}) \nabla_{[\bupsilon_a]_{b}} \pi_{l,t-1 \mid t-1} \right) }{\left( \sum_{k=1}^{J} \sum_{l = 1}^{J} f_{k,t} p_{l k} \pi_{l, t-1 \mid t-1}  \right)^3} \\
	& \quad \cdot \left( \sum_{l=1}^{J} \nabla_{[\bupsilon_{\alpha}]_{\beta}} f_{\alpha,t} p_{l \alpha} \pi_{l,t-1 \mid t-1} \right) \\
	&+ 2 \frac{\left( \sum_{i=1}^{J} f_{j,t} p_{ij} \pi_{i,t-1 \mid t-1} \right) \left( \sum_{k=1}^{J-1} \sum_{l = 1}^{J-1} (f_{k,t}-f_{J,t}) (p_{l k}-p_{J k}) \nabla_{[\bupsilon_a]_{b}} \pi_{l,t-1 \mid t-1} \right)}{\left( \sum_{k=1}^{J} \sum_{l = 1}^{J} f_{k,t} p_{l k} \pi_{l, t-1 \mid t-1}  \right)^3} \\
	& \quad \cdot \left( \sum_{k=1}^{J-1} \sum_{l = 1}^{J-1} (f_{k,t}-f_{J,t}) (p_{l k}-p_{J k}) \nabla_{[\bupsilon_{\alpha}]_{\beta}} \pi_{l,t-1 \mid t-1} \right)
\end{align*}
for all $a \in \{1,...,J\}$, $b \in \{1,...,d_a\}$, $\alpha \in \{1,...,J\}$, and $\beta \in \{1,...,d_{\alpha}\}$.

\end{document}